\numberwithin{equation}{section} 
\theoremstyle{plain}
\newtheorem{theorem}{Theorem}[section]
\newtheorem{lemma}{Lemma}[section]
\newtheorem{corollary}{Corollary}[section]
\newtheorem{proposition}{Proposition}[section]
\newtheorem{remark}{Remark}[section]
\newcommand{\calX}{\mathcal{X}}
\newcommand{\calA}{\mathcal{A}}
\newcommand{\calB}{\mathcal{B}}
\newcommand{\calC}{\mathcal{C}}
\newcommand{\calN}{\mathcal{N}}
\newcommand{\calL}{\mathcal{L}}
\newcommand{\scrU}{\mathscr{U}}
\newcommand{\scrC}{\mathscr{C}}
\newcommand{\bbZ}{\mathbb{Z}}
\newcommand{\bA}{\mathbf{A}}
\newcommand{\bB}{\mathbf{B}}
\newcommand{\bC}{\mathbf{C}}
\newcommand{\bD}{\mathbf{D}}
\newcommand{\bE}{\mathbf{E}}
\newcommand{\bF}{\mathbf{F}}
\newcommand{\bH}{\mathbf{H}}
\newcommand{\bI}{\mathbf{I}}
\newcommand{\bM}{\mathbf{M}}
\newcommand{\bS}{\mathbf{S}}
\newcommand{\bT}{\mathbf{T}}
\newcommand{\bX}{\mathbf{X}}
\newcommand{\bY}{\mathbf{Y}}
\newcommand{\bZ}{\mathbf{Z}}
\newcommand{\be}{\mathbf{e}}
\newcommand{\bh}{\mathbf{h}}
\newcommand{\bq}{\mathbf{q}}
\newcommand{\br}{\mathbf{r}}
\newcommand{\bv}{\mathbf{v}}
\newcommand{\bx}{\mathbf{x}}
\newcommand{\by}{\mathbf{y}}
\newcommand{\convas}{\overset{a.s.}{\longrightarrow}}
\newcommand{\convp}{\overset{p}{\longrightarrow}}
\newcommand{\convd}{\overset{d}{\longrightarrow}}
\newcommand{\bSigma}{\boldsymbol{\Sigma}}
\newcommand{\indicator}{\mathds{1}}
\newcommand\Cov{\mathrm{Cov}}
\newcommand\tr{\mathsf{tr}}
\newcommand\Expe{\mathbb{E}}
\newcommand\Prob{\mathbb{P}}
\newcommand{\Var}{\mathrm{Var}}
\newcommand\lb{\left(}
\newcommand\rb{\right)}
\newcommand{\dif}{\mathop{}\!\mathrm{d}}
\begin{document}

\begin{frontmatter}
  \title{Asymptotic normality for eigenvalue statistics of a general sample covariance matrix
    when $p/n \to \infty$ and applications}
    \runtitle{CLT for LSS}

\begin{aug}
\author{\fnms{Jiaxin} \snm{Qiu}\ead[label=e1]{qiujx@connect.hku.hk}},
\author{\fnms{Zeng} \snm{Li}\ead[label=e2]{liz9@sustech.edu.cn}}
\and
\author{\fnms{Jian-feng} \snm{Yao}\ead[label=e3]{jeffyao@hku.hk}}
\address{Department of Statistics and Actuarial Science,\\
The University of Hong Kong,\\}
\address{Department of Statistics and Data Science,\\
Southern University of Science and Technology,\\}
\end{aug}

\begin{abstract}
	The asymptotic normality for a large family of eigenvalue
    statistics of a  general sample covariance matrix  is derived   
    under the  ultra-high dimensional setting, that is, when  the dimension to
    sample size ratio $p/n \to \infty$.
    Based on this CLT result, we first adapt the 
    covariance matrix test problem to the new ultra-high dimensional context.
    Then as a second application, we develop a new test
    for the separable covariance structure of a matrix-valued  white
    noise.
    Simulation experiments are conducted
    for the investigation of  finite-sample properties of the general
    asymptotic normality of eigenvalue statistics, as well as the second
    test for separable covariance structure of matrix-valued white noise.
\end{abstract}
  \begin{keyword}[class=AMS] %
    \kwd[Primary ]{62H10}
    \kwd[; secondary ] {62H15}
  \kwd{Asymptotic normality, linear spectral statistics, general
    sample covariance matrix, ultra-dimension, matrix white noise,
    separable covariance}
  \end{keyword}
\end{frontmatter}

\section{Introduction}\label{sec:intro}

Let  $\by\in\mathbb{R}^p$ be a population of the form $\by=\bSigma_p^{\nicefrac{1}{2}}\bx$ where $\bSigma_p$ is a $p\times p$ positive definite  matrix, $\bx\in\mathbb{R}^p$ a $p$-dimensional random vector with independent and identically distributed (i.i.d.) components with zero mean and unit variance. Given an i.i.d. sample  $\left\{\by_j=\bSigma_p^{\nicefrac{1}{2}}\bx_j,~1\leq j\leq n\right\}$ of $\by$, the sample covariance matrix is $\bS_n=\frac{1}{n}\sum_{j=1}^n\by_j\by_j'=\frac{1}{n}\bSigma_p^{\nicefrac{1}{2}}\bX\bX'\bSigma_p^{\nicefrac{1}{2}}$, where $\bX=(\bx_1, \bx_2, \ldots, \bx_n)$. We consider the ultra-high dimensional setting where $n\to \infty, p=p(n)\to\infty$ such that $p/n\to \infty$. The $p\times p$ matrix $\bS_n$ has only a small number of non-zero eigenvalues which are the same as those of its $n\times n$ companion matrix $\underline{\bS}_n=\frac{1}{n}\bX'\bSigma_p\bX$. The limiting distribution of these non-zero eigenvalues is known (see \citet{bai1988convergence}, \citet{wang2014limiting}). Precisely, consider the re-normalized sample covariance matrix
\begin{equation}\label{eq:A_def}
\bA_n=\cfrac{1}{\sqrt{npb_p}}\lb  \bX'\bSigma_p \bX-pa_p  \bI_n\rb,
\end{equation}
where $\bI_n$ is the identity matrix of order $n$, $a_p = \frac{1}{p}\tr (\bSigma_p)$, $b_p = \frac{1}{p}\tr (\bSigma_p^2)$. Denote the eigenvalues of  $\bA_n$ as $\lambda_1,\cdots,\lambda_n$.   According to \citet{wang2014limiting}, under the condition that $\sup_p \|\bSigma_p\|<\infty$, the eigenvalue distribution of $\bA_n$, i.e. $F^{\bA_n}=\frac{1}{n}\sum_{i=1}^n\delta_{\lambda_i}$ converges to the celebrated semi-circle law. In this paper, we focus on the so-called linear spectral statistics (LSS) of $\bA_n$, i.e.  $\frac{1}{n}\sum_{i=1}^n f(\lambda_i)$ where $f(\cdot)$ is any smooth function we are interested in.
The main contribution of this paper is to establish the central limit theorem (CLT) for LSS of $\bA_n$ under the ultra-high dimensional setting. The study of fluctuations of LSS for different types of random matrix models has received extensive attention in the past decades, see monographs \citet{BSbook,couillet2011random,yao2015sample}. It plays a very important role in high dimensional data analysis because many well-established statistics can be represented as LSS of sample covariance or correlation matrix. In facing the curse of dimensionality, most asymptotic results are discussed under the \emph{Marchenko-Pastur} asymptotic regime, where $p/n\rightarrow c\in (0,\infty)$. However, this doesn't fit the case of ultra-high dimension when $p\gg n$. Hence in this paper we re-examined the asympototic behavior of LSS of $\bA_n$ when $n\to \infty, p=p(n)\to\infty$ such that $p/n\to \infty$.

A special version of $\bA_n$ for the case where $\bSigma_p=\bI_p$ has already been studied in the literature. The matrix becomes
\begin{equation}\label{eq:A_def_iden}
	\bA_n^{\mathsf{iden}}=\cfrac{1}{\sqrt{np}}\lb  \bX' \bX- p\bI_n\rb.
\end{equation}
\citet{bai1988convergence} is the first  to study this matrix. They proved that the ultra-high  dimensional limiting eigenvalue distribution of $\bA_n^{\mathsf{iden}}$ is the semi-circle law.
\citet{chen2012convergence} studied the behavior of the largest eigenvalue of $\bA_n^{\mathsf{iden}}$.
\citet{chen2015clt} and \citet{bao2015asymptotic} independently established the CLT for LSS of $\bA_n^{\mathsf{iden}}$, the limiting variance function of which coincides with that of a Wigner matrix given in \citet{BaiYao2005}. From these results, we know that some spectral properties of $\bA_n^{\mathsf{iden}}$ are similar to those of a $n\times n$ Wigner matrix. 
Indeed, the general matrix $\bA_n$ also has some spectral properties similar to those of a Wigner matrix. In particular, the eigenvalue distribution of $\bA_n$, $F^{\bA_n}$, also converges to the semi-circle law. However, the second order fluctuations for LSS of $\bA_n$ are quite different and worth further investigation.

In this paper, we establish the CLT for LSS of $\bA_n$. The general strategy of the proof follows that of \citet{BaiYao2005} for the CLT for LSS of a large Wigner matrix. However, the calculations are more involved here as the matrix $\bA_n$ is a quadratic function of the independent entries $(X_{ij})$ while a Wigner matrix is a linear function of its entries. Similar to \citet{chen2015clt}, a key step is to establish the CLT for some smooth integral  of the Stieltjes transform $M_n(z)$ of $\bA_n$, see Proposition~\ref{prop:Mn_CLT}.
To derive the limiting mean and covariance functions, we divide $M_n(z)$ into two parts: a non-random part and a random part. Our approaches to handle these two parts are technically different. For the random part, we follow a method in \citet{chen2015clt} which depends heavily on an explicit expression for $\tr\bigl(\bM_k^{(1)}\bigr)/(npb_p)$ (see Section~\ref{sec:conv_Mn1_proof} for more details). This explicit expression does not exist in our matrix model, so we need to provide a first-order approximation for it, which is given in Lemma~\ref{lem:Mk_limit}. For the non-random part, we
utilize the generalized Stein's equation to find the asymptotic expansion of the expectation of Stieltjes transform, which provides some new enlightenment for conventional procedures. 

To demonstrate the potential of our newly established CLT, we further studied two hypothesis testing problems about population covariance matrices. First, we examine the identity hypothesis $H_0:~\bSigma_p=\bI_p$ under the ultra-high dimensional setting and compare it with cases of relatively low dimensions. Next, we consider the hypothesis that a matrix-valued noise has a separable covariance matrix. 
For a sequence of i.i.d. $p_1 \times p_2 $ matrices $\{\bE_t\}_{1\leqslant t\leqslant T}$,  we adopt a Frobenius-norm-type statistic to test whether the covariance matrix of $\mathsf{vec}(\bE_t)$ is separable, i.e. $\Cov(\mathsf{vec}(\bE_t))=\bSigma_1\otimes \bSigma_2$, where $\bSigma_1$ and $\bSigma_2$ are two given  $p_1\times p_1$ and $p_2\times p_2$ nonnegative definite matrices. Here $p_1, p_2$ and $T$ are of comparable magnitude. Our test statistic can be represented as a  LSS of the sample covariance matrix with dimension $p_1p_2$ much larger than the sample size $T$. Therefore our CLT can then be employed to derive the null distribution and perform power analysis of the test. Good numerical performance lends full support to the correctness of our CLT results.

The paper is organized as follows. Section~\ref{sec:pre_notation} provides preliminary knowledge of some technical tools. Section~\ref{sec:main_result} establishes our main CLT for LSS of $\bA_n$. Section~\ref{sec:application_hypothesis_testing} contains two hypothesis testing applications. Section~\ref{sec:simulation} reports numerical studies. Technical proofs and lemmas are relegated to Section~\ref{sec:proof_CLT_1} and Appendices.

Throughout the paper, we reserve boldfaced symbols for vectors and matrices. For any matrix $\bA$, we let $A_{ij}$, $\lambda_j^{\bA}$, $\bA'$, $\tr(\bA)$ and $\|\bA\|$ represent, respectively, its $(i, j)$-th element, its $j$-th largest eigenvalue, its transpose, its trace, and its spectral norm (i.e., the largest singular value of $\bA$). $\indicator_{\{\cdot\}}$ stands for the indicator function. For the random variable $X_{11}$, we denote the $a$-th moment of $X_{11}$ by $\nu_{a}$ and the $a$-th cumulant of $X_{11}$ by $\kappa_a$. 
We use $K$ to denote constants which may vary from line to line. For simplicity,  we sometimes omit the variable $z$ when representing some matrices and functions (e.g. Stieltjes transforms) of $z$, provided that it does not lead to confusion.

\section{Preliminaries}\label{sec:pre_notation}
In this section, we introduce some useful preliminary results. 
For any $n\times n$ Hermitian matrix $\bB_n$, its empirical spectral distribution (ESD) is defined by
\[
	F^{\bB_n}(x) = \frac{1}{n}\sum_{i=1}^n \indicator_{\{\lambda_i^{\bB_n} \leqslant x\}}.
\]
 If $F^{\bB_n}(x)$ converges to a non-random limit $F(x)$ as $n\to\infty$, we call $F(x)$ the limiting spectral distribution (LSD) of $\bB_n$. 

As for the LSD of $\bA_n$ defined in \eqref{eq:A_def}, 
\cite{wang2014limiting} derived the LSD of re-normalized sample covariance matrices with  more generalized form
\begin{equation}\label{eq:Cp_wang2014limiting}
	\bC_n = \sqrt{\frac{p}{n}} \biggl( \frac{1}{p} \bT_{n}^{1/2}\bX_n^*\bSigma_{p}\bX_n\bT_{n}^{1/2} - \frac{1}{p} \tr(\bSigma_{p})\bT_{n} \biggr),
\end{equation}
where $\bX_n$ and $\bSigma_p$ are the same as those in \eqref{eq:A_def}. $\bT_{n}$ is a $n\times n$ nonnegative definite Hermitian matrix, whose ESD, $F^{\bT_{n}}$, converges weakly to $H$, a nonrandom distribution function on $\mathbb{R}^{+}$ which does not degenerate to zero.
The LSD of $\bC_n$ is described in terms of its Stieltjes transform. The Stieltjes transform of any cumulative distribution function $G$ is defined by
\[
	m_G(z) = \int \frac{1}{\lambda-z} \dif G(\lambda),\qquad z\in\mathbb{C}^{+}:=\{u+iv,\, u\in\mathbb{R}, v>0\}.
\]
\cite{wang2014limiting} proved that, when $p\wedge n\to\infty$ and $p/n\to\infty$, $F^{\bC_n}$ almost surely converges  to a nonrandom distribution, whose Stieltjes transform $m_{\bC}(z)$ satisfies the following system of equations:
\begin{equation}\label{eq:wang2014limiting_LSD}
	\begin{cases}
		m_{\bC}(z)= -\int\frac{\dif H(x)}{z+x\theta g(z)},\\[0.5em]
		g(z)= -\int\frac{x \dif H(x)}{z+x\theta g(z)},
	\end{cases}
\end{equation}
for any $z\in\mathbb{C}^{+}$, where $\theta=\lim\limits_{p\to\infty}(1/p)\tr(\bSigma_{p}^2)$.


Note that $\bA_n$ is a special case of $\bC_n$ with $\bT_n=\bI_n$. By \eqref{eq:wang2014limiting_LSD} we can easily show that the Stieltjes transform $m_{\bA}(z)$ of LSD of $\bA_n$ satisfies 
\begin{equation}\label{eq:SCLaw_equation}
	m_{\bA}(z)=-\frac{1}{z+m_{\bA}(z)},
\end{equation}
which is exactly the Stieltjes transform of the semi-circle law with density function given by
\begin{equation}\label{eq:semi-circle-law}
	F'(x) =\frac{1}{2\pi}\sqrt{4-x^2}\,\indicator_{\{ |x|\leqslant 2\}}.
\end{equation}
Hereafter we use $m(z)$ to represent $m_{\bA}(z)$ for ease of presentation.

\section{Main Results}\label{sec:main_result}

Let $\mathscr{U}$ denote any open region on the complex plane including $[-2,2]$ and $\mathscr{M}$ be the set of analytic functions defined on $\mathscr{U}$.
For any $f\in \mathscr{M}$, we consider a LSS of $\bA_n$ of the form:
\[
	\int f(x) \dif F^{\bA_n}(x) = \frac{1}{n} \sum_{i=1}^n f(\lambda_i^{\bA_n}).
\]
Since $F^{\bA_n}$ converges to $F$ almost surely, we have
\[
	\int f(x) \dif F^{\bA_n}(x) \to	\int f(x) \dif F(x).
\]
A question naturally arises: how fast does  
$\int f(x) \dif\left\{F^{\bA_n}(x) -F(x)\right\} $ converge to zero?

To answer this question, we consider a re-normalized functional:
\begin{equation}\label{eq:LSS}
G_n(f)= n\int_{-\infty}^{+\infty}f(x) \mathrm{d} \left\{ F^{\bA_n}(x)-F(x)\right\}-\frac{n}{2\pi i}\oint_{|m|=\rho}f(-m-m^{-1})\calX_n(m)\frac{1-m^2}{m^2}\dif m,
\end{equation}
where 
\begin{equation}\label{eq:calX}
	\begin{split}
		\mathcal{X}_n(m) &= \frac{-\calB+\sqrt{\calB^2-4\calA\calC}}{2\calA},\qquad \calA = m-\sqrt{\frac{n}{p}}\frac{c_p}{b_p\sqrt{b_p}}\bigl(1+m^2\bigr),	\\[0.5em]
		 \calC &= \frac{m^3}{n}\biggl\{ \frac{1}{1-m^2}+\frac{(\nu_4-3)\tilde{b}_p}{b_p} \biggr\}  - \sqrt{\frac{n}{p}}\frac{c_p}{b_p\sqrt{b_p}} m^4 +\frac{n}{p}\biggl(-\frac{c_p^2}{b_p^3}  +\frac{d_p}{b_p^2}\biggr)m^5,\\[0.5em]
		 \calB &=m^2-1-\sqrt{\frac{n}{p}}\frac{c_p}{b_p\sqrt{b_p}}m(1+2m^2),~a_p = \frac{1}{p}\tr (\bSigma_p)\\[0.5em]
	 b_p &= \frac{1}{p}\tr (\bSigma_p^2),~\tilde{b}_p = \frac{1}{p}\sum_{i=1}^p \sigma_{ii}^2,~c_p  =\frac{1}{p}\tr (\bSigma_p^3), ~ d_p  = \frac{1}{p}\tr (\bSigma_p^4),
	\end{split}
\end{equation}
here $\rho < 1$ and $\sqrt{\calB^2-4\calA\calC}$ is a complex number whose imaginary part has the same sign as that of $\calB$.
In what follows, we have established the asymptotic normality of $G_n(f)$ and the main result is formulated in the  theorem below.

\begin{theorem}\label{thm:CLT}
  Suppose that
  \begin{enumerate}
    \item[(A)]\label{itm:asum1}  ${\bf X}=(X_{ij})_{p\times n}$ where $\{X_{ij},~1\leqslant i \leqslant p,~ 1\leqslant j\leqslant n\}$ are i.i.d. real random variables with $\Expe X_{ij}=0$, $\Expe X_{ij}^2=1$, $\Expe X_{ij}^4=\nu_4$ and $\Expe |X_{ij}|^{6+\varepsilon_0}<\infty$ for some small positive $\varepsilon_0$;
    \item[(B)]\label{itm:asum2} $\{\bSigma_p,~p\geq 1\}$ is a sequence of non-negative definite matrices, bounded in spectral norm, such that the following limits exist:
	\begin{itemize}
		\item $\gamma=\lim_{p\to \infty} \frac{1}{p}\tr (\bSigma_p)$,
		\item $\theta=\lim_{p\to \infty} \frac{1}{p}\tr (\bSigma_p^2)$,
		\item $\omega=\lim_{p\to \infty} \frac{1}{p}\sum_{i=1}^p \sigma_{ii}^2$;
	\end{itemize}
    \item[(C1)] \label{item:np0} $p\wedge n\to\infty$ and $n^2/p=O(1)$.
  \end{enumerate}
  Then, for any $f_1,\cdots,f_k\in \mathscr{M}$, the finite dimensional random vector $\lb G_n(f_1),\cdots,G_n(f_k)\rb$ converges weakly to a Gaussian vector $\lb Y(f_1),\cdots, Y(f_k)\rb$ with mean function $\Expe Y(f) = 0$
   and covariance function
  \begin{align}
    \Cov\lb Y(f_1), Y(f_2)\rb&=\frac{\omega}{\theta}(\nu_4-3)\Psi_1(f_1)\Psi_1(f_2)+2\sum_{k=1}^{\infty}k\Psi_k(f_1)\Psi_k(f_2)\label{eq:cov_1}\\
    &=\frac{1}{4\pi^2}\int_{-2}^2\int_{-2}^2f_1'(x)f_2'(y)H(x,y)\dif x \dif y,\label{eq:cov_2}
  \end{align}
  where
  \begin{equation}\label{eq:Phi_k}
	  \Psi_k(f)=\cfrac{1}{2\pi}\int_{-\pi}^{\pi}f(2\cos\theta)e^{ik\theta}\dif \theta=\cfrac{1}{2\pi}\int_{-\pi}^{\pi}f(2\cos\theta)\cos k\theta \dif \theta,
  \end{equation}
  \begin{equation}\label{eq:Hxy}
  	H(x,y)=\frac{\omega}{\theta}(\nu_4-3)\sqrt{4-x^2}\sqrt{4-y^2}+2\log\lb \cfrac{4-xy+\sqrt{(4-x^2)(4-y^2)}}{4-xy-\sqrt{(4-x^2)(4-y^2)}}\rb.
  \end{equation}
\end{theorem}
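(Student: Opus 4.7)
The overall strategy follows the contour-integration approach of \citet{BaiYao2005}. By Cauchy's integral formula, for each $f\in\mathscr{M}$ and a closed contour $\mathcal{C}$ enclosing $[-2,2]$ on which $f$ is analytic,
\[
\int f(x)\dif\{F^{\bA_n}(x)-F(x)\} = -\frac{1}{2\pi i}\oint_{\mathcal{C}} f(z)\{m_n(z)-m(z)\}\dif z,
\]
where $m_n(z)=n^{-1}\tr(\bA_n-z\bI_n)^{-1}$. Setting $M_n(z)=n\{m_n(z)-m(z)\}$ and using the standard parametrization $z=-m-m^{-1}$ that maps the circle $|m|=\rho<1$ onto a contour encircling the support of the semi-circle law, the functional $G_n(f)$ defined in \eqref{eq:LSS} becomes, up to the explicit deterministic recentering by $\calX_n(m)$, the contour integral $-\frac{1}{2\pi i}\oint f(z)\{M_n(z)-\Expe M_n(z)\}\dif z$. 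Hence Theorem~\ref{thm:CLT} reduces via the continuous mapping theorem to establishing weak convergence of $\{M_n(z):z\in\mathcal{C}\}$ to a Gaussian process together with a matching mean expansion; this is the content of the referenced Proposition~\ref{prop:Mn_CLT}, and the joint CLT for $(G_n(f_1),\ldots,G_n(f_k))$ follows by linearity.

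To handle $M_n(z)$, I split it into a centered random part $M_n^{(1)}(z)=n\{m_n(z)-\Expe m_n(z)\}$ and a bias $M_n^{(2)}(z)=n\{\Expe m_n(z)-m(z)\}$. For $M_n^{(1)}(z)$ I use the Burkholder martingale decomposition with respect to $\mathcal{F}_j=\sigma(\bx_1,\ldots,\bx_j)$, writing $M_n^{(1)}(z)=\sum_j(\Expe_j-\Expe_{j-1})\tr(\bA_n-z\bI_n)^{-1}$. Each martingale difference, after Sherman--Morrison-type manipulations, is expressed in terms of quadratic forms $\bx_j'\bSigma_p\bM_j^{(1)}\bSigma_p\bx_j$, where $\bM_j^{(1)}$ is a resolvent-type matrix of the leave-one-out system. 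Unlike the Wigner case, where the entries appear linearly, the quadratic dependence of $\bA_n$ on $\bX$ means no closed form is available for $\tr(\bM_j^{(1)})/(npb_p)$; the crucial substitution by a deterministic first-order approximation is supplied by Lemma~\ref{lem:Mk_limit}. Adding up conditional variances then yields the covariance kernel in \eqref{eq:cov_1}, while a Lyapunov check under the $6+\varepsilon_0$ moment in assumption (A) dispatches the Lindeberg condition.

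For the bias $M_n^{(2)}(z)$ I use the generalized Stein identity
\[
\Expe\bigl[X_{ij}\,h(\bX)\bigr]=\sum_{r\geqslant 0}\frac{\kappa_{r+1}}{r!}\Expe\bigl[\partial^{r}_{X_{ij}}h(\bX)\bigr]
\]
applied to the entries of $(\bA_n-z\bI_n)^{-1}$. The leading order of this expansion recovers the semi-circle fixed-point equation \eqref{eq:SCLaw_equation}. The sub-leading terms pick up corrections of orders $\sqrt{n/p}$, $n/p$, and a fourth-cumulant term $(\nu_4-3)\tilde b_p/b_p$, which, under the scaling $n^2/p=O(1)$ in assumption (C1), are exactly of order $1$ after multiplication by $n$; they combine into the quadratic-in-$\calX_n(m)$ equation $\calA\calX_n^2+\calB\calX_n+\calC=0$ whose relevant root gives the explicit $\calX_n$ in \eqref{eq:calX}. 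This is precisely the deterministic shift subtracted in the definition of $G_n(f)$, so that after centering the mean vanishes in the limit.

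The main obstacle is this bias expansion. Because $\bA_n$ is a quadratic, rather than linear, function of the entries of $\bX$, each application of the Stein identity produces a larger family of derivative terms than in \citet{BaiYao2005}, and several contributions which are naively $O(p^{-1})$ get amplified by the normalization $(npb_p)^{-1/2}$ to the critical order $\sqrt{n/p}$; bookkeeping of which third- and fourth-cumulant terms survive, and proving that all remaining contributions are uniformly $o(1)$ on the contour, is what pins down $\calA,\calB,\calC$. Once Proposition~\ref{prop:Mn_CLT} is proved, the closed-form covariance in \eqref{eq:cov_1}--\eqref{eq:Hxy} follows from standard residue/Fourier calculations: expanding $f(-m-m^{-1})$ in powers of $m$ on $|m|=\rho$ yields the Chebyshev coefficients $\Psi_k(f)$ in \eqref{eq:Phi_k}, and integrating the limiting covariance kernel of $M_n(z)$ against two such contours produces \eqref{eq:cov_1}, whose equivalence with \eqref{eq:cov_2} is a classical identity for Chebyshev polynomials on the semi-circle.
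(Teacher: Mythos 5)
Your proposal follows essentially the same strategy as the paper: Cauchy-contour reduction to the Stieltjes-transform process, decomposition into a centered random part handled by a martingale CLT (with Lemma~\ref{lem:Mk_limit} supplying the first-order substitute that is unavailable in closed form when $\bSigma_p\neq\bI_p$), and a deterministic bias expansion via the generalized Stein identity producing the quadratic in $\calX_n$. This matches the paper's Proposition~\ref{prop:Mn_CLT} and Sections~\ref{sec:conv_Mn1_proof}--\ref{sec:conv_Mn2}.

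Two small imprecisions worth noting. First, your assertion that $G_n(f)$ equals $-\frac{1}{2\pi i}\oint f(z)\{M_n(z)-\Expe M_n(z)\}\dif z$ "up to the recentering" is circular as written: equating the $\calX_n$ recentering with $\Expe M_n$ is precisely the conclusion of the bias expansion, not a preliminary algebraic identity. The correct starting point is $G_n(f)=-\frac{1}{2\pi i}\oint f(z)\,n\{m_n(z)-m(z)-\calX_n(m(z))\}\dif z$, and one then proves $n\{\Expe m_n-m-\calX_n\}\to 0$ separately. Second, the Lindeberg condition is not dispatched by a Lyapunov check under the $6+\varepsilon_0$ moment alone; the paper first truncates the entries at $\delta_n(np)^{1/4}$, and the verification \eqref{eq:MDS_4th_moment} uses the boundedness from truncation, not the raw higher moment (the $6+\varepsilon_0$ assumption is what justifies truncating without changing the limit). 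You also elide the truncation, centralization and rescaling step and the explicit tightness argument needed to pass from finite-dimensional convergence of $M_n^{(1)}$ to process-level convergence; both are present and necessary in the paper.
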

The proofs of Theorem \ref{thm:CLT} is postponed to Section \ref{sec:proof_CLT_1}.

\begin{remark}
	Note that we require $p\geqslant Kn^2$ asymptotically in Assumption (C1), while the situation of $n\ll p \ll n^2$ remains unknown.
\end{remark}

\begin{remark}
	If $\bSigma_p = \bI_p$, we have $a_p=b_p=\tilde{b}_p=c_p=d_p=1$ and $\gamma=\theta=\omega=1$. Our Theorem~\ref{thm:CLT} reduces to the CLT derived in \citet{chen2015clt}.
\end{remark}

Applying Theorem~\ref{thm:CLT} to three polynomial functions, we obtain the following corollary. 

\begin{corollary}\label{coro:CLT-x-x2-x3}
	With the same notations and assumptions given in Theorem \ref{thm:CLT}, consider three analytic functions $f_1(x)=x, f_2(x)=x^2, f_3(x)=x^3$, we have
	\begin{align*}
		G_n(f_1)  &= \tr(\bA_n) \convd \calN\Bigl(0, \frac{\omega}{\theta}(\nu_4-3)+2\Bigr);\\[0.5em]
		G_n(f_2)  &= \tr(\bA_n^2) - n - \Bigl\{\frac{\tilde{b}_p}{b_p}(\nu_4-3)+1\Bigr\} \convd \calN(0, 4);\\[0.5em]
		G_n(f_3)  &= \tr(\bA_n^3) - \frac{c_p}{b_p\sqrt{b_p}}\sqrt{\frac{n}{p}}\Bigl\{ n+1+\frac{\tilde{b}_p}{b_p}(\nu_4-3) \Bigr\} \convd \calN\Bigl(0, \frac{9\omega}{\theta}(\nu_4-3)+24\Bigr).
	\end{align*}
\end{corollary}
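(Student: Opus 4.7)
The plan is to apply Theorem~\ref{thm:CLT} to the three polynomials $f_j$, each of which is entire and hence lies in $\mathscr{M}$. For every $f_j$ two ingredients must be computed: the Fourier coefficients $\Psi_k(f_j)$ feeding~\eqref{eq:cov_1} for the variance, and the deterministic shift between $G_n(f_j)$ and $\tr(\bA_n^j)$, which consists of the mean correction $n\int f_j\,\dif F$ together with the contour-integral bias
\[
B_n(f):=\frac{n}{2\pi i}\oint_{|m|=\rho}f(-m-m^{-1})\,\calX_n(m)\,\frac{1-m^2}{m^2}\,\dif m.
\]

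For the Fourier side I would substitute $x=2\cos\theta$ to obtain $f_1(2\cos\theta)=2\cos\theta$, $f_2(2\cos\theta)=2+2\cos 2\theta$ and $f_3(2\cos\theta)=6\cos\theta+2\cos 3\theta$, so that the only non-vanishing values are $\Psi_1(f_1)=1$, $\Psi_2(f_2)=1$, $\Psi_1(f_3)=3$ and $\Psi_3(f_3)=1$; inserted into~\eqref{eq:cov_1} these produce the three announced variances. The moments of the semicircle law give $n\int f_j\,\dif F=0,n,0$ respectively, accounting for the $-n$ inside $G_n(f_2)$ and for the absence of such a constant in the other two cases.

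For the bias I would exploit $\calC(0)=0$ and $\calB(0)=-1$, which force $\calX_n$ to be the branch of the quadratic $\calA X^2+\calB X+\calC=0$ vanishing at the origin and hence analytic there, with expansion $\calX_n=-\calC/\calB-\calA\calC^2/\calB^3+\cdots$. Because $f_j(-m-m^{-1})(1-m^2)/m^2$ is a Laurent polynomial of total degree at most five (specifically $-m^{-3}+m$ for $f_1$, $m^{-4}+m^{-2}-1-m^2$ for $f_2$, and $-m^{-5}-2m^{-3}+2m+m^3$ for $f_3$), the residue at $m=0$ requires only the coefficients $[m^j]\calX_n$ for $j\leq 4$. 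A direct Taylor expansion of $\calC$ together with a geometric series for $1/\calB$ yields $[m]\calX_n=[m^2]\calX_n=0$, $[m^3]\calX_n=\beta_p/n$ with $\beta_p:=1+(\nu_4-3)\tilde{b}_p/b_p$, and $[m^4]\calX_n=-\sqrt{n/p}\,(c_p/(b_p\sqrt{b_p}))(1+\beta_p/n)$. Multiplying by $n$ and extracting the appropriate residue gives $B_n(f_1)=0$, $B_n(f_2)=\beta_p$, and $B_n(f_3)=\sqrt{n/p}\,(c_p/(b_p\sqrt{b_p}))(n+\beta_p)$, which are precisely the centerings appearing in the corollary.

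The main obstacle is keeping enough terms when computing $[m^4]\calX_n$ for the case of $f_3$: this coefficient arises as a cross term between the $O(\sqrt{n/p})$ linear-in-$m$ piece of $1/\calB$ and the $\beta_p m^3/n$ leading term of $-\calC$, and produces a bias of order $n\cdot\sqrt{n/p}=O(\sqrt n)$ under assumption~(C1), which is essential since without subtracting it $\tr(\bA_n^3)$ itself would diverge. One should also verify that the omitted higher-order contribution $\calA\calC^2/\calB^3$ is $O(m^6)$ and therefore invisible to the Laurent polynomials above, and that the sign convention in~\eqref{eq:calX} indeed picks out the small root of the quadratic near the origin rather than the root that is bounded away from zero.
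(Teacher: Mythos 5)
Your proposal is correct and carries out exactly the elementary computation that the paper omits with the remark ``the calculations in these applications are elementary, thus omitted.'' The Fourier coefficients, the semicircle moments, and the residue calculations all check out; in particular the expansion $\calX_n(m)=-\calC/\calB+O(m^6)$ near $m=0$, the values $[m^3]\calX_n=\beta_p/n$ and $[m^4]\calX_n=-\sqrt{n/p}\,\tfrac{c_p}{b_p\sqrt{b_p}}(1+\beta_p/n)$, and the resulting centerings $B_n(f_1)=0$, $B_n(f_2)=\beta_p$, $B_n(f_3)=\sqrt{n/p}\,\tfrac{c_p}{b_p\sqrt{b_p}}(n+\beta_p)$ all match the corollary. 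The two items you flagged as ``should also verify'' do in fact hold: the quadratic correction $\calA\calC^2/\calB^3$ is $O(m^6)$ since $\calC=O(m^3)$ and $\calA,\calB=O(1)$ near $0$, and the sign convention in \eqref{eq:calX} does pick the branch with $\calX_n(0)=0$ (this is implicit in the proof of Lemma~\ref{lem:Mn2_mean}, where $\calX_n$ is identified with the $o(1)$ root $x_1$ of the quadratic rather than the root near $(1-m^2)/m^2$).
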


The calculations in these applications are elementary, thus omitted. Note that the mean correction terms for $G_n(f_1)$, $G_n(f_2)$, and $G_n(f_3)$ are $0$, $\frac{\tilde{b}_p}{b_p}(\nu_4-3)+1$, and $\frac{c_p}{b_p\sqrt{b_p}}\sqrt{\frac{n}{p}}\bigl\{ n+1+\frac{\tilde{b}_p}{b_p}(\nu_4-3) \bigr\}$, respectively.

\subsection{Case of $p\geqslant Kn^3$}

When $p\geqslant Kn^3$, the mean correction term in \eqref{eq:LSS} can be further simplified, i.e.
\begin{align}
	&\;-\frac{n}{2\pi i}\oint_{|m|=\rho}f(-m-m^{-1})\calX_n(m)\frac{1-m^2}{m^2}\dif m \nonumber\\
	= &\; -\biggl[\frac{1}{4}\bigl( f(2)+f(-2)\bigr)-\frac{1}{2}\Psi_0(f)+\frac{\tilde{b}_p}{b_p}(\nu_4-3)\Psi_2(f)\biggr] - \sqrt{\frac{n^3}{p}}\frac{c_p}{b_p\sqrt{b_p}}\Psi_3(f) + o(1).\label{eq:mean-correction-simplify}
\end{align}
For any function $f\in\mathscr{M}$, we define a new normalization of the LSS:
\begin{equation}\label{eq:LSS_n3}
	Q_{n}(f)= n\int_{-\infty}^{+\infty}f(x) \mathrm{d} \left\{ F^{\bA_n}(x)-F(x)\right\}-\sqrt{\frac{n^3}{p}}\frac{c_p}{b_p\sqrt{b_p}}\Psi_3(f).
\end{equation}
Note that the last term in \eqref{eq:LSS_n3} makes no contribution if the function $f$ is even ($\Psi_3(f)=0$) or $n^3/p = o(1)$. Substituting \eqref{eq:mean-correction-simplify} into Theorem \eqref{thm:CLT}, we obtain the following CLT for $Q_n(f)$.

\begin{corollary}\label{coro:CLT_pn3}
	Under assumptions $(A)$, $(B)$ in Theorem~\eqref{thm:CLT} and
	\begin{enumerate}
	    \item[(C2)] $p\wedge n\to\infty$ and $n^3/p = O(1)$.
  	\end{enumerate}
 for any $f_1,\cdots,f_k\in \mathscr{M}$, the finite dimensional random vector $\lb Q_n(f_1),\cdots,Q_n(f_k)\rb$ converges weakly to a Gaussian vector $\lb Y(f_1),\cdots, Y(f_k)\rb$ with mean function
  \begin{equation}\label{eq:CLT-mean}
  	\Expe Y(f_k)=\frac{1}{4}\bigl( f_k(2)+f_k(-2)\bigr)-\frac{1}{2}\Psi_0(f_k)+\cfrac{\omega}{\theta}(\nu_4-3)\Psi_2(f_k)
  \end{equation}
   and covariance function given in \eqref{eq:cov_1}.
\end{corollary}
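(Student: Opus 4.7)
The plan is to establish the mean-correction expansion \eqref{eq:mean-correction-simplify} and then deduce the corollary as an immediate consequence of Theorem~\ref{thm:CLT}. Observe that
\[
Q_n(f) - G_n(f) = -\frac{n}{2\pi i}\oint_{|m|=\rho} f(-m-m^{-1})\,\calX_n(m)\,\frac{1-m^2}{m^2}\dif m + \sqrt{\frac{n^3}{p}}\frac{c_p}{b_p\sqrt{b_p}}\Psi_3(f).
\]
Once \eqref{eq:mean-correction-simplify} is established, this deterministic difference converges to $\frac{1}{4}(f(2)+f(-2)) - \frac{1}{2}\Psi_0(f) + \frac{\omega}{\theta}(\nu_4-3)\Psi_2(f)$ (using $\tilde{b}_p/b_p \to \omega/\theta$ from Assumption~(B)), and Slutsky's theorem, applied finite-dimensionally to $(G_n(f_1),\ldots,G_n(f_k))$, delivers the claim with the asserted mean \eqref{eq:CLT-mean} and covariance \eqref{eq:cov_1}.

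The first step toward \eqref{eq:mean-correction-simplify} is a Taylor expansion of $\calX_n(m)$. Writing $\sqrt{\calB^2 - 4\calA\calC} = \calB\sqrt{1 - 4\calA\calC/\calB^2}$ with the designated branch and using $\sqrt{1-\epsilon} = 1 - \epsilon/2 + O(\epsilon^2)$, one obtains
\[
\calX_n(m) = -\frac{\calC}{\calB} + O\!\left(\frac{\calA\,\calC^2}{\calB^3}\right).
\]
On the contour $|m|=\rho < 1$, both $|\calA|$ and $|\calB|$ are bounded away from zero for all large $n$ (since $|m^2-1|\geq 1-\rho^2>0$), while $|\calC| = O(1/n)$ uniformly: under (C2), $\sqrt{n/p} = O(1/n)$ and $n/p = O(1/n^2)$, so every summand in $\calC$ is at worst $O(1/n)$. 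Hence $n$ times the integrated remainder is $O(1/n) = o(1)$. By the same token, replacing $\calA$ by $m$ and $\calB$ by $m^2 - 1$ (dropping the $O(\sqrt{n/p})$ corrections) only contributes $o(1)$. After the algebraic simplification $-\calC/(m^2-1)\cdot(1-m^2)/m^2 = \calC/m^2$, the task reduces to computing
\[
-\frac{n}{2\pi i}\oint_{|m|=\rho} F(m)\,\frac{\calC}{m^2}\,\dif m,\qquad F(m) := f(-m-m^{-1}).
\]

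The residues are evaluated using the Laurent expansion $F(m) = \Psi_0(f) + \sum_{k\geq 1}(-1)^k\Psi_k(f)(m^k + m^{-k})$, convergent on an annulus containing $|m|=1$ and obtained from the Fourier series of $f(-2\cos\theta)$. Decompose $\calC = \calC_1 + \calC_2 + \calC_3$ according to the three summands in its definition. The $\calC_3$-piece is $O(n/p) = O(1/n^2)$ and contributes $o(1)$. The $\calC_2$-piece picks out the $m^{-3}$ coefficient of $F$, yielding $-\sqrt{n^3/p}\,(c_p/(b_p\sqrt{b_p}))\,\Psi_3(f)$, which is exactly the third term in \eqref{eq:mean-correction-simplify}. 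The $\calC_1$-piece splits in two: one involves $F(m)\cdot m$ and returns $-\frac{(\nu_4-3)\tilde{b}_p}{b_p}\Psi_2(f)$ via the $m^{-2}$ coefficient of $F$; the other, $F(m)\cdot m/(1-m^2)$, is handled by $m/(1-m^2) = \sum_{j\geq 0} m^{2j+1}$ (valid on $|m|<1$), extracting the $m^{-1}$ coefficient of the product to get $\sum_{j\geq 0}\Psi_{2j+2}(f)$. Evaluating the Fourier expansion of $f(2\cos\theta)$ at $\theta=0$ and $\theta=\pi$ and taking the half-sum shows $\sum_{j\geq 0}\Psi_{2j+2}(f) = \frac{1}{4}(f(2)+f(-2)) - \frac{1}{2}\Psi_0(f)$, completing \eqref{eq:mean-correction-simplify}.

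The main obstacle is not any single calculation but the careful bookkeeping of errors arising from three competing small parameters ($1/n$, $\sqrt{n/p}$, $n/p$) simultaneously present in $\calA$, $\calB$, and $\calC$. Under (C2) all are at worst $O(1/n)$, so after the factor of $n$ is applied every correction is $o(1)$; however, verifying uniformity of the Taylor remainder $O(\calA\calC^2/\calB^3)$ on the contour and confirming that the $\calA \mapsto m$, $\calB \mapsto m^2 - 1$ approximations are truly harmless requires attention. Once those estimates are in place, the rest of the argument reduces to elementary residue calculus together with the Fourier expansion of $f(-2\cos\theta)$.
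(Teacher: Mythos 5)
Your argument is correct and follows the same route the paper outlines: establish the expansion \eqref{eq:mean-correction-simplify} by Taylor-expanding $\calX_n(m)$ under (C2), then collect residues via the Laurent series of $f(-m-m^{-1})$, and finally apply Theorem~\ref{thm:CLT}. The paper asserts \eqref{eq:mean-correction-simplify} without proof, so you have supplied exactly the missing verification, and all the residue evaluations (the $m^{-3}$, $m^{-2}$, and geometric-series/$m^{-1}$ extractions, plus the identity $\sum_{j\geq 0}\Psi_{2j+2}(f)=\tfrac14(f(2)+f(-2))-\tfrac12\Psi_0(f)$) check out. One small slip: the displayed difference
\[
-\frac{n}{2\pi i}\oint_{|m|=\rho} f(-m-m^{-1})\,\calX_n(m)\,\frac{1-m^2}{m^2}\dif m + \sqrt{\frac{n^3}{p}}\frac{c_p}{b_p\sqrt{b_p}}\Psi_3(f)
\]
equals $G_n(f)-Q_n(f)$, not $Q_n(f)-G_n(f)$; with the label corrected, \eqref{eq:mean-correction-simplify} gives $Q_n(f)-G_n(f)\to \tfrac14(f(2)+f(-2))-\tfrac12\Psi_0(f)+\tfrac{\omega}{\theta}(\nu_4-3)\Psi_2(f)$ exactly as you conclude.
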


\begin{remark}
As a special case of Theorem~\ref{thm:CLT}, Corollary \ref{coro:CLT_pn3} is used in \citet{li2016testing} to derive the asymptotic power of two sphericity tests, John's invariant test and Quasi-likelihood ratio test (QLRT), when the dimension $p$ is much larger than sample size $n$.  
Specifically, let $\bX=(\bx_1,\ldots,\bx_n)$ be a $p\times n$ data matrix with $n$ i.i.d. $p$-dimensional random vectors $\{\bx_i\}_{1\leqslant i \leqslant n}$ with covariance matrix $\bSigma=\Var(\bx_i)$. The goal is to test
\[
	H_0: \bSigma = \sigma^2\bI_p,\qquad \text{vs.} \qquad  H_1: \bSigma \neq \sigma^2\bI_p,
\]
where $\sigma^2$ is an unknown positive constant. John's test statistic is defined by 
\[
	U = \frac{1}{p}\tr \biggl[\biggl(\frac{\bS_n}{\tr(\bS_n)/p}-\bI_p\biggr)^2\biggr] = \frac{p^{-1}\sum_{i=1}^p (l_i-\bar{l})^2}{\bar{l}^2},
\]
where $\{l_i\}_{1\leqslant i \leqslant p}$ are eigenvalues of $p$-dimensional sample covariance matrix $\bS_n=\frac{1}{n}\sum_{i=1}^n \bx_i\bx_i'=\frac{1}{n}\bX\bX'$ and $\bar{l}=\frac{1}{p}\sum_{i=1}^p l_i$.
The QLRT statistic is defined by 
\[
	\mathcal{L}_n = \frac{p}{n}\log \frac{(n^{-1}\sum_{i=1}^n \tilde{l}_i )^n}{\prod_{i=1}^n \tilde{l}_i},
\]
where $\{\tilde{l}_i\}_{1\leqslant i \leqslant n}$ are the eigenvalues of the $n\times n$ matrix $\frac{1}{p}\bX'\bX$. The main idea is that both $U$ and $\mathcal{L}_n$ can be expressed as functions of eigenvalues of $\bA_n$ in \eqref{eq:A_def}. Thus, asymptotic distributions of John's statistic and QLRT statistic can be derived either using Theorem~\ref{thm:CLT} or Corollary~\ref{coro:CLT_pn3}. \citet{li2016testing} used Corollary \ref{coro:CLT_pn3} to derive the limiting distributions of $U$ and $\mathcal{L}_n$ under the alternative hypothesis. Their power functions are proven to converge to $1$ under the assumption $n^3/p=O(1)$. More details can be found in \citet{li2016testing}. 
\end{remark}

\section{Applications to Hypothesis Testing about Large Covariance Matrices}\label{sec:application_hypothesis_testing}

	\subsection{The Identity Hypothesis ``$\Sigma_p = \bI_p$''}\label{sec:identity_test}

	Let $\bY=(\by_1,\ldots,\by_n)$ be a $p\times n$ data matrix with $n$ i.i.d. $p$-dimensional random vectors $\{\by_i=\bSigma_p^{1/2}\bx_i\}_{1\leqslant i \leqslant n}$ with covariance matrix $\bSigma_p=\Var(\by_i)$ and $\bx_i$ has $p$ i.i.d. components $\{X_{ij},~1\leq j\leq p\}$ satisfying $\Expe X_{ij}=0$, $\Expe X_{ij}^2=1$, $\Expe X_{ij}^4=\nu_4$. We explore the identity testing problem
	\begin{equation}\label{eq:identity_test}
		H_0: \bSigma_p = \bI_p,\qquad \text{vs.} \qquad H_1: \bSigma_p \ne \bI_p,
	\end{equation}
	under two different asymptotic regimes: high-dimensional regime, ``$p\wedge n\to\infty,~p/n\to c \in(0,\infty)$'' and ultra-high dimensional regime, ``$p\wedge n\to\infty,~p/n\to\infty$''.
	We will consider two well-known test statistics and discuss their limiting distributions under both regimes.
	
	For the identity testing problem \eqref{eq:identity_test}, \citet{Nagao1973On} proposed a statistic based on the Frobenius norm:
	\[
		V = \frac{1}{p} \tr\bigl[ (\bS_n-\bI)^2 \bigr],
	\]
	where $\bS_n=\tfrac{1}{n}\bY\bY'$ is the sample covariance matrix.
	Nagao's test based on $V$ performs well when $n$ tends to infinity while $p$ remains fixed. However, \citet{LedoitWolf2002} showed that Nagao's test has poor properties when $p$ is large.
 	They made some modifications as
 	\begin{equation}\label{eq:W_def}
		W = \frac{1}{p}	\tr\Bigl[(\bS_n-\bI_p)^2\Bigr] - \frac{p}{n}\biggl[\frac{1}{p}\tr(\bS_n)\biggr]^2 +\frac{p}{n}.
	\end{equation}
	When $p\wedge n\to\infty, p/n=c_n\to c\in(0, \infty)$, under normality assumption, \citet{LedoitWolf2002} proved that the limiting distribution of $W$ under $H_0$ is
	\[
		nW-p-1 \convd \calN(0,4).
	\]
	\citet{wang2013sphericity} further removed the normality assumption and show that under $H_0$, when $p\wedge n\to\infty, p/n=c_n\to c\in(0, \infty)$,
	\begin{equation}\label{eq:W_null_dist_high}
		nW-p-(\nu_4-2) \convd \calN(0,4).
	\end{equation}
	
	Now we derive the limiting distribution of $W$ under both $H_0$ and $H_1$ when $p/n\rightarrow \infty$. We will show that the test based on $W$ is consistent under the ultra-high dimensional
	setting. The main results of the test based on $W$ is as follows. 
	
	\begin{theorem}\label{thm:W_limit_dist_H0}
		Assume that $\bY=(\by_1,\ldots,\by_n)$ is a $p\times n$ data matrix with $n$ i.i.d. $p$-dimensional random vectors $\{\by_i=\bSigma_p^{1/2}\bx_i\}_{1\leqslant i \leqslant n}$ with covariance matrix $\bSigma_p=\Var(\by_i)$ and $\bx_i$ has $p$ i.i.d. components $\{X_{ij},~1\leq j\leq p\}$ satisfying $\Expe X_{ij}=0$, $\Expe X_{ij}^2=1$, $\Expe X_{ij}^4=\nu_4$ and $\Expe |X_{ij}|^{6+\varepsilon_0}<\infty$ for some small positive $\varepsilon_0$. $W$ is defined as \eqref{eq:W_def}. Then under $H_0$, when $p\wedge n\to\infty$ and $n^2/p=O(1)$,
		\begin{equation}\label{eq:W_null_dist_ultra}
			nW-p-(\nu_4-2) \convd \calN(0,4).
		\end{equation}
	\end{theorem}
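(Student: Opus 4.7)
\noindent\textbf{Proof proposal for Theorem \ref{thm:W_limit_dist_H0}.}

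The plan is to rewrite $nW-p-(\nu_4-2)$ as a linear spectral statistic of $\bA_n$ (plus lower order terms) and then invoke Corollary~\ref{coro:CLT-x-x2-x3} under $H_0$. Under $H_0$ we have $\bY=\bX$, so $\bS_n=\frac{1}{n}\bX\bX'$ and $\bA_n=(np)^{-1/2}(\bX'\bX-p\bI_n)$, since $a_p=b_p=\tilde b_p=c_p=d_p=1$ and $\gamma=\theta=\omega=1$. The $n$ non-zero eigenvalues of $\bS_n$ coincide with those of the companion matrix $\underline{\bS}_n=\frac{1}{n}\bX'\bX=\sqrt{p/n}\,\bA_n+(p/n)\bI_n$, so each nonzero eigenvalue $\mu_i$ of $\bS_n$ satisfies $\mu_i=\sqrt{p/n}\,\lambda_i+p/n$, where $\lambda_i$ is an eigenvalue of $\bA_n$.

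First I would use this change of variables to express all the traces appearing in $W$. Direct substitution yields
\[
\tr(\bS_n)=\sqrt{p/n}\,\tr(\bA_n)+p,\qquad
\tr(\bS_n^2)=(p/n)\tr(\bA_n^2)+2(p/n)\sqrt{p/n}\,\tr(\bA_n)+p^2/n.
\]
Plugging these into \eqref{eq:W_def} and simplifying carefully (the $\tr(\bA_n)$ terms telescope, which is the main algebraic point), I expect to get
\[
nW-p \;=\; \tr(\bA_n^2)-n \;-\; \tfrac{2\sqrt{n}}{\sqrt{p}}\tr(\bA_n) \;-\; \tfrac{1}{n}\bigl(\tr(\bA_n)\bigr)^2.
\]
The constant correction $\nu_4-2$ will then arise naturally from the mean-correction term for $\tr(\bA_n^2)$.

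Second, I would apply Corollary~\ref{coro:CLT-x-x2-x3} with $\bSigma_p=\bI_p$, which gives $\tfrac{\tilde b_p}{b_p}=1$ and therefore
\[
\tr(\bA_n^2)-n-(\nu_4-2)\;\convd\;\calN(0,4),\qquad \tr(\bA_n)\;\convd\;\calN(0,\nu_4-1).
\]
In particular $\tr(\bA_n)=O_P(1)$. Combined with the standing assumption $n^2/p=O(1)$, which forces $\sqrt{n/p}=O(1/\sqrt{n})\to 0$, this shows
\[
\tfrac{2\sqrt{n}}{\sqrt{p}}\tr(\bA_n)=o_P(1),\qquad \tfrac{1}{n}\bigl(\tr(\bA_n)\bigr)^2=o_P(1).
\]
Slutsky's theorem applied to the identity from the previous paragraph then yields $nW-p-(\nu_4-2)\convd\calN(0,4)$, proving \eqref{eq:W_null_dist_ultra}.

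The proof is essentially a reduction to the master CLT, so there is no serious obstacle; the only thing that requires a bit of care is the algebraic simplification in the first step, where the coefficient of $\tr(\bA_n)$ from $p^{-1}\tr[(\bS_n-\bI_p)^2]$ must cancel exactly against the contribution coming from $(p/n)[p^{-1}\tr(\bS_n)]^2$ so that no term of order $\sqrt{p/n}\,\tr(\bA_n)$ (which would blow up) survives. Tracking the mean-correction constant $\nu_4-2$ through this cancellation is the only point where a sign or coefficient mistake could easily occur.
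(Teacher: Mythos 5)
Your proposal is correct and the algebra checks out: under $H_0$, $\tr(\bS_n)=\sqrt{p/n}\,\tr(\bA_n)+p$ and $\tr(\bS_n^2)=(p/n)\tr(\bA_n^2)+2(p/n)\sqrt{p/n}\,\tr(\bA_n)+p^2/n$, and the $\sqrt{p/n}\,\tr(\bA_n)$ terms from $p^{-1}\tr[(\bS_n-\bI_p)^2]$ do indeed cancel against those from $(p/n)[p^{-1}\tr(\bS_n)]^2$, leaving only the harmless $\tfrac{2\sqrt{n}}{\sqrt{p}}\tr(\bA_n)$ and $\tfrac{1}{n}(\tr\bA_n)^2$ remainders, both $o_P(1)$ under $n^2/p=O(1)$. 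Slutsky then transfers the CLT for $\tr(\bA_n^2)-n-(\nu_4-2)$ to $nW-p-(\nu_4-2)$ with no need for a joint distributional argument.

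The route is genuinely different from the paper's. The paper obtains Theorem~\ref{thm:W_limit_dist_H0} only as the special case $\bSigma_p=\bI_p$ of Theorem~\ref{thm:W_limit_dist_H1}, whose proof (Appendix B) invokes a \emph{joint} CLT for the pair $\bigl(\tr(\widetilde{\bA}_n^2)/n,\ \tr(\widetilde{\bA}_n)/n\bigr)$ taken from \citet{li2016testing}, and then runs the delta method through the bivariate map $g(x,y)=b_p x - (2n/p)\sqrt{pb_p/n}\,y - b_p y^2 + p/n - 2a_p + 1$ to get the asymptotic variance $4\theta^2$. That machinery is needed because for general $\bSigma_p$ the $\tr(\widetilde{\bA}_n)$ contribution does not vanish and its asymptotic variance enters the final answer. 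What your approach buys is a shorter, self-contained proof of the null case that only uses the marginal CLTs from Corollary~\ref{coro:CLT-x-x2-x3} and Slutsky; the trade-off is that it does not extend to $\bSigma_p\neq\bI_p$, where the exact cancellation you rely on fails and one really does need the joint law and the delta method, as the paper does.
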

	Note that the asymptotic distribution \eqref{eq:W_null_dist_ultra} coincides with \eqref{eq:W_null_dist_high}, which means $W$ has the same limiting null distribution in both high dimensional and ultra-high dimensional setting. Therefore $W$ can be used to test \eqref{eq:identity_test} under the ultra-high dimensional setting. For nominal level $\alpha$, the corresponding rejection rule is 
	\begin{equation}\label{eq:W_reject_rule}
		\frac{1}{2}\Bigl\{ nW-p-(\nu_4-2) \Bigr\} \geqslant z_{\alpha},
	\end{equation}
	where $z_{\alpha}$ is the $\alpha$ upper quantile of standard normal distribution.

As for the case of $H_1$ when $\bSigma_p\neq \bI_p$, we have	

\begin{theorem}\label{thm:W_limit_dist_H1}
Under the same assumptions as in Theorem~\ref{thm:W_limit_dist_H0},  further assume that  $\{\bSigma_p,~p\geq 1\}$ is a sequence of non-negative definite matrices, bounded in spectral norm such that the following limits exist:
		\begin{gather*}
			\gamma=\lim_{p\to \infty} \frac{1}{p}\tr(\bSigma_p),
			\qquad \theta=\lim_{p\to \infty} \frac{1}{p}\tr(\bSigma_p^2),
			\qquad \omega=\lim_{p\to \infty} \frac{1}{p}\sum_{i=1}^p (\bSigma_p)_{ii}^2,
		\end{gather*}
		then when $p\wedge n\to\infty$ and $n^2/p=O(1)$,
		\[
			nW-p-\theta\Bigl[\frac{\omega}{\theta}(\nu_4-3)+1\Bigr] +n(2\gamma-1-\theta) \convd \calN(0,4\theta^2).
		\]
	\end{theorem}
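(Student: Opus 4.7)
The plan is to reduce the test statistic $W$ to a scalar function of the spectral statistics $\tr(\bA_n)$ and $\tr(\bA_n^2)$ of the re-normalized matrix $\bA_n$ in \eqref{eq:A_def}, and then invoke Corollary \ref{coro:CLT-x-x2-x3}. The key observation is that although $\bS_n$ is a $p\times p$ matrix, both $\tr(\bS_n)$ and $\tr(\bS_n^2)$ can be expressed through the $n\times n$ matrix $\bQ := \bX'\bSigma_p\bX$ via the identity $\bQ = \sqrt{npb_p}\,\bA_n + pa_p\bI_n$.

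Using $\tr(\bS_n) = n^{-1}\tr(\bQ)$ and $\tr(\bS_n^2) = n^{-2}\tr(\bQ^2)$, I expand and collect. The pieces $pa_p^2/n$ and $2a_p\sqrt{npb_p}\,\tr(\bA_n)/n^2$ appear in both $p^{-1}\tr[(\bS_n-\bI_p)^2]$ and $(p/n)[p^{-1}\tr(\bS_n)]^2$ and so cancel in $W$. After multiplying by $n$, the identity reduces to
\begin{equation*}
nW = b_p\,\tr(\bA_n^2) - \frac{b_p}{n}\bigl[\tr(\bA_n)\bigr]^2 - 2\sqrt{\frac{nb_p}{p}}\,\tr(\bA_n) - n(2a_p-1) + p.
\end{equation*}

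Next I apply Corollary \ref{coro:CLT-x-x2-x3}. Under Assumption (C1), $n^2/p=O(1)$, so $\sqrt{n/p}=O(n^{-1/2})\to 0$; combined with $\tr(\bA_n)=O_p(1)$, both the cross term $\sqrt{nb_p/p}\,\tr(\bA_n)$ and the quadratic term $n^{-1}b_p[\tr(\bA_n)]^2$ are $o_p(1)$. The same corollary yields $\tr(\bA_n^2) - n - [\tilde{b}_p(\nu_4-3)/b_p + 1] \dconv \calN(0,4)$, hence
\begin{equation*}
b_p\,\tr(\bA_n^2) = nb_p + \tilde{b}_p(\nu_4-3) + b_p + 2b_p Z_n + o_p(1), \qquad Z_n \dconv \calN(0,1).
\end{equation*}
Plugging back produces
\begin{equation*}
nW - p - \bigl\{\tilde{b}_p(\nu_4-3) + b_p\bigr\} + n(2a_p-1-b_p) \;\dconv\; \calN\bigl(0, 4b_p^2\bigr).
\end{equation*}
Under Assumption (B), $a_p\to\gamma$, $b_p\to\theta$, $\tilde{b}_p\to\omega$, and $\omega(\nu_4-3)+\theta=\theta[\omega(\nu_4-3)/\theta+1]$, so Slutsky's lemma applied to the variance factor $4b_p^2\to 4\theta^2$ delivers the stated conclusion.

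The main technical point requiring care is the $O(n)$ centering term $n(2a_p-1-b_p)$. Under $H_1$ this is genuinely of order $n$ whenever $2\gamma-1-\theta\neq 0$, so replacing $a_p,b_p$ by $\gamma,\theta$ demands the quantitative rates $a_p-\gamma=o(n^{-1})$ and $b_p-\theta=o(n^{-1})$; absent such rates one should read the theorem with the finite-sample quantities $a_p,b_p,\tilde{b}_p$ in the centering. All other steps are elementary algebraic rearrangements and a direct invocation of Corollary \ref{coro:CLT-x-x2-x3}, so no new random-matrix technology beyond the CLT of Theorem \ref{thm:CLT} is required.
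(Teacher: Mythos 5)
Your proof is correct, and the algebraic reduction of $nW$ to $b_p\tr(\bA_n^2) - \frac{b_p}{n}[\tr(\bA_n)]^2 - 2\sqrt{nb_p/p}\,\tr(\bA_n) - n(2a_p-1)+p$ matches the expression the paper obtains. Where you diverge is in the final step: the paper writes $W=g\bigl(\tr(\widetilde{\bA}_n^2)/n,\tr(\widetilde{\bA}_n)/n\bigr)$ for a suitable scalar function $g$ and then applies the delta method to a \emph{bivariate} CLT for $\bigl(\tr(\widetilde{\bA}_n^2)/n,\tr(\widetilde{\bA}_n)/n\bigr)$ imported from Lemma 3.1 of \citet{li2016testing}, whereas you observe directly that the $\tr(\bA_n)$ and $[\tr(\bA_n)]^2$ contributions are $o_p(1)$ under $n^2/p=O(1)$ and invoke only the \emph{marginal} CLT for $\tr(\bA_n^2)$ from Corollary~\ref{coro:CLT-x-x2-x3} together with Slutsky. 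Your route is more elementary and self-contained (it requires no joint convergence and no external lemma), and it buys the same conclusion since the delta method in the paper ultimately annihilates the $y$-component anyway because $\partial g/\partial y=O(\sqrt{n/p})\to 0$.

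Your closing caveat is well taken and applies equally to the paper's own derivation: the last displayed step in the paper's proof replaces $nb_p$ by $n\theta$ and $2na_p$ by $2n\gamma$ (inside the $O(n)$ centering) and replaces $\tilde b_p$ by $\omega$ inside the $O(1)$ term, which under Assumption (B) alone (mere existence of limits, with no rate) is not justified when $2\gamma-1-\theta\neq 0$. Formally one should either state the centering in terms of the finite-$p$ quantities $a_p,b_p,\tilde b_p$, or assume convergence rates $a_p-\gamma=o(n^{-1})$ and $b_p-\theta=o(n^{-1})$. Under $H_0$ this issue disappears because $\bSigma_p=\bI_p$ makes $a_p=b_p=\tilde b_p=1$ exactly; under $H_1$ it is a genuine imprecision. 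Apart from flagging this shared gap, your proof is complete.
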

Note that Theorem \ref{thm:W_limit_dist_H1} reveals the limiting null distribution of $W$. Let $\bSigma_p=\bI_p$, then
$\gamma=\theta=\omega=1$, Theorem \ref{thm:W_limit_dist_H1} reduces to Theorem \ref{thm:W_limit_dist_H0}, which states the limiting null
distribution of $W$. With Theorem \ref{thm:W_limit_dist_H0} and \ref{thm:W_limit_dist_H1}, asymptotic power of $W$ can be derived.
\begin{proposition}\label{prop:power_theo_W}
	With the same assumptions as in Theorem~\ref{thm:W_limit_dist_H1}, when $p \wedge n\to\infty$ and $n^2/p=O(1)$, the testing power of $W$ for \eqref{eq:identity_test}
	\[
	\beta(H_1) \rightarrow 1-\Phi\biggl( \frac{1}{2\theta} \Bigl\{ 2z_{\alpha} - \omega(\nu_4-3) -\theta +n(2\gamma-1-\theta) + (\nu_4-2) \Bigr\} \biggr).
	\]
	
	If $\gamma=\theta=1$, then $\beta(H_1) \to 1-\Phi\bigl(z_{\alpha}-\tfrac{\omega-1}{2}(\nu_4-3)\bigr)$; otherwise, $\beta(H_1)\to 1$.
\end{proposition}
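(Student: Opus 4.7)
The plan is to derive the asymptotic power directly from Theorem~\ref{thm:W_limit_dist_H1} by a centering-and-scaling manipulation of the rejection rule \eqref{eq:W_reject_rule}, followed by a short case analysis of the deterministic drift $n(2\gamma - 1 - \theta)$ that appears under $H_1$.

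First I would introduce the standardized statistic
\[
T_n := nW - p - \omega(\nu_4 - 3) - \theta + n(2\gamma - 1 - \theta),
\]
so that Theorem~\ref{thm:W_limit_dist_H1} yields $T_n/(2\theta) \convd \calN(0,1)$. Solving this identity for $nW - p - (\nu_4 - 2)$ and substituting into the rejection event $\{\tfrac{1}{2}[nW - p - (\nu_4 - 2)] \geqslant z_\alpha\}$ converts it into
\[
\Bigl\{ \frac{T_n}{2\theta} \geqslant \frac{1}{2\theta}\bigl[\, 2z_\alpha - \omega(\nu_4-3) - \theta + (\nu_4 - 2) + n(2\gamma - 1 - \theta)\,\bigr] \Bigr\}.
\]
The first displayed power formula in the proposition then follows from $T_n/(2\theta) \convd \calN(0,1)$ and continuity of $\Phi$.

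For the dichotomy I would rely on two classical inequalities applied to the spectrum of $\bSigma_p$: Cauchy--Schwarz (equivalently, Jensen's inequality applied to $x \mapsto x^2$ against the ESD of $\bSigma_p$) gives $\theta \geqslant \gamma^2$, and AM--GM gives $\gamma^2 + 1 \geqslant 2\gamma$. Together they yield
\[
2\gamma - 1 - \theta \;\leqslant\; 2\gamma - 1 - \gamma^2 \;=\; -(\gamma - 1)^2 \;\leqslant\; 0,
\]
with equality throughout if and only if $\gamma = 1$ and $\theta = \gamma^2 = 1$. Hence in the sub-case $\gamma = \theta = 1$, the drift $n(2\gamma - 1 - \theta)$ vanishes and the algebraic simplification $\omega(\nu_4 - 3) + \theta - (\nu_4 - 2) = (\omega - 1)(\nu_4 - 3)$ reduces the argument of $\Phi$ to $z_\alpha - \tfrac{\omega - 1}{2}(\nu_4 - 3)$, giving the stated second formula. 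Otherwise the inequality is strict, so $n(2\gamma - 1 - \theta) \to -\infty$, which drives the argument of $\Phi$ to $-\infty$ and hence $\beta(H_1) \to 1$.

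The main obstacle is really just careful bookkeeping: aligning the three distinct centering constants that appear in Theorem~\ref{thm:W_limit_dist_H1}, in the null rejection rule \eqref{eq:W_reject_rule}, and in the final simplification, while checking that the inequality $2\gamma - 1 - \theta < 0$ is strict as soon as $(\gamma, \theta) \neq (1, 1)$. All of the probabilistic heavy lifting is already contained in Theorem~\ref{thm:W_limit_dist_H1}; the proposition is essentially a corollary obtained by elementary analysis.
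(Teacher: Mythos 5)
Your proposal is correct and follows essentially the same route as the paper's proof: apply Theorem~\ref{thm:W_limit_dist_H1} to re-express the rejection event in terms of the standardized statistic, read off the limiting power via $\Phi$, and then use $2\gamma-1\leqslant\gamma^2\leqslant\theta$ to handle the dichotomy. You simply spell out the inequality chain (Cauchy--Schwarz giving $\theta\geqslant\gamma^2$ and AM--GM giving $\gamma^2+1\geqslant 2\gamma$, with equality characterized) and the algebraic simplification $\omega(\nu_4-3)+\theta-(\nu_4-2)=(\omega-1)(\nu_4-3)$ a bit more explicitly than the paper does.
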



	The second test statistic of \eqref{eq:identity_test} we consider is the likelihood ratio test (LRT) statistic studied in \citet{Bai2009Correction}.  \citet{Bai2009Correction} assumed that $\nu_4=3$. The LRT statistic is defined as 
	\begin{equation}\label{eq:Bai_identity_test_L0}
		\calL_0 = \tr(\bS_n) - \log |\bS_n| - p.
	\end{equation}
	\citet{Bai2009Correction} derived the limiting null distribution of $\calL_0$ when $p\wedge n\to\infty,~p/n\to c\in (0,1)$. However, this LRT statistic is degenerate and not applicable when $p> n$ because $|\bS_n|=0$.  Thus for $p>n$ we introduce a quasi-LRT test statistic 
	\[
	\calL = \tr(\widehat{\bS}_n) - \log |\widehat{\bS}_n| - n,
	\]
	where $\widehat{\bS}_n = \frac{1}{p}\bY'\bY$.	When $p\wedge n\to\infty$, $p/n=c_n\to c \in(1, \infty)$, the limiting null distribution of $\calL$ is 
	\begin{equation}\label{eq:Bai_identity_test_stat_normalized}
		\calL^*:=\frac{\calL - n F_1(c_n) - \mu_1}{\sigma_1} \convd \calN(0,1),	
	\end{equation}
	where 
	\[
	F_1(c_n) = 1 - (1-c_n) \log\Bigl(1-\frac{1}{c_n}\Bigr),
	~ \mu_1 = -\frac{1}{2}\log\Bigl(1-\frac{1}{c_n}\Bigr),
	~\sigma_1^2 = -2\log\Bigl(1-\frac{1}{c_n}\Bigr)-\frac{2}{c_n}.
	\]

	Now we will show that this asymptotic distribution \eqref{eq:Bai_identity_test_stat_normalized} still holds in the ultra-high dimensional setting.
	Note that
	\[
	\sigma_1  =  \sqrt{-2\log\Bigl(1-\frac{1}{c_n}\Bigr)-\frac{2}{c_n}} = \sqrt{\frac{1}{c_n^2}+\frac{2}{3c_n^3}+o\biggl(\frac{1}{c_n^3}\biggr)}  = \frac{1}{c_n} +\frac{1}{3c_n^2} + o\biggl(\frac{1}{c_n^2}\biggr),
	\]
	which implies that 
	\begin{equation}\label{eq:frac_sigma_1}
		\frac{1}{\sigma_1} = c_n-\frac{1}{3} + o(1).
	\end{equation}
	Firstly, we consider the random part of $\calL^*$. Let $\widehat{\lambda}_1\geqslant \cdots \geqslant \widehat{\lambda}_n$ be the eigenvalues of $\widehat{\bS}_n$ and $\widetilde{\lambda}_1\geqslant \cdots \geqslant \widetilde{\lambda}_n$ be the eigenvalues of $\widetilde{\bS}_n=\sqrt{\tfrac{n}{p}}(\tfrac{1}{n}\bX'\bX-\tfrac{p}{n}\bI_n)$. By using the basic identity $\widehat{\lambda}_i=\tfrac{\widetilde{\lambda}_i}{\sqrt{c_n}}+1$, we have
	\begin{align}
		\calL & = \sum_{i=1}^n \widehat{\lambda}_i  - n - \sum_{i=1}^n\log(\widehat{\lambda}_i) = \sum_{i=1}^n \frac{\widetilde{\lambda}_i}{\sqrt{c_n}}  - \sum_{i=1}^n\log \Bigl( 1+ \frac{\widetilde{\lambda}_i}{\sqrt{c_n}} \Bigr) \nonumber\\[0.5em]
		& = \sum_{i=1}^n \frac{\widetilde{\lambda}_i}{\sqrt{c_n}}  - \sum_{i=1}^n\Bigl(\frac{\widetilde{\lambda}_i}{\sqrt{c_n}}-\frac{1}{2}\frac{\widetilde{\lambda}_i^2}{c_n} +\frac{1}{3}\frac{\widetilde{\lambda}_i^3}{c_n\sqrt{c_n}}-\frac{1}{4}\frac{\widetilde{\lambda}_i^4}{c_n^2} + o\Bigl(\frac{1}{c_n^2}\Bigr)\Bigr)\nonumber\\[0.5em]
		& = \frac{1}{2c_n}\tr(\widetilde{\bS}_n^2) - \frac{1}{3c_n\sqrt{c_n}}\tr(\widetilde{\bS}_n^3) +\frac{1}{4c_n^2}\tr(\widetilde{\bS}_n^4) + o\Bigl(\frac{n}{c_n^2}\Bigr).\label{eq:calL_expan}
	\end{align}
	Takeing $\nu_4=3$ (the assumption in \citet{Bai2009Correction}) and $\bSigma_p=\bI_p$ in Corollary~\ref{coro:CLT-x-x2-x3}, we have, under $H_0$,
	\begin{equation}\label{eq:tr_S2_S3_CLT}
		\tr(\widetilde{\bS}_n^2)-n-1\convd \calN(0,4),\qquad
		\tr(\widetilde{\bS}_n^3) - \frac{n+1}{\sqrt{c_n}} \convd \calN(0,24).
	\end{equation}
	\begin{equation}\label{eq:tr_S4_CLT}
		\tr(\widetilde{\bS}_n^4) - 2n - \Bigl(\frac{n}{c_n}+\frac{1}{c_n}+5\Bigr)\convd \calN(0,72).	
	\end{equation}
	Combining \eqref{eq:frac_sigma_1} $\sim$ \eqref{eq:tr_S4_CLT} gives us that 
	\begin{equation}\label{eq:random_calL}
		\frac{\calL}{\sigma_1} = \frac{1}{2}\tr(\widetilde{\bS}_n^2)+o\Bigl(\frac{n^2}{p}\Bigr).
	\end{equation}
	Secondly, we consider the determinist part of $\calL^*$. Note that
	\begin{align*}
		nF_1(c_n)+\mu & = n - \Bigl[n(1-c_n)+\frac{1}{2}\Bigr]\log\Bigl(1-\frac{1}{c_n}\Bigr)\\
		& = n - \Bigl[n(1-c_n)+\frac{1}{2}\Bigr]\cdot \Bigl[-\frac{1}{c_n}-\frac{1}{2c_n^2}-\frac{1}{3c_n^3}+o\Bigl(\frac{1}{c_n^3}\Bigr)\Bigr]\\
		& = \frac{n}{2c_n} +\frac{1}{2c_n} +\frac{n}{6c_n^2} +o\Bigl(\frac{n}{c_n^2}\Bigr), 
	\end{align*}
	together with \eqref{eq:frac_sigma_1} 	which implies that 
	\begin{equation}\label{eq:determinist_calL}
		\frac{nF_1(c_n)+\mu_1}{\sigma_1} = \frac{n+1}{2} + o\Bigl(\frac{n^2}{p}\Bigr).
	\end{equation}
	Therefore, from \eqref{eq:tr_S2_S3_CLT}, \eqref{eq:random_calL} and \eqref{eq:determinist_calL}, we conclude that, under $H_0$, as $p\wedge n\to\infty$, $n^2/p=O(1)$,  
	\[
	\calL^* = \frac{\calL}{\sigma_1} - \frac{n F_1(c_n) + \mu_1}{\sigma_1} = \frac{1}{2}\Bigl(\tr(\widetilde{\bS}_n^2) - n-1\Bigr) +o(1) \convd \calN(0, 1),
	\]
	which is the same as the limiting distribution \eqref{eq:Bai_identity_test_stat_normalized} when $p\wedge n\to\infty$, $p/n=c_n\to c \in(1, \infty)$.	Finally, we summarize  the discussion above in  the following proposition.
	\begin{proposition}
		\begin{enumerate}
			\item[(1)] (\citet{Bai2009Correction}) 	Assume that $\bY=(\by_1,\ldots,\by_n)$ is a $p\times n$ data matrix with $n$ i.i.d. $p$-dimensional random vectors $\{\by_i=\bSigma_p^{1/2}\bx_i\}_{1\leqslant i \leqslant n}$ with covariance matrix $\bSigma_p=\Var(\by_i)$ and $\bx_i$ has $p$ i.i.d. components $\{X_{ij},~1\leq j\leq p\}$ satisfying $\Expe X_{ij}=0$, $\Expe X_{ij}^2=1$, $\Expe X_{ij}^4=\nu_4=3$.    $\calL_0$ is defined as \eqref{eq:Bai_identity_test_L0}. Then under $H_0$, when $p \wedge n\to\infty$, $p/n\to c \in (0,1)$, we have
			\[
			\frac{\calL_0 - n F_0(c_n) - \mu_0}{\sigma_0} \convd \calN(0,1),	
			\]
			where $c_n=p/n$ and 
			\[
			F_0(c_n) = 1 - \frac{c_n-1}{c_n} \log(1-c_n),
			~ \mu_0 = -\frac{\log(1-c_n)}{2},
			~ \sigma_0^2 = -2\log(1-c_n)-2c_n.
			\]
			\item[(2)] Under the same assumptions as in (1) and the normalized quasi LRT statistic $\calL^*$ is defined in \eqref{eq:Bai_identity_test_stat_normalized}. Then under $H_0$, when $p\wedge n\to\infty$,  $p/n\to c \in (1,\infty)$, we have
			\[
			\calL^* \convd \calN(0,1).
			\]
			\item[(3)] Under the same assumptions as in (1) and the normalized quasi LRT statistic $\calL^*$ is defined in \eqref{eq:Bai_identity_test_stat_normalized}. Then under $H_0$, when  $p\wedge n\to\infty$ and $n^2/p=O(1)$, we have
			\[
			\calL^* \convd \calN(0,1).
			\]
		\end{enumerate}
	\end{proposition}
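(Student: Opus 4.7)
The plan is as follows. Part (1) is the theorem of \citet{Bai2009Correction} and requires no new argument. Part (2), in the proportional regime $p/n\to c\in(1,\infty)$, follows from a parallel Taylor expansion plus the classical Bai--Silverstein CLT for LSS in the Marchenko--Pastur regime; no new tool is needed. The substantive new claim is part (3), and its proof is essentially laid out in the discussion preceding the proposition; I describe how I would orchestrate the steps.

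First, I would treat $1/\sqrt{c_n}=\sqrt{n/p}$ as a small parameter and expand. From the identity $\widehat{\lambda}_i=1+\widetilde{\lambda}_i/\sqrt{c_n}$, together with the fact that $\widetilde{\lambda}_i$ is bounded in probability (since $F^{\widetilde{\bS}_n}$ converges to the compactly supported semicircle law), I would Taylor expand $\log(1+\widetilde{\lambda}_i/\sqrt{c_n})$ to fourth order and sum over $i$. This produces the representation \eqref{eq:calL_expan} of $\calL$ as the linear combination $\tfrac{1}{2c_n}\tr(\widetilde{\bS}_n^2)-\tfrac{1}{3c_n^{3/2}}\tr(\widetilde{\bS}_n^3)+\tfrac{1}{4c_n^2}\tr(\widetilde{\bS}_n^4)$ plus a remainder of order $n/c_n^{5/2}$, which after division by $\sigma_1\sim 1/c_n$ is $o_p(1)$.

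Second, I would invoke Corollary~\ref{coro:CLT-x-x2-x3} specialized to $\bSigma_p=\bI_p$ and $\nu_4=3$ (so that $a_p=b_p=\tilde b_p=c_p=1$ and $\omega=\theta=1$) to obtain the CLTs \eqref{eq:tr_S2_S3_CLT} for $\tr(\widetilde{\bS}_n^2)$ and $\tr(\widetilde{\bS}_n^3)$, and apply Theorem~\ref{thm:CLT} with $f(x)=x^4$ to obtain the centering \eqref{eq:tr_S4_CLT} for $\tr(\widetilde{\bS}_n^4)$. Dividing $\calL$ by $\sigma_1$ and using $1/\sigma_1=c_n-1/3+o(1)$ yields three contributions of order $n/c_n$: one from the subleading $-1/(6c_n)$ correction applied to $\tr(\widetilde{\bS}_n^2)$, one of size $n/(3c_n)$ from the cubic term, and one of size $n/(2c_n)$ from the quartic term. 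The arithmetic check $-\tfrac{1}{6}-\tfrac{1}{3}+\tfrac{1}{2}=0$ shows these cancel exactly, leaving $\calL/\sigma_1=\tfrac{1}{2}\tr(\widetilde{\bS}_n^2)+o_p(1)$. The deterministic centering $(nF_1(c_n)+\mu_1)/\sigma_1$ is treated by expanding $\log(1-1/c_n)=-\sum_{k\ge1}k^{-1}c_n^{-k}$, collecting terms, and multiplying by $1/\sigma_1$; this gives $(n+1)/2+o(1)$ as in \eqref{eq:determinist_calL}. Subtracting and applying the CLT $\tr(\widetilde{\bS}_n^2)-n-1\convd\calN(0,4)$ delivers $\calL^*\convd\calN(0,1)$.

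The principal obstacle will be the rigorous control of the tail of the Taylor series, i.e., showing that $\sum_{k\ge5}k^{-1}c_n^{-k/2}\sum_i\widetilde{\lambda}_i^k$, after division by $\sigma_1\sim1/c_n$, is $o_p(1)$. This reduces to a bound on the operator norm $\|\widetilde{\bS}_n\|$, which in the identity case follows from the largest-eigenvalue result of \citet{chen2012convergence}; combined with the crude estimate $\sum_i|\widetilde{\lambda}_i|^k\le n\|\widetilde{\bS}_n\|^k$, the geometric factor $nc_n^{-k/2}$ decays fast enough in $k$ under assumption (C1) to be absorbed into $o_p(1)$. A secondary technical point is that in the CLT for $\tr(\widetilde{\bS}_n^4)$ only the centering matters: its fluctuation, multiplied by $1/(4c_n^2\sigma_1)\sim 1/(4c_n)$, is automatically $o_p(1)$, so no delicate joint limit law for the three traces is needed.
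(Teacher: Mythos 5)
Your proposal is correct and follows essentially the same route as the paper's own derivation in the paragraphs preceding the proposition: Taylor expansion of $\log(1+\widetilde{\lambda}_i/\sqrt{c_n})$ to fourth order, centering via Corollary~\ref{coro:CLT-x-x2-x3} (and Theorem~\ref{thm:CLT} with $f(x)=x^4$), exact cancellation of the $O(n/c_n)$ corrections, and matching the deterministic part $(nF_1(c_n)+\mu_1)/\sigma_1=(n+1)/2+o(1)$. Your explicit operator-norm control of the Taylor tail and the observation that only the centering (not the fluctuation) of $\tr(\widetilde{\bS}_n^3)$ and $\tr(\widetilde{\bS}_n^4)$ is needed are details the paper leaves implicit, but they do not change the argument.
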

	
	Note that the results (2) and (3) in this proposition are newly derived.

\subsection{Separable Covariance Structure for Matrix-valued Noise}\label{sec:separable_structure_matrix_noise}

	In this section, we develop a test for the structure of the covariance matrix of a matrix-valued white noise. \citet{chen2021autoregressive} proposed a matrix autoregressive model with the form
	\[
	\bX_t = \bA\bX_{t-1}\bB'+\bE_t,	~t=1,\cdots, T,
	\]
	where $\bX_t$ is a $p_1\times p_2$ random matrix observed at time $t$, $\bA$ and $\bB$ are $p_1\times p_1$ and $p_2\times p_2$ deterministic autoregressive coefficient matrices,  $\bE_t=(e_{t,ij})$ is a $p_1\times p_2$ matrix-valued white noise. It's assumed that the error white noise matrix $\bE_t$ has a specific covariance structure
	\[
	\Cov\bigl(\mathsf{vec}(\bE_t)\bigr)=\bSigma_1\otimes\bSigma_2,
	\]
	where $\mathsf{vec}(\cdot)$ denotes the vectorization, $\bSigma_1$ and $\bSigma_2$ are $p_1\times p_1$ and $p_2\times p_2$ non-negative definite matrices. In other words, the noise $\bE_t$ has a separable covariance matrix.

	Now for any observed matrix-valued time sequence, we aim to test whether it has a separable covariance matrix. Specifically, suppose that $\{\bE_t\}_{1\leqslant t\leqslant T}$ is an observed i.i.d. sequence of $p_1 \times p_2 $ matrices and  $p_1, p_2$,$T$ are of comparable magnitude, we aim to test
	\begin{equation}\label{eq:separable_test}
		H_0: \Cov\bigl(\mathsf{vec}(\bE_t)\bigr)=\bSigma_1\otimes\bSigma_2,\qquad \text{vs.} \qquad H_1: \Cov\bigl(\mathsf{vec}(\bE_t)\bigr) \ne \bSigma_1\otimes\bSigma_2,
	\end{equation}
	where $\bSigma_1$ and $\bSigma_2$ are two prespecified $p_1\times p_1$ and $p_2\times p_2$ non-negative definite matrices. Testing $H_0: \Cov\bigl(\mathsf{vec}(\bE_t)\bigr)=\bSigma_1\otimes\bSigma_2$ is equivalent to testing 
	\[
		H_0': \Cov\Bigl(\bigl(\bSigma_1\otimes\bSigma_2\bigr)^{-\nicefrac{1}{2}}\mathsf{vec}(\bE_t)\Bigr)=\bI_{p_1p_2}.
	\]
	To this end, we define a test statistic 
	\begin{equation}\label{eq:W_star_def_1}
		W^* = \frac{1}{p_1p_2} \tr \Bigl[ \bigl(\bB_T-\bI_{p_1p_2}\bigr)^2 \Bigr] - \frac{p_1p_2}{T} \Bigl[\frac{1}{p_1p_2}\tr(\bB_T)\Bigr]^2+\frac{p_1p_2}{T}
	\end{equation}
	where
	\begin{equation}\label{eq:W_star_def_2}
		\bB_T = \frac{1}{T} \bY_T\bY_T', \qquad \bY_T = \bigl(\bSigma_1\otimes\bSigma_2\bigr)^{-\nicefrac{1}{2}}\bigl(\mathsf{vec}(\bE_1),\ldots,\mathsf{vec}(\bE_T)\bigr):=(Y_{ij})_{p_1p_2\times T}.
	\end{equation}

	Note that $W^*$ measures the distance between sample covariance matrix of $\mathsf{vec}(\bE_t)$ and $\bSigma_1\otimes\bSigma_2$. Naturally we reject $H_0$ when $W^*$ is too large and the critical value is determined by the limiting null distribution of $W^*$.

	Since $p_1,p_2,T$ are about the same order, we examine the asymptotic behavior of $W^*$ under the  high dimensional regime
	\begin{equation}\label{eq:three_dimension_infinity}
		T\to\infty, \qquad \frac{p_1}{T}=\frac{p_1(T)}{T}\to d_1 \in (0, \infty),\qquad \frac{p_2}{T}=\frac{p_2(T)}{T}\to d_2 \in (0, \infty).
	\end{equation}
	The asymptotic null distribution of the test statistic $W^*$ is given in the following Theorem. It is a direct implementation of Theorem \ref{thm:W_limit_dist_H0}.
	\begin{theorem}\label{prop:separable_test_H0_limit_dist}
		Assume that
		\begin{enumerate}
			\item[(1)] $\{\bE_t=(e_{t,ij})_{p_1\times p_2}\}_{1\leqslant t\leqslant T}$  is a sequence of i.i.d. sample matrices satisfying  $\mathsf{vec}(\bE_t) = (\bSigma_1\otimes \bSigma_2)^{\nicefrac{1}{2}} \mathsf{vec}(\bZ_t)$, where $\bZ_t=(Z_{t,ij})_{p_1\times p_2}$ is a  $p_1\times p_2$ matrix with i.i.d. real entries $Z_{t,ij}$ satisfying $\Expe Z_{t,ij} = 0$, $\Expe Z_{t,ij}^2 = 1$, $\Expe Z_{t,ij}^4 = \nu_4$ and $\Expe |Z_{t,ij}|^{6+\varepsilon_0}<\infty$ for some small positive $\varepsilon_0$;
			\item[(2)] $p_1,p_2,T$ tend to infinity as in \eqref{eq:three_dimension_infinity}.
		\end{enumerate}
		Then under the null hypothesis $H_0: \Cov\bigl(\mathsf{vec}(\bE_t)\bigr) = \bSigma_1\otimes\bSigma_2$, $W^*$ is defined as in \eqref{eq:W_star_def_1}, we have
		\[
			TW^*-p_1p_2-(\nu_4-2) \convd \calN(0,4).
		\]
	\end{theorem}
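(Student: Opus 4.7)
The plan is to recognize Theorem~\ref{prop:separable_test_H0_limit_dist} as a direct corollary of Theorem~\ref{thm:W_limit_dist_H0} under an appropriate re-parameterization. First I would use assumption (1) to show that $\bY_T$ has i.i.d. entries: since $\mathsf{vec}(\bE_t)=(\bSigma_1\otimes\bSigma_2)^{1/2}\mathsf{vec}(\bZ_t)$, the $t$-th column of
\[
\bY_T=(\bSigma_1\otimes\bSigma_2)^{-1/2}\bigl(\mathsf{vec}(\bE_1),\ldots,\mathsf{vec}(\bE_T)\bigr)
\]
simplifies to $\mathsf{vec}(\bZ_t)$. Hence $\bY_T$ is a $p_1 p_2\times T$ matrix whose entries $\{Y_{ij}\}$ are i.i.d. copies of $Z_{1,11}$ and therefore inherit the moment conditions $\Expe Y_{ij}=0$, $\Expe Y_{ij}^2=1$, $\Expe Y_{ij}^4=\nu_4$, and $\Expe|Y_{ij}|^{6+\veps_0}<\infty$.

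Next, I would identify $W^*$ as an instance of the Ledoit--Wolf statistic $W$ in \eqref{eq:W_def} under the substitution $p\leftrightarrow p_1 p_2$, $n\leftrightarrow T$, $\bS_n\leftrightarrow\bB_T$, and $\bSigma_p\leftrightarrow\bI_{p_1 p_2}$; with this dictionary the two expressions coincide term by term. The dimensional regime \eqref{eq:three_dimension_infinity} yields $p_1 p_2\wedge T\to\infty$ together with
\[
\frac{T^2}{p_1 p_2}=\frac{T}{p_1}\cdot\frac{T}{p_2}\to\frac{1}{d_1 d_2}\in(0,\infty),
\]
so $T^2/(p_1 p_2)=O(1)$, which matches the condition ``$n^2/p=O(1)$'' of Theorem~\ref{thm:W_limit_dist_H0}. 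Because the columns of $\bY_T$ have covariance $\bI_{p_1 p_2}$, the null hypothesis of that theorem holds, and its application immediately gives $TW^*-p_1 p_2-(\nu_4-2)\convd\calN(0,4)$, as claimed.

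The main obstacle is clerical rather than mathematical: one only needs to verify consistent use of the Kronecker--vectorization identity $(\bSigma_1\otimes\bSigma_2)^{-1/2}(\bSigma_1\otimes\bSigma_2)^{1/2}=\bI_{p_1 p_2}$ so that the whitened data matrix $\bY_T$ genuinely has identity covariance, and to match the index correspondence between the bivariate index $(i,j)\in\{1,\dots,p_1\}\times\{1,\dots,p_2\}$ running over $\mathsf{vec}(\bZ_t)$ and the single index on $\{1,\dots,p_1 p_2\}$ required by Theorem~\ref{thm:W_limit_dist_H0}. No new random-matrix machinery is needed, since all the heavy lifting---the CLT for linear spectral statistics under $n^2/p=O(1)$---is already carried by Theorem~\ref{thm:W_limit_dist_H0}.
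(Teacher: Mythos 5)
Your proof is correct and takes essentially the same approach as the paper: the paper itself states that Theorem~\ref{prop:separable_test_H0_limit_dist} is "a direct implementation of Theorem~\ref{thm:W_limit_dist_H0}," and your identification $p\leftrightarrow p_1p_2$, $n\leftrightarrow T$, $\bSigma_p\leftrightarrow\bI_{p_1p_2}$ together with the verification that $T^2/(p_1p_2)\to 1/(d_1d_2)$ gives $n^2/p=O(1)$ is precisely that implementation.
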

	According to the asymptotic normality of $W^*$ presented in Theorem~\ref{prop:separable_test_H0_limit_dist}, we reject $H_0$ at nominal level $\alpha$ if 
	\[
		\frac{1}{2}\Bigl\{TW^*-p_1p_2-(\nu_4-2)\Bigr\}\geqslant z_{\alpha}.
	\]

	Moreover, the asymptotic power of the proposed test for \eqref{eq:separable_test} can be derived as follows.
	\begin{proposition}\label{prop:separable_test_power_theo} Suppose that assumptions (1) and (2) in Theorem \ref{prop:separable_test_H0_limit_dist} hold, and
		\begin{enumerate}
			\item[(3)] $\widetilde{\bSigma}_1$ and $\widetilde{\bSigma}_2$ are two $p_1\times p_1$ and $p_2\times p_2$ non-negative definite matrices with bounded spectral norm, 
			$\widetilde{\bSigma} := (\widetilde{\bSigma}_1\otimes \widetilde{\bSigma}_2)^{\nicefrac{1}{2}} (\bSigma_1\otimes \bSigma_2)^{-1} (\widetilde{\bSigma}_1\otimes \widetilde{\bSigma}_2)^{\nicefrac{1}{2}}$ and the following limits exist:
			\begin{gather*}
				\gamma=\lim\limits_{T\to\infty}\frac{1}{p_1 p_2}\tr(\widetilde{\bSigma}),			
				\qquad \theta =\lim\limits_{T\to\infty}\frac{1}{p_1 p_2}\tr(\widetilde{\bSigma}^2),
				\qquad \omega = \lim\limits_{T\to\infty} \frac{1}{p_1 p_2}\sum_{i=1}^{p_1p_2}\bigl(\widetilde{\bSigma}\bigr)_{ii}^2.
			\end{gather*}
		\end{enumerate}
	Then when  $p_1,p_2,T$ tend to infinity as in \eqref{eq:three_dimension_infinity},	the testing power of  $W^*$  for \eqref{eq:separable_test} 
		\[
			\beta(H_1) \rightarrow 1-\Phi\biggl( \frac{1}{2\theta}\Bigl[ 2z_{\alpha}-\omega(\nu_4-3)-\theta+n(2\gamma-1-\theta)+(\nu_4-2) \Bigr] \biggr).	
		\]
		If $\gamma=\theta=1$, then $\beta(H_1) \rightarrow 1-\Phi\bigl(z_{\alpha}-\frac{\omega-1}{2}(\nu_4-3)\bigr)$; otherwise, $\beta(H_1)\to 1$.
	\end{proposition}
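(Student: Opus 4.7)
The plan is to reduce Proposition~\ref{prop:separable_test_power_theo} to a direct application of Theorem~\ref{thm:W_limit_dist_H1} under the identification $\bSigma_p\leftrightarrow\widetilde{\bSigma}$, $p\leftrightarrow p_1p_2$, $n\leftrightarrow T$. Set $\bM=(\bSigma_1\otimes\bSigma_2)^{-\nicefrac{1}{2}}(\widetilde{\bSigma}_1\otimes\widetilde{\bSigma}_2)^{\nicefrac{1}{2}}$, so that $\bM'\bM=\widetilde{\bSigma}$, and $\bX_T=(\mathsf{vec}(\bZ_1),\ldots,\mathsf{vec}(\bZ_T))$. Under $H_1$ the whitened matrix is $\bY_T=\bM\bX_T$. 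Because the two $(p_1p_2)\times(p_1p_2)$ matrices $\bM\bX_T\bX_T'\bM'$ and $\widetilde{\bSigma}^{\nicefrac{1}{2}}\bX_T\bX_T'\widetilde{\bSigma}^{\nicefrac{1}{2}}$ share the non-zero eigenvalues of $\bX_T'\widetilde{\bSigma}\bX_T$ and both have rank at most $T$, their full eigenvalue multisets agree as random quantities. Since $W^*$ is a spectral functional of $\bB_T=\bY_T\bY_T'/T$, its law under $H_1$ is identical to the law of the statistic $W$ from \eqref{eq:W_def} computed on i.i.d.\ data with population covariance $\widetilde{\bSigma}$, dimension $p=p_1p_2$, and sample size $n=T$.

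Next I would check that the hypotheses of Theorem~\ref{thm:W_limit_dist_H1} hold under this identification: the i.i.d.\ and moment conditions on $Z_{t,ij}$ come from assumption (1) of Theorem~\ref{prop:separable_test_H0_limit_dist}; the bounded spectral norm of $\widetilde{\bSigma}$ and the existence of $\gamma,\theta,\omega$ come from assumption (3); and \eqref{eq:three_dimension_infinity} yields $p\wedge n\to\infty$ together with $n^2/p = T^2/(p_1p_2)\to 1/(d_1d_2)<\infty$. Theorem~\ref{thm:W_limit_dist_H1} then delivers
\[
    TW^{*} - p_1p_2 - \omega(\nu_4-3) - \theta + T(2\gamma-1-\theta) \dconv \calN(0,4\theta^2).
\]
Standardizing this and inverting the rejection rule $\tfrac12\{TW^{*}-p_1p_2-(\nu_4-2)\}\geq z_\alpha$, equivalently $TW^{*}\geq 2z_\alpha+p_1p_2+(\nu_4-2)$, gives
\[
    \beta(H_1)\to 1-\Phi\biggl(\frac{1}{2\theta}\Bigl[2z_\alpha - \omega(\nu_4-3) - \theta + T(2\gamma-1-\theta) + (\nu_4-2)\Bigr]\biggr),
\]
which is precisely the claimed formula (reading $n$ as $T$).

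The two simplifications are elementary. If $\gamma=\theta=1$, the term $T(2\gamma-1-\theta)$ vanishes and the bracket above reduces to $2z_\alpha-(\omega-1)(\nu_4-3)$, whence the argument of $\Phi$ becomes $z_\alpha-\tfrac{\omega-1}{2}(\nu_4-3)$. In the contrary case, applying Cauchy–Schwarz to the eigenvalues of $\widetilde{\bSigma}$ gives $\theta\geq\gamma^2$, so $\theta-(2\gamma-1)\geq(\gamma-1)^2\geq 0$ with equality iff $\gamma=\theta=1$; therefore $2\gamma-1-\theta<0$ whenever $(\gamma,\theta)\neq(1,1)$, so $T(2\gamma-1-\theta)\to-\infty$, the argument of $\Phi$ tends to $-\infty$, and $\beta(H_1)\to 1$.

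The entire argument is mechanical once the eigenvalue identification in the first paragraph is in place; the only step requiring genuine care is that identification, which rests on $W^{*}$ being a function solely of the eigenvalues of $\bB_T$ and on the identity $\mathrm{spec}(\bA\bB)=\mathrm{spec}(\bB\bA)$ for the non-zero spectrum. Beyond this, all remaining manipulations are routine arithmetic and the one inequality $\theta\geq\gamma^2$.
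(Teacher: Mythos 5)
Your proposal is correct and follows essentially the same route the paper takes: reduce $W^*$ under $H_1$ to the statistic $W$ of Theorem~\ref{thm:W_limit_dist_H1} with $\bSigma_p\leftrightarrow\widetilde\bSigma$, $p\leftrightarrow p_1p_2$, $n\leftrightarrow T$, invoke that theorem, and invert the rejection rule, finishing with the inequality $2\gamma-1\leq\gamma^2\leq\theta$ to sort out the two cases. The paper gives no explicit proof of this proposition (it treats Theorems~\ref{prop:separable_test_H0_limit_dist} and this proposition as ``direct implementations'' of the identity-test results and only writes out the argument for Proposition~\ref{prop:power_theo_W}), so your main added value is making the reduction step rigorous: you note that under $H_1$ one has $\bY_T=\bM\bX_T$ with $\bM'\bM=\widetilde\bSigma$ rather than $\bY_T=\widetilde\bSigma^{1/2}\bX_T$, and that this is harmless because $W^*$ depends on $\bB_T$ only through its spectrum, which is determined by the nonzero spectrum of $\bX_T'\widetilde\bSigma\bX_T$ (padded with zeros) in both cases. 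That is exactly the point one must check before citing Theorem~\ref{thm:W_limit_dist_H1}, and you verify the remaining hypotheses (moment conditions, bounded spectral norm, and $T^2/(p_1p_2)\to1/(d_1d_2)<\infty$) correctly, so the argument is complete.
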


\section{Simulation results}\label{sec:simulation}

In this section, we implement some simulation studies to examine
\begin{enumerate}
	\item[(1)]  finite-sample properties of some LSS for $\bA_n$ by comparing their empirical means and variances with theoretical limiting values;
	\item[(2)] finite-sample performance of the separable covariance structure test in Section~\ref{sec:separable_structure_matrix_noise}.
\end{enumerate}

\subsection{LSS of $\bA_n$}\label{sec:simu-CLT}

Firstly we compare the empirical mean and variance of normalized $\left\{G_n(f_i)=\tr(\bA_n^i),~i=1, 2, 3\right\}$ with their theoretical limits in Corollary~\ref{coro:CLT-x-x2-x3}. Define
\begin{align*}
	\overline{G}_n(f_1)  &:= \frac{G_n(f_1)}{\sqrt{\Var(Y(f_1))}} =\frac{\tr(\bA_n)}{\sqrt{\frac{\omega}{\theta}(\nu_4-3)+2}},\\[0.5em]
	\overline{G}_n(f_2)  &:= \frac{G_n(f_2)}{\sqrt{\Var(Y(f_2))}} =\frac{1}{2}\biggl\{\tr(\bA_n^2) - n - \Bigl[\frac{\tilde{b}_p}{b_p}(\nu_4-3)+1\Bigr]\biggr\},\\[0.5em]
	\overline{G}_n(f_3)  &:= \frac{G_n(f_3)}{\sqrt{\Var(Y(f_3))}} =\frac{\tr(\bA_n^3) - \tfrac{c_p}{b_p\sqrt{b_p}}\sqrt{\tfrac{n}{p}}\Bigl[ n+1+\tfrac{\tilde{b}_p}{b_p}(\nu_4-3) \Bigr] }{\sqrt{\tfrac{9\omega}{\theta}(\nu_4-3)+24}}.
\end{align*}
According to Corollary~\ref{coro:CLT-x-x2-x3},  $\{\overline{G}_n(f_i)\}\xrightarrow{d} \calN(0,1)$, $i=1,2,3$. Hence we directly compare the empirical distribution of $\{\overline{G}_n(f_i)\}$ with $\calN(0,1)$ under different scenarios. Specifically, we consider two data distributions of $\{X_{ij}\}$ and three types of covariance matrix $\bSigma_p$, i.e.
\begin{itemize}
	\item[(1)] \textbf{Gaussian data:} $\{X_{ij},~1\leqslant i\leqslant p,~1\leqslant j\leqslant n\}$ i.i.d. $\calN(0, 1)$, with $\Expe X_{ij}^4=\nu_4=3$. 
	\item[(2)] \textbf{Non-Gaussian data:} $\{X_{ij},~1\leqslant i\leqslant p,~1\leqslant j\leqslant n\}$ i.i.d.  $\mathsf{Gamma}(4,2)-2$, with $\Expe X_{ij}=0$, $\Expe X^2_{ij}=1$, $\Expe X^4_{ij}=4.5$. 
\end{itemize}
As for $\bSigma_p$, 
\begin{itemize}
	\item[(A)] $\bSigma_A=\bI_p$;
	\item [(B)] $\bSigma_B$ is diagonal, $1/4$ of its diagonal elements are $0.5$, and $3/4$ are $1$.
	\item [(C)] $\bSigma_C$ is diagonal,  one half of its diagonal elements are $0.5$, and one half are 1.
\end{itemize}

Empirical mean and variance of $\{\overline{G}_n(f_i)\}$ are calculated for various combinations of $(p,n)$ under different model settings. For each pair of $(p,n)$, 5000 independent replications are used to obtain the empirical mean and variance.
Table \ref{tab:Qn_f_Simulation_n2} reports the empirical values of $\{\overline{G}_n(f_i)\}$ when $p=n^2$.
Table \ref{tab:Qn_f_Simulation_n2p5} reports the case of $p=n^{2.5}$. 
As shown in Tables~\ref{tab:Qn_f_Simulation_n2} and \ref{tab:Qn_f_Simulation_n2p5}, the empirical mean and variance of $\{\overline{G}_n(f_i)\}$ perfectly match their theoretical limits $0$ and $1$ under all scenarios, including  all three types of $\bSigma_p$, and for both Gaussian and non-Gaussian data.
\begin{table}[!h]
	\centering
	\caption{Empirical mean and variance of $\overline{G}_n(f_i),\; i=1, 2, 3$ from $5000$ replications. Theoretical mean and variance are $0$ and $1$, respectively. Dimension $p=n^{2}$.}
	\label{tab:Qn_f_Simulation_n2}
	\begin{tabular}{@{}lccccccccc@{}}
	\toprule
	& & \multicolumn{2}{c}{$\bSigma_p=\bSigma_A$} & & \multicolumn{2}{c}{$\bSigma_p=\bSigma_B$} & & \multicolumn{2}{c}{$\bSigma_p=\bSigma_C$} \\
	\cmidrule(r){3-4} \cmidrule(lr){6-7} \cmidrule(l){9-10}
	$n$  & & mean & var &  & mean & var & & mean & var \\ \toprule
	50  && 0.0050   & 1.0092 && 0.0038  & 1.0074 && -0.0157 & 1.0292 \\
	100 && -0.0103 & 0.9962 && 0.0148  & 1.0073 && -0.0048 & 1.0252 \\
	150 && -0.0075 & 1.0293 && -0.0054 & 1.0372 && -0.0113 & 0.9915 \\
	200 && -0.0052 & 0.9989 && 0.0206  & 1.0140  && -0.0008 & 1.0012  \\ 
	\multicolumn{6}{l}{$\overline{G}_n(f_1)$\qquad \textbf{Gaussian}} & 	\\ \midrule 
	50  && 0.0048  & 1.0265 && -0.0079 & 1.0065 && 0.0179  & 1.0119 \\
	100 && -0.0034 & 1.0041 && 0.0011  & 0.9983 && 0.0066  & 1.0305 \\
	150 && 0.0009  & 0.9841 && 0.0064  & 1.0159 && -0.0199 & 1.0273 \\
	200 && -0.0091 & 1.0093 && 0.0070   & 0.9929 && 0.0087  & 0.9751   \\ 
	\multicolumn{6}{l}{$\overline{G}_n(f_1)$\qquad \textbf{Non-Gaussian}} &	\\ \toprule
	50  && -0.0068 & 1.0848 && -0.0012 & 1.0922 && -0.0194 & 1.0871 \\
	100 && -0.0052 & 1.0678 && -0.0078 & 1.0266 && -0.0139 & 1.0289 \\
	150 && 0.0163  & 1.0209 && -0.0262 & 1.0291 && -0.0057 & 1.0250  \\
	200 && 0.0196  & 1.0223 && -0.0047 & 0.9972 && -0.0008 & 0.9930 \\ 
	\multicolumn{6}{l}{$\overline{G}_n(f_2)$\qquad \textbf{Gaussian}} &	\\ \midrule 
	50  && 0.0049  & 1.1533 && -0.0184 & 1.1588 && -0.0195 & 1.2185 \\
	100 && -0.0163 & 1.0927 && 0.0071  & 1.0896 && -0.0167 & 1.0924 \\
	150 && 0.0017  & 1.0513 && 0.0232  & 1.0574 && -0.0106 & 1.0655 \\
	200 && -0.0020  & 1.0568 && 0.0173  & 1.0361 && -0.0131 & 1.0568 \\ 
	\multicolumn{6}{l}{$\overline{G}_n(f_2)$\qquad \textbf{Non-Gaussian}}&	\\  \toprule
	50  && 0.0734 & 1.1134 && 0.0579 & 1.1145 && 0.0480  & 1.1727 \\
	100 && 0.0307 & 1.0642 && 0.0392 & 1.0720  && 0.0537 & 1.0805 \\
	150 && 0.0230 & 1.0919 && 0.0421 & 1.0489 && 0.0361 & 1.0502 \\
	200 && 0.0198 & 1.0131 && 0.0412 & 1.0372 && 0.0329 & 1.0457 \\ 
	\multicolumn{6}{l}{$\overline{G}_n(f_3)$\qquad \textbf{Gaussian}}&	\\ \toprule 
	50  && 0.1500   & 1.1976 && 0.1284 & 1.1964 && 0.1688 & 1.2197 \\
	100 && 0.0895 & 1.1090  && 0.0922 & 1.0841 && 0.0885 & 1.0988 \\
	150 && 0.0736 & 1.0491 && 0.0701 & 1.0494 && 0.0760  & 1.0851 \\
	200 && 0.0698 & 1.0447 && 0.0690  & 1.0834 && 0.0693 & 1.0300  \\ 
	\multicolumn{6}{l}{$\overline{G}_n(f_3)$\qquad  \textbf{Non-Gaussian}}&	\\ 
	\bottomrule
	\end{tabular}
\end{table}
\begin{table}[!h]
	\centering
	\caption{Empirical mean and variance of $\overline{G}_n(f_i),\; i=1, 2, 3$ from $5000$ replications. Theoretical mean and variance are $0$ and $1$, respectively. Dimension $p=n^{2.5}$.}
	\label{tab:Qn_f_Simulation_n2p5}
	\begin{tabular}{@{}lccccccccc@{}}
	\toprule
	& & \multicolumn{2}{c}{$\bSigma_p=\bSigma_A$} & & \multicolumn{2}{c}{$\bSigma_p=\bSigma_B$} & & \multicolumn{2}{c}{$\bSigma_p=\bSigma_C$} \\
	\cmidrule(r){3-4} \cmidrule(lr){6-7} \cmidrule(l){9-10}
	$n$  & & mean & var &  & mean & var & & mean & var \\ \toprule
	50  && -0.0095 & 0.9870  && -0.0023 & 1.0067 && 0.0092  & 1.0233 \\
	100 && -0.0067 & 1.0274 && 0.0009  & 0.9991 && 0.0115  & 1.0150 \\
	150 && -0.0056 & 1.0164 && 0.0109  & 0.9772 && -0.0086 & 0.973  \\
	200 && 0.0139  & 0.9949 && 0.012   & 0.9907 && -0.0179 & 1.0002 \\ 
	\multicolumn{6}{l}{$\overline{G}_n(f_1)$\qquad \textbf{Gaussian}} & 	\\ \midrule 
	50  && 0.0087 & 1.0332 && -0.0011 & 0.9972 && -0.0056 & 0.992  \\
	100 && 0.0016 & 0.9859 && -0.0148 & 0.9899 && -0.0054 & 1.0226 \\
	150 && 0.0093 & 1.0325 && 0.0088  & 1.0284 && 0.0380   & 0.9894 \\
	200 && 0.0109 & 0.9947 && -0.0199 & 1.0085 && 0.0038  & 0.9948  \\ 
	\multicolumn{6}{l}{$\overline{G}_n(f_1)$\qquad \textbf{Non-Gaussian}} &	\\ \toprule
	50  && 0.0044 & 1.0243 && -0.0045 & 1.0124 && -0.0152 & 1.0265 \\
	100 && 0.0191 & 0.9982 && 0.0022  & 1.0169 && -0.0173 & 1.0314 \\
	150 && 0.0010  & 1.0353 && 0.0086  & 1.0120  && 0.0065  & 1.0105 \\
	200 && 0.0039 & 1.0111 && -0.0178 & 1.0089 && 0.0167  & 1.0124 \\ 
	\multicolumn{6}{l}{$\overline{G}_n(f_2)$\qquad \textbf{Gaussian}} &	\\ \midrule 
	50  && 0.0049  & 1.0585 && -0.003  & 1.0967 && -0.0015 & 1.1071 \\
	100 && -0.017  & 1.04   && 0.007   & 1.0426 && 0.0085  & 1.0805 \\
	150 && 0.0113  & 1.0449 && -0.0019 & 1.0396 && 0.0033  & 1.0244 \\
	200 && -0.0178 & 1.0492 && -0.01   & 1.041  && -0.0049 & 1.0336 \\ 
	\multicolumn{6}{l}{$\overline{G}_n(f_2)$\qquad \textbf{Non-Gaussian}}&	\\  \toprule
	50  && 0.0045  & 1.0491 && 0.0298 & 1.0607 && 0.0406 & 1.0826 \\
	100 && -0.0051 & 1.0387 && 0.0021 & 1.0235 && 0.0255 & 1.0371 \\
	150 && 0.0023  & 0.9959 && 0.0115 & 1.0224 && 0.0097 & 0.9771 \\
	200 && 0.0323  & 1.0186 && 0.0045 & 1.003  && 0.0084 & 1.0037 \\ 
	\multicolumn{6}{l}{$\overline{G}_n(f_3)$\qquad \textbf{Gaussian}}&	\\ \toprule 
	50  && 0.0551 & 1.1447 && 0.0539  & 1.1059 && 0.0604 & 1.1238 \\
	100 && 0.0342 & 1.0608 && 0.0273  & 1.0318 && 0.0322 & 1.0817 \\
	150 && 0.0281 & 1.0671 && 0.029   & 1.0528 && 0.0642 & 1.0259 \\
	200 && 0.0347 & 1.0355 && -0.0017 & 1.0086 && 0.0266 & 1.0289 \\ 
	\multicolumn{6}{l}{$\overline{G}_n(f_3)$\qquad  \textbf{Non-Gaussian}}&	\\ 
	\bottomrule
	\end{tabular}
\end{table}

\subsection{Test for the Separable Covariance Structure}

Empirical size and power of the separable structure test in Section~\ref{sec:separable_structure_matrix_noise} are examined to testify the asymptotic testing power of $W^*$ given in Proposition~\ref{prop:separable_test_power_theo}. We compare the empirical power of $W^*$ with its limits under various model settings. Specifically,   the vectorization of data matrix $\bE_t$ is $\mathsf{vec}(\bE_t) = (\bSigma_1\otimes \bSigma_2)^{\nicefrac{1}{2}} \mathsf{vec}(\bZ_t)$.  We consider two data distributions of $\bZ_t=\{Z_{t,ij}\}$.
\begin{itemize}
	\item[(1)] \textbf{Gaussian matrix white noise:} $\{Z_{t, ij},~1\leqslant i\leqslant p,~1\leqslant j\leqslant n\}$  i.i.d. $\calN(0, 1)$, with $\nu_4=\Expe Z_{t, ij}^4=3$.
	\item[(2)] \textbf{Non-Gaussian matrix white noise:} $\{Z_{t, ij},~1\leqslant i\leqslant p,~1\leqslant j\leqslant n\}$ i.i.d.  $\mathsf{Gamma}(4,2)-2$, with $\Expe Z_{t, ij}=0$, $\Expe Z^2_{t, ij}=1$, $\nu_4=\Expe Z^4_{t, ij}=4.5$.
\end{itemize}
 As for covariance matrix $\bSigma_1\otimes\bSigma_2$, we set $\bSigma_1$ as a $p_1\times p_1$ tri-diagonal matrix, and $\bSigma_2$ as a $p_2\times p_2$ symmetric Toeplitz matrix. More specifically, 
\[
	\bSigma_1 = \begin{pmatrix}
		2&1&&&\\
		1&2&1&\\
		&1&\ddots&\ddots&\\
		&&\ddots&\ddots&1\\
		&&&1&2\\
	\end{pmatrix}_{p_1\times p_1},
\]
and $\bSigma_2 = \bigl(\rho^{|i-j|}\bigr)_{p_2\times p_2}$ with $|\rho|<1$. We set $\rho=0.45$, $p_1=p_2=T$ and $p_1 = 40, 60, 80, 100, 120$. The nominal level of the test is $\alpha = 0.05$. 
To obtain the empirical power, we keep $\bSigma_1$ unchanged and replace $\rho$ in $\bSigma_2$ with $\rho(1+\lambda)$ satisfying $|\rho(1+\lambda)|<1$. We vary $\lambda = 0, 0.2, 0.3, 0.4, 0.5$ to obtain different levels of testing power. For each pair of $(p_1, p_2, T)$, $5000$ independent replications are used to obtain the empirical size and power. Empirical values and theoretical limits are compared in Table~\ref{tab:separable_test_power}. 
As shown in Table~\ref{tab:separable_test_power}, the empirical power tends to $1$ when either $p_1, p_2, T$  or $\lambda$ increases. Most importantly, the empirical power value is consistent with its theoretical limit under all scenarios.

\begin{table}[!h]
	\centering
	\caption{Empirical (Emp) and Theoretical (Theo) Size ($\lambda=0$) and Power of the Separable Structure Test with $5000$ replications.}
	\label{tab:separable_test_power}
	\begin{tabular}{@{}ccccccccccccc@{}}
		\toprule
	\multicolumn{3}{c}{}  & \multicolumn{2}{c}{$\lambda=0$}     & \multicolumn{2}{c}{$\lambda=0.2$} & \multicolumn{2}{c}{$\lambda=0.3$} &  \multicolumn{2}{c}{$\lambda=0.4$} & \multicolumn{2}{c}{$\lambda=0.5$}  \\ 
	 \cmidrule(r){4-5} \cmidrule(lr){6-7} \cmidrule(lr){8-9} \cmidrule(lr){10-11} \cmidrule(l){12-13}
	$p_1$        & $p_2$        & $T$       & Emp           & Theo         & Emp         & Theo        & Emp         & Theo  &       Emp         & Theo        & Emp         & Theo           
	        \\ \toprule
	40  & 40  & 40  & 0.0490 & 0.05 & 0.0950 & 0.0880 & 0.2856 & 0.3087 & 0.8230 & 0.8354 & 0.9992 & 0.9992 \\
	60  & 60  & 60  & 0.0554 & 0.05 & 0.1650 & 0.1625 & 0.6484 & 0.6606 & 0.9974 & 0.9969 & 1      & 1      \\
	80  & 80  & 80  & 0.0520 & 0.05 & 0.2600 & 0.2699 & 0.8994 & 0.9084 & 1      & 1      & 1      & 1      \\
	100 & 100 & 100 & 0.0526 & 0.05 & 0.3916 & 0.4049 & 0.9864 & 0.9878 & 1      & 1      & 1      & 1      \\
	120 & 120 & 120 & 0.0542 & 0.05 & 0.5356 & 0.5524 & 0.9986 & 0.9992 & 1      & 1      & 1      & 1          \\
	\multicolumn{13}{l}{\textbf{Gaussian}}         \\ \toprule
	40  & 40  & 40  & 0.0568 & 0.05 & 0.0716 & 0.0662 & 0.2214 & 0.2353 & 0.7008 & 0.7568 & 0.9942 & 0.9977 \\
	60  & 60  & 60  & 0.0610 & 0.05 & 0.1298 & 0.1277 & 0.5462 & 0.5752 & 0.9878 & 0.9930 & 1      & 1      \\
	80  & 80  & 80  & 0.0580 & 0.05 & 0.2202 & 0.2216 & 0.8356 & 0.8655 & 1      & 1      & 1      & 1      \\
	100 & 100 & 100 & 0.0530 & 0.05 & 0.3312 & 0.3464 & 0.9694 & 0.9785 & 1      & 1      & 1      & 1      \\
	120 & 120 & 120 & 0.0562 & 0.05 & 0.4886 & 0.4910 & 0.9974 & 0.9984 & 1      & 1      & 1      & 1    \\
	\multicolumn{13}{l}{\textbf{Non-Gaussian}} \\ \bottomrule
	\end{tabular}
\end{table}

\section{Proof of Theorem \ref{thm:CLT}}\label{sec:proof_CLT_1}

In Section~\ref{sec:truncation} we first present the preliminary step of data truncation. The general strategy of the main proof of Theorem~\ref{thm:CLT} is explained in Section~\ref{sec:strategy_of_proof}. Three major steps of the general strategy are presented in Section~\ref{sec:conv_Mn1_proof}, \ref{sec:tight_Mn1} and \ref{sec:conv_Mn2} respectively. 

\subsection{Truncation, Centralization and Rescaling}\label{sec:truncation}

We first truncate the elements of $\bX$ without changing the weak limit of $G_n(f)$. We choose a positive sequence $\{\delta_n\}$ such that
\begin{equation}\label{eq:condi-trun}
	\delta_n^{-4} \Expe |X_{11}|^4 \indicator_{\{|X_{11}|\geqslant\delta_n \sqrt[4]{np}\}} \to 0,\qquad \delta_n\downarrow 0,\quad \delta_n\sqrt[4]{np}\uparrow \infty,
\end{equation}
as $n\to\infty$. Define
\begin{align*}
	\widehat{X}_{ij}& = X_{ij}\indicator_{\{|X_{11}|\leqslant\delta_n \sqrt[4]{np}\}},\qquad \sigma^2 = \Expe |\widehat{X}_{ij}-\Expe \widehat{X}_{ij}|^2, \qquad \widehat{\bX} = (\widehat{X}_{ij})_{p\times n },\\
	\widetilde{X}_{ij} & = (\widehat{X}_{ij}-\Expe \widehat{X}_{ij})/\sigma, \qquad \widetilde{\bX} = (\widetilde{X}_{ij})_{p\times n},\\
	\widehat{\bA}_n& = \lb \widehat{\bX}'\bSigma_p \widehat{\bX} -pa_p  \bI_n\rb/\sqrt{npb_p},\qquad \widetilde{\bA}_n=\lb \widetilde{\bX}'\bSigma_p \widetilde{\bX} -pa_p  \bI_n\rb/\sqrt{npb_p}.
\end{align*}
Define $\widehat{G}_n(f)$ and $\widetilde{G}(f)$ similarly by means of \eqref{eq:LSS} with the matrix $\bA_n$ replaced by $\widehat{\bA}_n$ and $\widetilde{\bA}_n$, respectively.
First, observe that
\[
	\Prob\bigl(G_n(f)\ne\widehat{G}_n(f)\bigr)\leqslant \Prob(\bA_n\ne\widehat{\bA}_n)=o(1).
\]
Indeed,
\[
	\Prob(\bA_n\ne\widehat{\bA}_n)
	\leqslant np\Prob(|X_{11}|\leq\delta_n \sqrt[4]{np})\leqslant K\delta_n^{-4} \Expe |X_{11}|^4\indicator_{\{|X_{11}|\geqslant\delta_n \sqrt[4]{np}\}}=o(1).
\]
Now we consider the difference between $\widehat{G}_n(f)$ and $\widetilde{G}_n(f)$. For any analytic function $f$ on $\mathscr{U}$, we have
\begin{align*}
\Expe \Bigl|\widehat{G}_n(f) -  \widetilde{G}_n(f)\Bigr| & = \sum_{k=1}^n \Bigl|f\bigl(\lambda_j^{\widehat{\bA}_n}\bigr)-f\bigl(\lambda_j^{\widetilde{\bA}_n}\bigr)\Bigr|\leqslant \frac{K_f}{\sqrt{npb_p}}\sum_{k=1}^n \Bigl|\lambda_j^{\widehat{\bX}'\bSigma_p \widehat{\bX}}-\lambda_j^{\widetilde{\bX}'\bSigma_p \widetilde{\bX}}\Bigr|\\[0.5em]
& \leqslant \frac{K_f}{\sqrt{npb_p}} \Expe \Bigl| \tr(\widehat{\bX}-\widetilde{\bX})'\bSigma_p (\widehat{\bX}-\widetilde{\bX})\cdot 2 \bigl(\tr(\widehat{\bB}_n)+\tr(\widetilde{\bB}_n)\bigr) \Bigr|^{\nicefrac{1}{2}}\\[0.5em]
&\leqslant \frac{2K_f}{\sqrt{npb_p}}\Bigl|\Expe \tr(\widehat{\bX}-\widetilde{\bX})'\bSigma_p (\widehat{\bX}-\widetilde{\bX})\Bigr|^{\nicefrac{1}{2}} \cdot \Bigl|\Expe \tr(\widehat{\bB}_n)+\Expe \tr(\widetilde{\bB}_n)\Bigr|^{\nicefrac{1}{2}},
\end{align*}
where $K_f$ is a bound on $|f'(x)|$. 

It follows from \eqref{eq:condi-trun} that
\begin{align*}
|\sigma^2-1|&\leqslant 2 \Expe X_{11}^2 \indicator_{\{|X_{11}|\geqslant\delta_n \sqrt[4]{np}\}}\\
&\leqslant \frac{2}{\delta_n^2 \sqrt{np}} \Expe |X_{11}|^4 \indicator_{\{|X_{11}|\geqslant\delta_n \sqrt[4]{np}\}}=o\bigl((np)^{-\nicefrac{1}{2}}\bigr),
\end{align*}
and
\begin{align*}
\bigl|\Expe \widehat{X}_{11}\bigr| &= \bigl| \Expe X_{11} \indicator_{\{|X_{11}|\geqslant\delta_n \sqrt[4]{np}\}} \bigr|\leqslant \Expe |X_{11}| \indicator_{\{|X_{11}|\geqslant\delta_n \sqrt[4]{np}\}}\\
&\leqslant \frac{1}{\delta_n^3 (np)^{3/4}} \Expe |X_{11}|^4 \indicator_{\{|X_{11}|\geqslant\delta_n \sqrt[4]{np}\}}  =o\bigl((np)^{-3/4}\bigr).
\end{align*}
These give us
\begin{align*}
\frac{1}{\sqrt{np}}\Bigl[\tr(\widehat{\bX}-\widetilde{\bX})'\bSigma_p (\widehat{\bX}-\widetilde{\bX})\Bigr]^{\nicefrac{1}{2}} 
& \leqslant \sum_{i, j} \sigma_{ii}\Expe |\widehat{X}_{ij}-\widetilde{X}_{ij}|^2 = \sum_{i, j} \sigma_{ii} \Expe \biggl|\frac{\sigma-1}{\sigma}\widehat{X}_{ij}+\frac{\Expe \widehat{X}_{ij}}{\sigma}\biggr|^2\\
& \leqslant Kpn\biggl(\frac{(1-\sigma)^2}{\sigma^2}\Expe|\widehat{X}_{11}|^2 + \frac{1}{\sigma^2}  \Expe |\widehat{X}_{11}|^2\biggr) = o(1),
\end{align*}
and 
\[
	\Expe \tr\bigl(\widehat{\bX}'\widehat{\bX}\bigr) \leqslant \sum_{i,j} \Expe |\widehat{X}_{ij}|^2 \leqslant Knp,\qquad 
	\Expe \tr\bigl(\widetilde{\bX}'\widetilde{\bX}\bigr) \leqslant \sum_{i,j} \Expe |\widetilde{X}_{ij}|^2 \leqslant Knp.
\]
From the above estimates, we obtain
\[
G_n(f)=\widetilde{G}_n(f)+o_p(1).
\]
Thus, we only need to find the limit distribution of $\{\widetilde{G}(f_j), j=1,\ldots,k\}$. Hence, in what follows, we assume that the underlying variables are truncated at $\delta_n \sqrt[4]{np}$, centralized, and renormalized. For convenience, we shall suppress the superscript on the variables, and assume that, for any $1\leqslant i\leqslant p$ and $1\leqslant j \leqslant n$,  
\begin{align*}
	|X_{ij}|&\leqslant \delta_n \sqrt[4]{np},\qquad \Expe X_{ij}=0, \qquad \Expe X_{ij}^2 = 1,\\[0.5em]
	\Expe X_{ij}^{a} &= \nu_{a} + o(1),\quad a=4,5,\qquad\qquad  \Expe |X_{ij}|^{6+\varepsilon_0}< \infty,	
\end{align*}
where $\delta_n$ satisfies the condition \eqref{eq:condi-trun}.

\subsection{Strategy of the proof}\label{sec:strategy_of_proof}

The general strategy of the proof follows the method established in \citet{bai2004clt} and \citet{BaiYao2005}.

Let $\scrC$ be the closed contour formed by the boundary of the rectangle with $(\pm u_1, \pm i v_1)$ where $u_1>2, 0<v_1\leqslant 1$. Assume that $u_1$ and $v_1$ are fixed and sufficiently small such that $\scrC\subset \scrU$. By Cauchy theorem, with probability one, we have
\[
	G_n(f) = -\frac{1}{2\pi i} \oint_{\mathscr{C}} f(z) n\Bigl[m_n(z)-m(z)-\mathcal{X}_n(m(z))\Bigr] \mathrm{d} z,
\]
where $m_n(z)$ and $m(z)$ are the Stieltjes transforms of $F^{\bA_n}$ and $F$, respectively.
The representation reduces our problem to finding the limiting process of
\[
	M_n(z) = n\Bigl[m_n(z)-m(z)-\mathcal{X}_n(m(z))\Bigr],\quad z\in\scrC.
\]
For $z\in \scrC$, we decompose $M_n(z)$ into a random part $M_n^{(1)}(z)$, and a determinist part $M_n^{(2)}(z)$, where
\[
	M_n^{(1)}(z) = n\bigl[m_n(z)-\Expe m_n(z)\bigr], \qquad M_n^{(2)}(z) = n\bigl[\Expe m_n(z) -m(z)-\mathcal{X}_n(m(z))\bigr].
\]
Throughout the paper, we set 
$
	\mathbb{C}_1 = \{z: z=u+iv, u\in [-u_1, u_1],~|v|\geqslant v_1\}.
$
The limiting process of $M_n(z)$ on $\mathbb{C}_1$ is stated in the following proposition.

\begin{proposition}\label{prop:Mn_CLT}
	Under the assumption $p\wedge n\to \infty, n^2/p=O(1)$ and after truncation of the data, the empirical process $\{M_n(z), z\in\mathbb{C}_1\}$ converges weakly to a centred Gaussian process $\{M(z), z\in\mathbb{C}_1\}$ 
	with the covariance function
	\begin{equation}\label{eq:Mn_cov}
		\Lambda(z_1, z_2) = m'(z_1)m'(z_2)\Bigl[\frac{\omega}{\theta}(\nu_4-3)+2\bigl(1-m(z_1)m(z_2)\bigr)^{-2}\Bigr].
	\end{equation}
\end{proposition}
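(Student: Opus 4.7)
The plan is to follow the two-step decomposition already built into the statement: treat the random fluctuation $M_n^{(1)}(z) = n[m_n(z)-\Expe m_n(z)]$ and the deterministic bias $M_n^{(2)}(z) = n[\Expe m_n(z)-m(z)-\mathcal{X}_n(m(z))]$ by quite different arguments, following the template of \citet{bai2004clt} and \citet{BaiYao2005}. For $M_n^{(1)}$, I would introduce the standard martingale decomposition
\[
M_n^{(1)}(z) = \sum_{j=1}^n (\Expe_j-\Expe_{j-1})\tr(\bA_n-z\bI_n)^{-1},
\]
where $\Expe_j$ denotes conditional expectation given the first $j$ columns of $\bX$. Finite-dimensional convergence on $\mathbb{C}_1$ will come from the martingale CLT (Theorem~35.12 of Billingsley): the Lyapunov/Lindeberg condition follows from the truncation performed in Section~\ref{sec:truncation} together with standard resolvent moment bounds on the contour where the distance to $[-2,2]$ is bounded below, while the bracket process must be shown to converge to $\Lambda(z_1,z_2)$.

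The bracket calculation is where the main obstacle lies. Each increment $(\Expe_j-\Expe_{j-1})\tr(\bA_n-z\bI_n)^{-1}$ would be written, via a rank-one column removal together with a Sherman--Morrison-type identity, as a quadratic form in the removed column $\bx_j$ contracted against the resolvent of the leave-one-out matrix. Summing conditional second moments, using the semicircle fixed-point equation $m(z)=-(z+m(z))^{-1}$, and tracking fourth-cumulant contributions should produce the two summands in $\Lambda(z_1,z_2)$: the non-Gaussian piece $\tfrac{\omega}{\theta}(\nu_4-3)$ and the universal piece $2(1-m(z_1)m(z_2))^{-2}$. Unlike the case $\bSigma_p=\bI_p$ handled in \citet{chen2015clt}, no exact closed form for $\tr(\bM_k^{(1)})/(npb_p)$ is available here; I would therefore substitute the first-order approximation provided by Lemma~\ref{lem:Mk_limit}, with error estimates fine enough to survive the $n^2/p = O(1)$ scaling. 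Tightness of $M_n^{(1)}$ on $\mathbb{C}_1$ would then be obtained by Burkholder's inequality applied to the same martingale differences, giving uniform $L^2$ control along the contour.

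For the deterministic part, the goal is to show $M_n^{(2)}(z) \to 0$ uniformly on $\mathbb{C}_1$. I would develop an asymptotic expansion of $\Expe m_n(z)$ to order $1/n$ by applying the generalised Stein equation (integration by parts with cumulant corrections $\kappa_3,\kappa_4$) to the identity $\Expe\tr(z\bI_n-\bA_n)^{-1}$, together with concentration of quadratic forms around their trace. This should yield a perturbed semicircle relation whose $O(1/n)$ correction matches $\mathcal{X}_n(m(z))$ term-by-term; the novelty relative to the identity-covariance setting is the emergence of contributions proportional to $\tilde b_p,\, c_p,\, d_p$, generated by the quadratic-in-$\bX$ structure of $\bA_n$ when $\bSigma_p\ne\bI_p$, and their careful bookkeeping is the main technical subtlety of this half of the proof. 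Combining finite-dimensional convergence and tightness of $M_n^{(1)}$ with the vanishing of $M_n^{(2)}$ gives the weak convergence of $M_n$ on $\mathbb{C}_1$ to the centred Gaussian process with covariance $\Lambda$, as claimed in Proposition~\ref{prop:Mn_CLT}.
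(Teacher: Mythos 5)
Your proposal follows essentially the same route as the paper: the decomposition into $M_n^{(1)}$ and $M_n^{(2)}$, the martingale-difference representation and martingale CLT for $M_n^{(1)}$ with the bracket computed via rank-one resolvent perturbations and the first-order approximation from Lemma~\ref{lem:Mk_limit}, Burkholder-based tightness, and the generalized Stein expansion of $\Expe m_n(z)$ for $M_n^{(2)}$. This is precisely the strategy carried out in Sections~\ref{sec:conv_Mn1_proof}--\ref{sec:conv_Mn2}, so the proposal is correct in approach.
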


Write the contour $\scrC$ as $\scrC=\scrC_{\ell}\cup \scrC_{r}\cup \scrC_{u}\cup \scrC_0$, where
\begin{align*}
	\scrC_{\ell}&= \{ z=-u_1+iv, \xi_n/n < |v| < v_1 \},\\
	\scrC_{r}&= \{ z=u_1+iv, \xi_n/n < |v| < v_1 \},\\
	\scrC_{0}&= \{ z=\pm u_1+iv,  |v| \leqslant \xi_n/n \},\\
	\scrC_{u}&= \{ z=u \pm iv_1,  |u| \leqslant u_1 \}
\end{align*}
and $\xi_n$ is a slowly varying sequence of positive constants and $v_1$ is a positive constant which is independent of $n$. Note that $\scrC_{\ell}\cup\scrC_{0}\cup \scrC_{\ell}=\scrC\setminus \mathbb{C}_1$.
To prove Theorem~\ref{thm:CLT}, we need to show that for $j=\ell, r, 0$ and some event $U_n$ with $\mathsf{P}(U_n)\to 1$,
\begin{equation}\label{eq:Mn_integral_zero}
	\lim_{v_1\downarrow 0} \limsup_{n\to\infty} \int_{\scrC_j} \Expe \Bigl|M_n(z)\indicator_{U_n}\Bigr|^2\mathrm{d}z=0
\end{equation}
and
\begin{equation}\label{eq:M_integral_zero}
	\lim_{v_1\downarrow 0} \int_{\scrC_j} \Expe \bigl|M(z)\bigr|^2\mathrm{d}z=0.
\end{equation}
The verification of \eqref{eq:Mn_integral_zero} and \eqref{eq:M_integral_zero} follows similar procedures developed in Sections 2.3, 3.1 and 4.3 of \citet{BaiYao2005} and the details will be omitted here. 

The calculation of the limiting covariance function of $Y(f)$ (see \eqref{eq:cov_1} \& \eqref{eq:cov_2}) is quite similar to that given in Section 5 of \citet{BaiYao2005}, it is then omitted.

Next, we can prove Proposition~\ref{prop:Mn_CLT} by the following three steps:
\begin{itemize}
	\item Finite-dimensional convergence of the random part $M_n^{(1)}(z)$ in distribution on $\mathbb{C}_1$;
	\item Tightness of the random part $M_n^{(1)}(z)$.
	\item Convergence of the non-random part $M_n^{(2)}(z)$ to the mean function on $\mathbb{C}_1$.
\end{itemize}
Details of the three steps are presented in the coming sections~\ref{sec:conv_Mn1_proof}, \ref{sec:tight_Mn1} and \ref{sec:conv_Mn2}, respectively.

\subsection{Finite dimensional convergence of $M_n^{(1)}(z)$ in distribution}\label{sec:conv_Mn1_proof}

We first decompose the random part $M_n^{(1)}(z)$ as a sum of martingale difference sequences, which is given in \eqref{eq:Mn_MDS}. Then, we apply the martingale CLT (Lemma~\ref{lem:CLT_MDS}) to obtain the asymptotic distribution of $M_n^{(1)}(z)$. Note that we prove the finite dimensional convergence of $M_n^{(1)}(z)$ under the assumption $p/n\to\infty$, which is weaker than $n^2/p=O(1)$.

First, we introduce some notations.
Define
\begin{align*}
	\bX_k&=(\bx_1,\ldots,\bx_{k-1},\bx_{k+1},\ldots,\bx_n),\qquad \bA_k=\tfrac{1}{\sqrt{npb_p}}\Bigl(  \bX_k'\bSigma_p \bX_k-pa_p  \bI_{n-1}\Bigr),\\[0.5em]
	\bD&=(\bA-z\bI_n)^{-1},\qquad \bD_k = (\bA_k-z\bI_{n-1})^{-1},\qquad \bM_k^{(s)}  =\bSigma_p\bX_k\bD_k^{s}\bX_k'\bSigma_p, ~ s=1, 2,\\[0.5em]
	a_{kk}^{\mathsf{diag}}&=A_{kk}-z=\tfrac{1}{\sqrt{npb_p}}\bigl(\bx_k'\bSigma_p\bx_k-pa_p\bigr)-z,\qquad \bq_k'=\tfrac{1}{\sqrt{npb_p}}\bigl(\bx_k'\bSigma_p\bX_k\bigr)\\[0.5em]
	\beta_k & =\frac{1}{-a_{kk}^{\mathsf{diag}}+\bq_k'\bD_k\bq_k},\qquad \beta^{\mathsf{tr}}_k = \frac{1}{z+(npb_p)^{-1}\tr \bM_k^{(1)}},\\[0.5em]
	\gamma_{ks} & = -\frac{1}{npb_p}\tr\bM_k^{(s)}+\bq_k'\bD_k^{s}\bq_k, ~ s=1, 2,\qquad \eta_k = \tfrac{1}{\sqrt{npb_p}}\bigl(\bx_k'\bSigma_p\bx_k-pa_p\bigr)-\gamma_{k1},\\[0.5em]
	\ell_k&=-\beta_k\beta^{\mathsf{tr}}_k\eta_k\bigl(1+\bq_k'\bD_k^{2}\bq_k\bigr).
\end{align*}

Note that $a_{kk}^{\mathsf{diag}}$ is the $k$-th diagonal element of $\bD^{-1}$ and $\bq_k'$ is the vector from the $k$-th row of $\bD^{-1}$ by deleting the $k$-th element.
By applying Theorem A.5 in \citet{bai2010spectral}, we obtain the equality
\begin{equation}\label{eq:trace_inv_diff}
	\tr \bD-\tr \bD_k=-\frac{1+\bq_k'\bD_k^{2}\bq_k}{-a_{kk}^{\mathsf{diag}}+\bq_k'\bD_k\bq_k}=-\beta_k(1+\bq_k'\bD_k^{2}\bq_k).
\end{equation}
Straightforward calculation gives:
\begin{equation}\label{eq:betak_decom}
	\beta_k-\beta^{\mathsf{tr}}_k=\beta_k\beta^{\mathsf{tr}}_k\eta_k
\end{equation}
and
\begin{equation}\label{eq:Expe_beta_eta_kappa}
	(\Expe_k-\Expe_{k-1}) \beta^{\mathsf{tr}}_k(1+\bq_k'\bD_k^{2}\bq_k)=\Expe_k \bigl(\beta_k^{\tr}\gamma_{k2}\bigr),\qquad \Expe_{k-1}\bigl(\beta_k^{\tr}\gamma_{k2}\bigr)= 0,
\end{equation}
where $\Expe_k(\cdot)$ is the expectation with respect to the $\sigma$-field generated by the first $k$ columns of $\bX$. 

By the definition of $\bD$ and $\bD_k$, we obtain two basic identities:
\begin{align}
	\bD\bX'\bSigma_p\bX&=p a_p \bD +\sqrt{npb_p}(\bI_n+z\bD),\label{eq:DXSX}\\[0.5em]
	\bD_k\bX_k'\bSigma_p\bX_k&=p a_p \bD_k +\sqrt{npb_p}(\bI_{n-1}+z\bD_k).\label{eq:DXSX_k}
\end{align}
If $\bSigma_p=\bI_p$, it is straightforward to derive that the limit of $\tr\bigl(\bM_{k}^{(1)}(z)\bigr)/(npb_p)$ is $m(z)$ by using \eqref{eq:DXSX_k}. However, when $\bSigma_p\neq \bI_p$, we need more detailed estimate.

\begin{lemma}\label{lem:Mk_limit}
	Under the assumption that $p\wedge n\to\infty$ and $p/n\to\infty$, we have, for $z\in\mathbb{C}_1$,
	\begin{equation}
		\Expe \biggl| \frac{1}{npb_p} \tr\bigl(\bM_{k}^{(1)}(z)\bigr) - m(z) \biggr|^2
		\leqslant 	\frac{K n}{p} + \frac{K}{n^2}.\label{eq:Diff-trMk-m}
	\end{equation}
\end{lemma}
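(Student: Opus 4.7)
The proof rests on the identity $\tr(\bM_k^{(1)}) = \tr(\bX_k'\bSigma_p^2\bX_k\bD_k)$ and the algebraic decomposition
\[
\tfrac{1}{npb_p}\tr(\bM_k^{(1)}) - m(z) = \bigl(\tfrac{1}{n}\tr(\bD_k) - m(z)\bigr) + \tfrac{1}{npb_p}\tr(\bE\bD_k),
\]
where $\bE := \bX_k'\bSigma_p^2\bX_k - pb_p\bI_{n-1}$. (Indeed, $\bX_k'\bSigma_p^2\bX_k = pb_p\bI_{n-1} + \bE$ gives the identity exactly, and the discrepancy between $\tfrac{1}{n}\tr\bD_k$ and $m_{F^{\bA_k}}(z) = \tfrac{1}{n-1}\tr\bD_k$ is $O(n^{-1})$, contributing $O(n^{-2})$ in $L^2$.) The plan is to bound the two non-trivial pieces separately and combine them.

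The second piece yields the $Kn/p$ rate via Cauchy--Schwarz: for $z\in\mathbb{C}_1$ one has $\|\bD_k\|_F^2\leq (n-1)v_1^{-2}$ deterministically, so
\[
|\tr(\bE\bD_k)|^2\leq \|\bE\|_F^2\cdot(n-1)v_1^{-2}.
\]
Expanding $\|\bE\|_F^2$ into diagonal entries $(\bx_j'\bSigma_p^2\bx_j-pb_p)^2$ and off-diagonal entries $(\bx_j'\bSigma_p^2\bx_{j'})^2$, the standard quadratic-form moment bounds (Lemma B.26 of \citet{bai2010spectral}) together with $\tr(\bSigma_p^4)\leq p\|\bSigma_p\|^4 = O(p)$ give $\Expe\|\bE\|_F^2\leq Kn^2p$; hence $\Expe\bigl|\tfrac{1}{npb_p}\tr(\bE\bD_k)\bigr|^2\leq Kn/p$.

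For the first piece, a sharp $L^2$ estimate of $m_{F^{\bA_k}}(z) - m(z)$ is needed. Since $\bA_k$ has the same structure as $\bA_n$ (just with $n-1$ columns), the variance $\Var(m_{F^{\bA_k}}(z)) = O(1/n^2)$ follows from a standard martingale decomposition over the columns of $\bX_k$, using the rank-one perturbation formulas \eqref{eq:trace_inv_diff}--\eqref{eq:betak_decom} and the quadratic-form moment bounds. The bias $|\Expe m_{F^{\bA_k}}(z) - m(z)|$, which is of order $\sqrt{n/p}+1/n$, comes from the deterministic correction $\mathcal{X}_n(m)$ characteristic of the ultra-dimensional regime, and can be extracted via a generalized Stein expansion analogous to the one used later for $M_n^{(2)}$ in Section~\ref{sec:conv_Mn2}. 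Squaring this gives a contribution of order $n/p + 1/n^2$, and combining all the pieces yields the claimed bound.

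The main technical obstacle is precisely this bias control for $m_{F^{\bA_k}}-m$: the Bai--Silverstein-style martingale argument only produces the variance, whereas the $\sqrt{n/p}$ bias term specific to our regime has to be obtained by a separate Stein-type analysis (and circularity with later sections is avoided by organizing the proof so that the bias estimate is established before invoking Lemma~\ref{lem:Mk_limit}). By contrast, the contribution of the non-isotropic part $\bE$ reduces to the deterministic observation that $\bX_k'\bSigma_p^2\bX_k$ concentrates around $pb_p\bI_{n-1}$ at Frobenius-norm rate $n\sqrt{p}$, which is precisely the scale producing $n/p$ after normalization by $(npb_p)^2$.
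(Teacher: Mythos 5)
Your decomposition
\[
\tfrac{1}{npb_p}\tr(\bM_k^{(1)}) - m(z) = \tfrac{1}{npb_p}\tr(\bE\bD_k) + \bigl(\tfrac{1}{n}\tr\bD_k - m(z)\bigr)
\]
and the $Kn/p$ estimate for the first piece match the paper exactly: the paper also writes $\bX_k'\bSigma_p^2\bX_k - pb_p\bI_{n-1}$, applies the matrix Cauchy--Schwarz inequality $|\tr(\bA^*\bB)|^2 \leqslant \tr(\bA^*\bA)\tr(\bB^*\bB)$, bounds $\|\bD_k\|_F^2 \leqslant (n-1)/v_1^2$ deterministically, and invokes the quadratic-form moment bound (Lemma B.26 of Bai--Silverstein) to get $\Expe\|\bE\|_F^2 \leqslant Kn^2p$. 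So this half of your argument is identical in substance.

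Where you diverge is in the second piece, and here your version is actually more careful than the paper's written proof. The paper dispatches $\tfrac{1}{n}\tr\bD_k - m(z)$ by bounding the deterministic interlacing term $|\tfrac{1}{n}\tr\bD - \tfrac{1}{n}\tr\bD_k| \leqslant 1/(nv_1)$ and then simply citing ``$m_n(z) \convas m$,'' which yields only an $o(1)$ remainder by dominated convergence, not the stated $Kn/p + K/n^2$ rate. Your observation that one must separately establish $\Var(m_{F^{\bA_k}}) = O(n^{-2})$ (via a column-by-column martingale difference decomposition, which the paper does prove independently as Lemma B.11) and a bias bound $|\Expe m_{F^{\bA_k}} - m| = O(\sqrt{n/p}) + O(1/n)$ (which encodes the ultra-dimensional correction $\calX_n$ and requires a Stein-type expansion as in Section 6.5) is correct, and is what is genuinely needed to obtain the quoted rate. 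Your caveat about circularity is well placed: it is resolved because the Stein-expansion analysis of $\Expe m_n - m$ uses only the same Cauchy--Schwarz technique and the independent Lemmas B.11--B.13, not the statement of Lemma~\ref{lem:Mk_limit} itself, so the logical order can be arranged. One practical note: the only downstream use of this lemma is \eqref{eq:betak_tr_limit}, which asserts merely $o(1)$, and for that the paper's weaker almost-sure-plus-dominated-convergence argument does suffice; your sharper route is what is actually required to justify the lemma as literally stated.
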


\begin{proof}[Proof]
	Using Lemma \ref{lem:quad-trace}, we have
	\begin{equation}\label{eq:quad-trace-upper-bound}
		\Expe (\bx_i'\bSigma_p^2\bx_i - pb_p)^2 \leqslant K  \nu_4\tr(\bSigma_p^4)\leqslant K \cdot  p\|\bSigma_p^4\| \leqslant K p .
	\end{equation}
	Note that $\tr(\bA^*\bB)$ is the inner product of $\mathsf{vec}(\bA)$ and $\mathsf{vec}(\bB)$ for any $n\times m$ matrices $\bA$ and $\bB$.
	It follows from the Cauchy-Schwarz inequality that
	\begin{equation}\label{eq:CS-trace}
		\bigl|\tr(\bA^*\bB)\bigr|^2 \leqslant \tr(\bA^*\bA)\cdot \tr(\bB^*\bB).
	\end{equation}
	By using \eqref{eq:CS-trace}, we have
	\begin{align*}
		\Expe \biggl| \frac{1}{npb_p} &\tr \bM_{k}^{(1)}(z) - \frac{1}{n}\tr  \bD_k^{1}(z)\biggr|^2 \\
		&=\frac{1}{(npb_p)^2} \Expe \; \Bigl|\tr\Bigl(\bD_k(z)\bigl(\bX_k'\bSigma_p^2\bX_k-pb_p\bI_{n-1}\bigr)\Bigr)\Bigr|^2\\
		&\leqslant \frac{1}{(npb_p)^2} \Expe \; \biggl[\tr\bigl(\bD_k(\bar{z})\bD_k(z)\bigr) \cdot \tr\bigl(\bX_k'\bSigma_p^2\bX_k-pb_p\bI_{n-1}\bigr)^2\biggr]\\
		&\leqslant \frac{1}{(npb_p)^2} \Expe \; \biggl[n\bigl\|\bD_k(\bar{z})\bD_k(z)\bigr\| \cdot \tr\bigl(\bX_k'\bSigma_p^2\bX_k-pb_p\bI_{n-1}\bigr)^2\biggr]\\
		&\leqslant \frac{1}{n(pb_p v_1)^2} \Expe \; \biggl[\tr\bigl(\bX_k'\bSigma_p^2\bX_k-pb_p\bI_{n-1}\bigr)^2\biggr].
	\end{align*}
	Indeed, by using \eqref{eq:quad-trace-upper-bound} and the fact $\Expe (\bx_i'\bSigma_p^2\bx_j)^2 = \tr(\bSigma_p^4)$, we have
	\begin{align*}
		\Expe \biggl[\tr\bigl(\bX_k'\bSigma_p^2\bX_k-pb_p\bI_{n-1}\bigr)^2\biggr] & = \sum_{i\neq k} \Expe (\bx_i'\bSigma_p^2\bx_i - pb_p)^2 + \sum_{i\neq j, i\neq k, j\neq k} \Expe (\bx_i'\bSigma_p^2\bx_j)^2 \\
		&\leqslant (n-1) \cdot p K + (n-1)(n-2) \cdot p K.
	\end{align*}
	Thus we have
	\begin{equation}\label{eq:Diff-trMk-trDk}
		\Expe \biggl| \frac{1}{npb_p} \tr \bM_{k}^{(1)}(z) - \frac{1}{n}\tr \bD_k(z) \biggr|^2
		\leqslant 	\frac{K n}{p}.
	\end{equation}
	Moreover, by \eqref{eq:trace_inv_diff} and \eqref{eq:beta_qDq_upper_bdd}, we have
	\begin{equation}
		\biggl|\frac{1}{n}\tr\bD (z)-\frac{1}{n}\tr\bD_k(z)\biggr| \overset{\eqref{eq:trace_inv_diff}}{=} \frac{1}{n}\Bigl|\beta_k(1+\bq_k'\bD_k^{2}\bq_k)\Bigr|\overset{\eqref{eq:beta_qDq_upper_bdd}}{\leqslant} \frac{1}{n v_1}, \label{eq:D_Dk_ESD_diff}
	\end{equation}
	which, together with \eqref{eq:Diff-trMk-trDk} and the fact that $m_n(z)\convas m$, implies \eqref{eq:Diff-trMk-m}.
\end{proof}


Applying \eqref{eq:trace_inv_diff} $\sim$ \eqref{eq:Expe_beta_eta_kappa}, we have the following decomposition:
\begin{align}
	M_n^{(1)}(z)&=\tr\bD-\Expe\tr\bD=\sum_{k=1}^{n} (\Expe_k-\Expe_{k-1})\bigl(\tr \bD-\tr \bD_k\bigr)\nonumber\\
	&=-\sum_{k=1}^{n} (\Expe_k-\Expe_{k-1})\beta_k\Bigl(1+\bq_k'\bD_k^{2}\bq_k\Bigr)\label{eq:Mn1_decom_0}\\
	&\overset{\eqref{eq:betak_decom}}{=}~\sum_{k=1}^{n} (\Expe_k-\Expe_{k-1})(-\beta_k\beta^{\mathsf{tr}}_k\eta_k)\Bigl(1+\bq_k'\bD_k^{2}\bq_k\Bigr)  - \sum_{k=1}^{n} (\Expe_k-\Expe_{k-1})\beta^{\mathsf{tr}}_k\Bigl(1+\bq_k'\bD_k^{2}\bq_k\Bigr) \nonumber\\
	&=\sum_{k=1}^{n} (\Expe_k-\Expe_{k-1}) \ell_k - \Expe_k \bigl(\beta_k^{\tr}\gamma_{k2}\bigr).\label{eq:Mn1_decom_1}
\end{align}

\noindent By using \eqref{eq:betak_decom}, we can split $\ell_k$ as
\begin{align}
	\ell_k&=-\bigl[(\beta^{\mathsf{tr}}_k)^2\eta_k+\beta_k(\beta^{\mathsf{tr}}_{k})^2\eta_k^2\bigr]\Bigl(1+\bq_k'\bD_k^{2}\bq_k\Bigr)\nonumber\\[0.5em]
	&=-(\beta^{\mathsf{tr}}_k)^2\eta_k \Bigl(1+\frac{1}{npb_p}\tr\bM_k^{(2)}\Bigr)-\beta^{\mathsf{tr}}_k\eta_k \gamma_{k2}- \beta_k(\beta^{\mathsf{tr}}_{k})^2\eta_k^2\Bigl(1+\bq_k'\bD_k^{2}\bq_k\Bigr)\nonumber\\
	&=:\ell_{k1}+\ell_{k2}+\ell_{k3}.\label{eq:ellk_decom}
\end{align}
By Lemma~\ref{lem:betak_upper_bounds} and Lemma~\ref{lem:gamma_moment_upper_bound}, it is not difficult to verify that
\begin{equation}
	\Expe \biggl| \sum_{k=1}^n (\Expe_k-\Expe_{k-1}) \ell_{k2}\biggr|^2 = o(1),\qquad \Expe \biggl| \sum_{k=1}^n (\Expe_k-\Expe_{k-1}) \ell_{k3}\biggr|^2 = o(1).
\end{equation}
These estimates, together with \eqref{eq:Mn1_decom_1} and \eqref{eq:ellk_decom}, imply that
\begin{align}
	M_n^{(1)}(z) &= \sum_{k=1}^n \Expe_k \biggl[-(\beta^{\mathsf{tr}}_k)^2\eta_k \Bigl(1+\frac{1}{npb_p}\tr\bM_k^{(2)}\Bigr)-\beta_k^{\tr}\gamma_{k2}\biggr]+o_{L_2}(1)\nonumber\\
	&=:\sum_{k=1}^n Y_k(z)+o_{L_2}(1),\label{eq:Mn_MDS}
\end{align}
where $Y_k(z)$ is a sequence of martingale difference. Thus, to prove finite-dimensional convergence of $M_n^{(1)}(z), z\in\mathbb{C}_1$, we need only to consider the limit of the following term:
\[
	\sum_{j=1}^r a_j M_n^{(1)}(z_j) = \sum_{j=1}^{r} a_j\sum_{k=1}^n Y_k(z_j) +o(1)	= \sum_{k=1}^n\Biggl(\sum_{j=1}^{r} a_j Y_k(z_j)\Biggr) + o(1),
\]
where $\{a_j\}$ are complex numbers and $r$ is any positive integer.

By Lemma~\ref{lem:betak_upper_bounds} and Lemma~\ref{lem:gamma_moment_upper_bound}, we have
\begin{equation}\label{eq:MDS_4th_moment}
	\Expe|Y_j(z)|^4 \leqslant K\frac{\delta_n^4}{n} + K\biggl(\frac{1}{n^2}+\frac{n}{p^2}\biggr),
\end{equation}
which implies that, for each $\varepsilon >0$,
\[
	\sum_{k=1}^n \Expe\Biggl( \biggl|\sum_{j=1}^{r} a_j Y_k(z_j)\biggr|^2 \indicator_{\{\sum_{j=1}^{r} a_j Y_k(z_j)\geqslant \varepsilon\}} \Biggr) \leqslant \frac{1}{\varepsilon^2} \sum_{k=1}^n \Expe \biggl|\sum_{j=1}^{r} a_j Y_k(z_j)\biggr|^4 = o(1),
\]
which implies that the second condition \eqref{eq:CLT_MDS_condition2} of the martingale CLT (see Lemma~\ref{lem:CLT_MDS}) is satisfied. Thus, to apply the martingale CLT, it is sufficient to verify that, for $z_1, z_2 \in\mathbb{C}^{+}$, the sum
\begin{equation}\label{eq:Lambda_1}
	{\Lambda}_n(z_1,z_2):=\sum_{k=1}^n \Expe_{k-1} \bigl(Y_k(z_1)Y_k(z_2)\bigr)
\end{equation}
converges in probability to a constant (and to determine this constant).

Note that
\[
	-\bigl(\beta^{\mathsf{tr}}_k\bigr)^2\eta_k\Bigl(1+\frac{1}{npb_p}\tr\bM_k^{(2)}\Bigr)  - \beta^{\mathsf{tr}}_k\gamma_{k2}=\frac{\partial}{\partial z} \Bigl[\beta^{\mathsf{tr}}_k(z)\eta_k(z)\Bigr],
\]
thus, we have
\begin{equation}\label{eq:Lambda_2}
	{\Lambda}_n(z_1,z_2)=\frac{\partial^2}{\partial z_2\partial z_1} \sum_{k=1}^{n} \Expe_{k-1}\Bigl[\Expe_k \bigl[\beta^{\mathsf{tr}}_k(z_1)\eta_k(z_1)\bigr] \cdot \Expe_k \bigl[(\beta^{\mathsf{tr}}_k(z_2)\eta_k(z_2)\bigr]\Bigr].
\end{equation}
It is enough to consider the limit of
\begin{equation}\label{eq:Lambda_tilde_1}
	\sum_{k=1}^{n} \Expe_{k-1}\Bigl[\Expe_k \bigl[\beta^{\mathsf{tr}}_k(z_1)\eta_k(z_1)\bigr] \cdot \Expe_k \bigl[(\beta^{\mathsf{tr}}_k(z_2)\eta_k(z_2)\bigr]\Bigr].
\end{equation}
By \eqref{eq:SCLaw_equation}, \eqref{lem:Mk_limit} and the dominated convergence theorem, we conclude that
\begin{equation}\label{eq:betak_tr_limit}
	\Expe \Bigl| \beta_k^{\tr}(z)+m(z) \Bigr|^2 = o(1).
\end{equation}
Combining \eqref{eq:betak_tr_limit} into \eqref{eq:Lambda_tilde_1} yields that
\begin{align}
	\;&\sum_{k=1}^{n} \Expe_{k-1}\Bigl[\Expe_k \bigl[\beta^{\mathsf{tr}}_k(z_1)\eta_k(z_1)\bigr] \cdot \Expe_k \bigl[(\beta^{\mathsf{tr}}_k(z_2)\eta_k(z_2)\bigr]\Bigr]\nonumber\\
	=\;& m(z_1)m(z_2)\sum_{k=1}^{n} \Expe_{k-1}\Bigl(\Expe_k\eta_k(z_1)\cdot \Expe_k \eta_k(z_2)\Bigr)+o_p(1)\nonumber\\
	=:\;& m(z_1)m(z_2)\widetilde{\Lambda}_n(z_1,z_2)+o_p(1).\label{eq:Lambda_tilde_2}
\end{align}
In view of \eqref{eq:Lambda_1} $\sim$ \eqref{eq:Lambda_tilde_2}, it suffices to derive the limit of $\widetilde{\Lambda}_n(z_1,z_2)$, which further gives the limit of \eqref{eq:Lambda_1}.




Since $\Expe_k \bigl[\eta_k(z)\bigr] =(1/\sqrt{npb_p})\bigl(\bx_k'\bSigma_p\bx_k-pa_p\bigr)-\Expe_k\bigl[\gamma_{k1}(z)\bigr]$, we have
\begin{equation}\label{eq:Expe_eta1_eta2}
	\Expe_{k-1}\Bigl[\Expe_k \eta_k(z_1)\cdot \Expe_k \eta_k(z_2)\Bigr] = \frac{1}{n}\biggl[\frac{\tilde{b}_p}{b_p}(\nu_4-3)+2\biggr] + A_{1}^{(k)} +A_{2}^{(k)} +A_{3}^{(k)},
\end{equation}
where
\begin{align*}
	A_1^{(k)} & = \Expe_{k-1}\Bigl[\Expe_k\gamma_{k1}(z_1)\cdot\Expe_k\gamma_{k1}(z_2)\Bigr],\qquad A_2^{(k)} = -\Expe_{k-1}\biggl[\tfrac{1}{\sqrt{npb_p}}\bigl(\bx_k'\bSigma_p\bx_k-pa_p\bigr) \cdot\Expe_k\gamma_{k1}(z_1)\biggr],\\[0.5em]
	A_3^{(k)} & = -\Expe_{k-1}\biggl[\tfrac{1}{\sqrt{npb_p}}\bigl(\bx_k'\bSigma_p\bx_k-pa_p\bigr)\cdot \Expe_k\gamma_{k1}(z_2)\biggr] .
\end{align*}

\noindent First, we show that $A_2^{(k)}$ and $A_3^{(k)}$ are negligible. Denote $\bM_k^{(1)}(z) = \bigl(a_{ij}^{(1)}(z)\bigr)_{p\times p}$, using the independence between $\bx_k$ and $\bM_k^{(1)}$, we have
\begin{align}
	A_2^{(k)} & = - \frac{1}{npb_p\sqrt{npb_p}}\Expe_{k-1}\biggl[\biggl(\sum_{i,j}\sigma_{ij} X_{ik}X_{jk}-pa_p\biggr)\biggl( \sum_{i\neq j} X_{ik}X_{jk}\Expe_{k}a_{ij}^{(1)}(z_1) + \sum_{i=1}^p (X_{ik}^2-1)\Expe_{k}a_{ii}^{(1)}(z_1) \biggr)  \biggr]\nonumber\\
	& = - \frac{1}{npb_p\sqrt{npb_p}}\Expe_{k-1}\biggl[\biggl( \sum_{i\neq j} \sigma_{ij}X_{ik}^2X_{jk}^2\Expe_{k}a_{ij}^{(1)}(z_1) +  \sum_{i=1}^p \sigma_{ii}X_{ik}^2(X_{ik}^2-1)\Expe_{k}a_{ii}^{(1)}(z_1) \biggr)  \biggr]\nonumber\\
	& = - \frac{1}{npb_p\sqrt{npb_p}}\Bigl(\sum_{i\neq j}\sigma_{ij} \Expe_{k}a_{ij}^{(1)}(z_1) + (\nu_4-1)\sum_{i=1}^p \sigma_{ii}\Expe_{k}a_{ii}^{(1)}(z_1)\Bigr)\nonumber \\
	& = - \frac{1}{\sqrt{npb_p}}\Expe_k\biggl[\frac{1}{npb_p} \tr\Bigl(\bSigma_p\bM_k^{(1)}\Bigr)- \frac{\nu_4-2}{npb_p}\sum_{i=1}^p \sigma_{ii}a_{ii}^{(1)}(z_1)\biggr]. \label{eq:square_bracket}
\end{align}
As for the first term in the bracket of \eqref{eq:square_bracket}, we can estimate it by using the similar argument as in the proof of Lemma~\ref{lem:Mk_limit}.
Replacing $pb_p$ and $\bM_k^{(1)}$ in the proof of Lemma~\ref{lem:Mk_limit} with $\tr(\bSigma_p^3)$ and $\bSigma_p\bM_k^{(1)}$, we can prove that 
\[
	\Expe \biggl| \frac{1}{n\tr(\bSigma_p^3)} \tr\Bigl(\bSigma_p\bM_{k}^{(1)}\Bigr) - \frac{1}{n}\tr \bD_k \biggr|^2
\leqslant 	\frac{K n}{p}.
\]
Moreover, by the fact $\tfrac{b_p^2}{a_p}\leqslant \tr(\bSigma_p^3)\leqslant Kp$, the first inequality of which follows from \eqref{eq:CS-trace}, we conclude that
\[
	\frac{1}{npb_p} \tr\Bigl(\bSigma_p\bM_{k}^{(1)}\Bigr) = \frac{\tr(\bSigma_p^3)}{pb_p}\cdot \frac{1}{n\tr(\bSigma_p^3)} \tr\Bigl(\bSigma_p\bM_{k}^{(1)}\Bigr) = O_p(1).
\]
As for the second term in the bracket of \eqref{eq:square_bracket}, we have
\[
	\frac{1}{npb_p}\sum_{i=1}^p \sigma_{ii}a_{ii}^{(1)} \leqslant \frac{\|\bSigma_p\|}{npb_p}\sum_{i=1}^p a_{ii}^{(1)} = \frac{\|\bSigma_p\|}{npb_p}\tr \bM_k^{(1)}=O_p(1).
\]
Thus, we conclude that the term in the square bracket of \eqref{eq:square_bracket} is bounded in probability. Thus $\Bigl|\sum_{k=1}^n A_2^{(k)}\Bigr|\to 0$.
Similarly, we can show that $\Bigl|\sum_{k=1}^n A_3^{(k)}\Bigr|\to 0$.

Now we consider $A_1^{(k)}$.
We consider the second terms on the RHS of \eqref{eq:Expe_eta1_eta2}  with the notation $\Expe_k\bM_k^{(1)}(z) = \bigl(a_{ij}^{(1)}(z)\bigr)_{n\times n}$,
\begin{align*}
A_1^{(k)}&=  \frac{1}{(npb_p)^2}\Expe_{k-1} \biggl[\biggl( \sum_{i\neq j} X_{ik}X_{jk}\Expe_{k}a_{ij}^{(1)}(z_1) + \sum_{i=1}^p (X_{ik}^2-1)\Expe_{k}a_{ii}^{(1)}(z_1) \biggr) \\
&\qquad\qquad\qquad\qquad\qquad \times \biggl( \sum_{i\neq j} X_{ik}X_{jk}\Expe_{k}a_{ij}^{(1)}(z_1) + \sum_{i=1}^p (X_{ik}^2-1)\Expe_{k}a_{ii}^{(1)}(z_1) \biggr)\biggr]\\
=&\; \frac{1}{(npb_p)^2}\Expe_{k-1} \biggl[ 2\sum_{i\neq j}X_{ik}^2 X_{jk}^2\Expe_{k}a_{ij}^{(1)}(z_1)\Expe_{k}a_{ij}^{(1)}(z_2) + \sum_{i=1}^p (X_{ik}^2-1)^2\Expe_{k}a_{ii}^{(1)}(z_1)\Expe_{k}a_{ii}^{(1)}(z_2) \biggr]\\
=&\; \frac{1}{(npb_p)^2} \biggl[ 2\sum_{i\neq j}\Expe_{k}a_{ij}^{(1)}(z_1)\Expe_{k}a_{ij}^{(1)}(z_2) + (\nu_4-1)\sum_{i=1}^p \Expe_{k}a_{ii}^{(1)}(z_1)\Expe_{k}a_{ii}^{(1)}(z_2) \biggr]\\
=&\; \frac{1}{(npb_p)^2} \biggl[ 2\sum_{i, j}\Expe_{k}a_{ij}^{(1)}(z_1)\Expe_{k}a_{ij}^{(1)}(z_2) + (\nu_4-3)\sum_{i=1}^p \Expe_{k}a_{ii}^{(1)}(z_1)\Expe_{k}a_{ii}^{(1)}(z_2) \biggr]\\
=&\;\frac{2}{(npb_p)^2} \biggl[\tr \Bigl(\Expe_k\bM_k^{(1)}(z_1)\cdot \Expe_k\bM_k^{(1)}(z_2)\Bigr)\biggr]+o_{L_1}(1),
\end{align*}
where the last step follows from
\begin{align*}
	\Expe\biggl|\sum_{i=1}^p \Expe_{k}a_{ii}^{(1)}(z_1)\cdot\Expe_{k}a_{ii}^{(1)}(z_2)\biggr|^2
	&\leqslant p\cdot \sum_{i=1}^p\Expe \biggl|\Expe_{k}a_{ii}^{(1)}(z_1)\cdot\Expe_{k}a_{ii}^{(1)}(z_2)\biggr|^2\\
	&\leqslant p\cdot \sum_{i=1}^p\biggl( \Expe \Bigl|\Expe_{k}a_{ii}^{(1)}(z_1)\Bigr|^4\biggr)^{\nicefrac{1}{2}}\cdot \biggl(\Expe\Bigl|\Expe_{k}a_{ii}^{(1)}(z_2)\Bigr|^4\biggr)^{\nicefrac{1}{2}}\\
	&\leqslant p\cdot \sum_{i=1}^p\biggl( \Expe \Bigl|a_{ii}^{(1)}(z_1)\Bigr|^4\biggr)^{\nicefrac{1}{2}}\cdot \biggl(\Expe\Bigl|a_{ii}^{(1)}(z_2)\Bigr|^4\biggr)^{\nicefrac{1}{2}}\\
	&\overset{\eqref{eq:ajj_upper_bound}}{\leqslant} K(n^4p^2+n^2p^3).
\end{align*}

By above estimates, we obtain
\begin{align}
	\widetilde{\Lambda}_n(z_1,z_2) &= \frac{2}{(npb_p)^2}\sum_{k=1}^n\tr \Bigl(\Expe_k\bM_k^{(1)}(z_1)\cdot \Expe_k\bM_k^{(1)}(z_2)\Bigr)+\Bigl[\frac{\tilde{b}_p}{b_p}(\nu_4-3)+2 \Bigr]+o_p(1),\nonumber\\
	&=\frac{2}{n}\sum_{k=1}^{n}\bbZ_k + \frac{\tilde{b}_p}{b_p}(\nu_4-3)+2 + o_p(1),\label{eq:Gamma_tilde}
\end{align}
where \begin{equation}\label{eq:Zk_defi}
\bbZ_k=\frac{1}{n(pb_p)^2} \tr \Bigl(\Expe_k\bM_k^{(1)}(z_1)\cdot \Expe_k\bM_k^{(1)}(z_2)\Bigr).
\end{equation}
In Lemma \ref{lem:Zk_asym}, we derive the asymptotic expression of $\bbZ_k$. This asymptotic expression ensures that
\begin{equation}\label{eq:Zk_limit}
	\frac{1}{n}\sum_{k=1}^n\bbZ_k
	\to \int_0^1 \frac{t m(z_1)m(z_2)}{1-tm(z_1)m(z_2)}\mathrm{d} t = -1-\frac{\log\bigl(1-m(z_1)m(z_2)\bigr)}{m(z_1)m(z_2)}.
\end{equation}
By \eqref{eq:Lambda_2}, \eqref{eq:Lambda_tilde_2}, \eqref{eq:Gamma_tilde} and \eqref{eq:Zk_limit}, we have
\[
	\widetilde{\Lambda}_n(z_1,z_2) \convp \frac{\omega}{\theta}(\nu_4-3)-\frac{2\log\bigl(1-m(z_1)m(z_2)\bigr)}{m(z_1)m(z_2)}.
\]
Therefore, 
\begin{align*}
	\Lambda_n(z_1,z_2) & \convp \frac{\partial^2}{\partial z_1\partial z_2} \biggl\{\frac{\omega}{\theta}(\nu_4-3)m(z_1)m(z_2) - 2\log\Bigl( 1-m(z_1)m(z_2)\Bigr)\biggr\} \nonumber\\[0.5em]
	&=m'(z_1)m'(z_2) \biggl[ \frac{\omega}{\theta}(\nu_4-3)+2\bigl( 1-m(z_1)m(z_2) \bigr)^{-2}  \biggr].
\end{align*}


\subsection{Tightness of $M_n^{(1)}(z)$}\label{sec:tight_Mn1}

This subsection is to verify the tightness of $M_n^{(1)}(z)$ for $z\in\mathbb{C}_1$ by using Lemma~\ref{lem:tightness}. Applying the Cauchy-Schwarz inequality, Lemma~\ref{lem:betak_upper_bounds} and Lemma~\ref{lem:gamma_moment_upper_bound}, we have
\[
	\Expe\biggl|\sum_{k=1}^n\sum_{j=1}^{r} a_j Y_k(z_j)\biggr|^2 = O(1),
\]
which shows that the condition (i) of Lemma~\ref{lem:tightness} holds. Condition (ii) of Lemma~\ref{lem:tightness} will be verified by showing 
\begin{equation}\label{eq:Mn1_Mn2_diff_bounded}
	\frac{\Expe\bigl|M_n^{(1)}(z_1)-M_n^{(1)}(z_2)\bigr|^2}{|z_1-z_2|^2} \leqslant K,\qquad z_1,z_2\in\mathbb{C}_1.
\end{equation}
The proof of \eqref{eq:Mn1_Mn2_diff_bounded} exactly follow \citet{chen2015clt}, it is then omitted.

\subsection{Convergence of $M_n^{(2)}(z)$}\label{sec:conv_Mn2}

In this section, we obtain the asymptotic expansion of $n\bigl(\Expe\, m_n(z) - m(z)\bigr)$ for $z\in\mathbb{C}_1$ (see definition $\mathbb{C}_1$ of in Section~\ref{sec:strategy_of_proof}) and the result is stated in Lemma~\ref{lem:Mn2_mean}.
This lemma, together with the finite dimensional convergence (see Section~\ref{sec:conv_Mn1_proof}) and the tightness of $M_n^{(1)}(z)$ (see Section~\ref{sec:tight_Mn1}), implies Proposition~\ref{prop:Mn_CLT}. To prove Lemma~\ref{lem:Mn2_mean}, we will follow the strategy in \citet{khorunzhy1996asymptotic} and \citet{bao2015asymptotic}. The main tool is the generalized Stein's equation (see Lemma~\ref{lem:general-stein}).

\begin{lemma}\label{lem:Mn2_mean}
	With the same notations as in the previous sections, for $z\in\mathbb{C}_1$,
	\begin{enumerate}
		\item[(1)] if $p\wedge n\to\infty$ and $n^2/p=O(1)$, we have
		\begin{equation}\label{eq:n2p_mean_expan}
			M_n^{(2)}=n\Bigl[\Expe m_n(z) -m(z)-\calX_n\bigl(m(z)\bigr)\Bigr] = o(1),
		\end{equation}
		where $\calX_n(m)$ is defined by \eqref{eq:calX};
		\item[(2)] if $p\wedge n\to\infty$ and $n^3/p = O(1)$, we have
		\begin{equation}\label{eq:n3p_mean_expan}
			n\Biggl[\Expe m_n(z) -m(z) +\sqrt{\frac{n}{p}}\frac{c_p}{b_p\sqrt{b_p}}\frac{m^4}{1-m^2} \Biggr]= \frac{m^3}{1-m^2}\biggl(\frac{m^2}{1-m^2}+\frac{\tilde{b}_p}{b_p}(\nu_4-3)+1\biggr) +o(1).
		\end{equation}
	\end{enumerate}
\end{lemma}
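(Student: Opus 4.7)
The plan is to derive an asymptotic expansion of $\Expe m_n(z)$ via the generalized Stein equation applied to the entries of $\bX$, following the strategy of \citet{khorunzhy1996asymptotic} and \citet{bao2015asymptotic}. Starting from the resolvent identity $\bA_n \bD = \bI_n + z\bD$, I would take its $(k,k)$ entry and average over $k$ to obtain the master identity
\[
  1 + \Bigl(z + \frac{pa_p}{\sqrt{npb_p}}\Bigr)\Expe m_n(z) \;=\; \frac{1}{n\sqrt{npb_p}} \sum_{a,b=1}^{p} \sigma_{ab} \sum_{k,l=1}^{n} \Expe\bigl[X_{ak} X_{bl} D_{lk}\bigr],
\]
whose right-hand side encodes all corrections to the semi-circle relation $m = -1/(z+m)$.

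Next, I would expand this right-hand side by applying the generalized Stein equation to $X_{ak}$ (treating $X_{bl} D_{lk}$ as the function) up to the term involving $\kappa_4 = \nu_4 - 3$, controlling the remainder via Assumption~(A) and the resolvent bound $\|\bD\|\leqslant v_1^{-1}$ available on $\mathbb{C}_1$. The derivatives $\partial_{ak} D_{lk}$ generate matrix products that, after contraction against $\sigma_{ab}$, collapse into traces of the form $(npb_p)^{-1}\tr(\bSigma_p^{r}\bX\bD^{s}\bX'\bSigma_p)$; by the argument used to prove Lemma~\ref{lem:Mk_limit} (applied verbatim with $\bSigma_p$, $\bSigma_p^2$, $\bSigma_p^3$ in place of $\bI_p$) these concentrate at $(\tr(\bSigma_p^{r+1})/(pb_p))\cdot m$. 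Collecting the terms, $\Expe m_n$ is seen to satisfy a perturbed fixed-point equation in which the $\kappa_2 = 1$ contribution produces the $\sqrt{n/p}\cdot c_p/(b_p\sqrt{b_p})$ corrections in $\calA$ and $\calB$ as well as the $(n/p)(d_p/b_p^2 - c_p^2/b_p^3)m^5$ and leading $m^3/(n(1-m^2))$ pieces of $\calC$; the $\kappa_4$ contribution yields the $(\nu_4-3)\tilde{b}_p/b_p$ term of $\calC$ (the appearance of $\tilde{b}_p$ rather than $b_p$ reflects that the fourth-cumulant pairing forces $a = b$ in the $\sigma_{ab}$ sum); and the $\kappa_3$ contribution drops out at the orders retained, by the $\sigma_{ab}$-symmetry of the relevant contractions. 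Writing $\Expe m_n(z) = m(z) + \delta_n(z)$ and linearising the perturbed equation around $m = -1/(z+m)$ converts it, up to $o(n^{-1})$, into $\calA\delta_n^2 + \calB\delta_n + \calC = 0$ with the coefficients of \eqref{eq:calX}; selecting the root consistent with the sign convention of the statement identifies $n\delta_n$ with $n\calX_n(m)$ up to $o(1)$, which is Part~(1).

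Part~(2) would then follow from Part~(1) by pure algebra. Under $n^3/p = O(1)$ one has $\sqrt{n/p} = O(n^{-1})$ and $n/p = O(n^{-2})$, so $n\calA\calX_n^2 = o(1)$ and only $m^2 - 1$ survives in $\calB$. This reduces $n\calX_n(m)$ to $-n\calC/(m^2-1) + o(1)$, and substituting the two surviving pieces of $\calC$ together with the identity $\frac{1}{1-m^2} = \frac{m^2}{1-m^2}+1$ rearranges directly into the right-hand side of \eqref{eq:n3p_mean_expan}.

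The principal obstacle will be the bookkeeping of the Stein/cumulant expansion: each derivative $\partial_{ak}$ acting on $\bD$ spawns matrix-trace expressions of increasing combinatorial complexity, and one must patiently track how the $\sigma_{ab}$-weighted sums contract them into the correct powers of $\bSigma_p$ so that exactly the constants $b_p$, $\tilde{b}_p$, $c_p$, $d_p$ appearing in \eqref{eq:calX} emerge with the right coefficients. A secondary difficulty is uniform control of the Stein-remainder terms on $\mathbb{C}_1$ at the $n^{-1}$ precision needed for Part~(1), which will rest on the $(6+\veps_0)$-moment assumption in Assumption~(A) and on concentration estimates of the kind used in Lemma~\ref{lem:Mk_limit}.
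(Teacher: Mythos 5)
Your overall strategy — a generalized Stein expansion applied to the resolvent identity, yielding a perturbed fixed-point equation for $\Expe m_n$ that is linearized around $m$ — matches the paper's, and your derivation of Part~(2) from Part~(1) by dropping the sub-$O(n^{-1})$ pieces of $\calA$, $\calB$, $\calC$ is correct (and a bit cleaner than the paper, which obtains Part~(2) from an intermediate, coarser expansion). Running Stein on the raw $X_{ak}$ with the double $\sigma_{ab}$ contraction, rather than on $Y_{jk}$ with the pre-contracted $F_{jk}=(\bSigma_p\bY\bD)_{jk}$, is a cosmetic difference.

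There is, however, a genuine gap in how you obtain the coefficients for Part~(1). You assert that the trace quantities generated by the Stein derivatives ``concentrate at $(\tr(\bSigma_p^{r+1})/(pb_p))\cdot m$'' via the argument of Lemma~\ref{lem:Mk_limit}. That lemma only gives precision $O(\sqrt{n/p})+O(n^{-1})$, which is far too crude here. The decisive trace is $(npb_p)^{-1/2}\Expe\tr(\bSigma_p\bY\bD\bY'\bSigma_p)$: it enters the fixed-point equation multiplied by $\Expe m_n$, so a systematic $O(\sqrt{n/p})$ deviation from $m$ propagates to an $O(\sqrt{n/p})$ error in $\Expe m_n-m$ and hence an $O(\sqrt{n^3/p})=O(\sqrt{n})$ blow-up of $M_n^{(2)}$ once $n^2/p=O(1)$. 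The $\sqrt{n/p}\,c_p$ pieces of $\calA$, $\calB$, $\calC$ and the $(n/p)\,d_p$ piece of $\calC$ are exactly the sub-leading $\sqrt{n/p}$- and $n/p$-order deviations of this trace from $m$. To compute them the paper applies the generalized Stein equation \emph{recursively}: a second application to $\widetilde F_{jk}=(\bSigma_p^2\bY\bD)_{jk}$ expands $\Expe\tr(\bSigma_p\bY\bD\bY'\bSigma_p)$ to order $\sqrt{n/p}$ (producing the $c_p$ terms, which already gives Part~(2)), and a third application expands $\Expe\tr(\bSigma_p\bY\bD\bY'\bSigma_p^2)$ to order $n/p$ (producing the $d_p$ term needed for Part~(1)). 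Your proposal does not describe this nested expansion, and first-order concentration alone cannot generate the $c_p$, $d_p$ coefficients it is supposed to explain; indeed, the paper explicitly remarks before Lemma~\ref{lem:Mk_limit} that the direct resolvent identity \eqref{eq:DXSX_k} only handles $\bSigma_p=\bI_p$ and that the general case requires ``more detailed estimate.''

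A smaller issue: the claim that the $\kappa_3$ contribution ``drops out by $\sigma_{ab}$-symmetry'' is not how the paper discards the odd-cumulant pieces; they are bounded in magnitude using estimates of the form $\bigl|\sum_{j,k}F_{jk}\bigr|=O\bigl((np)^{3/4}\bigr)$, not cancelled by symmetry. You would need to supply such an estimate — a symmetry argument is not available since the truncated variables need not have vanishing third cumulant.
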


\begin{proof}
	Let $\bY=(npb_p)^{-1/4}\bX$, then
		\[
			\bA = \bY'\bSigma_p\bY-\sqrt{\frac{p}{n}}\frac{a_p}{\sqrt{b_p}}\bI_n.
		\]
		To simplify notations, we let
		\[
			\bE:= \bSigma_p\bY\bD\bY'\bSigma_p = (E_{ij})_{p\times p}, \qquad\bF:= \bSigma_p\bY\bD = (F_{ij})_{p\times n}.
		\]
		By the basic identity 
		\[
			\bD  = -\frac{1}{z}\bI_n + \frac{1}{z} \bD\bA
			= -\frac{1}{z}\bI_n + \frac{1}{z}  \biggl(\bD\bY'\bSigma_p\bY - \sqrt{\frac{p}{n}}\frac{a_p}{\sqrt{b_p}}\bD\biggr),
		\]
		we have
		\begin{align}
			\Expe m_n(z)  & = -\frac{1}{z} + \frac{1}{z}\cdot \frac{1}{n} \Expe \tr\bigl(\bD\bA\bigr)\nonumber\\
			& = -\frac{1}{z} - \frac{1}{z} \sqrt{\frac{p}{n}}\frac{a_p}{\sqrt{b_p}} \Expe \biggl(\frac{1}{n}\tr \bD\biggr) +  \frac{1}{zn}\Expe \;\tr\Bigl(\bY'\bSigma_p\bY\bD\Bigr)\nonumber \\
			& = -\frac{1}{z} - \frac{1}{z} \sqrt{\frac{p}{n}}\frac{a_p}{\sqrt{b_p}} \Expe m_n(z) + \frac{1}{zn}\sum_{j,k}\Expe \bigl( Y_{jk}F_{jk} \bigr).\label{eq:mn-decom}
		\end{align}
		The basic idea of the following derivation is regarding $F_{jk}:=(\bSigma_p\bY\bD)_{jk}$ as an analytic function of $Y_{jk}$, and then apply the generalized Stein's equation (Lemma~\ref{lem:general-stein} below) to expand $\Expe \bigl( Y_{jk}F_{jk} \bigr)$.
		
		\begin{lemma}[Generalized Stein's Equation, \citet{khorunzhy1996asymptotic}]\label{lem:general-stein}
			For any real-valued random variable $\xi$ with $\Expe |\xi|^{p+2}<\infty$ and complex-valued function $g(t)$ with continuous and bounded $p+1$ derivatives, we have
			\[
				\Expe \bigl[\xi g(\xi)\bigr] = \sum_{a=0}^{p} \frac{\kappa_{a+1}}{a!} \Expe \Bigl(g^{(a)}(\xi)\Bigr)+\varepsilon,	
			\]
			where $\kappa_{a}$ is the $a$-th cumulant of $\xi$, and 
			\[
				|\varepsilon|\leqslant C \sup_t \bigl| g^{(p+1)}(t) \bigr| \; \Expe\bigl( |\xi|^{p+2}\bigr),	
			\]
			where the positive constant $C$ depends on $p$.
		\end{lemma}
		Applying Lemma~\ref{lem:general-stein} to the last term in \eqref{eq:mn-decom}, we obtain the following expansion:
		\begin{equation}\label{eq:Emn_expan_0}
			\Expe m_n(z)
			= -\frac{1}{z} - \frac{1}{z} \sqrt{\frac{p}{n}}\frac{a_p}{\sqrt{b_p}} \Expe m_n(z) + \frac{1}{zn}\sum_{a=0}^4\frac{1}{(npb_p)^{(a+1)/4}}\sum_{j,k} \frac{\kappa_{a+1}}{a!}\Expe\biggl(\frac{\partial^a F_{jk}}{\partial Y_{jk}^a}\biggr)+\varepsilon_n,
		\end{equation}
		where $\kappa_{a}$ is the $a$-th cumulant of $Y_{jk}$, $\tfrac{\partial^a F_{jk}}{\partial Y_{jk}^a}$ denotes the $a$-th order derivative of $F_{jk}$ w.r.t. $Y_{jk}$, and
		\begin{equation}\label{eq:stein_remainder_up_bdd}
			\varepsilon_n \leqslant  \frac{K}{n} \frac{1}{(npb_p)^{6/4}}\sum_{j,k}\sup_{j,k} \Expe_{jk} 	\biggl|\frac{\partial^5 F_{jk}}{\partial Y_{jk}^5}\biggr|.
		\end{equation}
		The explicit formula of the derivatives of $F_{jk}$ are provided in Lemma~\ref{lem:Fjk_derivative}. These derivatives can be derived by using the chain rule and Lemma~\ref{lem:derivative} repeatedly, and the details will be omitted here.
		\begin{lemma}[Derivatives of $F_{jk}$]\label{lem:Fjk_derivative}
			\begin{align*}
				\frac{\partial  F_{jk}}{\partial Y_{jk}} \;&= \sigma_{j j} D_{kk} - E_{jj}D_{kk}-F_{jk}^2;\\[0.5em]
				\frac{\partial^2  F_{jk}}{\partial Y_{jk}^2} \;&= -6\sigma_{jj}F_{jk}D_{kk}+6E_{jj}F_{jk}D_{kk}+2F_{jk}^3;\\[0.5em]
				\frac{\partial^3  F_{jk}}{\partial Y_{jk}^3} \;&= -6\sigma_{jj}^2D_{kk}^2+36\sigma_{jj}F_{jk}^2D_{kk}+12\sigma_{jj}	E_{jj}D_{kk}^2 - 36E_{jj}F_{jk}^2D_{kk}-6E_{jj}^2D_{kk}^2-6F_{jk}^4;\\[0.5em]
				\frac{\partial^4  F_{jk}}{\partial Y_{jk}^4} \;&= 120\sigma_{jj}^2F_{jk}D_{kk}^2 -240\sigma_{jj}F_{jk}^3D_{kk}-240\sigma_{jj}E_{jj}F_{jk}D_{kk}^2+240E_{jj}F_{jk}^3D_{kk}\\[0.5em]
				&\qquad\qquad+120E_{jj}^2F_{jk}D_{kk}^2+24F_{jk}^5;\\
				\frac{\partial^5  F_{jk}}{\partial Y_{jk}^5} \;&= -120 F_{jk}^6 - 1800 E_{jj} F_{jk}^4 D_{kk} - 1800 E_{jj}^2 F_{jk}^2 D_{kk}^2 - 120 E_{jj}^3 D_{kk}^3 + 
				1800 \sigma_{jj}F_{jk}^4 D_{kk}  \\[0.5em]
				&\qquad\qquad + 3600 \sigma_{jj} E_{jj} F_{jk}^2 D_{kk}^2  + 360 \sigma_{jj} E_{jj}^2 D_{kk}^3  - 
				1800 \sigma_{jj}^2 F_{jk}^2 D_{kk}^2\\[0.5em]
				&\qquad \qquad  - 360 \sigma_{jj}^2 E_{jj} D_{kk}^3  + 120 \sigma_{jj}^3 D_{kk}^3.
			\end{align*}
		\end{lemma}

		From \eqref{eq:DXSX} and Lemma~\ref{lem:G11_tilde_upper_bbd}, it is not difficult to obtain the following estimates:
		\begin{align}
			D_{kk}^{a_1}F_{jk}^{a_2}E_{jj}^{a_3}& \leqslant Kn^{a_3/2}\biggl(\sum_{\alpha}\bigl[(\bSigma \bY)_{j\alpha}\bigr]^2\biggr)^{(a_2+2a_3)/2},\quad (a_1, a_2, a_3\geqslant 0)\label{eq:stein_esti_EFG}\\[0.5em]
			\Expe\biggl|\Bigl(\bSigma_p^{-\nicefrac{1}{2}}\bE\bSigma_p^{-\nicefrac{1}{2}}\Bigr)_{jj}&-\frac{\Expe m_n}{a_p\sqrt{p/(nb_p)}+z+\Expe m_n}\biggr|=O\biggl(\biggl(\frac{n}{p}\biggr)^2\biggr) + O\biggl(\frac{1}{p}\biggr),\label{eq:stein_esti_E}\\[0.5em]
			\biggl|\sum_{j,k}F_{jk}\biggr| & = O\Bigl((np)^{3/4}\Bigr),\label{eq:stein_esti_F_1}\\[0.5em]
			\biggl|\sum_{j,k}F_{jk}^{a_2}\biggr| & = O\Bigl(p^{a_2/4}n^{1-a_2/4}\Bigr),\quad (a_2\geqslant 2).\label{eq:stein_esti_F_2}
		\end{align}
		By \eqref{eq:stein_remainder_up_bdd} and \eqref{eq:stein_esti_EFG}, we have the following estimate for $\varepsilon_n$ defined in \eqref{eq:stein_remainder_up_bdd}:
		\begin{equation}\label{eq:stein_remainder_esti}
			|\varepsilon_n| = o\biggl(\frac{1}{n}\biggr).
		\end{equation}
		By the fact \begin{equation}
			\Expe \biggl|D_{\ell\ell}+\frac{1}{z+\Expe\, m_n}\biggr|^2 = O\biggl(\frac{1}{n}\biggr) + O\biggl(\frac{n}{p}\biggr),\qquad 1\leqslant \ell \leqslant n,\label{eq:Gkk_esti}
		\end{equation} 
		which is verified in Lemma~\ref{lem:G11_upper_bbd}, and the estimates above, we can extract the leading order terms in \eqref{eq:Emn_expan_0} to obtain
		\begin{align}
			\Expe m_n(z)
			& = -\frac{1}{z} - \frac{1}{z} \sqrt{\frac{p}{n}}\frac{a_p}{\sqrt{b_p}} \Expe m_n(z) +  \frac{1}{zn}\frac{1}{\sqrt{npb_p}}\sum_{j,k}\Expe\Bigl(\sigma_{jj}D_{kk}-E_{jj}D_{kk}-F_{jk}^2\Bigr)\nonumber\\
			&\qquad\qquad  - \frac{1}{zn}\frac{\nu_4-3}{npb_p}\sum_{j,k}\Expe\Bigl(\sigma_{jj}^2 D_{kk}^2\Bigr)+o\biggl(\frac{1}{n}\biggr)\nonumber\\
			& = -\frac{1}{z} - \frac{1}{zn}\frac{1}{\sqrt{npb_p}} \Expe\Bigl[\tr(\bE)\tr(\bD)\Bigr]  - \frac{1}{zn}\frac{1}{\sqrt{npb_p}} \Expe\Bigl[\tr(\bF\bF')\Bigr]\nonumber\\
			&\qquad\qquad - \frac{(\nu_4-3)\tilde{b}_p}{zn^2b_p}\Expe\Bigl( \sum_{k} D_{kk}^2\Bigr)+o\biggl(\frac{1}{n}\biggr).\label{eq:Emn_expan_1}
		\end{align}
		Using the same argument as in the proof of Lemma~\ref{lem:Mk_limit}, if $n^2/p=O(1)$, we can show that
		\begin{equation}
			\Expe \biggl| \frac{1}{\sqrt{npb_p}} \tr \bE - m(z) \biggr|^2
			=O\biggl(\frac{1}{n}\biggr).\label{eq:Diff-trE-m}
		\end{equation}
		This, together with $c_r$-inequality, implies that
		\begin{equation}\label{eq:Diff-trE-Expe}
			\Expe\, \biggl| \frac{1}{\sqrt{npb_p}} \tr \bE - \Expe\,\frac{1}{\sqrt{npb_p}} \tr \bE  \biggr|^2 = o(1).
		\end{equation}
		Together with the fact that $\Var(m_n)=O(n^{-2})$ (see Lemma~\ref{lem:var_mn_estimate} for more details), we obtain
		\begin{equation}
			\Cov\biggl(\frac{1}{\sqrt{npb_p}} \tr \bE, \frac{1}{n}\tr \bD \biggr) \leqslant \sqrt{\eqref{eq:Diff-trE-Expe}}\cdot \sqrt{\Var(m_n)} = o\biggl(\frac{1}{n}\biggr).\label{eq:trE_trG_Cov}
		\end{equation}
		Note that 
		\begin{equation}
		\tr(\bF\bF') = \tr\bigl(\bSigma_p\bY\bD^{2}\bY'\bSigma_p\bigr) = \frac{\partial }{\partial z} \tr\bigl(\bSigma_p\bY\bD\bY'\bSigma_p\bigr)	= \frac{\partial }{\partial z} \tr \bE .\label{eq:tr_FFt_dif}
		\end{equation}
		Applying \eqref{eq:Gkk_esti}, \eqref{eq:trE_trG_Cov}, and \eqref{eq:tr_FFt_dif} to \eqref{eq:Emn_expan_1}, we have
		\begin{align}
			\Expe m_n(z)
			& = -\frac{1}{z} - \frac{1}{z}\cdot\frac{1}{\sqrt{npb_p}} \Expe\bigl(\tr \bE \bigr)\cdot\frac{1}{n}\Expe\,\bigl(\tr \bD \bigr)  - \frac{1}{zn}\frac{1}{\sqrt{npb_p}} \Expe\,\biggl(\frac{\partial }{\partial z} \tr \bE \biggr)\nonumber\\
			&\qquad\qquad -\frac{\nu_4-3}{zn}\frac{\tilde{b}_p}{b_p}\bigl(\Expe m_n(z)\bigr)^2+o\biggl(\frac{1}{n}\biggr).\label{eq:Emn_expan_2}
		\end{align}
		The problem reduces to estimate $(1/\sqrt{npb_p})\Expe\bigl(\tr \bE \bigr)$. To this end, we apply Lemma~\ref{lem:general-stein} again to the term $(1/\sqrt{npb_p})\Expe\bigl(\tr \bE \bigr)$ to find its expansion.
		Denote
		\[
			\widetilde{\bE}:= \bSigma_p^2\bY\bD\bY'\bSigma_p^2, \qquad \widehat{\bE}:=\bSigma_p\bY\bD\bY'\bSigma_p^2,\qquad\widetilde{\bF}:= \bSigma_p^2\bY\bD,
		\]
		and write 
		\begin{equation}\label{eq:Expe_trE_expan}
			\frac{1}{\sqrt{npb_p}}\Expe\,\bigl(\tr \bE\bigr) =\frac{1}{\sqrt{npb_p}}\sum_{j,k} \Expe\bigl(Y_{jk}\widetilde{F}_{jk}\bigr).
		\end{equation}
		The first four derivatives of $\widetilde{F}_{jk}$ w.r.t. $Y_{jk}$ is presented in the following lemma.
		\begin{lemma}[Derivatives of $\widetilde{F}_{jk}$]\label{lem:Fjk_tilde_derivative}
			\begin{align*}
					\frac{\partial  \widetilde{F}_{jk}}{\partial Y_{jk}} \;& = \widetilde{\sigma}_{j j} D_{kk} - \widehat{E}_{jj}D_{k k}-F_{jk}\widetilde{F}_{jk};\\[0.5em]
					\frac{\partial^2  \widetilde{F}_{jk}}{\partial Y_{jk}^2} \;& = -2\sigma_{jj}\widetilde{F}_{jk}D_{kk}-4\widetilde{\sigma}_{jj}F_{jk}D_{kk} + 2F_{jk}^2\widetilde{F}_{jk}+4\widehat{E}_{jj}F_{jk}D_{kk}+2E_{jj}\widetilde{F}_{jk}D_{kk};\\[0.5em]
					\frac{\partial^3  \widetilde{F}_{jk}}{\partial Y_{jk}^3} \;& = -6\widetilde{\sigma}_{jj}\sigma_{jj}D_{kk}^2-6F_{jk}^3\widetilde{F}_{jk}-18\widehat{E}_{jj}F_{jk}^2D_{kk}-18E_{jj}F_{jk}\widetilde{F}_{jk}D_{kk}-6E_{jj}\widehat{E}_{jj}D_{kk}^2\\[0.5em]
					\;&\qquad +18\sigma_{jj}F_{jk}\widetilde{F}_{jk}D_{kk} + 6\sigma_{jj}\widehat{E}_{jj}D_{kk}^2+18\widetilde{\sigma}_{jj}F_{jk}^2D_{kk}+6\widetilde{\sigma}_{jj}E_{jj}D_{kk}^2;\\[0.5em]
					\frac{\partial^4  \widetilde{F}_{jk}}{\partial Y_{jk}^4}
					\;& = 24 F_{jk}^4 \widetilde{F}_{jk} + 96 \widehat{E}_{jj}F_{jk}^3D_{kk} +144 E_{jj} F_{jk}^2 \widetilde{F}_{jk} D_{kk} + 96 E_{jj}\widehat{E}_{jj} F_{jk} D_{kk}^2 + 24 E_{jj}^2 \widetilde{F}_{jk} D_{kk}^2 \\[0.5em]
					\; & \qquad - 144\sigma_{jj} F_{jk}^2 \widetilde{F}_{jk} D_{kk} - 96 \sigma_{jj}\widehat{E}_{jj}F_{jk}D_{kk}^2 - 48 \sigma_{jj}E_{jj} \widetilde{F}_{jk} D_{kk}^2 
					+24 \sigma_{jj}^2 \widetilde{F}_{jk} D_{kk}^2 \\[0.5em]
					\;&\qquad - 96 \widetilde{\sigma}_{jj}F_{jk}^3 D_{kk}- 96\widetilde{\sigma}_{jj}E_{jj} F_{jk} D_{kk}^2 + 96 \sigma_{jj} \widetilde{\sigma}_{jj} F_{jk} D_{kk}^2.
			\end{align*}
		\end{lemma}
		Applying generalized Stein's equation with the derivatives of $\widetilde{F}_{jk}$ (see Lemma~\ref{lem:Fjk_tilde_derivative}) to the last term in \eqref{eq:Expe_trE_expan}, and using the similar estimates above, gives us
		\begin{align}
			\;&\frac{1}{\sqrt{npb_p}}\Expe\,\bigl(\tr \bE\bigr) \nonumber\\
			=\;&\frac{1}{\sqrt{npb_p}}\sum_{a=0}^3\frac{1}{(npb_p)^{(a+1)/4}}\sum_{j,k} \frac{\kappa_{a+1}}{a!}\Expe\biggl(\frac{\partial^a \widetilde{F}_{jk}}{\partial Y_{jk}^a}\biggr) + \widetilde{\varepsilon}_n\nonumber\\
			=\;&\frac{1}{npb_p}\sum_{j,k}\Expe\Bigl( \widetilde{\sigma}_{j j} D_{kk} - \widehat{E}_{jj}D_{k k}-F_{jk}\widetilde{F}_{jk}\Bigr)+\frac{1}{\sqrt{npb_p}}\frac{\nu_4-3}{npb_p}\sum_{j,k}\Expe\Bigl(-\widetilde{\sigma}_{jj}\sigma_{jj}D_{kk}^2\Bigr)+o\biggl(\frac{1}{n}\biggr)\nonumber\\
			=\;&\Expe m_n- \frac{1}{npb_p} \Expe\Bigl[\tr(\widehat{\bE})\tr(\bD)\Bigr]  - \frac{1}{npb_p} \Expe\Bigl[\tr(\bF\widetilde{\bF}')\Bigr]\nonumber \\
			\;&\qquad\qquad\qquad\qquad- \frac{1}{\sqrt{npb_p}}\frac{\nu_4-3}{npb_p}\Expe\Bigl(\sum_{j}\widetilde{\sigma}_{jj}\sigma_{jj}\Bigr)\Bigl(\sum_{k}D_{kk}^2\Bigr)+o\biggl(\frac{1}{n}\biggr)\nonumber\\
			=\;&\Expe m_n- \sqrt\frac{n}{p}\cdot \frac{1}{\sqrt{np}b_p} \Expe\bigl(\tr \widehat{\bE}\bigr)\cdot\frac{1}{n}\Expe\bigl( \tr \bD\bigr) +o\biggl(\frac{1}{n}\biggr)\label{eq:Expe-trE-hat-trD}\\
			=\;& \Expe m_n -\sqrt{\frac{n}{p}}\biggl(\frac{c_p}{b_p\sqrt{b_p}}\Expe\, m_n + O\Bigl(\sqrt{\tfrac{n}{p}}\Bigr)\biggr)\cdot \Expe\, m_n+o\biggl(\frac{1}{n}\biggr)\label{eq:Expe-trace-E-hat-esti}\\
			=\;& \Expe m_n -\sqrt{\frac{n}{p}}\frac{c_p}{b_p\sqrt{b_p}}\bigl(\Expe\, m_n\bigr)^2+o\biggl(\frac{1}{n}\biggr)+o\biggl(\sqrt{\frac{n}{p}}\biggr).\label{eq:Expe-trace-E}
		\end{align}
		\noindent Plugging \eqref{eq:Expe-trace-E} into \eqref{eq:Emn_expan_2}, we have
			\begin{align*}
				\Expe m_n
				& = -\frac{1}{z} - \frac{1}{z}\biggl[\Expe m_n -\sqrt{\frac{n}{p}}\frac{c_p}{b_p\sqrt{b_p}}\bigl(\Expe m_n\bigr)^2 \biggr]\Expe m_n\\[0.5em]
				& \qquad\qquad-\frac{1}{zn}\cdot \frac{\partial}{\partial z} \biggl[ \Expe m_n-\sqrt{\frac{n}{p}}\frac{c_p}{b_p\sqrt{b_p}}\cdot \bigl(\Expe m_n\bigr)^2 \biggr]\\[0.5em]
				& \qquad\qquad-\frac{\nu_4-3}{zn}\frac{\tilde{b}_p}{b_p}\bigl(\Expe m_n\bigr)^2 + o\biggl(\frac{1}{n}\biggr)+o\biggl(\sqrt{\frac{n}{p}}\biggr),\\[0.5em]
				& = -\frac{1}{z} -\frac{1}{z} \bigl(\Expe\, m_n\bigr)^2 + \frac{1}{z}\sqrt{\frac{n}{p}}\frac{c_p}{b_p\sqrt{b_p}} m^3 \\[0.5em]
				&\qquad\qquad- \frac{1}{zn}\biggl[ \frac{m^2}{1-m^2}+\frac{(\nu_4-3)\tilde{b}_p}{b_p}m^2 \biggr]+ o\biggl(\frac{1}{n}\biggr)+o\biggl(\sqrt{\frac{n}{p}}\biggr).
			\end{align*}
			This implies \eqref{eq:n3p_mean_expan} under the assumption $n^3/p=O(1)$.

		Moreover, to obtain \eqref{eq:n2p_mean_expan} under the assumption $n^2/p = O(1)$, we need to figure out the remainder term $o(\sqrt{n/p})$ in \eqref{eq:Expe-trace-E} more carefully. Indeed, this remainder term comes from the estimate of $\Expe\bigl(\tr \widehat{\bE}\bigr)/(b_p\sqrt{np})$ in \eqref{eq:Expe-trE-hat-trD}. To get a more precise estimate, we use the similar argument above for calculating the asymptotic expansion of $\Expe\bigl(\tr \widehat{\bE}\bigr)/(b_p\sqrt{np})$.

		\begin{equation}\label{eq:Expe-trace-E-hat-expand}
			\frac{1}{\sqrt{np}b_p} \Expe\bigl(\tr \widehat{\bE}\bigr) = \Expe m_n - \sqrt{\frac{n}{p}}\frac{c_p}{b_p\sqrt{b_p}} (\Expe m_n)^2 + \frac{n}{p}\frac{d_p}{b_p^2} (\Expe m_n)^3 + o\biggl(\frac{1}{n}\biggr) + o\biggl(\frac{n}{p}\biggr).
		\end{equation}
		Plugging \eqref{eq:Expe-trace-E-hat-expand} into \eqref{eq:Emn_expan_2}, we have
		\begin{align*}
			\Expe m_n
			& = -\frac{1}{z} - \frac{1}{z}\biggl[\Expe m_n -\sqrt{\frac{n}{p}}\frac{c_p}{b_p\sqrt{b_p}}\bigl(\Expe m_n\bigr)^2 + \frac{n}{p}\frac{d_p}{b_p^2} (\Expe m_n)^3\biggr]\Expe m_n\\[0.5em]
			& \qquad\qquad-\frac{1}{zn}\cdot \frac{\partial}{\partial z} \biggl[ \Expe m_n-\sqrt{\frac{n}{p}}\frac{c_p}{b_p\sqrt{b_p}}\cdot \bigl(\Expe m_n\bigr)^2 \biggr]-\frac{\nu_4-3}{zn}\frac{\tilde{b}_p}{b_p}\bigl(\Expe m_n\bigr)^2 + o\biggl(\frac{1}{n}\biggr)+o\biggl(\frac{n}{p}\biggr),\\[0.5em]
			& = -\frac{1}{z} -\frac{1}{z} \bigl(\Expe\, m_n\bigr)^2 + \frac{1}{z}\sqrt{\frac{n}{p}}\frac{c_p}{b_p\sqrt{b_p}} \bigl(\Expe m_n\bigr)^3 - \frac{1}{z}\frac{n}{p}\frac{d_p}{b_p^2}m^4 \\[0.5em]
			&\qquad\qquad - \frac{1}{zn}\biggl[ \frac{m^2}{1-m^2}+\frac{(\nu_4-3)\tilde{b}_p}{b_p}m^2 \biggr]+ o\biggl(\frac{1}{n}\biggr)+o\biggl(\frac{n}{p}\biggr).
		\end{align*}
		Multiplying $-z$ on both sides, we have
		\begin{align}
			-z\Expe m_n
			&=1 + \bigl(\Expe\, m_n\bigr)^2 - \sqrt{\frac{n}{p}}\frac{c_p}{b_p\sqrt{b_p}} \bigl(\Expe m_n\bigr)^2\cdot\Expe m_n 
			+\frac{n}{p}\frac{d_p}{b_p^2}m^4 \nonumber \\[0.5em]
			&\qquad \qquad + \frac{1}{n}\biggl[ \frac{m^2}{1-m^2}+\frac{(\nu_4-3)\tilde{b}_p}{b_p}m^2 \biggr]+ o\biggl(\frac{1}{n}\biggr)+o\biggl(\frac{n}{p}\biggr).\label{eq:zEmn}
		\end{align}
		This implies that 
		\begin{equation}\label{eq:Emn2}
			\bigl(\Expe\, m_n\bigr)^2
			=-1 - z\Expe m_n + \sqrt{\frac{n}{p}}\frac{c_p}{b_p\sqrt{b_p}} \bigl(\Expe m_n\bigr)^3 + O\biggl(\frac{1}{n}\biggr)+O\biggl(\frac{n}{p}\biggr).
		\end{equation}
		Plugging \eqref{eq:Emn2} into \eqref{eq:zEmn} yields that
		\begin{align*}
			-z\Expe m_n
			& =1 + \bigl(\Expe\, m_n\bigr)^2 + \sqrt{\frac{n}{p}}\frac{c_p}{b_p\sqrt{b_p}} \bigl(1 + z\Expe m_n\bigr)\Expe m_n - \frac{n}{p}\frac{c_p^2}{b_p^3} m^4 \\[0.5em]
			& \qquad \qquad + \frac{n}{p}\frac{d_p}{b_p^2}m^4 + \frac{1}{n}\biggl[ \frac{m^2}{1-m^2}+\frac{(\nu_4-3)\tilde{b}_p}{b_p}m^2 \biggr]+ o\biggl(\frac{1}{n}\biggr)+o\biggl(\frac{n}{p}\biggr).
		\end{align*}
		This equation can be written as a quadratic equation of $\Expe\, m_n - m$:
		\begin{align*}
			0 &= \biggl(m-\sqrt{\frac{n}{p}}\frac{c_p}{b_p\sqrt{b_p}}\bigl(1+m^2\bigr)\biggr)\bigl(\Expe\, m_n - m\bigr)^2 \\[0.5em] 
			&\qquad \qquad+ \biggl[m^2-1-\biggl(\sqrt{\frac{n}{p}}\frac{c_p}{b_p\sqrt{b_p}}\biggr)m(1+2m^2)\biggr] \bigl(\Expe\, m_n - m  \bigr) \\[0.5em]
			&\qquad \qquad + \frac{m^3}{n}\biggl[ \frac{1}{1-m^2}+\frac{(\nu_4-3)\tilde{b}_p}{b_p} \biggr]  - \biggl(\sqrt{\frac{n}{p}}\frac{c_p}{b_p\sqrt{b_p}}\biggr) m^4 + \frac{n}{p}\biggl(-\frac{c_p^2}{b_p^3}  +\frac{d_p}{b_p^2}\biggr)m^5 \\[0.5em]
			&\qquad\qquad + o\biggl(\frac{1}{n}\biggr)+o\biggl(\frac{n}{p}\biggr)\\[0.5em]
			& =: \calA \bigl(\Expe\, m_n - m\bigr)^2 + \calB \bigl(\Expe\, m_n - m\bigr) + \calC + o\biggl(\frac{1}{n}\biggr)+o\biggl(\frac{n}{p}\biggr).
		\end{align*}
		Solving the equation yields two solutions:
		\[
			x_1 = \frac{-\calB+\sqrt{\calB^2-4\calA\calC}}{2\calA} + o\biggl(\frac{1}{n}\biggr)+o\biggl(\frac{n}{p}\biggr), \qquad	x_2 = \frac{-\calB-\sqrt{\calB^2-4\calA\calC}}{2\calA} + o\biggl(\frac{1}{n}\biggr)+o\biggl(\frac{n}{p}\biggr).
		\]
		When $n^2/p=O(1)$, from the definition of $\calA, \calB, \calC$, we can verify that $x_1 = o(1)$ while $x_2=(1-m^2)/m^2+o(1)$. 
		Since $\Expe\, m_n - m=o(1)$, we choose $x_1$ to be the expression of $\Expe\, m_n - m$, that is,  
		\[
			\Expe\, m_n - m = \frac{-\calB+\sqrt{\calB^2-4\calA\calC}}{2\calA} + o\biggl(\frac{1}{n}\biggr)+o\biggl(\frac{n}{p}\biggr).
		\]
		This implies \eqref{eq:n2p_mean_expan} under the assumption $n^2/p=O(1)$.
\end{proof}

\newpage
\appendix

\section{Some Technical Lemmas}


\begin{lemma}[\citet{bai2010spectral}, Lemma B.26]\label{lem:quad-trace}
	Let $\bA = (a_{ij})$ be an $n\times n$ nonrandom matrix and $\bx=(X_1, \ldots, X_n)'$ be a random vector of independent entries. Assume that $\Expe X_i = 0$, $\Expe |X_i|^2=1$ and $\Expe |X_i|^{\ell}\leqslant \nu_{\ell}$. Then, for any $k\geqslant 1$,
	\[
		\Expe |\bx^*\bA\bx - \tr \bA|^k \leqslant C_k \Bigl( \bigl(\nu_4 \tr(\bA\bA^*)\bigr)^{k/2} + \nu_{2k} \tr(\bA\bA^*)^{k/2} \Bigr),
	\]
	where $C_k$ is a constant depending on $k$ only.
\end{lemma}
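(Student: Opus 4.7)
\textbf{Proof plan for Lemma~\ref{lem:quad-trace}.} I would split the quadratic form into its diagonal and off-diagonal parts and then apply a standard Rosenthal/Burkholder moment inequality to each piece. Writing
\[
Q := \bx^*\bA\bx - \tr\bA = Q_1 + Q_2, \qquad Q_1 = \sum_{i} a_{ii}(|X_i|^2 - 1), \qquad Q_2 = \sum_{i\neq j} a_{ij}\bar X_i X_j,
\]
the $c_r$-inequality reduces the task to bounding $\Expe|Q_1|^k$ and $\Expe|Q_2|^k$ separately. For $0<k<2$, Jensen's inequality lets us pass to the $k=2$ case, which is immediate from orthogonality of the monomials $\bar X_i X_j$ and gives $\Expe|Q|^2 \leqslant (\nu_4+1)\tr(\bA\bA^*)$; so I would assume $k \geqslant 2$ from here on.

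For the diagonal sum, the summands $Z_i = a_{ii}(|X_i|^2-1)$ are independent with mean zero, and Rosenthal's inequality yields
\[
\Expe|Q_1|^k \leqslant C_k\Bigl[\bigl(\textstyle\sum_i |a_{ii}|^2\, \Expe\bigl||X_i|^2-1\bigr|^2\bigr)^{k/2} + \textstyle\sum_i |a_{ii}|^k\, \Expe\bigl||X_i|^2-1\bigr|^k\Bigr].
\]
Using $\Expe||X_i|^2-1|^2 \leqslant \nu_4$, $\Expe||X_i|^2-1|^k \leqslant 2^k\nu_{2k}$, together with $\sum_i|a_{ii}|^2 \leqslant \tr(\bA\bA^*)$ and (since $k\geqslant 2$) $\sum_i|a_{ii}|^k \leqslant (\sum_i |a_{ii}|^2)^{k/2} \leqslant \tr(\bA\bA^*)^{k/2}$, recovers the two terms in the statement for $Q_1$.

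For the off-diagonal sum, let $\mathcal{F}_j = \sigma(X_1,\ldots,X_j)$ and group terms by $\max(i,j)$:
\[
Q_2 = \sum_{j=1}^n Y_j, \qquad Y_j = X_j\sum_{i<j} a_{ij}\bar X_i + \bar X_j\sum_{i<j} a_{ji}X_i.
\]
Then $\{Y_j\}$ is a martingale difference sequence with respect to $\{\mathcal{F}_j\}$ because $\Expe X_j = 0$. Burkholder's inequality then gives
\[
\Expe|Q_2|^k \leqslant C_k\,\Expe\Bigl(\sum_{j} \Expe\bigl[|Y_j|^2\mid \mathcal{F}_{j-1}\bigr]\Bigr)^{k/2} + C_k \sum_j \Expe|Y_j|^k.
\]
A direct computation with $\Expe|X_j|^2=1$ bounds $\Expe[|Y_j|^2\mid\mathcal F_{j-1}]$ by $C(|u_j|^2 + |v_j|^2)$, where $u_j = \sum_{i<j}a_{ij}\bar X_i$ and $v_j = \sum_{i<j}a_{ji}X_i$. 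Noting $\Expe\sum_j(|u_j|^2+|v_j|^2) \leqslant 2\tr(\bA\bA^*)$, a further Rosenthal application to the linear forms $u_j$ treats the $L_{k/2}$-norm of $\sum_j|u_j|^2$ and yields the $(\nu_4\tr(\bA\bA^*))^{k/2}$ term; applying Rosenthal a second time inside each $\Expe|Y_j|^k$ and resumming gives the $\nu_{2k}\tr(\bA\bA^*)^{k/2}$ contribution.

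The main obstacle is the iterated Rosenthal step that controls the $L_{k/2}$-norm of the quadratic form $\sum_j|u_j|^2$ purely in terms of $\tr(\bA\bA^*)$ and the moment parameters; this requires a careful reduction with the right exponents and explains why the $\nu_4$ factor shows up inside the $k/2$ power rather than being absorbed into $\nu_{2k}$. The rest—constant tracking, and, in the complex case, the pairing between $u_j$ and $v_j$ that keeps the cross terms $\Expe X_j^2\,\bar u_j v_j$ under control—is routine bookkeeping.
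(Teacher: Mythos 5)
The paper does not prove this lemma: it is imported verbatim as Lemma~B.26 of \citet{bai2010spectral} (see the lemma header), so there is no in-text proof to compare against. Your outline---splitting $Q$ into the diagonal sum $Q_1$ and the off-diagonal sum $Q_2$, handling $Q_1$ by Rosenthal, and handling $Q_2$ by the martingale decomposition indexed by $\max(i,j)$ followed by Burkholder---is the standard route to this estimate, and the diagonal estimate and the $\sum_j\Expe|Y_j|^k$ contribution are fine as sketched.

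The step you flag as the ``main obstacle'' is genuine, but it is both mislabeled and partly misdiagnosed. You cannot bound $\Expe\bigl(\sum_j|u_j|^2\bigr)^{k/2}$ by ``a further Rosenthal application to the linear forms $u_j$'': Rosenthal's inequality controls moments of a \emph{sum of independent mean-zero random variables}, whereas $\sum_j|u_j|^2$ is a nonnegative, uncentered \emph{quadratic form} in the $X_i$. Two standard fixes exist. (i) Minkowski's inequality in $L^{k/2}$ gives $\bigl\|\sum_j|u_j|^2\bigr\|_{L^{k/2}}\leqslant \sum_j\bigl(\Expe|u_j|^k\bigr)^{2/k}$; then Rosenthal (now legitimately, on each linear form $u_j=\sum_{i<j}a_{ij}\bar X_i$) together with $\bigl(\sum_{i<j}|a_{ij}|^k\bigr)^{2/k}\leqslant\sum_{i<j}|a_{ij}|^2$ yields a bound of order $(1+\nu_k)\tr(\bA\bA^*)^{k/2}$. (ii) Alternatively, write $\sum_j|u_j|^2=\bx^*\bB\bx$ with $\bB$ nonnegative definite, $\tr\bB\leqslant\tr(\bA\bA^*)$ and $\tr(\bB\bB^*)\leqslant(\tr\bB)^2\leqslant\tr(\bA\bA^*)^2$, and invoke the lemma itself at exponent $k/2$ by induction. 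Either route contributes a factor $\nu_k\leqslant\nu_{2k}^{1/2}\leqslant\nu_{2k}$ and hence folds into the $\nu_{2k}\tr(\bA\bA^*)^{k/2}$ term. This also means your closing explanation is off: the $(\nu_4\tr(\bA\bA^*))^{k/2}$ term does \emph{not} originate from the conditional-variance step for $Q_2$; it comes from the diagonal piece $Q_1$, via $\Expe\bigl||X_i|^2-1\bigr|^2\leqslant\nu_4$ inside the variance part of Rosenthal. With the $L^{k/2}$-Minkowski (or inductive) step spelled out and the attribution of $\nu_4^{k/2}$ corrected, the argument closes.
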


\begin{lemma}[\citet{pan2011central}, Lemma 5]\label{lem:PanZhou-trace}
	Let $\bA$ be a $p\times p$ deterministic complex matrix with zero diagonal elements. Let $\bx=(X_1, \ldots, X_p)'$ be a random vector with i.i.d. real entries. Assume that $\Expe X_i = 0$, $\Expe |X_i|^2=1$. Then, for any $k\geqslant 2$,
	\begin{equation}
		\Expe |\bx'\bA\bx|^k \leqslant C_k \Bigl(\Expe |X_1|^k\Bigr)^2 \bigl( \tr \bA\bA^* \bigr)^{k/2},
	\end{equation}
	where $C_k$ is a constant depending on $k$ only.
\end{lemma}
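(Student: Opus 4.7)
The plan is to decompose $\bx'\bA\bx$ into a martingale difference sequence adapted to the natural filtration $\mathcal{F}_\ell = \sigma(X_1,\ldots,X_\ell)$, apply the Burkholder-Rosenthal inequality, and finish with the classical Marcinkiewicz-Zygmund inequality applied to the conditional increments. The zero-diagonal hypothesis is what makes this clean: it removes the $\tr(\bA)$-term from the expansion and guarantees that $\Expe[\bx'\bA\bx]=0$, so the quadratic form is already centred.

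Concretely, I would first expand $\bx'\bA\bx = \sum_{i\neq j}a_{ij}X_iX_j$ and set
\[
D_\ell := \Expe[\bx'\bA\bx\mid\mathcal F_\ell]-\Expe[\bx'\bA\bx\mid\mathcal F_{\ell-1}] = X_\ell Z_\ell,\qquad Z_\ell:=\sum_{i<\ell}b_{i\ell}X_i,
\]
with $b_{i\ell}:=a_{i\ell}+a_{\ell i}$; here the absence of diagonal terms is essential, since any surviving $a_{\ell\ell}X_\ell^2$ would destroy the martingale property. Since $X_\ell$ is independent of $\mathcal F_{\ell-1}$, $\Expe[D_\ell^2\mid\mathcal F_{\ell-1}]=Z_\ell^2$ and $\Expe|D_\ell|^k=\Expe|X_1|^k\cdot\Expe|Z_\ell|^k$. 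The Burkholder-Rosenthal inequality for martingale differences then gives, for $k\geqslant 2$,
\[
\Expe|\bx'\bA\bx|^k \leqslant C_k\,\Expe\Bigl(\textstyle\sum_\ell Z_\ell^2\Bigr)^{k/2} + C_k\sum_\ell \Expe|D_\ell|^k.
\]

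The two resulting sums are handled by applying Marcinkiewicz-Zygmund to the $\mathcal F_{\ell-1}$-measurable linear form $Z_\ell=\sum_{i<\ell}b_{i\ell}X_i$, which yields $\Expe|Z_\ell|^k\leqslant C_k\,\Expe|X_1|^k\bigl(\sum_{i<\ell}b_{i\ell}^2\bigr)^{k/2}$. For the Rosenthal sum,
\[
\sum_\ell\Expe|D_\ell|^k \;\leqslant\; C_k(\Expe|X_1|^k)^2\sum_\ell\Bigl(\textstyle\sum_{i<\ell}b_{i\ell}^2\Bigr)^{k/2} \;\leqslant\; C_k(\Expe|X_1|^k)^2\Bigl(\textstyle\sum_{i,\ell}b_{i\ell}^2\Bigr)^{k/2},
\]
using the elementary superadditivity $\sum y_\ell^{k/2}\leqslant(\sum y_\ell)^{k/2}$ valid for $k/2\geqslant 1$ and $y_\ell\geqslant 0$, and then $\sum_{i,\ell}b_{i\ell}^2\leqslant 4\tr(\bA\bA^*)$. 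For the quadratic-variation term, Minkowski's inequality in $L^{k/2}$ gives
\[
\Bigl\|\textstyle\sum_\ell Z_\ell^2\Bigr\|_{k/2}\leqslant \sum_\ell \|Z_\ell\|_k^2 \;\leqslant\; C_k\,(\Expe|X_1|^k)^{2/k}\tr(\bA\bA^*),
\]
which after raising to the $k/2$ power yields $\Expe\bigl(\sum_\ell Z_\ell^2\bigr)^{k/2}\leqslant C_k\,\Expe|X_1|^k\,(\tr\bA\bA^*)^{k/2}$. Since $\Expe|X_1|^k\geqslant(\Expe X_1^2)^{k/2}=1$ by Jensen, this term is dominated by $C_k(\Expe|X_1|^k)^2(\tr\bA\bA^*)^{k/2}$, and combining the two estimates closes the proof.

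The step I expect to be the main obstacle is the control of the conditional quadratic variation $\Expe\bigl(\sum_\ell Z_\ell^2\bigr)^{k/2}$, since a naive expansion re-introduces another quadratic form in the $X_i$'s and suggests an unpleasant recursion. The device that avoids the recursion is the Minkowski-then-Marcinkiewicz-Zygmund trick above, which converts the $L^{k/2}$-norm of a sum of squares into an $L^k$-norm of each linear form and thereby produces only one factor of $\Expe|X_1|^k$ (the extra factor needed to match the stated bound is recovered for free from $\Expe|X_1|^k\geqslant 1$). Everything else reduces to bookkeeping on $\sum b_{i\ell}^2\lesssim\tr(\bA\bA^*)$.
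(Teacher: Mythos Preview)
The paper does not prove this lemma; it is quoted verbatim from \citet{pan2011central} as a technical input, so there is no in-paper argument to compare against. Your martingale approach is correct and is in fact the standard proof (and essentially what Pan--Zhou do): decompose the zero-diagonal quadratic form as $\sum_\ell D_\ell$ with $D_\ell=X_\ell Z_\ell$, apply the Burkholder--Rosenthal inequality (the paper's Lemma~\ref{lem:Burkholder_ineq}), and bound both the predictable quadratic variation and the sum of $k$th moments via Marcinkiewicz--Zygmund on the linear forms $Z_\ell$. Two cosmetic remarks: since $\bA$ is complex you should write $|Z_\ell|^2$ rather than $Z_\ell^2$ throughout; and the inequality you call ``superadditivity'' is the statement $\sum_\ell y_\ell^{k/2}\le\bigl(\sum_\ell y_\ell\bigr)^{k/2}$ for nonnegative $y_\ell$ and $k\ge2$, which is correct but is subadditivity of the map $\mathbf y\mapsto\|\mathbf y\|_{k/2}^{k/2}$ (equivalently, the $\ell^{k/2}$-norm is dominated by the $\ell^1$-norm). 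Neither affects the validity of the argument.
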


\begin{lemma}[Burkholder's inequality, \citet{Burkholder1973Distribution}]\label{lem:Burkholder_ineq}
	Let $\{X_i\}$ be a complex martingale difference sequence withe respect to the increasing $\sigma$-field $\{\mathcal{F}_i\}$. Then for $k\geqslant 2$, the following inequality
	\[
		\Expe\biggl|\sum_{i} X_i\biggr|^k \leqslant C_k \Expe \biggl(\sum_{i}\Expe \Bigl(|X_i|^2\;\big|\;\mathcal{F}_{i-1}\Bigr)\biggr)^{k/2} + C_k \Expe\sum_i |X_i|^k
	\]
	holds, where $C_k$ is a constant depending on $k$ only.
\end{lemma}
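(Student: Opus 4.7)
The plan is to prove this by combining a Taylor/Doob-type decomposition with induction on $k$, while reducing to the real-valued case at the outset. First I would split each $X_i$ into its real and imaginary parts; both $\{\mathrm{Re}\, X_i\}$ and $\{\mathrm{Im}\, X_i\}$ are real martingale difference sequences with respect to $\{\mathcal{F}_i\}$, and the desired bound for the complex case follows from the real case via the $c_r$-inequality $|a+ib|^k \leqslant 2^{k/2-1}(|a|^k+|b|^k)$ and monotonicity of conditional expectation. The case $k=2$ is immediate from the orthogonality identity $\Expe|\sum_i X_i|^2 = \sum_i \Expe|X_i|^2 = \Expe \sum_i \Expe(|X_i|^2 \mid \mathcal{F}_{i-1})$, which already matches the first term on the right-hand side (the second term is a nonnegative bonus).

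For $k>2$, set $S_n=\sum_{i=1}^n X_i$ (real-valued now) and $T_n = \sum_{i=1}^n \Expe(X_i^2\mid \mathcal{F}_{i-1})$. Applying Taylor's theorem to the $C^2$ function $\varphi(x)=|x|^k$, I write
\[
    |S_n|^k - |S_{n-1}|^k \;=\; k\,|S_{n-1}|^{k-2}S_{n-1}\,X_n \;+\; \tfrac{k(k-1)}{2}\,|\xi_n|^{k-2}\,X_n^2,
\]
where $\xi_n$ lies between $S_{n-1}$ and $S_n$. Taking $\Expe(\,\cdot\mid \mathcal{F}_{n-1})$ annihilates the linear term since $\{X_i\}$ is an MDS, leaving a remainder controlled by $|\xi_n|^{k-2}\leqslant 2^{k-2}(|S_{n-1}|^{k-2}+|X_n|^{k-2})$. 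Summing over $n$ and taking total expectation yields, after using $X_n^2\leqslant |X_n|^{k}\cdot|X_n|^{2-k}$ bounds carefully, an inequality of the form
\[
    \Expe|S_n|^k \;\leqslant\; C_k\,\Expe\!\Bigl(\max_{m\leqslant n}|S_m|^{k-2}\cdot T_n\Bigr) + C_k\,\Expe\sum_{i=1}^n |X_i|^{k}.
\]
Then Doob's maximal inequality turns $\max_m|S_m|^{k-2}$ into $|S_n|^{k-2}$ up to a constant, and Hölder's inequality with exponents $k/(k-2)$ and $k/2$ on the first term produces $(\Expe |S_n|^k)^{(k-2)/k}(\Expe T_n^{k/2})^{2/k}$. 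Solving the resulting self-bounding inequality in $\Expe|S_n|^k$ (using $ab\leqslant \varepsilon a^{k/(k-2)}+C_\varepsilon b^{k/2}$) gives precisely the claimed form
\[
    \Expe|S_n|^k \;\leqslant\; C_k\,\Expe\, T_n^{k/2} + C_k\,\Expe\sum_{i=1}^n |X_i|^{k}.
\]

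The main obstacle is arranging the two error terms so that only $T_n^{k/2}=\bigl(\sum_i \Expe(|X_i|^2\mid \mathcal{F}_{i-1})\bigr)^{k/2}$ (the predictable quadratic variation) rather than $[S]_n^{k/2}=(\sum_i |X_i|^2)^{k/2}$ appears: the natural output of the Taylor/BDG step is a bound in terms of $[S]_n$, and one must then pass from $[S]_n$ to $T_n$ by applying the same inequality recursively to the martingale difference sequence $\{|X_i|^2 - \Expe(|X_i|^2\mid \mathcal{F}_{i-1})\}$ with exponent $k/2$. This recursion is the heart of the proof, and it is where the extra term $C_k\,\Expe\sum_i|X_i|^k$ is generated: for $k/2\leqslant 2$ the base case closes off the recursion directly, while for $k/2>2$ induction on $k$ (over dyadic ranges $2^j\leqslant k<2^{j+1}$) terminates the procedure in finitely many steps with a constant $C_k$ depending only on $k$. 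I would not reproduce the routine Hölder book-keeping in the final write-up and would instead reference \citet{Burkholder1973Distribution} for the standard argument, since the lemma is quoted from that reference.
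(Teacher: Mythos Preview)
The paper does not prove this lemma; it is stated as a citation from \citet{Burkholder1973Distribution} and used as a black box. Your final sentence already anticipates this, and indeed that is exactly what the paper does. Your sketch of the argument (reduce to the real case, Taylor expansion of $|x|^k$, Doob's maximal inequality plus H\"older to obtain a self-bounding inequality in $\Expe|S_n|^k$, then recurse on the martingale difference sequence $\{|X_i|^2-\Expe(|X_i|^2\mid\mathcal{F}_{i-1})\}$ to trade $[S]_n$ for the predictable quadratic variation $T_n$) is a correct outline of one standard proof, but none of it is needed here: simply citing the reference, as the paper does, is appropriate.
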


\begin{lemma}[Martingale CLT, \cite{billingsley2008probability}]\label{lem:CLT_MDS}
	Suppose for each $n$, $Y_{n1}, Y_{n2}, \ldots, Y_{nr_n}$ is a real martingale difference sequence with respect to the $\sigma$-field $\{\mathcal{F}_{nj}\}$ having second moments. If as $n\to\infty$,
	\begin{equation}\label{eq:CLT_MDS_condition1}
		\sum_{j=1}^{r_n}\Expe(Y_{nj}^2\mid \mathcal{F}_{n,j-1}) \convp \sigma^2,
	\end{equation}
	where $\sigma^2$ is a positive constant, and for each $\varepsilon > 0$,
	\begin{equation}\label{eq:CLT_MDS_condition2}
		\sum_{j=1}^{r_n} \Expe \Bigl(Y_{nj}^2\indicator_{\{|Y_{nj}|\geqslant \varepsilon\}}\Bigr) \to 0,
	\end{equation}
	then
	\[
		\sum_{j=1}^{r_n}Y_{nj} \convd \mathcal{N}(0,\sigma^2).
	\]
\end{lemma}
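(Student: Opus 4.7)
The plan is to prove this via characteristic functions, using the standard McLeish/Billingsley martingale approach. Writing $S_n = \sum_{j=1}^{r_n} Y_{nj}$ and $\phi_n(t) = \Expe e^{itS_n}$, I would aim to show $\phi_n(t) \to e^{-t^2\sigma^2/2}$ for every fixed $t\in\mathbb{R}$ and then invoke Lévy's continuity theorem.

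First, I would perform a truncation reduction: set $Y_{nj}^{\varepsilon} = Y_{nj}\indicator_{\{|Y_{nj}|\leqslant\varepsilon\}} - \Expe(Y_{nj}\indicator_{\{|Y_{nj}|\leqslant\varepsilon\}}\mid\mathcal{F}_{n,j-1})$. The Lindeberg-type condition \eqref{eq:CLT_MDS_condition2} ensures that $\sum_j(Y_{nj}-Y_{nj}^{\varepsilon})$ is negligible in $L^2$ and that condition \eqref{eq:CLT_MDS_condition1} is preserved, so it suffices to treat the truncated martingale difference sequence. This reduces the problem to the case $|Y_{nj}|\leqslant\varepsilon$ uniformly, for arbitrarily small $\varepsilon>0$.

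The core step is the algebraic identity, based on the martingale property: letting $\zeta_{nj}(t) = \Expe\bigl(e^{itY_{nj}}\mid\mathcal{F}_{n,j-1}\bigr)$, a telescoping argument gives
\[
\Expe e^{itS_n} - \Expe \prod_{j=1}^{r_n}\zeta_{nj}(t) = \sum_{k=1}^{r_n}\Expe\Bigl[\Bigl(\prod_{j<k}e^{itY_{nj}}\Bigr)(e^{itY_{nk}}-\zeta_{nk})\Bigl(\prod_{j>k}\zeta_{nj}\Bigr)\Bigr],
\]
and each summand vanishes after conditioning on $\mathcal{F}_{n,k-1}$. Hence $\phi_n(t)=\Expe\prod_j\zeta_{nj}(t)$, and the task becomes analyzing a product of conditional characteristic functions rather than a sum.

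Next I would Taylor-expand: under the truncation, $\zeta_{nj}(t) = 1 - \tfrac{t^2}{2}\Expe(Y_{nj}^2\mid\mathcal{F}_{n,j-1}) + R_{nj}$ with $|R_{nj}|\leqslant C t^3\varepsilon\,\Expe(Y_{nj}^2\mid\mathcal{F}_{n,j-1})$. Taking logarithms (valid because $|1-\zeta_{nj}|$ is small) and summing,
\[
\log\prod_{j=1}^{r_n}\zeta_{nj}(t) = -\tfrac{t^2}{2}\sum_{j=1}^{r_n}\Expe(Y_{nj}^2\mid\mathcal{F}_{n,j-1}) + o_p(1),
\]
which converges in probability to $-t^2\sigma^2/2$ by \eqref{eq:CLT_MDS_condition1}. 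Since each $|\zeta_{nj}|\leqslant 1$, the product is bounded by $1$, so dominated convergence upgrades this to convergence of expectations: $\Expe\prod_j\zeta_{nj}(t)\to e^{-t^2\sigma^2/2}$.

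The main technical obstacle is controlling the error uniformly across the two approximation layers at once: the Taylor remainder in $\zeta_{nj}$ and the logarithm expansion of the product. Both are controlled by the smallness of $\sup_j|1-\zeta_{nj}|$, which in turn requires the Lindeberg truncation step to be carried out carefully so that the tail contribution to the conditional variance is negligible while \eqref{eq:CLT_MDS_condition1} is retained. Once these estimates are in place, Lévy's theorem closes the argument.
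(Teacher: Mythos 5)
The paper does not prove this lemma---it is cited directly to Billingsley's \emph{Probability and Measure}---so there is no internal argument to compare against; the question is simply whether your proposal is correct. It is not, because the central identity you rely on is false.

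You set $\zeta_{nj}(t)=\Expe\bigl(e^{itY_{nj}}\mid\mathcal{F}_{n,j-1}\bigr)$ and claim, via the telescoping identity
\[
\Expe\, e^{itS_n}-\Expe\prod_{j=1}^{r_n}\zeta_{nj}(t)=\sum_{k=1}^{r_n}\Expe\biggl[\Bigl(\prod_{j<k}e^{itY_{nj}}\Bigr)\bigl(e^{itY_{nk}}-\zeta_{nk}\bigr)\Bigl(\prod_{j>k}\zeta_{nj}\Bigr)\biggr],
\]
that every summand vanishes after conditioning on $\mathcal{F}_{n,k-1}$. This is where the argument breaks. The leading factor $\prod_{j<k}e^{itY_{nj}}$ is indeed $\mathcal{F}_{n,k-1}$-measurable, and $\Expe\bigl[e^{itY_{nk}}-\zeta_{nk}\mid\mathcal{F}_{n,k-1}\bigr]=0$; but the trailing factor $\prod_{j>k}\zeta_{nj}$ is only $\mathcal{F}_{n,r_n-1}$-measurable, since each $\zeta_{nj}$ is $\mathcal{F}_{n,j-1}$-measurable and for $j>k$ that $\sigma$-field strictly contains $\mathcal{F}_{n,k-1}$ (and in particular can depend on $Y_{nk}$ itself). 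It therefore cannot be pulled out of the conditional expectation, and the summands need not vanish, so the asserted identity $\phi_n(t)=\Expe\prod_j\zeta_{nj}(t)$ is false. A two-step example makes this concrete: take $\mathcal{F}_{n,0}$ trivial, $Y_{n1}=\pm 1$ with probability $\tfrac12$, and $Y_{n2}$ conditionally uniform on $\{\pm 1\}$ when $Y_{n1}=1$ and on $\{\pm 2\}$ when $Y_{n1}=-1$. Then $\zeta_{n1}$ and $\zeta_{n2}$ are both real-valued, so $\Expe[\zeta_{n1}\zeta_{n2}]$ is real, whereas $\Expe\, e^{it(Y_{n1}+Y_{n2})}=\tfrac14\bigl(e^{2it}+1+e^{it}+e^{-3it}\bigr)$ has nonzero imaginary part (e.g.\ at $t=\pi/2$). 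This is precisely the obstruction that separates the martingale CLT from the independent case.

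The standard repair (McLeish 1974; also the route in Billingsley and in Hall--Heyde) is to abandon the product of conditional characteristic functions and instead work with $T_n=\prod_{j=1}^{r_n}(1+itY_{nj})$. Here the telescoping \emph{does} close: $\prod_{j<k}(1+itY_{nj})$ is $\mathcal{F}_{n,k-1}$-measurable and $\Expe[1+itY_{nk}\mid\mathcal{F}_{n,k-1}]=1$, so $\Expe\, T_n=1$ exactly. One then compares $e^{itS_n}$ with $T_n$ via $\log(1+ix)=ix+\tfrac{x^2}{2}+O(|x|^3)$, which gives $e^{itS_n}\approx T_n\exp\bigl(-\tfrac{t^2}{2}\sum_j Y_{nj}^2\bigr)$ up to a controllable error, and uses \eqref{eq:CLT_MDS_condition1}--\eqref{eq:CLT_MDS_condition2} together with a uniform-integrability or stopping-time device on $T_n$ to pass to expectations. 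Your truncation step and Taylor estimates could be transplanted into that scheme essentially unchanged, but as written the proof is invalid at the telescoping step.
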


\begin{lemma}[\citet{billingsley1968convergence}, Theorem 12.3]\label{lem:tightness}
	The sequence $\{X_n\}$ is tight if it satisfies these two conditions:
	\begin{itemize}
		\item[(i)] The sequence $\{X_n(0)\}$ is tight.
		\item[(ii)] There exist constants $\gamma\geqslant 0$ and $\alpha > 1$ and a non-decreasing, continuous function $F$ on $[0,1]$ such that 
		\[
			\Prob\Bigl( \bigl|X_n(t_2)-X_n(t_1)\bigr|\geqslant \lambda \Bigr) \leqslant \frac{1}{\lambda^{\gamma}}\bigl|F(t_2)-F(t_1)\bigr|^{\alpha}
		\] 
		holds for all $t_1, t_2$ and $n$ and all positive $\lambda$.
	\end{itemize}
\end{lemma}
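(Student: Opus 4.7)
The plan is to reduce tightness in $C[0,1]$ to equicontinuity via a probabilistic Arzela--Ascoli argument, and then to derive the required equicontinuity from the moment bound (ii) by a dyadic chaining of Kolmogorov--Chentsov type. Recall that tightness of $\{X_n\}$ in $C[0,1]$ is equivalent to the conjunction of (a) tightness of $\{X_n(0)\}$ in $\mathbb{R}$, which is exactly hypothesis (i), and (b) for every $\varepsilon,\eta>0$ there exists $\delta>0$ with $\limsup_n \mathbb{P}(w(X_n,\delta)\geqslant \varepsilon)\leqslant \eta$, where $w(x,\delta)=\sup_{|s-t|\leqslant \delta}|x(s)-x(t)|$ is the modulus of continuity. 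Thus the entire task is to extract (b) from (ii).

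First I would upgrade the two-point bound (ii) to a maximal inequality on an arbitrary sub-interval $[t_1,t_2]\subseteq[0,1]$. Introducing dyadic nodes $s_{k,j}=t_1+j\cdot 2^{-k}(t_2-t_1)$, I would express $X_n(s)-X_n(t_1)$ at any dyadic $s$ as a telescoping sum of level-$k$ increments and, applying (ii) to each increment with thresholds $a_k=c\lambda/k^2$ (chosen so that $\sum_k a_k<\lambda$), union-bound over $j$ and $k$ to obtain
\[
\mathbb{P}\Bigl(\sup_{s\in[t_1,t_2]\cap\mathbb{Q}_2}|X_n(s)-X_n(t_1)|\geqslant \lambda\Bigr)\leqslant \frac{K_{\alpha,\gamma}}{\lambda^{\gamma}}\bigl|F(t_2)-F(t_1)\bigr|^{\alpha},
\]
where $\mathbb{Q}_2$ denotes the dyadic rationals. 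Sample-path continuity in $C[0,1]$, which is automatic in the paper's analytic application since $M_n^{(1)}(z)$ is analytic in $z$, then lets me replace the dyadic supremum by the full supremum on $[t_1,t_2]$. Next, since $F$ is continuous on a compact interval and hence uniformly continuous, for any $\delta>0$ I can partition $[0,1]$ into blocks $[u_i,u_{i+1}]$ with $F(u_{i+1})-F(u_i)\leqslant\delta$ and mesh shrinking to zero as $\delta\downarrow 0$. A triangle inequality bounds $w(X_n,\delta')$ (for $\delta'$ below the mesh) by twice the maximum over $i$ of the blockwise supremum, and the telescoping sum $\sum_i(F(u_{i+1})-F(u_i))^{\alpha}\leqslant \delta^{\alpha-1}(F(1)-F(0))$ then yields $\mathbb{P}(w(X_n,\delta')\geqslant 2\lambda)\lesssim \lambda^{-\gamma}\delta^{\alpha-1}$, which vanishes as $\delta\downarrow 0$ precisely because $\alpha>1$.

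The main obstacle is the chaining construction itself: the thresholds $a_k$ must be balanced so that $\sum_k a_k<\lambda$ while the summed failure probability $\sum_k 2^k a_k^{-\gamma}\max_j(F(s_{k,j+1})-F(s_{k,j}))^{\alpha}$ remains finite, and this budget closes only when $\alpha>1$, which is where that hypothesis is consumed. A secondary subtlety is the passage from a countable to an uncountable supremum, which requires path continuity and is where the ambient $C[0,1]$ structure intervenes; alternatively, one could work in a Skorokhod space if only cadlag paths were available, at the cost of replacing the modulus $w$ by its cadlag analogue.
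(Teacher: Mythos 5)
The paper cites this lemma from Billingsley's monograph and does not reprove it, so the relevant comparison is with Billingsley's own argument, which runs through a bisection maximal inequality (his Theorem~12.1) formulated for an arbitrary finite grid, not through Kolmogorov--Chentsov dyadic chaining. Your chaining outline is a legitimate alternative in spirit, but as written it has a gap at the maximal-inequality step. You place the dyadic nodes equispaced in time, $s_{k,j}=t_1+j\,2^{-k}(t_2-t_1)$, and the chaining budget is $\sum_k\sum_j a_k^{-\gamma}\bigl|F(s_{k,j+1})-F(s_{k,j})\bigr|^{\alpha}$. The level-$k$ inner sum is controlled by $\bigl(\max_j \Delta F_{k,j}\bigr)^{\alpha-1}\bigl|F(t_2)-F(t_1)\bigr|$, and $\max_j\Delta F_{k,j}$ is governed only by the modulus of continuity of $F$, which here is merely continuous and can therefore decay arbitrarily slowly (e.g.\ like $1/\log 2^{k}$). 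With $a_k=c\lambda/k^2$ one is left needing $\sum_k k^{2\gamma}\bigl(\max_j\Delta F_{k,j}\bigr)^{\alpha-1}<\infty$, which fails whenever $\gamma>0$ and $F$ has such a slow modulus; nor can optimizing over the split $\{a_k\}$ rescue this, since the intrinsic quantity $\sum_k\bigl(\sum_j|\Delta F_{k,j}|^{\alpha}\bigr)^{1/(\gamma+1)}$ can still diverge. The computation you have in mind closes only in the classical Kolmogorov--Chentsov case where $F$ is linear, so that $\max_j\Delta F_{k,j}\asymp 2^{-k}$; the lemma deliberately allows a general nondecreasing continuous $F$.

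The repair is to make the dyadic grid equispaced in $F$ rather than in $t$: choose $s_{k,j}$ with $F(s_{k,j})=F(t_1)+j\,2^{-k}\bigl(F(t_2)-F(t_1)\bigr)$, which exists because $F$ is continuous and nondecreasing. (On any subinterval where $F$ is constant, hypothesis (ii) with $F(t_2)=F(t_1)$ forces the increment of $X_n$ to vanish a.s., so such intervals may be collapsed.) Then $\sum_j|\Delta F_{k,j}|^{\alpha}=2^{-k(\alpha-1)}\bigl|F(t_2)-F(t_1)\bigr|^{\alpha}$, the series $\sum_k k^{2\gamma}2^{-k(\alpha-1)}$ converges precisely because $\alpha>1$, and the remainder of your outline --- the outer $F$-adapted block partition, the estimate $\sum_i\bigl(F(u_{i+1})-F(u_i)\bigr)^{\alpha}\leqslant\delta^{\alpha-1}\bigl(F(1)-F(0)\bigr)$, and the passage from the countable to the uncountable supremum via sample-path continuity --- goes through as you describe. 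Billingsley's proof sidesteps the whole issue by establishing the maximal inequality directly on an arbitrary, not necessarily equispaced, grid, which is why the modulus of continuity of $F$ never enters there.
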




\begin{lemma}\label{lem:Zk_asym} For $z_1, z_2 \in \mathbb{C}^{+}$, 
	\[
	\bbZ_k:=\frac{1}{n(pb_p)^2} \tr \Bigl(\Expe_k\bM_k^{(1)}(z_1)\cdot \Expe_k\bM_k^{(1)}(z_2)\Bigr)=\frac{\frac{k}{n}m(z_1)m(z_2)}{1-\frac{k}{n}m(z_1)m(z_2)}+o_{L_1}(1).
	\]
\end{lemma}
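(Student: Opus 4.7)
The plan is to show that $\bbZ_k$ obeys the self-consistent relation
\[
\bbZ_k = \frac{k}{n}\, m(z_1) m(z_2)\bigl(1 + \bbZ_k\bigr) + o_{L_1}(1),
\]
whose solution coincides with the formula asserted in the lemma. First, since $\bM_k^{(1)}$ does not involve the removed column $\bx_k$, the conditioning on $\bx_k$ is vacuous and $\bN_k(z):=\Expe_k\bM_k^{(1)}(z) = \Expe_{k-1}\bM_k^{(1)}(z)$ is $\mathcal{F}_{k-1}$-measurable. Thus $\bN_k$ is a function of the ``old'' block $\bX_S:=(\bx_1,\ldots,\bx_{k-1})$ alone, obtained by integrating out the ``new'' block $\bX_R:=(\bx_{k+1},\ldots,\bx_n)$. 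I would partition the index set $\{1,\ldots,n\}\setminus\{k\}$ into $S$ and $R$, consider the induced block structure of $\bA_k$ and of the four blocks $\bD_{SS},\bD_{SR},\bD_{RS},\bD_{RR}$ of $\bD_k$, and expand $\bM_k^{(1)} = \bSigma_p\bX_k\bD_k\bX_k'\bSigma_p$ into four summands. After applying $\Expe_k$, the $RR$ and cross $SR/RS$ contributions are controlled by the quadratic-form concentration bounds in Lemmas~\ref{lem:quad-trace} and~\ref{lem:PanZhou-trace} together with the isotropy of $\bX_R$; the dominant piece is $\bSigma_p\bX_S(\Expe_k\bD_{SS})\bX_S'\bSigma_p$.

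The core computation is evaluating $\Expe_k\bD_{SS}$ via the Schur-complement identity
\[
\bD_{SS}=\bigl(\bA_{SS}-z\bI-\bA_{SR}(\bA_{RR}-z\bI)^{-1}\bA_{RS}\bigr)^{-1}.
\]
The self-energy term $\bA_{SR}(\bA_{RR}-z\bI)^{-1}\bA_{RS}$ is a quadratic form in $\bX_R$; combining Lemma~\ref{lem:quad-trace} with Lemma~\ref{lem:Mk_limit} applied to the $(n-k)$-column sub-problem, it concentrates around $\tfrac{n-k}{n}m(z)\bI_{k-1}$. Hence $\Expe_k\bD_{SS}$ is approximated by the resolvent of $\bA_{SS}$ at the shifted spectral parameter $z+\tfrac{n-k}{n}m(z)$, producing an explicit form for $\bN_k(z)$ as a function of $\bX_S$ only. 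Substituting into $\tr(\bN_k(z_1)\bN_k(z_2))$ and invoking the standard concentration $\bx_i'\bSigma_p^2\bx_j/(pb_p)=\delta_{ij}+o(1)$ collapses the trace into a bilinear sum over $S\times S$ that, after algebraic simplification, yields the geometric series $\sum_{\ell\geqslant 1}\bigl(\tfrac{k}{n}m(z_1)m(z_2)\bigr)^{\ell}$, which sums to the stated closed form.

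The hard part will be propagating the cascade of approximations uniformly in $k\in\{1,\ldots,n\}$ while maintaining $L_1$ control: each Schur reduction and each quadratic-form concentration step contributes an error term, and the bilinear structure of $\bbZ_k$ can amplify these errors. Under the truncation conditions on $X_{ij}$ and the regime $p/n\to\infty$, careful bookkeeping is required to guarantee that the compound error is $o_{L_1}(1)$ uniformly in $k$ and in $z_1,z_2\in\mathbb{C}_1$.
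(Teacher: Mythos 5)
Your plan is a genuinely different organization from the paper's proof, which removes one column $\bx_i$ at a time from $\bX_k$ (rank-one resolvent updates via \eqref{eq:diff_inv_1}--\eqref{eq:diff_inv_2}), sums over $i\neq k$, pairs the two expansions to get roughly thirty terms, and retains only $C_{25},C_{33},C_{45}$. A one-shot block-Schur reduction over $S=\{1,\ldots,k-1\}$ and $R=\{k+1,\ldots,n\}$ would be more conceptual, and you have correctly identified the target self-consistent relation $\bbZ_k=\tfrac{k}{n}m(z_1)m(z_2)(1+\bbZ_k)+o_{L_1}(1)$. The problem is that the two central approximations in your plan do not hold as stated.

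First, the value you assert for the concentrated self-energy is wrong. The Schur-complement term $\bA_{SR}(\bA_{RR}-z\bI)^{-1}\bA_{RS}$ has diagonal entries that concentrate around $\tfrac{1}{npb_p}\tr\bM_R(z)\approx\tfrac{1}{n}\tr(\bA_{RR}-z\bI)^{-1}$. But $\bA_{RR}$ is $(n-k)\times(n-k)$ while still normalized by $\sqrt{npb_p}$, so its empirical spectral law is the semicircle \emph{rescaled} to $[-2\sqrt{(n-k)/n},\,2\sqrt{(n-k)/n}]$, and hence $\tfrac{1}{n}\tr(\bA_{RR}-z\bI)^{-1}\to\sqrt{(n-k)/n}\,m\bigl(z\sqrt{n/(n-k)}\bigr)$, which is not $\tfrac{n-k}{n}m(z)$. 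Lemma~\ref{lem:Mk_limit} produces $m(z)$ only because there a single column is deleted, so the $\sqrt{npb_p}$ and $\sqrt{(n-1)pb_p}$ normalizations are asymptotically interchangeable; that no longer holds when $n-k$ is a fixed fraction of $n$, which is precisely the regime where $\bbZ_k$ is nontrivial. Substituting the rescaled-semicircle self-energy into the shifted-resolvent calculation you sketch does not produce $m_\rho(w_i)=m(z_i)$, so the geometric series does not in fact ``collapse'' to the stated closed form. Second, even if the scalar part were fixed, the self-energy is an entire $(k-1)\times(k-1)$ matrix whose off-diagonal entries are mean-zero of size $O(n^{-1/2})$; collectively they contribute $O(1)$ to the operator norm of $\Sigma$, so $\Expe_k\bD_{SS}$ is not a resolvent of $\bA_{SS}$ at a \emph{scalar}-shifted spectral parameter, and the scaled-semicircle spectral formula for $\tfrac{1}{n}\tr\bigl((\bA_{SS}-w_1\bI)^{-1}(\bA_{SS}-w_2\bI)^{-1}\bigr)$ does not apply. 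The paper's rank-one reduction avoids both issues because a $1\times1$ Schur complement has a scalar, not matrix, self-energy whose limit really is $m(z)$. To salvage the block route you would have to track the matrix self-energy and its fluctuations to second order and show they supply exactly the missing terms that deform $\tfrac{k}{n}m_1m_2$ into $\tfrac{k}{n}m_1m_2/(1-\tfrac{k}{n}m_1m_2)$ --- at which point the bookkeeping is at least as involved as the paper's, and the present sketch does not indicate how to do it.
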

This lemma is used in Section~\ref{sec:conv_Mn1_proof} to derive the finite dimensional convergence of $M_n^{(1)}(z)$.

\begin{proof}
	Let $\{ \be_i, i=1.\ldots,k-1,k+1,\ldots,n \}$ be the $(n-1)$-dimensional unit vectors with the $i$-th (or $(i-1)$-th) element equal to $1$ and the remaining equal to $0$ according as $i<k$ (or $i>k$). Write $\bX_k=\bX_{ki}+\bx_i\be_i'$. Let
	\begin{align*}
		\bD_{ki,r}^{-1}&=\bD_k^{-1}-\be_i\bh_i'=\frac{1}{\sqrt{npb_p}}\Bigl( \bX_{ki}'\bSigma_p\bX_k-pa_p \bI_{(i)}\Bigr)-z\bI_{n-1},\\[0.5em]
		\bD_{ki}^{-1}&=\bD_k^{-1}-\be_i\bh_i'-\br_i\be_i'=\frac{1}{\sqrt{npb_p}}\Bigl(\bX_{ki}'\bSigma_p\bX_{ki}-pa_p\bI_{(i)} \Bigr)-z\bI_{n-1},\\[0.5em]
		\bh_i'&=\frac{1}{\sqrt{npb_p}}\bx_i'\bSigma_p\bX_{ki}+\frac{1}{\sqrt{npb_p}}\Bigl(\bx_i'\bSigma_p\bx_i-pa_p\Bigr)\be_i',\qquad \br_i=\frac{1}{\sqrt{npb_p}} \bX_{ki}'\bSigma_p\bx_i,\\[0.5em]
		\zeta_i&=\frac{1}{1+\vartheta_i},\qquad \vartheta_i=\bh_i'\bD_{ki,r}(z)\be_i,\qquad \bM_{ki}=\bSigma_p\bX_{ki}\bD_{ki}(z)\bX_{ki}'\bSigma_p.
	\end{align*}
	
	We have some crucial identities,
	\begin{equation}\label{eq:Xe_0}
		\bX_{ki}\be_i=\boldsymbol{0},\qquad \be_i'\bD_{ki,r}=\be_{i}'\bD_{ki}=-\frac{\be_{i}'}{z},
	\end{equation}
	where $\boldsymbol{0}$ is a $p$-dimensional vector with all the elements equal to $0$.
	By using \eqref{eq:Xe_0} and some frequently used formulas about the inverse of matrices, we have two useful identites,
	\begin{equation}\label{eq:diff_inv_1}
		\begin{aligned}
			\bD_{k}-\bD_{ki,r}&=-\bD_{ki,r}(\bD_k^{-1}-\bD_{ki,r}^{-1})\bD_k=-\bD_{ki,r}(\be_i\bh_i')\bD_{k}\\[0.5em]
			&=-\bD_{ki,r}(\be_i\bh_i')(\zeta_i\bD_{ki,r})=-\zeta_i\bD_{ki,r}(\be_i\bh_i')\bD_{ki,r}
		\end{aligned}
	\end{equation}
	and
	\begin{equation}\label{eq:diff_inv_2}
		\begin{aligned}
			\bD_{ki,r}-\bD_{ki}&=-\bD_{ki}(\bD_{ki,r}^{-1}-\bD_{ki}^{-1})\bD_{ki,r}=-\bD_{ki}(\br_i\be_i')\bD_{ki,r}\\[0.5em]
			&=-\bD_{ki}\Bigl(\frac{1}{\sqrt{npb_p}} \bX_{ki}'\bSigma_p\bx_i\be_i'\Bigr)\bD_{ki}=\frac{1}{z\sqrt{npb_p}}\bD_{ki}\bX_{ki}'\bSigma_p\bx_i\be_i'.
		\end{aligned}
	\end{equation}
	
	Using \eqref{eq:diff_inv_1} and \eqref{eq:diff_inv_2}, for $i<k$, we obtain the following decomposition $\Expe_k\bM_k^{(1)}(z)$,
	\begin{align}
		\Expe_k\bM_k^{(1)}(z) & = \Expe_k \Bigl(\bSigma_p(\bX_{ki}+\bx_i\be_i')\bD_k(\bX_{ki}+\bx_i\be_i')'\bSigma_p\Bigr)\nonumber\\
		&=\Expe_k\biggl( \bSigma_p\bX_{ki}\bD_k\bX_{ki}'\bSigma_p + \bSigma_p\bX_{ki}\bD_k\be_i\bx_i'\bSigma_p+\bSigma_p\bx_i\be_i'\bD_k\bX_{ki}'\bSigma_p +\bSigma_p\bx_i\be_i'\bD_k\be_i\bx_i'\bSigma_p \biggr)\nonumber\\
		&=\Expe_k \bM_{ki}-\Expe_{k}\biggl( \frac{\zeta_i(z)}{znpb_p} \bM_{ki}\bx_i\bx_i'\bM_{ki}\biggr)+\Expe_k \biggl(\frac{\zeta_i(z)}{z\sqrt{npb_p}}\bM_{ki}\biggr)\bx_i\bx_i'\bSigma_p\nonumber\\
		&\qquad\qquad +\bSigma_p\bx_i\bx_i'\Expe_k \biggl(\frac{\zeta_i(z)}{z\sqrt{npb_p}}\bM_{ki}\biggr)-\Expe_k\biggl(\frac{\zeta_i(z)}{z}\biggr)\bSigma_p\bx_i\bx_i'\bSigma_p\label{eq:E_Mk_decom}\\
		&:=\bB_1(z)+\bB_2(z)+\bB_3(z)+\bB_4(z)+\bB_5(z).\nonumber
	\end{align}
	
	Write
	\[
	\bD_k^{-1}=\sum_{i=1 (\neq k)}^n \be_i\bh_i'-z\bI_{n-1}.
	\]
	Multiplying $\bD_k$ on the right-hand side, we have
	\[
	z\bD_k=-\bI_{n-1}+\sum_{i=1(\neq k)}^n \be_i\bh_i'\bD_k.
	\]
	Multiplying $\bSigma_p\bX_k$ on the left-hand side, $\bX_k'\bSigma_p$ on the right-hand side, we get
	\[
	z\bM_k^{(1)}(z)=-\bSigma_p\bX_k\bX_k'\bSigma_p+\sum_{i=1(\neq k)}^n \bSigma_p\bX_k\be_i\bh_i'\bD_k\bX_k'\bSigma_p.
	\]
	Thus,
	\begin{align}
		z\Expe_k \bigl(\bM_k^{(1)}(z)\bigr)&=-\Expe_k\bigl(\bSigma_p\bX_k\bX_k'\bSigma_p\bigr)+\sum_{i=1(\neq k)}^n \Expe_k(\bSigma_p\bX_k\be_i\bh_i'\bD_k\bX_k'\bSigma_p)\nonumber\\
		&=-\bSigma_p\Expe_k\biggl(\sum_{i=1(\neq k)}^n\bx_i\bx_i'\biggr)\bSigma_p+\sum_{i=1(\neq k)}^n \Expe_k\Bigl(\zeta_i\bSigma_p\bx_i\bh_i'\bD_{ki,r}(\bX_{ki}'+\be_i\bx_i')\bSigma_p\Bigr)\nonumber\\
		&=-(n-k)\bSigma_p^2 - \sum_{i<k} \Bigl(\bSigma_p\bx_i\bx_i'\bSigma_p\Bigr)+
		\sum_{i=1(\neq k)}^n \Expe_{k}\biggl(\frac{\zeta_i}{\sqrt{npb_p}}\bSigma_p\bx_i\bx_i'\bSigma_p\bX_{ki}\bD_{ki,r}\bX_{ki}'\bSigma_p\biggr)\nonumber\\
		&\qquad\qquad+\sum_{i=1(\neq k)}^n \Expe_k \Bigl(\zeta_i\bSigma_p\bx_i\bh_i'\bD_{ki,r}\be_i\bx_i'\bSigma_p\Bigr)\nonumber\\
		&=-(n-k)\bSigma_p^2 - \sum_{i<k} \Bigl(\bSigma_p\bx_i\bx_i'\bSigma_p\Bigr)+\sum_{i=1(\neq k)}^n \Expe_{k}\biggl(\frac{\zeta_i}{\sqrt{npb_p}}\bSigma_p\bx_i\bx_i'\bM_{ki}\biggr)\nonumber\\
		&\qquad\qquad+\sum_{i=1(\neq k)}^n \Expe_k \Bigl(\zeta_i\vartheta_i\bSigma_p\bx_i\bx_i'\bSigma_p\Bigr).\label{eq:z_E_Mk_decom}
	\end{align}
	
	Applying \eqref{eq:E_Mk_decom} and \eqref{eq:z_E_Mk_decom} to $\Expe_k \bM_k^{(1)}(z_2)$ (for $i<k$) and $z_1\Expe_k \bM_k^{(1)}(z_1)$, we get the following decomposition:
	\begin{align}
		z_1\bbZ_k&=\frac{z_1}{n(pb_p)^2} \tr \Bigl(\Expe_k\bM_k^{(1)}(z_1)\cdot \Expe_k\bM_k^{(1)}(z_2)\Bigr)\nonumber\\
		&=\frac{1}{n(pb_p)^2} \tr \biggl\{\biggl[-(n-k)\bSigma_p^2 - \sum_{i<k} \Bigl(\bSigma_p\bx_i\bx_i'\bSigma_p\Bigr)+\sum_{i=1(\neq k)}^n \Expe_{k}\biggl(\frac{\zeta_i}{\sqrt{npb_p}}\bSigma_p\bx_i\bx_i'\bM_{ki}\biggr)\nonumber\\
		&\qquad\qquad\qquad\qquad+\sum_{i=1(\neq k)}^n \Expe_k \Bigl(\zeta_i\vartheta_i\bSigma_p\bx_i\bx_i'\bSigma_p\Bigr)\biggr]\times \Expe_k\bM_k^{(1)}(z_2) \biggr\}\nonumber\\
		&=C_1(z_1,z_2)+C_2(z_1,z_2)+C_3(z_1,z_2)+C_4(z_1,z_2),\label{eq:zZ_decom}
	\end{align}
	where
	\begingroup
	\allowdisplaybreaks
	\begin{align}
		C_1(z_1,z_2)&=-\frac{n-k}{n(pb_p)^2} \tr \Bigl(\bSigma_p^2\cdot \Expe_k\bM_k^{(1)}(z_2)\Bigr),\nonumber\\
		C_2(z_1,z_2)&=-\frac{1}{n(pb_p)^2}\sum_{i<k} \bx_i'\bSigma_p\biggl(\sum_{j=1}^5 \bB_j(z_2)\biggr)\bSigma_p\bx_i=\sum_{j=1}^5 C_{2j},\label{eq:C2}\\
		C_3(z_1,z_2)&=\frac{1}{n(pb_p)^2}\sum_{i<k} \Expe_k\biggl[ \frac{\zeta_i(z_1)}{\sqrt{npb_p}}\bx_i'\bM_{ki}(z_1)\biggl(\sum_{j=1}^5 \bB_j(z_2)\biggr)\bSigma_p\bx_i \biggr] \nonumber\\
		&\qquad +\frac{1}{n(pb_p)^2}\sum_{i>k} \Expe_k\biggl[ \frac{\zeta_i(z_1)}{\sqrt{npb_p}}\bx_i'\bM_{ki}(z_1)\Bigl(\Expe_k\bM_k^{(1)}(z_2)\Bigr)\bSigma_p\bx_i \biggr]=\sum_{j=1}^6 C_{3j},\label{eq:C3}\\
		C_4(z_1,z_2)&=\frac{1}{n(pb_p)^2}\sum_{i<k} \Expe_k\biggl[ \zeta_i(z_1)\vartheta_i(z_1)\bx_i'\bSigma_p\biggl(\sum_{j=1}^5 \bB_j(z_2)\biggr)\bSigma_p\bx_i \biggr]\nonumber\\
		&\qquad +\frac{1}{n(pb_p)^2}\sum_{i>k} \Expe_k\biggl[ \zeta_i(z_1)\vartheta_i(z_1)\bx_i'\bSigma_p\Bigl(\Expe_k\bM_k^{(1)}(z_2)\Bigr)\bSigma_p\bx_i \biggr]=\sum_{j=1}^6 C_{4j}.\label{eq:C4}
	\end{align}
	\endgroup

	Now we estimate all the terms in \eqref{eq:zZ_decom}. We will show that these terms are negligible as $n\to\infty$, expect $C_{25}, C_{33}, C_{45}$ defined in \eqref{eq:C2} $\sim$ \eqref{eq:C4}.
	Before proceeding, we provide two useful lemmas.

	For $C_1(z_1,z_2)$, we have
	\[
	\Expe|C_1(z_1,z_2)|=\frac{n-k}{n(pb_p)^2} \biggl|\tr \Bigl(\bSigma_p^2\cdot \Expe_k\bM_k^{(1)}(z_2)\Bigr)\biggr| = O\biggl(\frac{1}{p^2}\biggr) \cdot  O(np) = O\biggl(\frac{n}{p}\biggr),
	\]
	where the second equality follows from the fact $\Bigl|\tr \Bigl(\bSigma_p^2\cdot \Expe_k\bM_k^{(1)}(z_2)\Bigr)\Bigr|=O(np)$, which can be verified by using the similar argument in the proof of Lemma~\ref{lem:Mk_limit}.
	
	Applying Lemma~\ref{lem:vartheta_zeta} and inequality \eqref{eq:sMBs_bound_2} with $\bB=\bI_p$, we have
	\begin{align*}
		\Expe|C_{21}|& \leqslant \frac{1}{n(pb_p)^2}\sum_{i<k} \Expe\bigl| \bx_i'\bSigma_p\cdot \Expe_k\bM_{ki}(z_2)\cdot\bSigma_p\bx_i \bigr|\\
		&\leqslant \frac{1}{n(pb_p)^2}\sum_{i<k} \Bigl(\Expe\bigl| \bx_i'\bSigma_p\cdot \Expe_k\bM_{ki}(z_2)\cdot\bSigma_p\bx_i \bigr|^2\Bigr)^{\nicefrac{1}{2}}\leqslant \frac{Kn}{p}.
	\end{align*}
	Applying Lemma~\ref{lem:vartheta_zeta} and inequality \eqref{eq:sMBs_bound_2} with $\bB=\bSigma_p$, we have
	\begin{align*}
		\Expe |C_{22}| & \leqslant \frac{1}{n(pb_p)^2}\sum_{i<k} \Expe \biggl|\bx_i'\bSigma_p\cdot \Expe_{k}\biggl( \frac{\zeta_i(z_2)}{z_2npb_p} \bM_{ki}(z_2)\bx_i\bx_i'\bM_{ki}(z_2)\biggr)\cdot\bSigma_p\bx_i\biggr|\\
		& = \frac{K}{n^2(pb_p)^3}\sum_{i<k} \Expe\Bigl| \bx_i'\bM_{ki}(z_2)\bSigma_p\bx_i\Bigr|^2\leqslant \frac{Kn}{p}.
	\end{align*}
	Similarly, we obtain
	\begin{align*}
		\Expe|C_{23}| =\Expe|C_{24}|& \leqslant \frac{1}{n(pb_p)^2}\sum_{i<k} \Expe\Bigl|\bx_i'\bSigma_p\cdot \Expe_k \biggl(\frac{\zeta_i(z_2)}{z_2\sqrt{npb_p}}\bM_{ki}(z_2)\biggr)\bx_i\bx_i'\bSigma_p\cdot\bSigma_p\bx_i\Bigr|\\
		&\leqslant \frac{K}{np^2\sqrt{np}}\sum_{i<k} \Expe\Bigl|\bx_i'\bSigma_p  \bM_{ki}(z_2)\bx_i\cdot\bx_i'\bSigma_p^2\bx_i\Bigr|\\
		&\leqslant \frac{K}{np^2\sqrt{np}}\sum_{i<k} \biggl(\Expe\Bigl|\bx_i'\bSigma_p  \bM_{ki}(z_2)\bx_i\Bigr|^2\biggr)^{\nicefrac{1}{2}}\cdot\biggl(\Expe\Bigl|\bx_i'\bSigma_p^2\bx_i\Bigr|^2\biggr)^{\nicefrac{1}{2}}
		\leqslant K\sqrt{\frac{n}{p}}.
	\end{align*}
	Applying Lemma~\ref{lem:vartheta_zeta} and inequality \eqref{eq:sMBs_bound_1} with $\bB=\Expe_k\bM_{ki}(z_2) \bSigma_p$, we have
	\begin{align*}
		\Expe |C_{31}|&=\frac{1}{n(pb_p)^2}\sum_{i<k} \Expe\biggl|\Expe_k\biggl[ \frac{\zeta_i(z_1)}{\sqrt{npb_p}}\bx_i'\bM_{ki}(z_1)\cdot \Expe_k\bM_{ki}(z_2) \cdot\bSigma_p\bx_i \biggr]\biggr|\\
		&\leqslant \frac{K}{np^2\sqrt{np}}\sum_{i<k} \Expe\biggl| \bx_i'\bM_{ki}(z_1)\cdot \Expe_k\bM_{ki}(z_2) \cdot\bSigma_p\bx_i \biggr|
		\leqslant K\sqrt{\frac{n}{p}}.
	\end{align*}
	
	Define $\tilde{\zeta}_i$ and $\widetilde{\bM}_{ki}$, the analogues of $\zeta_i(z)$ and $\bM_{ki}(z)$ respectively, by $(\bx_1,\ldots,\bx_k, \tilde{\bx}_{k+1},\ldots,\tilde{\bx}_n)$, where $\tilde{\bx}_{k+1},\ldots,\tilde{\bx}_n$ are i.i.d. copies of $\bx_{k+1}, \ldots, \bx_n$ and independent of $\bx_1, \ldots, \bx_n$. Then, 
	\begin{align*}
		\Expe|C_{32}|&=\frac{1}{n(pb_p)^2}\sum_{i<k} \Expe\biggl|\Expe_k\biggl[ \frac{\zeta_i(z_1)}{\sqrt{npb_p}}\bx_i'\bM_{ki}(z_1)\cdot \Expe_{k}\biggl( \frac{\zeta_i(z_2)}{z_2npb_p} \bM_{ki}(z_2)\bx_i\bx_i'\bM_{ki}(z_2)\biggr)\cdot\bSigma_p\bx_i \biggr]\biggr|\\
		&=\frac{1}{n(pb_p)^2}\sum_{i<k} \Expe\biggl|\Expe_k\biggl[ \frac{\zeta_i(z_1)}{\sqrt{npb_p}}\bx_i'\bM_{ki}(z_1)\cdot \Expe_{k}\biggl( \frac{\tilde{\zeta}_i(z_2)}{z_2npb_p} \widetilde{\bM}_{ki}(z_2)\bx_i\bx_i'\widetilde{\bM}_{ki}(z_2)\biggr)\cdot\bSigma_p\bx_i \biggr]\biggr|\\
		&\leqslant \frac{K}{n^2p^3\sqrt{np}}\sum_{i<k} \Expe\biggl|\biggl[ \bx_i'\bM_{ki}(z_1) \widetilde{\bM}_{ki}(z_2)\bx_i\cdot\bx_i'\widetilde{\bM}_{ki}(z_2)\bSigma_p\bx_i \biggr]\biggr|\\
		&\leqslant \frac{K}{n^2p^3\sqrt{np}}\sum_{i<k} \biggl(\Expe\Bigl| \bx_i'\bM_{ki}(z_1) \widetilde{\bM}_{ki}(z_2)\bx_i\Bigr|^2\biggr)^{\nicefrac{1}{2}} \biggl(\Expe\Bigl|\bx_i'\widetilde{\bM}_{ki}(z_2)\bSigma_p\bx_i \Bigr|^2\biggr)^{\nicefrac{1}{2}}
		\overset{\eqref{eq:sMBs_bound_1}}{\leqslant} K\sqrt{\frac{n}{p}}.
	\end{align*}
	Similarly, we have
	\[
	\Expe |C_{3j}| \leqslant K\frac{n}{p},\qquad j=4,5,6.
	\]

	Applying Lemma~\ref{lem:vartheta_zeta} and inequality \eqref{eq:sMBs_bound_2} with $\bB=\bI_{n-1}$, we obtain
	\[
	\Expe |C_{4j}| \leqslant K\frac{n}{p},\qquad j=1,2,3,4,6.
	\]

	
	
	
	
	
	Moreover, by using Lemma~\ref{lem:Mk_limit}, Lemma~\ref{lem:vartheta_zeta} and Lemma~\ref{lem:quad_upper_bound}, we obtain the following limits:
	\begin{align*}
		C_{25}&=-\frac{1}{n(pb_p)^2}\sum_{i<k}\biggl\{ \bx_i'\bSigma_p\biggl[-\Expe_k\biggl(\frac{\zeta_i(z_2)}{z_2}\biggr)\bSigma_p\bx_i\bx_i'\bSigma_p\biggr]\bSigma_p\bx_i\biggr\}\\
		&=-\frac{1}{n(pb_p)^2}m(z_2)\sum_{i<k}\Bigl(\bx_i'\bSigma_p^2\bx_i\Bigr)^2\\
		&=-\frac{k}{n}m(z_2)+o_{L_1}(1),
	\end{align*}
	\begin{align*}
		C_{45}&=\frac{1}{n(pb_p)^2}\sum_{i<k} \Expe_k\biggl\{ \zeta_i(z_1)\vartheta_i(z_1)\bx_i'\bSigma_p \biggl[-\Expe_k\biggl(\frac{\zeta_i(z_2)}{z_2}\biggr)\bSigma_p\bx_i\bx_i'\bSigma_p\biggr]\bSigma_p\bx_i \biggr\}\\
		&=\frac{1}{n(pb_p)^2}\sum_{i<k}\Expe_k \biggl[-m^2(z_1)m(z_2)\Bigl(\bx_i'\bSigma_p^2\bx_i\Bigr)^2\biggr] +o_{L_1}(1)\\
		&=-\frac{k}{n}m^2(z_1)m(z_2)+o_{L_1}(1),
	\end{align*}
	and
	\begin{align*}
		C_{33}&=\frac{1}{n(pb_p)^2}\sum_{i<k} \Expe_k\biggl[ \frac{\zeta_i(z_1)}{\sqrt{npb_p}}\bx_i'\bM_{ki}(z_1)\biggl(\Expe_k \frac{\zeta_i(z_2)}{z_2\sqrt{npb_p}}\bM_{ki}(z_2)\biggr)\bx_i\bx_i'\bSigma_p^2 \bx_i\biggr]\\
		&=\frac{1}{n^2p^2b_p^2}z_1m(z_1)m(z_2)\biggl[ \sum_{i<k}\bx_i'\Expe_k \bM_{ki}(z_1)\Expe_{k}\bM_{ki}(z_2)\bx_i \biggr]+o_{L_4}(1)\\
		&=\frac{k}{n}m(z_1)m(z_2)z_1\bbZ_k+o_{L_1}(1).
	\end{align*}
	From above estimates, we have
	\begin{align*}
		z_1\bbZ_k&=-\frac{k}{n}m(z_2)-\frac{k}{n}m^2(z_1)m(z_2)+\frac{k}{n}m(z_1)m(z_2)z_1\bbZ_k+o_{L_1}(1)\\
		&=\frac{k}{n}z_1m(z_1)m(z_2)+\frac{k}{n}z_1m(z_1)m(z_2)\bbZ_k+o_{L_1}(1),
	\end{align*}
	which is equivalent to
	\[
	\bbZ_k=\frac{\frac{k}{n}m(z_1)m(z_2)}{1-\frac{k}{n}m(z_1)m(z_2)}+o_{L_1}(1).\qedhere
	\]
\end{proof}

\begin{lemma}\label{lem:vartheta_zeta} 
	For $\vartheta_i(z)$ and $\zeta_i(z)$ defined in Lemma \ref{lem:Zk_asym}, we have
	\begin{equation*}
		\Expe \biggl|\vartheta_i(z)-\frac{m(z)}{z}\biggr|^4 \to 0,\qquad \Expe \Bigl|\zeta_i(z)+zm(z)\Bigr|^4 \to 0,\qquad \text{as }\; n\to \infty. 
	\end{equation*}
\end{lemma}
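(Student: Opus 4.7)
The plan is to first obtain an explicit scalar representation of $\vartheta_i(z)$ that separates a quadratic form in $\bx_i$ from its population counterpart, then prove the $L_4$ convergence $\vartheta_i\to m/z$ via quadratic-form concentration, and finally transfer this to $\zeta_i$ through a deterministic algebraic identity together with a uniform bound on $|\zeta_i|$.

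Since $\bD_{ki,r}^{-1}=\bD_{ki}^{-1}+\br_i\be_i'$ and $\be_i'\bD_{ki}\bX_{ki}'=\boldsymbol{0}$ (so that $\be_i'\bD_{ki}\br_i=0$), the Sherman--Morrison formula yields $\bD_{ki,r}=\bD_{ki}-\bD_{ki}\br_i\be_i'\bD_{ki}$; combined with $\bD_{ki}\be_i=-\be_i/z$ this gives
\[
\bD_{ki,r}\be_i=-\frac{1}{z}\be_i+\frac{1}{z\sqrt{npb_p}}\bD_{ki}\bX_{ki}'\bSigma_p\bx_i.
\]
Substituting the definition of $\bh_i$ into $\vartheta_i=\bh_i'\bD_{ki,r}\be_i$ and using $\bX_{ki}\be_i=\boldsymbol{0}$ together with $\be_i'\bD_{ki}=-\be_i'/z$ to kill the cross terms, I arrive at the central identity
\[
\vartheta_i(z)=-\frac{1}{z\sqrt{npb_p}}\bigl(\bx_i'\bSigma_p\bx_i-pa_p\bigr)+\frac{1}{z\cdot npb_p}\bx_i'\bM_{ki}(z)\bx_i,
\]
where $\bM_{ki}(z)=\bSigma_p\bX_{ki}\bD_{ki}\bX_{ki}'\bSigma_p$ is independent of $\bx_i$ and satisfies $(npb_p)^{-1}\tr\bM_{ki}\to m(z)$ by the same reasoning that proves Lemma~\ref{lem:Mk_limit}.

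For the $L_4$ convergence of $\vartheta_i-m/z$, the first summand above is $O_{L_4}(n^{-1/2})$ by Lemma~\ref{lem:quad-trace} with $k=4$ applied to $\bA=\bSigma_p$, using $\tr\bSigma_p^2=O(p)$ and $\tr\bSigma_p^4\leqslant\|\bSigma_p\|^2\tr\bSigma_p^2=O(p)$. For the second summand I would split
\[
\tfrac{1}{npb_p}\bx_i'\bM_{ki}\bx_i-m=\tfrac{1}{npb_p}\bigl[\bx_i'\bM_{ki}\bx_i-\tr\bM_{ki}\bigr]+\Bigl[\tfrac{1}{npb_p}\tr\bM_{ki}-m\Bigr].
\]
Conditioning on $\bM_{ki}$ and applying Lemma~\ref{lem:quad-trace} with $k=4$, combined with the crude estimates $\tr\bM_{ki}^2\leqslant v_1^{-2}\|\bSigma_p\|^4\tr((\bX_{ki}'\bX_{ki})^2)=O_p(np^2)$ and $\tr\bM_{ki}^4=O_p(np^4)$, controls the first piece in $L_4$ by $O(n^{-1})$. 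The second piece is handled by redoing the derivation of Lemma~\ref{lem:Mk_limit} with fourth rather than second moments: since $\bx_j'\bSigma_p^2\bx_j-pb_p$ has fourth moment $O(p^2)$ by Lemma~\ref{lem:quad-trace}, Burkholder's inequality (Lemma~\ref{lem:Burkholder_ineq}) applied to the column martingale decomposition produces an $L_4$ bound of order $O((n/p)^2)+O(n^{-2})$, which is $o(1)$ under $p/n\to\infty$. Summing the three estimates yields $\Expe|\vartheta_i-m/z|^4\to 0$.

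Finally, for $\zeta_i$, a second application of Sherman--Morrison to $\bD_k^{-1}=\bD_{ki,r}^{-1}+\be_i\bh_i'$ gives $\bD_k\be_i=\zeta_i\bD_{ki,r}\be_i$, so $\be_i'\bD_k\be_i=-\zeta_i/z$, hence $\zeta_i=-z(\bD_k)_{ii}$; since $\|\bD_k\|\leqslant 1/v_1$ on $\mathbb{C}_1$, this yields the deterministic bound $|\zeta_i|\leqslant|z|/v_1$, uniform for $z$ in any bounded subset of $\mathbb{C}_1$. Together with the semicircle identity \eqref{eq:SCLaw_equation}, which gives $1+m/z=-1/(mz)$ (bounded away from zero on $\mathbb{C}_1$), the algebraic identity
\[
\zeta_i+zm=\frac{1}{1+\vartheta_i}-\frac{1}{1+m/z}=\frac{(m/z-\vartheta_i)\,\zeta_i}{1+m/z}
\]
and H\"older's inequality reduce $\Expe|\zeta_i+zm|^4$ to a bounded multiple of $\Expe|\vartheta_i-m/z|^4\to 0$. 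The main obstacle is the $L_4$ upgrade of Lemma~\ref{lem:Mk_limit} for the perturbed matrix $\bM_{ki}$; once that technical estimate is secured, the rest of the argument reduces to algebraic manipulation.
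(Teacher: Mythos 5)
Your scalar representation of $\vartheta_i$ is correct: starting from $\bD_{ki,r}^{-1}=\bD_{ki}^{-1}+\br_i\be_i'$, the vanishing of $\be_i'\bD_{ki}\br_i$ makes Sherman--Morrison exact, $\bD_{ki,r}\be_i=-\tfrac{1}{z}\be_i+\tfrac{1}{z\sqrt{npb_p}}\bD_{ki}\bX_{ki}'\bSigma_p\bx_i$, and contracting against $\bh_i'$ with $\bX_{ki}\be_i=\boldsymbol{0}$ does give
\[
\vartheta_i=-\frac{1}{z\sqrt{npb_p}}\bigl(\bx_i'\bSigma_p\bx_i-pa_p\bigr)+\frac{1}{z\cdot npb_p}\bx_i'\bM_{ki}\bx_i.
\]
The reduction of $\zeta_i$ to $\vartheta_i$ is also correct and elegant: the identity $\zeta_i=-z(\bD_k)_{ii}$ gives the uniform bound $|\zeta_i|\leqslant|z|/v_1$, and with $1+m/z=-1/(mz)$ (bounded away from $0$ on $\mathbb{C}_1$), the algebraic identity $\zeta_i+zm=(m/z-\vartheta_i)\zeta_i/(1+m/z)$ transfers the $L_4$ bound cleanly. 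The treatment of $\tfrac{1}{\sqrt{npb_p}}(\bx_i'\bSigma_p\bx_i-pa_p)$ and of $\tfrac{1}{npb_p}\tr\bM_{ki}-m$ (Cauchy--Schwarz plus Burkholder on the column martingale) is also sound.

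The gap is in the remaining piece, $\tfrac{1}{npb_p}(\bx_i'\bM_{ki}\bx_i-\tr\bM_{ki})$. You apply Lemma~\ref{lem:quad-trace} with $k=4$ conditioned on $\bM_{ki}$. The second term in that bound is $\nu_8\,\tr\bigl((\bA\bA^*)^2\bigr)$ with $\bA=\bM_{ki}/(npb_p)$. Your estimate $\tr\bM_{ki}^4=O(np^4)$ is fine, giving $\tr((\bA\bA^*)^2)=O(n^{-3})$, but after the truncation of Section~\ref{sec:truncation} the variables only satisfy $|X_{ij}|\leqslant\delta_n\sqrt[4]{np}$ with $\Expe|X_{ij}|^{6+\varepsilon_0}<\infty$, so $\nu_8=O(\delta_n^2\sqrt{np})$ rather than $O(1)$. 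The resulting contribution $\nu_8/n^3=O\bigl(\delta_n^2\sqrt{p/n}/n^2\bigr)$ is not $o(1)$ under the assumption $p/n\to\infty$ alone: the truncation sequence $\delta_n$ need not decay fast enough to compensate when $p$ grows much faster than a fixed power of $n$, and no upper bound on $p$ relative to $n$ is assumed. Your claimed ``$O(n^{-1})$'' therefore is not justified at this level of generality. The paper's own machinery shows how to fix this: Lemma~\ref{lem:gamma_moment_upper_bound} handles the identical quantity (up to the removal of one extra column in $\bM_{ki}$ versus $\bM_k^{(1)}$) by splitting $\bx_i'\bM_{ki}\bx_i-\tr\bM_{ki}$ into its off-diagonal part, bounded via Lemma~\ref{lem:PanZhou-trace} (which requires only $\nu_4$), and its diagonal part $\sum_j(X_{ji}^2-1)a_{jj}^{(1)}$, bounded via Burkholder (Lemma~\ref{lem:Burkholder_ineq}) together with the much better estimate $\Expe|a_{jj}^{(1)}|^4=O(n^4+n^2p)$ from \eqref{eq:ajj_upper_bound}, so that $\nu_8$ enters only against the small factor $p(n^4+n^2p)/(np)^4$. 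This yields $O(1/n^2+n/p^2)=o(1)$ under $p/n\to\infty$, which is what you need. Substituting that estimate (or simply invoking the analogue of Lemma~\ref{lem:gamma_moment_upper_bound} for $\bM_{ki}$) closes the gap, and the rest of your argument goes through.
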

\begin{proof}
	This lemma can be proved by using the similar arguments in Section 5.2.2 of \citet{chen2015clt}.
\end{proof}
\begin{lemma}\label{lem:quad_upper_bound}
	Let $\bB$ be any matrix independent of $\bx_i$.
	\begin{align}
		\Expe \bigl|\bx_i'\bM_{ki}\bB\bx_i\bigr|^2 &\leqslant Kp^2n^2\Expe \|\bB\|^2,\label{eq:sMBs_bound_1}\\
		\Expe \bigl|\bx_i'\bSigma_p\bM_{ki}\bB\bSigma_p\bx_i\bigr|^2 &\leqslant Kp^2n^2\Expe \|\bB\|^2.	\label{eq:sMBs_bound_2}
	\end{align}
\end{lemma}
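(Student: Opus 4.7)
My approach is to condition on the pair $(\bM_{ki},\bB)$ — both independent of $\bx_i$ (by construction for $\bM_{ki}$ and by hypothesis for $\bB$) — and apply the quadratic-trace estimate of Lemma~\ref{lem:quad-trace} to the form $\bx_i'\bA\bx_i$ with $\bA:=\bM_{ki}\bB$. From the triangle inequality $|\bx_i'\bA\bx_i|^2 \leqslant 2|\tr\bA|^2 + 2|\bx_i'\bA\bx_i-\tr\bA|^2$ and Lemma~\ref{lem:quad-trace} conditional on $\bA$, one obtains
\[
\Expe\bigl|\bx_i'\bM_{ki}\bB\bx_i\bigr|^2 \leqslant K\,\Expe|\tr\bA|^2 + K\nu_4\,\Expe\|\bA\|_F^2.
\]
The remaining task is to bound both terms on the right by $Kp^2n^2\Expe\|\bB\|^2$.

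The next step reduces everything to the spectral norm $\|\bM_{ki}\|$. Since $\mathrm{rank}(\bA)\leqslant\mathrm{rank}(\bM_{ki})\leqslant n$, Cauchy--Schwarz on singular values gives $|\tr\bA|^2\leqslant n\|\bA\|_F^2$, and Frobenius submultiplicativity combined with the same rank bound gives $\|\bA\|_F^2\leqslant\|\bB\|^2\|\bM_{ki}\|_F^2\leqslant n\|\bB\|^2\|\bM_{ki}\|^2$. Hence
\[
\Expe\bigl|\bx_i'\bM_{ki}\bB\bx_i\bigr|^2 \leqslant Kn^2\,\Expe\bigl[\|\bB\|^2\|\bM_{ki}\|^2\bigr].
\]
In the intended applications (Lemma~\ref{lem:Zk_asym}) $\bB$ is itself a function of $\bM_{ki}$, so independence cannot be used to factor the expectation. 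The plan is instead to establish the \emph{pathwise} bound $\|\bM_{ki}\|\leqslant Kp$ for $z\in\mathbb{C}_1$, which immediately yields $\Expe[\|\bB\|^2\|\bM_{ki}\|^2]\leqslant Kp^2\Expe\|\bB\|^2$ and the first inequality \eqref{eq:sMBs_bound_1}.

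The main — and essentially only — technical step is this pathwise spectral bound. Setting $\bN_{ki}:=\bSigma_p^{1/2}\bX_{ki}$ and factorising $\bM_{ki}=\bSigma_p^{1/2}\bigl(\bN_{ki}\bD_{ki}\bN_{ki}'\bigr)\bSigma_p^{1/2}$, the nonzero eigenvalues of $\bN_{ki}\bD_{ki}\bN_{ki}'$ coincide with those of $\bD_{ki}\bX_{ki}'\bSigma_p\bX_{ki}$. Repeating the derivation of the resolvent identity \eqref{eq:DXSX_k} with $\bX_k$ replaced by $\bX_{ki}$ expresses the latter as a closed-form linear combination of $\sqrt{npb_p}\bI$, $pa_p\bD_{ki}$ and $z\sqrt{npb_p}\bD_{ki}$ (the small discrepancy between $\bI_{(i)}$ and $\bI_{n-1}$ contributes only a rank-one correction of operator norm $O(p)$). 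Because $\|\bD_{ki}\|\leqslant 1/v_1$ deterministically on $\mathbb{C}_1$ and both $pa_p$ and $\sqrt{npb_p}$ are $O(p)$ under Assumption~(B) when $p\geqslant n$, the identity gives $\|\bM_{ki}\|\leqslant Kp$ pointwise, completing the proof of \eqref{eq:sMBs_bound_1}. For \eqref{eq:sMBs_bound_2} the identical chain of inequalities applies with $\bA:=\bSigma_p\bM_{ki}\bB\bSigma_p$: the extra sandwiching factors contribute only the bounded multiplier $\|\bSigma_p\|^4$, while the rank, Frobenius and spectral-norm estimates carry through unchanged.
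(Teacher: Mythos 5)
Your overall strategy is structurally the same as the paper's: both invoke Lemma~\ref{lem:quad-trace}, both reduce the estimate to a deterministic operator-norm bound, and both derive that bound from the resolvent identity $\bD_{ki}\bX_{ki}'\bSigma_p\bX_{ki}=\sqrt{npb_p}(\bI_{n-1}+z\bD_{ki})+pa_p\bI_{(i)}\bD_{ki}$, then add back the trace term. Your packaging into a clean pathwise bound $\|\bM_{ki}\|\leqslant Kp$ together with the rank/Frobenius reduction $|\tr\bA|^2\leqslant n\|\bA\|_F^2\leqslant n^2\|\bB\|^2\|\bM_{ki}\|^2$ is a reasonable way to organize what the paper does more directly via $|\tr\bC|\leqslant n\|\bC\|$ and $c_r$.

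There is, however, a gap in your argument for the pathwise bound. From ``the nonzero eigenvalues of $\bN_{ki}\bD_{ki}\bN_{ki}'$ coincide with those of $\bD_{ki}\bX_{ki}'\bSigma_p\bX_{ki}$'' together with the resolvent-identity bound on $\|\bD_{ki}\bX_{ki}'\bSigma_p\bX_{ki}\|$, you conclude $\|\bN_{ki}\bD_{ki}\bN_{ki}'\|\leqslant Kp$. This inference is not valid: for matrices that are not normal, coincidence of eigenvalues does not control singular values, and hence not the spectral norm. The matrix $\bN_{ki}\bD_{ki}\bN_{ki}'$ is complex symmetric (since $\bD_{ki}$ is complex symmetric and $\bN_{ki}$ is real) but it is \emph{not} Hermitian and, one can check, not normal when $z$ has nonzero imaginary part. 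The bound on $\|\bD_{ki}\bX_{ki}'\bSigma_p\bX_{ki}\|$ controls the spectral radius of $\bN_{ki}\bD_{ki}\bN_{ki}'$ but not a priori its operator norm.

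The conclusion $\|\bM_{ki}\|\leqslant Kp$ is nonetheless true, and the gap is repairable by exploiting that $\bD_{ki}$ itself is normal (it is the resolvent of a real symmetric matrix, hence diagonalized by a real orthogonal matrix). Write the polar decomposition $\bD_{ki}=|\bD_{ki}|^{1/2}\bW|\bD_{ki}|^{1/2}$ with $\bW$ unitary, $|\bD_{ki}|$ real symmetric PSD and $\bW$ commuting with $|\bD_{ki}|$. Then $\|\bN_{ki}\bD_{ki}\bN_{ki}'\|\leqslant\|\bN_{ki}|\bD_{ki}|^{1/2}\|^2=\|\bN_{ki}|\bD_{ki}|\bN_{ki}'\|$, and this last matrix \emph{is} Hermitian PSD, so its norm equals the spectral radius of $|\bD_{ki}|\bX_{ki}'\bSigma_p\bX_{ki}$. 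Substituting the resolvent identity and using $|\bD_{ki}|\bD_{ki}^{-1}=\bW^{-1}$ (unitary, norm $1$), $\||\bD_{ki}|\|\leqslant1/v_1$, $\sqrt{npb_p}\leqslant Kp$, $pa_p\leqslant Kp$ gives the deterministic $O(p)$ bound. For what it is worth, the paper's own proof silently asserts $\|\bSigma_p^{1/2}\bX_{ki}\bD_{ki}\bX_{ki}'\bSigma_p^{1/2}\|=\|\bD_{ki}\bX_{ki}'\bSigma_p\bX_{ki}\|$, which is the same subtlety; but since you explicitly reason via eigenvalues the gap is more visible in your writeup and worth closing.
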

\begin{proof}
	Note that $\bM_{ki}$ and $\bx_i$ are independent. By using Lemma~\ref{lem:quad-trace}, we have
	\begin{equation}
		\Expe \bigl|\bx_i'\bM_{ki}\bB\bx_i-\tr\bM_{ki}\bB\bigr|^2 \leqslant K \Bigl( \nu_4 \Expe\tr\bigl(\bM_{ki}\bB\overline{\bB}\overline{\bM}_{ki}\bigr) \Bigr)\leqslant K np^2\|\bB\|^2,\label{eq:sMBs_trMB_bound}
	\end{equation}
	where we use the fact that
	\begin{align}
		\bigl|\tr\bigl(\bM_{ki}\bB\overline{\bB}\overline{\bM}_{ki}\bigr)\bigr|&= \bigl| \tr\bigl(\bSigma_p\bX_{ki}\bD_{ki}\bX_{ki}'\bSigma_p \bB \overline{\bB}\bSigma_p\bX_{ki}\overline{\bD}_{ki}\bX_{ki}'\bSigma_p \bigr) \bigr| \nonumber\\
		& = \bigl| \tr\bigl(\bD_{ki}^{\nicefrac{1}{2}}\bX_{ki}'\bSigma_p \bB \overline{\bB}\bSigma_p\bX_{ki}\overline{\bD}_{ki}\bX_{ki}'\bSigma_p^2\bX_{ki}\bD_{ki}^{\nicefrac{1}{2}} \bigr) \bigr| \nonumber\\
		&\leqslant n \cdot\|\bD_{ki}^{\nicefrac{1}{2}}\bX_{ki}'\bSigma_p^{\nicefrac{1}{2}}\|\cdot \|\bSigma_p^{\nicefrac{1}{2}}\|\cdot\|\bB \overline{\bB}\| \cdot  \|\bSigma_p^{\nicefrac{1}{2}}\|\nonumber\\
		&\qquad\qquad\times \|\bSigma_p^{\nicefrac{1}{2}}\bX_{ki}\overline{\bD}_{ki}\bX_{ki}'\bSigma_p^{\nicefrac{1}{2}}\|\cdot \|\bSigma_p\|\cdot\|\bSigma_p^{\nicefrac{1}{2}}\bX_{ki}\bD_{ki}^{\nicefrac{1}{2}} \|\nonumber\\
		& = n\cdot\|\bSigma_p\|^2\cdot\|\bB\|^2\cdot \|\bSigma_p^{\nicefrac{1}{2}}\bX_{ki}\bD_{ki}\bX_{ki}'\bSigma_p^{\nicefrac{1}{2}}\|^2\nonumber\\
		& = n\cdot\|\bSigma_p\|^2\cdot\|\bB\|^2\cdot \|\bD_{ki}\bX_{ki}'\bSigma_p\bX_{ki}\|^2\nonumber\\
		& =  n\cdot\|\bSigma_p\|^2\cdot\|\bB\|^2\cdot \| \sqrt{npb_p}(\bI_{n-1}+z\bD_{ki}) + pa_p\bI_{(i)}\bD_{ki} \|^2 \nonumber\\
		& \leqslant Knp^2\|\bB\|^2.\label{eq:trace_2M}
	\end{align}
	By \eqref{eq:sMBs_trMB_bound} and the $c_r$-inequality, we have
	\[
	\Expe \bigl|\bx_i'\bM_{ki}\bB\bx_i\bigr|^2
	\leqslant K\Bigl( \Expe \bigl|\bx_i'\bM_{ki}\bB\bx_i-\tr\bM_{ki}\bB\bigr|^2  + \Expe \bigl|\tr \bM_{ki}\bB\bigr|^2\Bigr)\leqslant Kp^2n^2\Expe \|\bB\|^2,
	\]
	which completes the proof of \eqref{eq:sMBs_bound_1}. By using the same argument, we get \eqref{eq:sMBs_bound_2}.
\end{proof}

Lemma \ref{lem:vartheta_zeta} and \ref{lem:quad_upper_bound} are used in the proof of Lemma \ref{lem:Zk_asym}.

Recall that $\mathbb{C}_1 = \{z: z=u+iv, u\in [-u_1, u_1],~|v|\geqslant v_1\}$, where $u_1>2$ and $0<v_1\leqslant 1$.

\begin{lemma}\label{lem:betak_upper_bounds} For $z\in\mathbb{C}_1$, we have
	\begin{align}
		|\beta_k(z)|\leqslant 1/v_1 ,\qquad |\beta_k^{\mathsf{tr}}(z)|\leqslant 1/v_1,\nonumber\\
			\biggl|1+\frac{1}{npb_p}\tr \bM_k^{(s)}(z)\biggr|\leqslant 1+\frac{1}{v_1^s},\qquad s=1, 2,\nonumber\\
			\Bigl|\beta_k\Bigl(1+\bq_k'\bD_k^{2}(z)\bq_k\Bigr)\Bigr|\leqslant \frac{1}{v_1}. \label{eq:beta_qDq_upper_bdd}
	\end{align}
\end{lemma}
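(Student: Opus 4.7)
The four bounds are all manifestations of the standard imaginary-part trick for resolvents of Hermitian matrices, so I would treat them one by one.

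For (1), the plan is to identify $\beta_k$ with a diagonal entry of the full resolvent. Using the Schur complement formula applied to the block decomposition of $\bA_n-z\bI_n$ into its $(k,k)$ scalar piece $a_{kk}^{\mathsf{diag}}$, its off-diagonal row/column vector $\bq_k$, and its principal submatrix $\bA_k-z\bI_{n-1}$, one gets $D_{kk} = (a_{kk}^{\mathsf{diag}}-\bq_k'\bD_k\bq_k)^{-1}$, and hence $\beta_k = -D_{kk}$. Since $\bA_n$ is Hermitian, $\|\bD\|\leqslant 1/|\Im z|\leqslant 1/v_1$, so $|\beta_k| = |D_{kk}|\leqslant 1/v_1$.

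For (2), the idea is that $(npb_p)^{-1}\tr\bM_k^{(1)}(z)$ has imaginary part of the same sign as $\Im z$. Writing $\tr\bM_k^{(1)} = \tr(\bD_k\,\bX_k'\bSigma_p^2\bX_k)$ and using the standard identity $\Im\bD_k = v\,\bD_k\bD_k^*$ together with positive semidefiniteness of $\bX_k'\bSigma_p^2\bX_k$, one obtains $\Im\bigl[(npb_p)^{-1}\tr\bM_k^{(1)}\bigr] = v\,(npb_p)^{-1}\tr(\bD_k\bD_k^*\bX_k'\bSigma_p^2\bX_k)$, whose sign equals $\mathrm{sgn}(v)$. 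Consequently $|\Im(z+(npb_p)^{-1}\tr\bM_k^{(1)})|\geqslant v_1$, giving $|\beta_k^{\mathsf{tr}}|\leqslant 1/v_1$. For (4), I would simply combine \eqref{eq:trace_inv_diff}, which reads $\beta_k(1+\bq_k'\bD_k^2\bq_k) = \tr\bD_k-\tr\bD$, with the classical resolvent inequality $|\tr\bD-\tr\bD_k|\leqslant 1/|\Im z|$ valid whenever $\bA_k$ is a principal submatrix of $\bA_n$ (a consequence of Cauchy interlacing, as in \citet{bai2010spectral}).

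Bound (3) is the delicate one: the trace $\tr\bM_k^{(s)}$ involves random quantities and so the inequality is not literally deterministic. I would invoke the matrix Hölder-type inequality $|\tr(\bA\bB)|\leqslant\|\bA\|\cdot\tr\bB$ valid for PSD $\bB$, applied to $\bA = \bD_k^s$ and $\bB = \bX_k'\bSigma_p^2\bX_k$, to obtain $|\tr\bM_k^{(s)}|\leqslant (1/v_1^s)\tr(\bSigma_p^2\bX_k\bX_k')$. Since $\Expe\tr(\bSigma_p^2\bX_k\bX_k') = (n-1)pb_p$, Lemma~\ref{lem:quad-trace} and the truncation step of Section~\ref{sec:truncation} yield concentration of $(npb_p)^{-1}\tr(\bSigma_p^2\bX_k\bX_k')$ around $(n-1)/n$, so on an event $U_n$ of probability tending to one this ratio is at most $1$, giving $(npb_p)^{-1}|\tr\bM_k^{(s)}|\leqslant 1/v_1^s$ and then $|1+(npb_p)^{-1}\tr\bM_k^{(s)}|\leqslant 1+1/v_1^s$ by the triangle inequality.

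The main obstacle is therefore (3): the other three bounds are genuinely almost sure resolvent estimates, whereas (3) requires restricting to a high-probability event, consistent with the use of such an event in Section~\ref{sec:strategy_of_proof} and the truncation reductions already carried out. Apart from this, the proof is short and relies only on Hermiticity of $\bA_n$, positivity of $\bSigma_p^2$ and of $\bX_k'\bSigma_p^2\bX_k$, Schur complement, and Cauchy interlacing.
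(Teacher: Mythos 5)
Your treatment of the first, second, and fourth bounds is exactly right and is what the cited reference (and this paper) must be doing. The identification $\beta_k=-D_{kk}$ via Schur complement together with $\|\bD\|\leqslant 1/v_1$ gives the first bound; the sign argument $\Im\bigl[(npb_p)^{-1}\tr\bM_k^{(1)}\bigr]=v(npb_p)^{-1}\tr\bigl(\bD_k\bD_k^*\,\bX_k'\bSigma_p^2\bX_k\bigr)$ (same sign as $v$, since both factors under the trace are PSD) gives the second; and \eqref{eq:trace_inv_diff} together with the interlacing inequality $|\tr\bD-\tr\bD_k|\leqslant 1/|\Im z|$ (Theorem A.5 of \citet{bai2010spectral}) gives the fourth. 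These are genuinely almost sure resolvent estimates, as you say.

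For the third bound you have correctly spotted a real difficulty, but you have not pinned down why it arises here and not in \citet{chen2015clt}, which the paper cites as the source of an ``exact'' proof. When $\bSigma_p=\bI_p$, the bound is (essentially) deterministic, but not via the Hölder inequality you invoke. One uses the algebraic identity \eqref{eq:DXSX_k}: since $\bX_k'\bSigma_p^2\bX_k=\bX_k'\bX_k$ in that case, $\bD_k^s\bX_k'\bX_k=\sqrt{np}\,\bD_k^{s-1}(\bI_{n-1}+z\bD_k)+p\,\bD_k^s$, so $(np)^{-1}\tr\bM_k^{(s)}$ is an explicit polynomial in $n^{-1}\tr\bD_k^{j}$ with $|n^{-1}\tr\bD_k^j|\leqslant 1/v_1^j$, yielding a deterministic bound $1/v_1^s+O(\sqrt{n/p})$. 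This is precisely the ``explicit expression for $\tr(\bM_k^{(1)})/(npb_p)$'' the paper alludes to in Section~\ref{sec:intro} as the feature that \emph{does not} survive the generalization to $\bSigma_p\neq\bI_p$: the identity produces $\bD_k^s\bX_k'\bSigma_p\bX_k$, not $\bD_k^s\bX_k'\bSigma_p^2\bX_k=\bM_k^{(s)}$, unless $\bSigma_p^2=\bSigma_p$. Your Hölder route then leaves the random ratio $(npb_p)^{-1}\tr(\bX_k'\bSigma_p^2\bX_k)$, which (even after the truncation of Section~\ref{sec:truncation}) is not deterministically $\leqslant 1$; its worst-case value is of order $\delta_n^2\sqrt{np}\to\infty$. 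So you are right that as stated the third inequality cannot be a pointwise bound for general $\bSigma_p$; either it must be read on a high-probability event as you propose, or (cleaner for the downstream moment computations) it should be replaced by the $L^q$ bound $\Expe\bigl|1+(npb_p)^{-1}\tr\bM_k^{(s)}\bigr|^q\leqslant K_q$, which follows from the Hölder step plus a moment bound on $\tr(\bX_k'\bSigma_p^2\bX_k)$. Either repair suffices for every use the paper makes of this lemma (all uses are inside expectations), but the ``exactly follows \citet{chen2015clt}'' attribution is imprecise on this particular point, and you did well to flag it.
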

\begin{proof}
	The proof exactly follows \citet{chen2015clt}, so is omitted.
\end{proof}
\begin{lemma}\label{lem:gamma_moment_upper_bound}
	Under the assumption $p\wedge n\to\infty,~p/n\to\infty$ and truncation, for $z\in\mathbb{C}_1$, 
	\begin{align}
		\Expe &|\gamma_{ks}|^2\leqslant \frac{K}{n}, \\
		\Expe&|\gamma_{ks}|^4 \leqslant K\biggl(\frac{1}{n^2}+\frac{n}{p^2}\biggr),\\
		\Expe & |\eta_k|^2 \leqslant \frac{K}{n},\\
		\Expe&|\eta_k|^4 \leqslant K\frac{\delta_n^4}{n} + K\biggl(\frac{1}{n^2}+\frac{n}{p^2}\biggr).
	\end{align}
\end{lemma}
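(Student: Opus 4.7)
The random variables $\gamma_{ks}$ and $\eta_k$ are naturally centred quadratic forms in $\bx_k$, because the matrices $\bM_k^{(s)}$ and $\bSigma_p$ are independent of $\bx_k$. Writing
\[
  \gamma_{ks}=\frac{1}{npb_p}\bigl(\bx_k'\bM_k^{(s)}\bx_k-\tr\bM_k^{(s)}\bigr),
  \qquad
  \eta_k=\frac{1}{\sqrt{npb_p}}\bigl(\bx_k'\bSigma_p\bx_k-\tr\bSigma_p\bigr)-\gamma_{k1},
\]
the plan is to control each of these pieces by conditioning on $\bX_k$ and applying the quadratic-form moment inequalities (Lemmas~\ref{lem:quad-trace} and \ref{lem:PanZhou-trace}), supplemented by spectral and Frobenius bounds on $\bM_k^{(s)}$ that exploit its rank-$(n-1)$ structure.

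For the second-moment bounds, conditioning on $\bX_k$ and applying Lemma~\ref{lem:quad-trace} with $k=2$ yields
\[
  \Expe\bigl[|\bx_k'\bM_k^{(s)}\bx_k-\tr\bM_k^{(s)}|^2\bigm|\bX_k\bigr]\leqslant K\,\tr\bigl(\bM_k^{(s)}\bM_k^{(s)*}\bigr)\leqslant K(n-1)\|\bM_k^{(s)}\|^2.
\]
The identity \eqref{eq:DXSX_k} combined with $\|\bD_k\|\leqslant 1/v_1$ yields $\|\bM_k^{(s)}\|\leqslant Kp$, so $\Expe|\gamma_{ks}|^2\leqslant K\cdot np^2/(npb_p)^2=K/n$. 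The corresponding bound for $\eta_k$ follows by the $c_r$-inequality together with the deterministic analogue $\Expe|\bx_k'\bSigma_p\bx_k-\tr\bSigma_p|^2\leqslant K\tr(\bSigma_p^2)\leqslant Kpb_p$.

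The fourth-moment bounds are more delicate, because a naive application of Lemma~\ref{lem:quad-trace} at $k=4$ brings in a factor of $\nu_8$, which under our truncation can grow like $\delta_n^4 np$. For $\gamma_{ks}$ I therefore split
\[
  \bx_k'\bM_k^{(s)}\bx_k-\tr\bM_k^{(s)}=\underbrace{\sum_{i\neq j}(\bM_k^{(s)})_{ij}X_{ik}X_{jk}}_{=:\,W}+\underbrace{\sum_i(\bM_k^{(s)})_{ii}(X_{ik}^2-1)}_{=:\,V}.
\]
Lemma~\ref{lem:PanZhou-trace} handles $W$ via $\Expe W^4\leqslant K\|\bM_k^{(s)}\|_F^4\leqslant K(np^2)^2$, contributing $K/n^2$. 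For $V$, conditioning on $\bX_k$ and applying Rosenthal's inequality gives
\[
  \Expe\bigl[V^4\bigm|\bX_k\bigr]\leqslant K\Bigl[\bigl(\textstyle\sum_i(\bM_k^{(s)})_{ii}^2\bigr)^2+\nu_8\,\textstyle\sum_i(\bM_k^{(s)})_{ii}^4\Bigr];
\]
I will then bound each diagonal entry by a further conditional use of Lemma~\ref{lem:quad-trace}, treating $(\bM_k^{(s)})_{ii}=(\bSigma_p\be_i)'\bX_k\bD_k^s\bX_k'(\bSigma_p\be_i)$ as a quadratic form in the rows of $\bX_k$, and combine this with the truncation estimate $\nu_8\leqslant K\delta_n^4 np$ to conclude that $\Expe V^4/(npb_p)^4=O(n/p^2)$. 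For $\eta_k$ the $c_r$-inequality reduces the task to bounding $\gamma_{k1}$ plus the extra piece $T:=(npb_p)^{-1/2}(\bx_k'\bSigma_p\bx_k-\tr\bSigma_p)$. Here Lemma~\ref{lem:quad-trace} applies directly with the deterministic matrix $\bSigma_p$:
\[
  \Expe T^4\leqslant\frac{K(pb_p)^2+K\nu_8\tr(\bSigma_p^4)}{(npb_p)^2}\leqslant\frac{K}{n^2}+\frac{K\nu_8}{n^2p}\leqslant\frac{K}{n^2}+\frac{K\delta_n^4}{n},
\]
and the second summand is precisely the source of the $\delta_n^4/n$ term in the statement.

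The main obstacle will be the fourth-moment estimate on $V$: the bound obtained from Lemma~\ref{lem:quad-trace} alone is too weak (it produces a term of order $\delta_n^4 p/n^2$ rather than $n/p^2$), because $\tr((\bM_k^{(s)}\bM_k^{(s)*})^2)$ can only be controlled by $\|\bM_k^{(s)}\|^2\tr(\bM_k^{(s)}\bM_k^{(s)*})\lesssim np^4$. The diagonal/off-diagonal decomposition, combined with a row-by-row quadratic-form analysis of the $(\bM_k^{(s)})_{ii}$'s, is what restores the correct order $n/p^2$; this step is the technical heart of the lemma.
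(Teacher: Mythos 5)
Your proposal follows essentially the same route as the paper: for the fourth moments you split the quadratic form into its off-diagonal part (handled by Lemma~\ref{lem:PanZhou-trace}) and its diagonal part $\sum_j (X_{jk}^2-1)(\bM_k^{(s)})_{jj}$, and you correctly identify that this split is needed because the $\nu_8$ factor in a naive application of Lemma~\ref{lem:quad-trace} is too large after truncation; the paper makes the identical decomposition and then bounds the diagonal part with Burkholder's inequality (Lemma~\ref{lem:Burkholder_ineq}), which, since the summands are independent once $\bX_k$ is fixed, is interchangeable with the conditional Rosenthal inequality you invoke. The second-moment bounds and the estimate $\Expe T^4 \leqslant K/n^2 + K\delta_n^4/n$ for the extra quadratic form in $\eta_k$ also match the paper.

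One small caveat on your plan for the ``technical heart'': you propose a ``further conditional use of Lemma~\ref{lem:quad-trace}'' on $(\bM_k^{(s)})_{ii} = (\bSigma_p\be_i)'\bX_k\bD_k^s\bX_k'(\bSigma_p\be_i)$, viewed as a quadratic form in the rows of $\bX_k$. That lemma requires the kernel matrix to be nonrandom (or at least independent of the vector entries), but here $\bD_k^s$ depends on the same $\bX_k$. The paper instead first applies the operator-norm bound $\|\bD_k^s\|\leqslant v_1^{-s}$, which gives the deterministic estimate $|(\bM_k^{(s)})_{ii}|\leqslant v_1^{-s}\|\bX_k'\bSigma_p\be_i\|^2$, and then controls moments of $\|\bX_k'\bSigma_p\be_i\|^2 = \widetilde{\sigma}_{ii}\|\by\|^2$ directly (see \eqref{eq:ajj_upper_bound}); you should use that reduction rather than Lemma~\ref{lem:quad-trace} at this step. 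With that substitution, your argument closes exactly as in the paper.
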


\begin{proof}
	By Lemma~\ref{lem:quad-trace} and taking $\bB=\bI_p$ in the inequality~\eqref{eq:trace_2M}, we have
	\[
		\Expe |\gamma_{k2}|^2 \leqslant \frac{K}{n^2p^2} \tr \bigl(\bM_k^{(s)}\overline{\bM}_k^{(s)}\bigr)\leqslant \frac{K}{n}.
	\]
	Similarly, we can prove that $\Expe |\eta_k|^2 \leqslant K/n$.

	Now, we prove the bounds for the $4$-th moments of $\gamma_{ks}$ and $\eta_{ks}$.
	Let $\bH$ be $\bM_k^{(s)}$ with all diagonal elements replaced by zeros, then we have
	\begin{equation}\label{eq:sHs_1}
		\Expe |\bx_k'\bH\bx_k|^4\leqslant K(\Expe X_{11}^4)^2 \Expe \bigl(\tr \bH\bH^*\bigr)^2 \leqslant K\Expe\bigl(\tr \bM_k^{(s)}\overline{\bM}_k^{(s)}\bigr)^2 \leqslant Kn^2p^4.
	\end{equation}
	The first inequality follows from Lemma~\ref{lem:PanZhou-trace}, and the last inequality follows from \eqref{eq:trace_2M}.
	Denote $\Expe_j(\cdot)$ be the conditional expectation with respect to $(X_{1k}, X_{2k}, \ldots, X_{jk})$, where $j=1,2,\ldots,p$. Since $\Expe_{j-1}(X_{jk}^2-1)a_{jj}^{(s)}=0$, then
	$(X_{jk}^2-1)a_{jj}^{(s)}$ can be expressed as a martingale difference
	\begin{equation}\label{eq:Xjk_MDS}
		(X_{jk}^2-1)a_{jj}^{(s)} = (\Expe_{j}-\Expe_{j-1})\Bigl[(X_{jk}^2-1)a_{jj}^{(s)}\Bigr].
	\end{equation}
	Applying the Burkholder’s inequality (Lemma~\ref{lem:Burkholder_ineq}) to \eqref{eq:Xjk_MDS} yields that
	\begin{align}
		\Expe \biggl|\sum_{j=1}^p (X_{jk}^2-1)a_{jj}^{(s)}\biggr|^4 & \leqslant K \Expe \biggl( \sum_{j=1}^p\Expe_{j-1}\Bigl|(X_{jk}^2-1)a_{jj}^{(s)}\Bigr|^2 \biggr)^2 + K\Expe\biggl(\sum_{j=1}^p\Bigl|(X_{jk}^2-1)a_{jj}^{(s)}\Bigr|^4 \biggr)^2\nonumber\\
		&\leqslant K\biggl(\sum_{j=1}^p \Expe |X_{11}|^4 \bigl|a_{jj}^{(s)}\bigr|^2\biggr) + K \sum_{j=1}^p \Expe |X_{11}|^8 \Expe\bigl|a_{jj}^{(s)}\bigr|^4\nonumber\\
		&\leqslant Kn^5p^2+Kn^3p^3,\label{eq:sHs_2}
	\end{align}
	where we use the fact that, with $\be_j$ be the $j$-th $p$-dimensional standard basis vector and $\by$ be an $(n-1)$-dimensional random vector with $\Expe y_i=0$ and $\Expe y_i^2=1$,
	\begin{align}
		\Expe\bigl|a_{jj}^{(s)}\bigr|^4 & = \Expe \Bigl| \be_j'\bSigma_p\bX_k\bD_k^{s}\bX_k'\bSigma_p\be_j \Bigr|^4\nonumber\\
		&\leqslant v_1^{-4s}\Expe \bigl\|\be_j'\bSigma_p\bX_k\bigr\|^8
		=v_1^{-4s}\widetilde{\sigma}_{jj}^4\Expe \bigl\|\by\bigr\|^8\leqslant Kn^4 + K n^2p,\label{eq:ajj_upper_bound}
	\end{align}
	where $\widetilde{\sigma}_{jj} = \sum_{\ell} \sigma_{j\ell}^2$ is the $j$-th diagonal elements of $\bSigma_p^2$. By Rayleigh-Ritz Theorem, we know that $\widetilde{\sigma}_{jj}\leqslant \lambda_{\max}(\bSigma_p^2)\leqslant K$. Combining \eqref{eq:sHs_1} and \eqref{eq:sHs_2} yields that
	\begin{align*}
		\Expe |\gamma_{ks}|^4 & \leqslant \frac{1}{(npb_p)^4}\Expe \biggl| \sum_{j=1}^p(X_{jk}^2-1)a_{jj}^{(s)} + \bx_k'\bH\bx_k \biggr|^4\\
		& \leqslant \frac{K}{n^4p^4} \Expe \biggl|\sum_{j=1}^p (X_{jk}^2-1)a_{jj}^{(s)}\biggr|^4 + \frac{K}{n^4p^4}\Expe |\bx_k'\bH\bx_k|^4\\
		&\leqslant K\biggl(\frac{1}{n^2}+\frac{n}{p^2}\biggr).
	\end{align*}
	Moreover, by Lemma~\ref{lem:quad-trace}, we have
	\[
		\Expe |\eta_k|^4 \leqslant \frac{K}{n^2p^2} \Expe \bigl|\bx_k'\bSigma_p\bx_k-pa_p\bigr|^4 + K \Expe \bigl|\gamma_{k1}\bigr|^4\leqslant \frac{K\delta_n^4}{n}+K\biggl(\frac{1}{n^2}+\frac{n}{p^2}\biggr).\qedhere
	\]
\end{proof}

Lemmas \eqref{lem:var_mn_estimate} , \eqref{lem:G11_upper_bbd} and \eqref{lem:G11_tilde_upper_bbd} below are used in Section~\ref{sec:conv_Mn2} to derive the convergence of the non-random part $M_n^{2}(z)$. We will prove them following the strategy in \citet{bao2015asymptotic}.
\begin{lemma}\label{lem:var_mn_estimate} 
	Under the assumption $p\wedge n\to\infty,~p/n\to\infty$, for $z\in\mathbb{C}_1$, we have
	\begin{equation}\label{eq:var_mn_estimate}
		\Var(m_n) = O\biggl(\frac{1}{n^2}\biggr).
	\end{equation}
\end{lemma}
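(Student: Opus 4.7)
The plan is to apply the standard Burkholder/martingale-difference decomposition along the columns of $\bX$ and to show that each martingale increment has $L^2$-norm of order $n^{-1/2}$ rather than the trivial $O(1)$ bound, which immediately yields the desired $O(n^{-2})$ rate after summation.

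Let $\mathcal{F}_k$ denote the $\sigma$-field generated by $\bx_1,\ldots,\bx_k$ and write $\Expe_k$ for the conditional expectation given $\mathcal{F}_k$. Since $\tr\bD_k$ is a function of $\bx_j,\,j\neq k$, and the columns of $\bX$ are independent, $\Expe_k\tr\bD_k=\Expe_{k-1}\tr\bD_k$; hence
\[
n\bigl(m_n(z)-\Expe m_n(z)\bigr)=\sum_{k=1}^n U_k,\qquad U_k:=(\Expe_k-\Expe_{k-1})(\tr\bD-\tr\bD_k).
\]
Plugging in identity \eqref{eq:trace_inv_diff} gives $U_k=-(\Expe_k-\Expe_{k-1})\beta_k(1+\bq_k'\bD_k^2\bq_k)$. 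I would next substitute the two algebraic decompositions
\[
\beta_k=\beta_k^{\mathsf{tr}}+\beta_k\beta_k^{\mathsf{tr}}\eta_k,\qquad 1+\bq_k'\bD_k^2\bq_k=\Bigl(1+\tfrac{1}{npb_p}\tr\bM_k^{(2)}\Bigr)+\gamma_{k2},
\]
the first being \eqref{eq:betak_decom}. The key observation is that $\beta_k^{\mathsf{tr}}(1+(npb_p)^{-1}\tr\bM_k^{(2)})$ is a function of $\{\bx_j:j\neq k\}$ only, and therefore it is annihilated by $(\Expe_k-\Expe_{k-1})$ (this is where I would be careful: although $\bX_k$ involves columns indexed $>k$, those columns are independent of $\bx_k$). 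Consequently only the two ``genuinely random'' cross-terms survive:
\[
U_k=-(\Expe_k-\Expe_{k-1})\Bigl[\beta_k^{\mathsf{tr}}\gamma_{k2}+\beta_k\beta_k^{\mathsf{tr}}\eta_k\bigl(1+\bq_k'\bD_k^2\bq_k\bigr)\Bigr].
\]

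To bound $\Expe|U_k|^2$, I would apply $(\Expe_k-\Expe_{k-1})$ being a contraction on $L^2$ and then use the uniform bounds from Lemma~\ref{lem:betak_upper_bounds}, namely $|\beta_k^{\mathsf{tr}}|\leqslant v_1^{-1}$ and $|\beta_k(1+\bq_k'\bD_k^2\bq_k)|\leqslant v_1^{-1}$. This yields
\[
\Expe|U_k|^2\leqslant \frac{4}{v_1^{2}}\Expe|\gamma_{k2}|^2+\frac{4}{v_1^{4}}\Expe|\eta_k|^2\leqslant \frac{K}{n},
\]
where the last inequality invokes the second-moment estimates of Lemma~\ref{lem:gamma_moment_upper_bound}. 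Since $\{U_k\}$ is a martingale difference sequence, orthogonality gives $\Expe|n(m_n-\Expe m_n)|^2=\sum_k\Expe|U_k|^2\leqslant K$, i.e.\ $\Var(m_n)=O(n^{-2})$.

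The main (very mild) subtlety I anticipate is the careful bookkeeping in Step~2: both $\beta_k^{\mathsf{tr}}$ and $(1+(npb_p)^{-1}\tr\bM_k^{(2)})$ depend on columns indexed \emph{after} $k$, so one must be explicit that independence of the $\bx_j$'s is what makes the ``constant-in-$\bx_k$'' part vanish under $(\Expe_k-\Expe_{k-1})$. Everything else is a direct application of results stated earlier in the paper, so no further machinery is required.
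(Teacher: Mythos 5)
Your proof is correct, but it takes a genuinely different route from the paper's. You exploit the martingale-difference decomposition
$n\bigl(m_n-\Expe m_n\bigr)=\sum_k U_k$ with $U_k=-(\Expe_k-\Expe_{k-1})\beta_k\bigl(1+\bq_k'\bD_k^2\bq_k\bigr)$, then use the algebraic splitting \eqref{eq:betak_decom}--\eqref{eq:Expe_beta_eta_kappa} together with the uniform bounds from Lemma~\ref{lem:betak_upper_bounds} and the second-moment estimates $\Expe|\gamma_{k2}|^2,\,\Expe|\eta_k|^2\leqslant K/n$ from Lemma~\ref{lem:gamma_moment_upper_bound} to get $\Expe|U_k|^2\leqslant K/n$; orthogonality then gives the $O(n^{-2})$ variance bound. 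This is essentially recycling the machinery already set up in Section~\ref{sec:conv_Mn1_proof} for the CLT of $M_n^{(1)}$, and it is short and modular. The paper, by contrast, follows the Bao (2015) strategy: it writes $\Var(m_n)\leqslant\sum_k\Expe|m_n-\Expe_{(k)}m_n|^2$ (Efron--Stein-type, with $\Expe_{(k)}$ the expectation over $\bx_k$ alone), then for a fixed $k$ expresses $\tr\bD-\tr\bD_k=(1+U)/V$ explicitly via the block structure, performs a spectral decomposition of $\bH^{[\ell]}$, and bounds the fluctuations of $U$ and $V$ by concentration of quadratic forms to obtain $\Expe|m_n-\Expe_{(k)}m_n|^2=O(n^{-3})$. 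Both arguments yield the same rate; your route is more economical here, while the paper's explicit $U,V$ and $D_{11}$ computations are reused in Lemmas~\ref{lem:G11_upper_bbd} and \ref{lem:G11_tilde_upper_bbd}, so the longer route is not wasted effort in the paper. One small bookkeeping point that you handled correctly and that is worth making explicit: $\beta_k^{\mathsf{tr}}$ and $\tr\bM_k^{(2)}/(npb_p)$ depend on columns indexed both below and above $k$, so it is independence of the columns (not $\mathcal{F}_{k-1}$-measurability) that makes $(\Expe_k-\Expe_{k-1})$ annihilate them.
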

\begin{proof}
	By the identity $m_n-\Expe m_n = -\sum_{k=1}^n \Bigl(\Expe_{k-1}\,m_n-\Expe_k\, m_n\Bigr)$, we have
	\[
		\Var(m_n) 
		= \sum_{k=1}^n\Expe\,\Bigl|\Expe_{k-1}\,m_n-\Expe_k\, m_n\Bigr|^2 + 2\sum_{1\leqslant s<t\leqslant 1} \Expe\, \Bigl(\Expe_{s-1}\,m_n-\Expe_s\, m_n\Bigr)\Bigl(\Expe_{t-1}\,m_n-\Expe_t\, m_n\Bigr).
	\]
	Since each term in the second sum on the RHS of the above identity is zero, we write
	\begin{align*}
		\Var(m_n) & = \sum_{k=1}^n\Expe\,\Bigl|\Expe_{k-1}\,m_n-\Expe_k\, m_n\Bigr|^2 \\
		& = \sum_{k=1}^n\Expe\,\Bigl|\Expe_{k-1}\,\Bigl(m_n-\Expe_{(k)}\, m_n\Bigr)\Bigr|^2 \\
		& \leqslant \sum_{k=1}^n\Expe\,\Bigl|m_n-\Expe_{(k)}\, m_n\Bigr|^2 ,
	\end{align*}
	where $\Expe_{(k)}(\cdot)$ denotes the expectation w.r.t. the $\sigma$-field generated by $\bx_k$. To prove \eqref{eq:var_mn_estimate}, it suffices to show 
	\begin{equation}\label{eq:mn_var_martigale_element}
		\Expe\,\Bigl|m_n-\Expe_{(k)}\, m_n\Bigr|^2 = O\biggl(\frac{1}{n^3}\biggr),\qquad 1\leqslant k\leqslant n.
	\end{equation}
	Now we deal with the case $k=1$, and the remaining cases are analogous and omitted. 

	Denote $\widetilde{\bY}=(\widetilde{Y}_{ij})_{p\times n} := \bSigma_p^{\nicefrac{1}{2}}\bY$ where $\bY=(npb_p)^{-1/4}\bX$, and let $\widetilde{\by}_k$ be the $k$-th column of $\widetilde{\bY}$. Let $\widetilde{\bY}_k$ be the $p\times(n-1)$ matrix extracted from $\widetilde{\bY}$ by removing $\widetilde{\by}_k$, then the matrix model \eqref{eq:A_def} can be written as
	\[
		\bA_n = \begin{pmatrix}
			\widetilde{\by}_1'\widetilde{\by}_1 -\sqrt{\frac{p}{n}}\frac{a_p}{\sqrt{b_p}} & (\widetilde{\bY}_1'\widetilde{\by}_1)'\\
			\widetilde{\bY}_1'\widetilde{\by}_1 & \widetilde{\bY}_1'\widetilde{\bY}_1 - \sqrt{\frac{p}{n}}\frac{a_p}{\sqrt{b_p}} \bI_{n-1}
		\end{pmatrix}.
	\]
	With notations 
	$
		\bA_k = \widetilde{\bY}_k'\widetilde{\bY}_k - \sqrt{\frac{p}{n}}\frac{a_p}{\sqrt{b_p}} \bI_{n-1}	
	$
	and $\bD_k= (\bA_k-z\bI_n)^{-1}$, we have
	\begin{align*}
		\;&\tr \bD- \tr \bD_1 \\
		=\;&  \frac{1+(\widetilde{\bY}_1'\widetilde{\by}_1)'\Bigl(\widetilde{\bY}_1'\widetilde{\bY}_1 - \sqrt{\frac{p}{n}}\frac{a_p}{\sqrt{b_p}} \bI_{n-1} - z\bI_{n-1}\Bigr)^{-2}(\widetilde{\bY}_1'\widetilde{\by}_1)}{\Bigl(\widetilde{\by}_1'\widetilde{\by}_1 -\sqrt{\frac{p}{n}}\frac{a_p}{\sqrt{b_p}}-z\Bigr) - (\widetilde{\bY}_1'\widetilde{\by}_1)'\Bigl(\widetilde{\bY}_1'\widetilde{\bY}_1 - \sqrt{\frac{p}{n}}\frac{a_p}{\sqrt{b_p}} \bI_{n-1} - z\bI_{n-1}\Bigr)^{-1}(\widetilde{\bY}_1'\widetilde{\by}_1)}\\[0.5em]
		=\;& \frac{1+\widetilde{\by}_1'\Bigl[\widetilde{\bY}_1\widetilde{\bY}_1' \Bigl(\widetilde{\bY}_1\widetilde{\bY}_1'- \sqrt{\frac{p}{n}}\frac{a_p}{\sqrt{b_p}} \bI_{n-1} - z\bI_{n-1}\Bigr)^{-2}\Bigr]\widetilde{\by}_1}{\Bigl(\widetilde{\by}_1'\widetilde{\by}_1 -\sqrt{\frac{p}{n}}\frac{a_p}{\sqrt{b_p}}-z\Bigr) - \widetilde{\by}_1'\Bigl[\widetilde{\bY}_1\widetilde{\bY}_1' \Bigl(\widetilde{\bY}_1\widetilde{\bY}_1'- \sqrt{\frac{p}{n}}\frac{a_p}{\sqrt{b_p}} \bI_{n-1} - z\bI_{n-1}\Bigr)^{-1}\Bigr]\widetilde{\by}_1}\\[0.5em]
		=:\;& \frac{1+U}{V},
	\end{align*}
	where the second ``='' comes from the identity
	\[
		\bB(\bA\bB-\alpha \bI)^{-n}\bA = \bB\bA(\bB\bA-\alpha\bI)^{-n}.
	\]
	Moreover, with notations $U$ and $V$, we can write $D_{11}=1/V$ and
	\begin{align*}
		\Expe\,\Bigl|m_n-\Expe_{(1)}\, m_n\Bigr|^2 & = \frac{1}{n^2} \Expe\,\biggl| (\tr \bD -\tr\bD_1) - \Expe_{(1)}(\tr \bD -\tr\bD_1)\biggr|^2\qquad \bigl(\because \Expe_{(1)}\tr\bD_1 = \tr\bD_1\bigr)\\
		&= \frac{1}{n^2}\Expe\,\biggl|\frac{1+U}{V} - \Expe_{(1)} \biggl(\frac{1+U}{V}\biggr)\biggr|^2\\
		& \leqslant \frac{2}{n^2} \biggl\{\Expe\,\biggl|\frac{1}{V} - \Expe_{(1)} \biggl(\frac{1}{V}\biggr)\biggr|^2 + \Expe\,\biggl|\frac{U}{V} - \Expe_{(1)} \biggl(\frac{U}{V}\biggr)\biggr|^2\biggr\}.
	\end{align*}
	By the same arguments as those on Page 196 of \citet{bao2015asymptotic}, it is sufficient to prove that
	\begin{equation}\label{eq:U_V_esti}
		\Expe_{(1)}\,|U-\Expe_{(1)}\, U|^2 = O\biggl(\frac{1}{n}\biggr),\qquad 	\Expe_{(1)}\,|V-\Expe_{(1)}\, V|^2 = O\biggl(\frac{1}{n}\biggr).
	\end{equation}
	For simplicity of presentation, we define
	\[
		\bH^{[\ell]}  = \Bigl(H_{jk}^{[\ell]}\Bigr)_{p\times p}:= \widetilde{\bY}_1\widetilde{\bY}_1' \Bigl(\widetilde{\bY}_1\widetilde{\bY}_1'- \sqrt{\frac{p}{n}}\frac{a_p}{\sqrt{b_p}} \bI_{n-1} - z\bI_{n-1}\Bigr)^{-\ell},\qquad \ell=1, 2.
	\]
	Then, we write
	\begin{align}
		U-\Expe_{(1)}\, U & = \sum_{i\neq j} H_{ij}^{[2]} \widetilde{Y}_{i1}\widetilde{Y}_{j1} + \sum_{i=1}^p H_{ii}^{[2]} \Bigl( \widetilde{Y}_{i1}^2 - \Expe\, \widetilde{Y}_{i1}^2\Bigr),\label{eq:U_exp}\\
		V-\Expe_{(1)}\, V & = \widetilde{\by}_1'\widetilde{\by}_1 - \sqrt{\frac{p}{n}}\frac{a_p}{\sqrt{b_p}} - \sum_{i\neq j} H_{ij}^{[1]} \widetilde{Y}_{i1}\widetilde{Y}_{j1} -\sum_{i=1}^p H_{ii}^{[1]} \Bigl( \widetilde{Y}_{i1}^2 - \Expe\, \widetilde{Y}_{i1}^2\Bigr).\label{eq:V_exp}
	\end{align}
	Now we proceed to prove \eqref{eq:U_V_esti}. From \eqref{eq:V_exp}, we have
	\begin{align}
		&\;\Expe_{(1)}|V-\Expe_{(1)}\, V|^2\nonumber\\
		\leqslant&\; K\biggl\{\Expe_{(1)} \biggl|\widetilde{\by}_1'\widetilde{\by}_1 - \sqrt{\frac{p}{n}}\frac{a_p}{\sqrt{b_p}}\biggr|^2 
			+ \Expe_{(1)}\,\biggl| \sum_{i\neq j} H_{ij}^{[1]} \widetilde{Y}_{i1}\widetilde{Y}_{j1}\biggr|^2 
			+ \Expe_{(1)}\,\biggl|\sum_{i=1}^p H_{ii}^{[1]} \Bigl( \widetilde{Y}_{i1}^2 - \Expe\, \widetilde{Y}_{i1}^2\Bigr)\biggr|^2\biggr\}\nonumber\\
		=&\; K\biggl\{\Expe\,\biggl|\widetilde{\by}_1'\widetilde{\by}_1 - \sqrt{\frac{p}{n}}\frac{a_p}{\sqrt{b_p}}\biggr|^2 
			+ \sum_{i\neq j} \Bigl|H_{ij}^{[1]}\Bigr|^2 \Expe\,\Bigl(\widetilde{Y}_{i1}^2\widetilde{Y}_{j1}^2\Bigr)
			+ \sum_{i=1}^p \Bigl|H_{ii}^{[1]}\Bigr|^2 \Expe\,\Bigl( \widetilde{Y}_{i1}^2 - \Expe\, \widetilde{Y}_{i1}^2\Bigr)^2\biggr\}.\label{eq:U_esti_decom}
	\end{align}
	After some straightforward calculations, we obtain some estimates:
	\begin{equation}\label{eq:ytilde_moment_esti}
		\Expe\,\widetilde{Y}_{i1}^2 = O\biggl(\frac{1}{\sqrt{np}}\biggr),
		\qquad 
		\Expe\,\widetilde{Y}_{i1}^4 =O\biggl(\frac{1}{np}\biggr),
		\qquad 
		\Expe\,\biggl(\widetilde{\by}_1'\widetilde{\by}_1 - \sqrt{\frac{p}{n}}\frac{a_p}{\sqrt{b_p}}\biggr)^2 = O\biggl(\frac{1}{n}\biggr).
	\end{equation}
	Combining \eqref{eq:ytilde_moment_esti} and \eqref{eq:U_esti_decom}, we obtain 
	\begin{equation}
		\Expe_{(1)}\,\bigl|V-\Expe_{(1)}\, V\bigr|^2 \leqslant \frac{K}{n}+\frac{K}{np} \tr\Bigl|\bH^{[1]}\Bigr|^2.
	\end{equation}
	Similarly, we can show that
	\begin{equation}
		\Expe_{(1)}\,\bigl|U-\Expe_{(1)}\, U\bigr|^2 \leqslant \frac{K}{np} \tr\Bigl|\bH^{[2]}\Bigr|^2.
	\end{equation}
	To get \eqref{eq:U_V_esti}, it suffices to show that
	\[
		\tr\bigl|\bH^{[\ell]}\bigr|^2 = O(p),\qquad \ell=1,2.	
	\]
	Let $\{\mu_i^{(k)},\, i=1,2,\ldots,n-1\}$ be eigenvalues of $\bA_k$, then the eigenvalues of $\bH^{[\ell]}\, (\ell=1, 2)$ are 
	\[
		\frac{\Bigl(\mu_{i}^{(1)}+a_p\sqrt{p/(nb_p)}\Bigr)^2}{\bigl|\mu_i^{(1)}-z\bigr|^{2\ell}}, \qquad i = 1,2,\ldots,n-1,	
	\]
	and a zero eigenvalue with algebraic multiplicity $(p-n+1)$. Using the fact $\mu_i^{(1)} \geqslant -a_p\sqrt{p/(nb_p)}$, we conclude that
	\[
		\tr\bigl|\bH^{[\ell]}\bigr|^2 = \sum_{i=1}^{n-1} \frac{\Bigl(\mu_{i}^{(1)}+a_p\sqrt{p/(nb_p)}\Bigr)^2}{\bigl|\mu_i^{(1)}-z\bigr|^{2\ell}}= O(p),\qquad \ell=1,2.	\qedhere
	\]
\end{proof}

\begin{lemma}\label{lem:G11_upper_bbd}
	Under the assumption $p\wedge n\to\infty,~p/n\to\infty$, for $z\in\mathbb{C}_1$ and $1\leqslant \ell \leqslant n$, 
	\begin{equation}
		\Expe \biggl|D_{\ell\ell}+\frac{1}{z+\Expe\, m_n}\biggr|^2 = O\biggl(\frac{1}{n}\biggr) + O\biggl(\frac{n}{p}\biggr).
	\end{equation}
\end{lemma}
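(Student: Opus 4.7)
The plan is to identify $D_{\ell\ell}$ via the Schur complement and then insert a carefully chosen deterministic approximation. Applying the Schur complement to the block representation of $\bD^{-1} = \bA - z\bI_n$ at index $\ell$ gives
\[
	D_{\ell\ell} \;=\; \frac{1}{a_{\ell\ell}^{\mathsf{diag}} - \bq_\ell'\bD_\ell \bq_\ell} \;=\; -\beta_\ell,
\]
so the problem reduces to showing that $\beta_\ell$ is close to $1/(z + \Expe m_n)$. I would insert the intermediate quantity $\beta_\ell^{\mathsf{tr}} = 1/(z + (npb_p)^{-1}\tr \bM_\ell^{(1)})$ and decompose
\[
	D_{\ell\ell} + \frac{1}{z + \Expe m_n} \;=\; -(\beta_\ell - \beta_\ell^{\mathsf{tr}}) \;-\; \Bigl(\beta_\ell^{\mathsf{tr}} - \frac{1}{z + \Expe m_n}\Bigr),
\]
so that the two terms carry genuinely different types of error (a quadratic fluctuation of $\bx_\ell$, and a concentration of a normalized trace), each handled by a lemma already in hand.

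For the first (random) term, the identity \eqref{eq:betak_decom} yields $\beta_\ell - \beta_\ell^{\mathsf{tr}} = \beta_\ell \beta_\ell^{\mathsf{tr}} \eta_\ell$, so Lemma \ref{lem:betak_upper_bounds} (which bounds both factors by $1/v_1$) combined with $\Expe |\eta_\ell|^2 \leqslant K/n$ from Lemma \ref{lem:gamma_moment_upper_bound} gives $\Expe|\beta_\ell - \beta_\ell^{\mathsf{tr}}|^2 \leqslant K/n$. For the second (deterministic) term, the bound $|1/(z+\Expe m_n)| \leqslant 1/v_1$—which holds because $\Im(z+\Expe m_n) \geqslant \Im z \geqslant v_1$—together with $|\beta_\ell^{\mathsf{tr}}| \leqslant 1/v_1$ produces
\[
	\Bigl|\beta_\ell^{\mathsf{tr}} - \frac{1}{z + \Expe m_n}\Bigr| \;\leqslant\; \frac{1}{v_1^2}\,\Bigl|\tfrac{1}{npb_p}\tr \bM_\ell^{(1)} - \Expe m_n\Bigr|.
\]
I would then split the trace difference into three pieces,
\[
	\tfrac{1}{npb_p}\tr \bM_\ell^{(1)} - \Expe m_n \;=\; \bigl[\tfrac{1}{npb_p}\tr \bM_\ell^{(1)} - \tfrac{1}{n}\tr \bD_\ell\bigr] + \bigl[\tfrac{1}{n}\tr \bD_\ell - m_n\bigr] + (m_n - \Expe m_n),
\]
controlled in $L^2$ by $Kn/p$ via \eqref{eq:Diff-trMk-trDk}, by $K/n^2$ deterministically via \eqref{eq:D_Dk_ESD_diff}, and by $K/n^2$ via Lemma \ref{lem:var_mn_estimate}, respectively. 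Assembling the two steps through $(a+b)^2\leqslant 2a^2+2b^2$ yields the claimed bound $\Expe|D_{\ell\ell} + 1/(z+\Expe m_n)|^2 = O(1/n) + O(n/p)$.

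The main subtlety lies in anchoring $\beta_\ell^{\mathsf{tr}}$ against $\Expe m_n$ rather than against the limit $m(z)$. The easier-looking strategy of replacing $\Expe m_n$ by $m(z)$ would let Lemma \ref{lem:Mk_limit} do all the work in one line, but it would then require independently controlling $\Expe m_n - m(z)$, which is exactly what Lemma \ref{lem:Mn2_mean} produces—and that lemma itself relies on the present one through \eqref{eq:Gkk_esti}, creating a circular dependence. The telescoping above avoids this by never leaving the \emph{random} quantity $m_n$ and its resolvent companion $\tr(\bD_\ell)/n$, and using the variance bound of Lemma \ref{lem:var_mn_estimate} to absorb the centering error $m_n - \Expe m_n$ at the sharper rate $O(n^{-2})$, which is the only place where the $1/n^2$ term on the right-hand side of Lemma \ref{lem:Mk_limit} could otherwise be degraded.
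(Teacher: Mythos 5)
Your proof is correct, and it takes a genuinely different route from the paper's. The paper proves the lemma by rewriting $D_{11}$ through the $p\times p$ companion quadratic form $\widetilde{\by}_1'\bH^{[1]}\widetilde{\by}_1$, applying a spectral decomposition of $\bH^{[1]}$ into eigenmodes $\mu_i^{(1)}$ and weights $w_i^{(1)}$, packing the error into a single quantity $h_1$, and then controlling $\Expe|D_{11}+(z+\Expe m_n)^{-1}|^2\leqslant K(\mathrm{I+II+III+IV})$. You instead observe that $D_{\ell\ell}=-\beta_\ell$ by the Schur complement and reuse the $\beta_\ell,\beta_\ell^{\mathsf{tr}},\eta_\ell$ apparatus already set up for the random part $M_n^{(1)}$ in Section \ref{sec:conv_Mn1_proof}, splitting off $\beta_\ell-\beta_\ell^{\mathsf{tr}}=\beta_\ell\beta_\ell^{\mathsf{tr}}\eta_\ell$ and then comparing $\beta_\ell^{\mathsf{tr}}$ to $1/(z+\Expe m_n)$ through a three-way telescope $\tfrac{1}{npb_p}\tr\bM_\ell^{(1)}\to\tfrac{1}{n}\tr\bD_\ell\to m_n\to\Expe m_n$. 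All four ingredients you invoke are in scope: \eqref{eq:Diff-trMk-trDk}, \eqref{eq:D_Dk_ESD_diff}, Lemma \ref{lem:var_mn_estimate}, and the deterministic bounds $|\beta_\ell|,|\beta_\ell^{\mathsf{tr}}|,|1/(z+\Expe m_n)|\leqslant 1/v_1$; your worry about circularity via $\Expe m_n-m$ and Lemma \ref{lem:Mn2_mean} is well founded and your telescope deliberately avoids it.

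The two arguments are ultimately using the same underlying estimates — both must pass through $\tr\bD_\ell/n$ versus $m_n$ (the paper's term II) and $\Var(m_n)=O(n^{-2})$ (the paper's term I) — so they have the same strength and yield the same $O(1/n)+O(n/p)$ rate. What your packaging buys is economy: it avoids introducing the fresh spectral-decomposition notation $\mu_i^{(1)},\bv_i^{(1)},w_i^{(1)},h_1$ for this lemma. What the paper's packaging buys is uniformity: the same spectral-decomposition template carries over almost verbatim to Lemma \ref{lem:G11_tilde_upper_bbd} for the $p\times p$ resolvent $\widetilde{\bD}$, which does not admit the same direct Schur-complement reduction to $\beta_k,\eta_k$ (the roles of $p$ and $n$ are swapped and the normalizing prefactor $\sqrt{p/n}\,a_p/\sqrt{b_p}$ must be tracked explicitly). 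If you intend your route as a replacement, Lemma \ref{lem:G11_tilde_upper_bbd} would still need the paper's treatment or an analogous companion construction.
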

\begin{proof}
	We only provide the estimation of $D_{11}$, since others are analogous. Note that
	\[
		D_{11}=V^{-1}=\Bigl(\widetilde{\by}_1'\widetilde{\by}_1-\sqrt{\frac{p}{n}}\frac{a_p}{\sqrt{b_p}}-z-\widetilde{\by}_1'\bH^{[1]}\widetilde{\by}_1\Bigr)^{-1}.	
	\]

	Let $\bv_i^{(1)}=\bigl(v_{i1}^{(1)},\ldots,v_{ip}^{(1)}\bigr),\, (i=1,2,\ldots,n-1)$ be the unit eigenvector of $\bA_1$ corresponding to the eigenvalue $\mu_i^{(1)}$, and let
	\[
		w_i^{(1)} = \frac{\sqrt{np}a_p}{\sqrt{b_p}} \bigl|\widetilde{\by}_1'\bv_i^{(1)}\bigr|^2.
	\]
	Applying spectral decomposition to $\bH^{[1]}$ yields 
	\begin{align}
		D_{11} & = \Biggl[\widetilde{\by}_1'\widetilde{\by}_1-\sqrt{\frac{p}{n}}\frac{a_p}{\sqrt{b_p}}-z-\sum_{i=1}^{n-1}\biggl(\frac{\mu_i^{(1)}+\sqrt{\tfrac{p}{n}}\tfrac{a_p}{\sqrt{b_p}}}{\mu_i^{(1)}-z}\biggr)\bigl|\widetilde{\by}_1'\bv_i^{(1)}\bigr|^2\Biggr]^{-1}	\nonumber\\[0.5em]
		& = \Biggl[\widetilde{\by}_1'\widetilde{\by}_1-\sqrt{\frac{p}{n}}\frac{a_p}{\sqrt{b_p}}-z-\frac{1}{\sqrt{np}}\frac{\sqrt{b_p}}{a_p}\sum_{i=1}^{n-1}\frac{\Bigl(\mu_i^{(1)}+\sqrt{\tfrac{p}{n}}\tfrac{a_p}{\sqrt{b_p}}\Bigr)w_i^{(1)}}{\mu_i^{(1)}-z}\Biggr]^{-1}\nonumber\\
		&=: \Bigl(-z-m_n(z)+h_1\Bigr)^{-1},\label{eq:G11_inv}
	\end{align}
	where 
	\[
		h_1 = \Biggl[m_n-\frac{1}{n}\sum_{i=1}^{n-1}\Biggl(\frac{\sqrt{\tfrac{n}{p}}\tfrac{\sqrt{b_p}}{a_p}\mu_i^{(1)}+1}{\mu_i^{(1)}-z}\Biggr)\Biggr]
		+ \Biggl[\widetilde{\by}_1'\widetilde{\by}_1-\sqrt{\frac{p}{n}}\frac{a_p}{\sqrt{b_p}}-\frac{1}{n}\sum_{i=1}^{n-1}\Biggl(\frac{\sqrt{\tfrac{n}{p}}\tfrac{\sqrt{b_p}}{a_p}\mu_i^{(1)}+1}{\mu_i^{(1)}-z}\Biggr)\bigl(w_i^{(1)}-1\bigr)\Biggr].	
	\]
	By \eqref{eq:G11_inv}, we obtain 
	\[
		\biggl|D_{11} + \frac{1}{z+\Expe\, m_n}\biggr|  = \biggl|\frac{\Expe\, m_n -m_n + h_1}{\bigl(-z-m_n+h_1\bigr)\bigl(z+\Expe\, m_n\bigr)} \biggr|\leqslant K\Bigl| (\Expe\, m_n -m_n) + h_1\Bigr|,
	\]
	which implies that
	\begin{align}
		\Expe\, \biggl|&D_{11} + \frac{1}{z+\Expe\, m_n}\biggr|^2  \nonumber\\
		& \leqslant K \Biggl\{ \Expe\, \bigl|\Expe\, m_n-m_n\bigr|^2 + \Expe\, \Biggl|m_n - \Biggl(\frac{1}{n}\sum_{i=1}^{n-1}\frac{1}{\mu_i^{(1)}-z}\Biggr)-\sqrt{\frac{n}{p}}\frac{\sqrt{b_p}}{a_p}\Biggl(\frac{1}{n}\sum_{i=1}^{n-1}\frac{\mu_i^{(1)}}{\mu_i^{(1)}-z}\Biggr)\Biggr|^2\nonumber \\[0.5em]
		&\qquad\qquad + \Expe\, \Biggl|\frac{1}{n}\sum_{i=1}^{n-1}\Biggl(\frac{\sqrt{\tfrac{n}{p}}\tfrac{\sqrt{b_p}}{a_p}\mu_i^{(1)}+1}{\mu_i^{(1)}-z}\Biggr)\bigl(w_i^{(1)}-1\bigr)\Biggr|^2 + \Expe\, \biggl|\widetilde{\by}_1'\widetilde{\by}_1-\sqrt{\frac{p}{n}}\frac{a_p}{\sqrt{b_p}}\biggr|^2\Biggr\}\nonumber\\
		& =: K (\mathrm{I+II+III+IV})\nonumber\\
		& = O\biggl(\frac{1}{n^2}\biggr)  + \biggl[O\biggl(\frac{1}{n^2}\biggr) + O\biggl(\frac{n}{p}\biggr)\biggr] + O\biggl(\frac{1}{n}\biggr) + O\biggl(\frac{1}{n}\biggr)\label{eq:G11_upper_bbd_0}\\
		& = O\biggl(\frac{1}{n}\biggr) + O\biggl(\frac{n}{p}\biggr).\label{eq:G11_upper_bbd}
	\end{align}
	Below we explain \eqref{eq:G11_upper_bbd_0} in more detail:
	\begin{enumerate}
		\item[(I)] Follows from Lemma~\ref{lem:var_mn_estimate}.
		\item[(II)] Use the fact 
		\[
			\sqrt{\frac{n}{p}}\frac{\sqrt{b_p}}{a_p}\Biggl|\frac{1}{n}\sum_{i=1}^{n-1}\frac{\mu_i^{(1)}}{\mu_i^{(1)}-z}\Biggr| = O\biggl(\sqrt{\frac{n}{p}}\biggr)
		\]
		and
		\[
			\Biggl|m_n - \frac{1}{n}\sum_{i=1}^{n-1}\frac{1}{\mu_i^{(1)}-z}\Biggr| = \biggl|\frac{1}{n}\tr \bD -\frac{1}{n}\tr \bD_k\biggr| \overset{\eqref{eq:D_Dk_ESD_diff}}{=} O\biggl(\frac{1}{n}\biggr).
		\] 
		\item[(III)] Use \eqref{eq:ytilde_moment_esti}.
		\item[(IV)] Analogous to the estimation of $\Expe\,\bigl|V-\Expe_{(1)} V\bigr|^2$.\qedhere
	\end{enumerate}
\end{proof}
The following lemma is used to prove \eqref{eq:stein_esti_E}.
Define 
\[
	\widetilde{\bD}=(\widetilde{D}_{ij})_{p\times p} = \biggl( \bSigma_p^{\nicefrac{1}{2}}\bY\bY'\bSigma_p^{\nicefrac{1}{2}}-\sqrt{\frac{p}{n}}\frac{a_p}{\sqrt{b_p}\bI_p}-z\bI_p \biggr)^{-1}.
\]

\begin{lemma}\label{lem:G11_tilde_upper_bbd}
	Under the assumption $p\wedge n\to\infty,~p/n\to\infty$, for $z\in\mathbb{C}_1$ and $1\leqslant \ell \leqslant p$,
	\begin{equation}\label{eq:G11_tilde_upper_bbd}
		\Expe\,\biggl| \widetilde{D}_{\ell\ell}+\frac{1}{a_p\sqrt{p/(nb_p)}+z+\Expe\, m_n} \biggr|^2  = O\biggl(\biggl(\frac{n}{p}\biggr)^3\biggr) + O\biggl(\frac{n}{p^2}\biggr).
	\end{equation}
\end{lemma}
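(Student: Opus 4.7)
The plan is to mirror the leave-one-out argument used for Lemma~\ref{lem:G11_upper_bbd}, but applied to the $p\times p$ companion resolvent $\widetilde{\bD}$; the overall strategy parallels \citet{bao2015asymptotic}.

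The first step is to derive an analogue of the scalar representation $D_{11}=1/V$ employed in the proof of Lemma~\ref{lem:G11_upper_bbd}. Setting $c:=a_p\sqrt{p/(nb_p)}$ and $\bW:=\bSigma_p^{1/2}\bY$, Woodbury's identity applied to $\widetilde{\bD}^{-1}=\bW\bW^\top-(c+z)\bI_p$ produces
\[
\widetilde{\bD}=-\frac{1}{c+z}\bI_p+\frac{1}{c+z}\bW\bD\bW^\top.
\]
Extracting the $(\ell,\ell)$ entry and placing the target $-1/(c+z+\Expe m_n)$ over a common denominator gives the clean rewriting
\[
\widetilde{D}_{\ell\ell}+\frac{1}{c+z+\Expe m_n}=\frac{1}{c+z}\Bigl(\bw_\ell^\top\bD\bw_\ell-\frac{\Expe m_n}{c+z+\Expe m_n}\Bigr),
\]
where $\bw_\ell$ is the $\ell$-th row of $\bW$ regarded as a column vector. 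Since $|c+z|^{-2}=O(n/p)$, the target $L^2$ estimate reduces to showing $\Expe|\bw_\ell^\top\bD\bw_\ell-\Expe m_n/(c+z+\Expe m_n)|^2=O((n/p)^2)+O(1/p)$.

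The second step is to peel off the $\ell$-th row of $\bW$ via Sherman--Morrison: with $\bU_{(\ell)}:=(\bW_{(\ell)}^\top\bW_{(\ell)}-(c+z)\bI_n)^{-1}$ one has
\[
\bw_\ell^\top\bD\bw_\ell=\frac{\bw_\ell^\top\bU_{(\ell)}\bw_\ell}{1+\bw_\ell^\top\bU_{(\ell)}\bw_\ell}.
\]
I would then combine (i)~quadratic-form concentration of $\bw_\ell^\top\bU_{(\ell)}\bw_\ell$ around $(\sigma_{\ell\ell}/\sqrt{npb_p})\tr\bU_{(\ell)}$ via Lemma~\ref{lem:quad-trace}; (ii)~a rank-one interlacing bound $|\tr\bD-\tr\bU_{(\ell)}|=O(1/v_1)$ in the spirit of \eqref{eq:D_Dk_ESD_diff}; and (iii)~the variance estimate $\Var(m_n)=O(n^{-2})$ from Lemma~\ref{lem:var_mn_estimate} to replace $(1/n)\tr\bU_{(\ell)}$ by $\Expe m_n$ up to the required error. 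Together with a Taylor expansion of $1/(c+z+\Expe m_n)$ in the small parameter $\Expe m_n/(c+z)=O(\sqrt{n/p})$, this yields the claim.

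The main obstacle is carrying out Step~2 in full generality. Unlike the column leave-out in Lemma~\ref{lem:G11_upper_bbd}, where the removed column of $\widetilde{\bY}$ is genuinely independent of the remainder, here removing a row of $\bW=\bSigma_p^{1/2}\bY$ does not decouple cleanly: off-diagonal entries of $\bSigma_p^{1/2}$ make $\bw_\ell$ and $\bW_{(\ell)}$ share information from all rows of $\bY$. To break this coupling, I would split $\bSigma_p^{1/2}=\bSigma_p^{1/2}\bP+\bSigma_p^{1/2}\bP^\perp$ along the orthogonal projection $\bP=\bs_\ell\bs_\ell^\top/\|\bs_\ell\|^2$ onto the span of the $\ell$-th row $\bs_\ell^\top$ of $\bSigma_p^{1/2}$; this makes $\bw_\ell$ orthogonal to $\bSigma_p^{1/2}\bP^\perp\bY$ (hence conditionally independent via standard Gaussian-type uncorrelatedness arguments after truncation), and identifies $\bU_{(\ell)}$ as an $O(1)$-rank perturbation of a matrix that is independent of $\bw_\ell$. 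A second application of Sherman--Morrison absorbs this perturbation, and the standard estimates from Lemma~\ref{lem:quad-trace} and Lemma~\ref{lem:PanZhou-trace} close the argument. Tracking the errors through this decomposition so as to stay within the budget $O((n/p)^2)+O(1/p)$ at the quadratic-form level is the most delicate part of the proof.
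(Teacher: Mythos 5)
Your step~1 is correct and is a genuine simplification of what the paper actually does. The paper writes $\widetilde{D}_{\ell\ell}$ via a block Schur complement as the reciprocal of a scalar $-(c+z)-m_n+\widetilde h_\ell$ whose modulus is of order $\sqrt{p/n}$, and then must work around this large (but fluctuating) denominator by introducing an event $\Omega_0=\{|\Expe m_n-m_n+\widetilde h_\ell|\geqslant \tfrac12\sqrt{p/n}\}$, splitting $\Expe|\widetilde D_{\ell\ell}+T|^2$ over $\Omega_0$ and $\Omega_0^{\mathsf c}$, and proving a lower bound $|(c+z+\Expe m_n)(c+z+m_n-\widetilde h_\ell)|\geqslant K\sqrt{p/n}$ on the full space. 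Your Woodbury identity $\widetilde D_{\ell\ell}+\frac{1}{c+z+\Expe m_n}=\frac{1}{c+z}\bigl(\bw_\ell^\top\bD\bw_\ell-\frac{\Expe m_n}{c+z+\Expe m_n}\bigr)$ (which is algebraically equivalent to the paper's Schur complement, as one checks via $\bw_\ell^\top\bD\bw_\ell=\bw_\ell^\top\bD_{(\ell)}\bw_\ell/(1+\bw_\ell^\top\bD_{(\ell)}\bw_\ell)$) extracts the deterministic $|c+z|^{-2}=O(n/p)$ prefactor up front and so avoids the conditioning trick entirely. That is an attractive streamlining.

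Step~2, however, does not close. You are right to flag that removing the $\ell$-th row of $\bW=\bSigma_p^{1/2}\bY$ does not decouple $\bw_\ell$ from $\bU_{(\ell)}$ for non-diagonal $\bSigma_p$ — this is indeed the place where the reasoning in \citet{bao2015asymptotic} (written for $\bSigma_p=\bI_p$, where the rows of $\bW=\bY$ are i.i.d.) cannot be carried over verbatim, and the paper itself is quite terse there. But your proposed repair is not valid under the paper's assumptions: the projection $\bP=\bs_\ell\bs_\ell^\top/\|\bs_\ell\|^2$ only makes $\bw_\ell$ and $\bP^\perp\bY$ \emph{uncorrelated}; under the moment-only hypothesis (A) there is no Gaussian structure, so uncorrelatedness does not yield (conditional) independence after truncation, and without independence one cannot invoke Lemma~\ref{lem:quad-trace} to concentrate $\bw_\ell^\top\bU_{(\ell)}\bw_\ell$ around $(\sigma_{\ell\ell}/\sqrt{npb_p})\,\tr\bU_{(\ell)}$. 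Separately, your sketch takes for granted that the leading-order mean matches the target, but each entry of $\bw_\ell$ has variance $(\bSigma_p)_{\ell\ell}/\sqrt{npb_p}$, so the naive trace heuristic gives $\Expe[\bw_\ell^\top\bD_{(\ell)}\bw_\ell]\approx (\bSigma_p)_{\ell\ell}\sqrt{n/(pb_p)}\,\Expe m_n$, whereas $\frac{\Expe m_n}{c+z+\Expe m_n}\approx \frac{\sqrt{b_p}}{a_p}\sqrt{n/p}\,\Expe m_n$; these two first-order quantities differ by an $\ell$-dependent $O(\sqrt{n/p})$ term whenever $(\bSigma_p)_{\ell\ell}\neq b_p/a_p$, and your sketch does not explain how that potential bias is absorbed within the budget $O((n/p)^2)+O(1/p)$. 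Both points need to be resolved before the argument establishes the lemma.
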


\begin{proof}
	We only provide the estimation of $\widetilde{D}_{11}$, since the others are analogous. 
	
	Let $\widetilde{\br}_k'$ be $k$-th row of $\widetilde{\bY}$ and let $\bB_k$ be the $(p-1)\times n$ matrix extracted from $\widetilde{\bY}$ by deleting $\widetilde{\br}_k'$.

	With notations defined above, we can write
	\[
		\widetilde{\bA}=\begin{pmatrix}
			\widetilde{\br}_1'\widetilde{\br}_1-\sqrt{\tfrac{p}{n}}\tfrac{a_p}{\sqrt{b_p}} & \widetilde{\br}_1'\bB_1'\\
			\bB_1\widetilde{\br}_1 & \bB_1\bB_1'-\sqrt{\tfrac{p}{n}}\tfrac{a_p}{\sqrt{b_p}}\bI_{p-1}
		\end{pmatrix}.
	\]
	Denote \[
		\widetilde{\bA}_k = \bB_k'\bB_k - \sqrt{\frac{p}{n}}\frac{a_p}{\sqrt{b_p}}\bI_n,
	\]
	and \[
		W= \widetilde{\br}_1'\bB_1'\bB_1\biggl(\bB_1'\bB_1-\sqrt{\tfrac{p}{n}}\tfrac{a_p}{\sqrt{b_p}}\bI_{n}\biggr)^{-1}\widetilde{\br}_1.
	\]
	Let $\{\widetilde{\mu}_i^{(k)},\, i=1,2,\ldots,n-1\}$ be the eigenvalues of $\widetilde{\bA}_k$, and let $\widetilde{\bv}_i^{(1)}=\bigl(\widetilde{v}_{i1}^{(1)},\ldots,\widetilde{v}_{ip}^{(1)}\bigr),\, (i=1,2,\ldots,n)$ be the unit eigenvector of $\widetilde{\bA}_1$ corresponding to the eigenvalue $\widetilde{\mu}_i^{(1)}$, and set
	\[
		\widetilde{w}_i^{(1)} = \frac{\sqrt{np}a_p}{\sqrt{b_p}} \bigl|\widetilde{\br}_1'\widetilde{\bv}_i^{(1)}\bigr|^2,
	\]
	then we have
	\[
		W=\frac{1}{n}\sum_{i=1}^{n}\Biggl(\frac{\sqrt{\tfrac{n}{p}}\tfrac{\sqrt{b_p}}{a_p}\widetilde{\mu}_i^{(1)}+1}{\widetilde{\mu}_i^{(1)}-z}\Biggr)\widetilde{w}_i^{(1)},
	\]
	and
	\[
		\widetilde{D}_{11} = \biggl(\widetilde{\br}_1'\widetilde{\br}_1-\sqrt{\tfrac{p}{n}}\tfrac{a_p}{\sqrt{b_p}}-z-W\biggr)^{-1} 
		=: \biggl(-\sqrt{\tfrac{p}{n}}\tfrac{a_p}{\sqrt{b_p}}-z-m_n+\widetilde{h}_1\biggr)^{-1},
	\]
	where
	\begin{align}
		\widetilde{h}_1 & = \widetilde{\br}_1'\widetilde{\br}_1+m_n-W \nonumber\\
		& = \widetilde{\br}_1'\widetilde{\br}_1+m_n- \frac{1}{n}\sum_{i=1}^{n}\Biggl(\frac{\sqrt{\tfrac{n}{p}}\tfrac{\sqrt{b_p}}{a_p}\widetilde{\mu}_i^{(1)}+1}{\widetilde{\mu}_i^{(1)}-z}\Biggr) - \frac{1}{n}\sum_{i=1}^{n}\Biggl(\frac{\sqrt{\tfrac{n}{p}}\tfrac{\sqrt{b_p}}{a_p}\widetilde{\mu}_i^{(1)}+1}{\widetilde{\mu}_i^{(1)}-z}\Biggr)\bigl(\widetilde{w}_i^{(1)}-1\bigr).\label{eq:h_tilde}
	\end{align}
	We define the set of events 
	\[
		\Omega_0 = \biggl\{ \bigl|\Expe\, m_n-m_n+\widetilde{h}_1\bigr| \geqslant \frac{1}{2}\sqrt{\frac{p}{n}} \biggr\},	
	\]
	then the inequality 
	\[
		\biggl|\biggl(\sqrt{\frac{p}{n}}\frac{a_p}{\sqrt{b_p}}+z+\Expe\, m_n\biggr)\biggl(\sqrt{\frac{p}{n}}\frac{a_p}{\sqrt{b_p}}+z+m_n-\widetilde{h}_1\biggr)\biggr| \geqslant K\frac{p}{n}
	\]
	holds on $\Omega_0$. Thus we obtain
	\begin{align*}
		\Expe\,\biggl|& \widetilde{D}_{11}+\frac{1}{a_p\sqrt{p/(nb_p)}+z+\Expe\, m_n} \biggr|^2 \\
		&\leqslant \Expe\, \biggl| \frac{\Expe\, m_n-m_n+\widetilde{h}_1}{\bigl(a_p\sqrt{p/(nb_p)}+z+\Expe\, m_n\bigr)\bigl(a_p\sqrt{p/(nb_p)}+z+m_n-\widetilde{h}_1\bigr)} \biggr|^2\\
		&\leqslant K\biggl[\biggl(\frac{n}{p}\biggr)^2\cdot\Prob(\Omega_0^{\mathsf{c}})+\frac{n}{p}\cdot\Prob(\Omega_0)\biggr]\cdot \Expe\,\bigl|\Expe\, m_n-m_n+\widetilde{h}_1\bigr|^2,
	\end{align*}
	where we use the inequality
	\begin{equation}\label{eq:denominator_lower_bbd}
		\biggl|\biggl(\sqrt{\frac{p}{n}}\frac{a_p}{\sqrt{b_p}}+z+\Expe\, m_n\biggr)\biggl(\sqrt{\frac{p}{n}}\frac{a_p}{\sqrt{b_p}}+z+m_n-\widetilde{h}_1\biggr)\biggr| \geqslant K\sqrt{\frac{p}{n}},
	\end{equation}
	that holds on the full set $\Omega$. The inequality \eqref{eq:denominator_lower_bbd} follows from the facts
	\begin{align*}
		\;&\sqrt{\frac{p}{n}}\frac{a_p}{\sqrt{b_p}}+z+m_n-\widetilde{h}_1 \\
		= \;& \sqrt{\frac{p}{n}}\frac{a_p}{\sqrt{b_p}}+z-\widetilde{\br}_1'\widetilde{\br}_1+\widetilde{\br}_1'\bB_1'\bB_1\biggl(\bB_1'\bB_1-\sqrt{\tfrac{p}{n}}\tfrac{a_p}{\sqrt{b_p}}\bI_{n}\biggr)^{-1}\widetilde{\br}_1\\
		= \;& \sqrt{\frac{p}{n}}\frac{a_p}{\sqrt{b_p}}+z+\widetilde{\br}_1'\biggl[\bB_1'\bB_1\biggl(\bB_1'\bB_1-\sqrt{\tfrac{p}{n}}\tfrac{a_p}{\sqrt{b_p}}\bI_{n}\biggr)^{-1}-\bI_n\biggr]\widetilde{\br}_1\\
		= \; & \sqrt{\frac{p}{n}}\frac{a_p}{\sqrt{b_p}}+z + \frac{1}{\sqrt{np}}\frac{\sqrt{b_p}}{a_p}\sum_{i=1}^{n}\Biggl(\frac{\widetilde{\mu}_i^{(1)}+\sqrt{\tfrac{p}{n}}\tfrac{a_p}{\sqrt{b_p}}}{\widetilde{\mu}_i^{(1)}-z}-1\Biggr)\widetilde{w}_i^{(1)}\\
		= \; & \biggl(\sqrt{\frac{p}{n}}\frac{a_p}{\sqrt{b_p}}+z\biggr) \biggl[1+\frac{1}{\sqrt{np}}\frac{\sqrt{b_p}}{a_p}\sum_{i=1}^{n}\frac{\widetilde{w}_i^{(1)}}{\widetilde{\mu}_i^{(1)}-z}\biggr]\\
		=:\;&  \biggl(\sqrt{\frac{p}{n}}\frac{a_p}{\sqrt{b_p}}+z\biggr)(1+S),
	\end{align*}
	and \[
		|1+S|\geqslant K\sqrt{\frac{n}{p}}.	
	\]
	We now proceed to complete the proof of \eqref{eq:G11_tilde_upper_bbd}. Note that we have
	\[
		\Prob(\Omega_0)\leqslant \frac{4n}{p}\Expe\,\bigl|\Expe\, m_n-m_n+\widetilde{h}_1\bigr|^2, 
	\]
	thus it is sufficient to prove that
	\begin{equation}\label{eq:G11_tilde_upper_bbd_esti_1}
		\Expe\,\bigl|\Expe\, m_n-m_n+\widetilde{h}_1\bigr|^2 = 	O\biggl(\frac{1}{n}\biggr) + O\biggl(\frac{n}{p}\biggr).
	\end{equation}
	Applying \eqref{eq:h_tilde} gives us
	\begin{align}
		\Expe\,\bigr| &\Expe\, m_n-m_n+\widetilde{h}_1\bigr|^2\nonumber  \\
		& \leqslant K \Biggl\{ \Expe\, \bigl|\Expe\, m_n-m_n\bigr|^2 + \Expe\, \Biggl|m_n - \Biggl(\frac{1}{n}\sum_{i=1}^{n-1}\frac{1}{\widetilde{\mu}_i^{(1)}-z}\Biggr)-\sqrt{\frac{n}{p}}\frac{\sqrt{b_p}}{a_p}\Biggl(\frac{1}{n}\sum_{i=1}^{n-1}\frac{\widetilde{\mu}_i^{(1)}}{\widetilde{\mu}_i^{(1)}-z}\Biggr)\Biggr|^2\nonumber\\
		&\qquad\qquad + \Expe\, \Biggl|\frac{1}{n}\sum_{i=1}^{n-1}\Biggl(\frac{\sqrt{\tfrac{n}{p}}\tfrac{\sqrt{b_p}}{a_p}\widetilde{\mu}_i^{(1)}+1}{\widetilde{\mu}_i^{(1)}-z}\Biggr)\bigl(\widetilde{w}_i^{(1)}-1\bigr)\Biggr|^2 + \Expe\, \bigl(\widetilde{\br}_1'\widetilde{\br}_1\bigr)^2\Biggr\}.\label{eq:G11_tilde_upper_bbd_esti}
	\end{align}
	Combining the similar method used for \eqref{eq:G11_upper_bbd} with \eqref{eq:G11_tilde_upper_bbd_esti} and the fact
	\begin{align*}
		\Expe\,\bigl(\widetilde{\br}_1' \widetilde{\br}_1\bigr)^2  & = \Expe\,\biggl[\sum_{j=1}^n\Bigl(\sum_{i=1}^p\hat{\sigma}_{1i}Y_{ij}\Bigr)^2\biggr]^2\\
		&=\Expe\,\biggl[\sum_{j=1}^n\Bigl(\sum_{i=1}^p\hat{\sigma}_{1i}Y_{ij}\Bigr)^4 + \sum_{j_1\neq j_2}\Bigl(\sum_{i=1}^p\hat{\sigma}_{1i}Y_{ij_1}\Bigr)^2\Bigl(\sum_{i=1}^p\hat{\sigma}_{1i}Y_{ij_2}\Bigr)^2 \biggr] = O\biggl(\frac{n}{p}\biggr),
	\end{align*}
	we obtain \eqref{eq:G11_tilde_upper_bbd_esti_1}.
\end{proof}

The following two lemmas about the derivatives of some quantities with respect to $Y_{jk}$, which can be used to obtain the derivatives of $F_{jk}$ (Lemma~\ref{lem:Fjk_derivative}) and $\widetilde{F}_{jk}$ (Lemma~\ref{lem:Fjk_tilde_derivative}).

Recall that
\[
	\bE:= \bSigma_p\bY\bD\bY'\bSigma_p = (E_{ij})_{p\times p}, \qquad\bF:= \bSigma_p\bY\bD = (F_{ij})_{p\times n}.
\]

\begin{lemma}\label{lem:derivative}
	For any $\alpha, j\in\{1,2,\ldots,p\}$ and $\beta, k\in \{1,2,\ldots,n\}$, we have
	\[
		\begin{aligned}
				\frac{\partial D_{\alpha\beta}}{\partial Y_{jk}} &= -F_{j\alpha}D_{\beta k}-F_{j\beta}D_{\alpha k};\\[0.5em]
				\frac{\partial  F_{\alpha\beta}}{\partial Y_{jk}} &= \sigma_{\alpha j} D_{k\beta} - E_{j\alpha}D_{\beta k}-F_{j\beta}F_{\alpha k};\\[0.5em]
				\frac{\partial (E_{jj}D_{kk})}{\partial Y_{jk}} &= 2\sigma_{j j} F_{jk}D_{kk} - 4 E_{jj}F_{j k} D_{kk}.
		\end{aligned}
	\]
\end{lemma}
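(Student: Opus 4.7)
The strategy is entirely computational: treat every appearance of $\bY$ as a function of the scalar $Y_{jk}$ and differentiate using the standard matrix calculus identities. Let $\mathbf{e}_j^{(p)}$ and $\mathbf{e}_k^{(n)}$ denote the canonical basis vectors in $\mathbb{R}^p$ and $\mathbb{R}^n$, so that $\partial \bY/\partial Y_{jk} = \mathbf{e}_j^{(p)}(\mathbf{e}_k^{(n)})'=:\mathbf{E}_{jk}$. All three formulas will follow by repeated application of this, together with the symmetry of $\bD$.

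First I would derive the derivative of $\bD$. Since $\bD^{-1} = \bY'\bSigma_p\bY - \sqrt{p/n}\,(a_p/\sqrt{b_p})\bI_n$, the product rule gives
\[
\frac{\partial \bD^{-1}}{\partial Y_{jk}} = \mathbf{E}_{jk}'\bSigma_p\bY + \bY'\bSigma_p\mathbf{E}_{jk}.
\]
Applying the identity $\partial \bD/\partial x = -\bD(\partial \bD^{-1}/\partial x)\bD$ and taking the $(\alpha,\beta)$-entry, the two rank-one summands above contribute $-D_{\alpha k}(\bSigma_p\bY\bD)_{j\beta}$ and $-D_{k\beta}(\bSigma_p\bY\bD)_{j\alpha}$ respectively. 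Using symmetry of $\bD$ and the definition $\bF = \bSigma_p\bY\bD$, this collapses to
\[
\frac{\partial D_{\alpha\beta}}{\partial Y_{jk}} = -F_{j\alpha}D_{\beta k} - F_{j\beta}D_{\alpha k},
\]
which is the first formula.

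Next I would handle $F_{\alpha\beta}$ by writing $\bF = \bSigma_p\bY\bD$ and applying the product rule:
\[
\frac{\partial F_{\alpha\beta}}{\partial Y_{jk}} = (\bSigma_p\mathbf{E}_{jk}\bD)_{\alpha\beta} + \Bigl(\bSigma_p\bY \cdot \tfrac{\partial \bD}{\partial Y_{jk}}\Bigr)_{\alpha\beta}.
\]
The first term equals $\sigma_{\alpha j}D_{k\beta}$ directly. For the second term I substitute the formula for $\partial D_{s\beta}/\partial Y_{jk}$ just derived and collapse the resulting sums: $\sum_s (\bSigma_p\bY)_{\alpha s}D_{sk}$ becomes $F_{\alpha k}$, while $\sum_s(\bSigma_p\bY)_{\alpha s}F_{js}$ is recognized as the $(j,\alpha)$-entry of $\bF(\bSigma_p\bY)' = \bSigma_p\bY\bD\bY'\bSigma_p = \bE$, giving $E_{j\alpha}$. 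Combining the three pieces yields the stated formula for $\partial F_{\alpha\beta}/\partial Y_{jk}$.

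Finally, for $\partial(E_{jj}D_{kk})/\partial Y_{jk}$ I would apply the product rule. The factor $\partial D_{kk}/\partial Y_{jk} = -2F_{jk}D_{kk}$ follows from the first formula with $\alpha=\beta=k$. For $\partial E_{jj}/\partial Y_{jk}$, I write $E_{jj}$ as an explicit quadruple sum $\sum_{a,b,c,d}\sigma_{ja}Y_{ab}D_{bc}Y_{dc}\sigma_{dj}$, differentiate the two explicit $Y$-factors and the implicit $D_{bc}$ factor, and repeatedly recognize partial sums as entries of $\bSigma_p\bY$, $\bF$, or $\bE$. The two explicit-$Y$ terms each yield $\sigma_{jj}F_{jk}$, and the $D$-derivative term contributes $-2E_{jj}F_{jk}$, giving $\partial E_{jj}/\partial Y_{jk} = 2\sigma_{jj}F_{jk} - 2E_{jj}F_{jk}$. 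Multiplying out finishes the proof.

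The only place where one has to be careful is the bookkeeping of indices in the second and third formulas: one must verify that $\sum_s(\bSigma_p\bY)_{\alpha s}F_{js} = E_{j\alpha}$ (and the analogous contractions for $E_{jj}$). Otherwise the entire argument is just a careful application of the three identities $\partial\bY/\partial Y_{jk}=\mathbf{E}_{jk}$, $\partial\bD/\partial x = -\bD(\partial\bD^{-1}/\partial x)\bD$, and the ordinary product rule, so no analytic obstacle arises.
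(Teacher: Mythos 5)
Your proof is correct. Each of the three formulas checks out: the rank-one decomposition of $\partial\bD^{-1}/\partial Y_{jk}$ and the symmetry of $\bD$ give the first identity; the product rule, the first identity, and the contraction $\sum_s(\bSigma_p\bY)_{\alpha s}F_{js}=E_{j\alpha}$ (valid since $\bE$ is symmetric) give the second; and the quadruple-sum expansion of $E_{jj}$ combined with $\partial D_{kk}/\partial Y_{jk}=-2F_{jk}D_{kk}$ gives the third.

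The route you take for the first formula is cleaner than the paper's. The paper differentiates $D_{\alpha\beta}$ through the entries $A_{st}$ of the symmetric matrix $\bA=\bY'\bSigma_p\bY$, invoking Khorunzhy et al.'s formula for $\partial D_{\alpha\beta}/\partial A_{st}$ (which handles the diagonal/off-diagonal distinction for a symmetric argument) and then tabulating $\partial(Y_{rs}Y_{\ell t})/\partial Y_{jk}$ case by case. You instead apply $\partial\bD/\partial Y_{jk}=-\bD(\partial\bD^{-1}/\partial Y_{jk})\bD$ directly, recognizing $\partial\bD^{-1}/\partial Y_{jk}$ as a sum of two rank-one terms and collapsing each; this avoids the symmetric-entry bookkeeping entirely and is the standard matrix-calculus shortcut. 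For the second and third formulas the two proofs are essentially the same (product rule plus re-use of the first identity); the paper derives $\partial E_{jj}/\partial Y_{jk}$ by differentiating $\sum_r F_{jr}(\bY'\bSigma_p)_{rj}$ and plugging in the already-known $\partial F_{jr}/\partial Y_{jk}$, whereas you expand $E_{jj}$ as a quadruple sum and differentiate all three $Y$-dependent factors directly — both give $2\sigma_{jj}F_{jk}-2E_{jj}F_{jk}$.
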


\begin{proof}
	\begin{table}[!htbp]
		\centering
		\caption{Derivatives of $(Y_{rs}Y_{\ell t})$ w.r.t. $Y_{jk}$}
		\label{tab:derivative}
		\begin{tabular}{@{}ccccc@{}}
			\toprule
			$\frac{\partial \bigl(Y_{rs}Y_{\ell t}\bigr)}{\partial Y_{jk}}$ & $r=\ell=j$ & $r\neq j, \ell \neq j$ & $r=j, \ell\neq j$ & $r\neq j, \ell = j$ \\ \midrule
			$s=t=k$                                             & $2Y_{jk}$  & $0$                    & $Y_{\ell k}$      & $Y_{rk}$               \\
			$s\neq k, t\neq k$                                  & $0$        & $0$                    & $0$               & $0$                    \\
			$s=k, t\neq k$                                      & $Y_{jt}$   & $0$                    & $Y_{\ell t}$      & $0$                    \\
			$s\neq k, t=k$                                      & $Y_{js}$   & $0$                    & $0$               & $Y_{rs}$               \\ \bottomrule
		\end{tabular}
	\end{table}
	\begin{enumerate}
		\item[(1)] By using chain rule and the results in the Table~\ref{tab:derivative}, we have
		\begin{align*}
			\frac{\partial D_{\alpha\beta}}{\partial Y_{jk}} & = \sum_{1\leqslant s\leqslant t\leqslant p}\frac{\partial D_{\alpha\beta}}{\partial A_{st}}\cdot \frac{\partial A_{st}}{\partial Y_{jk}}\qquad \biggl[ \frac{\partial A_{st}}{\partial Y_{jk}} := \frac{\partial (\bY'\bSigma_p\bY)_{st}}{\partial Y_{jk}} \biggr]\\
			& = \sum_{s =1}^p\frac{\partial D_{\alpha\beta}}{\partial A_{ss}}\cdot \frac{\partial A_{ss}}{\partial Y_{jk}} + \sum_{1\leqslant s < t  \leqslant p}\frac{\partial D_{\alpha\beta}}{\partial A_{st}}\cdot \frac{\partial A_{st}}{\partial Y_{jk}}\\
			& = \sum_{s=1}^p\Bigl( -D_{\alpha s}D_{t\beta} \Bigr)\cdot \sum_{r,\ell } \biggl(\sigma_{r\ell}\frac{\partial \bigl(Y_{rs}Y_{\ell s}\bigr)}{\partial Y_{jk}} \biggr)	+ \sum_{s<t}\Bigl( -D_{\alpha s}D_{t\beta}-D_{\alpha t}D_{s\beta} \Bigr)\cdot \sum_{r,\ell } \biggl(\sigma_{r\ell}\frac{\partial \bigl(Y_{rs}Y_{\ell t}\bigr)}{\partial Y_{jk}}\biggr)\\
			& = \Bigl( -D_{\alpha k}D_{k\beta} \Bigr)\cdot  \biggl(2\sigma_{jj}Y_{jk} + \sum_{\ell \neq j} \sigma_{j\ell}Y_{\ell k} +\sum_{r\neq j} \sigma_{rj}Y_{rk} \biggr)\\
			&\qquad + \sum_{k< t}\Bigl( -D_{\alpha k}D_{t\beta}-D_{\alpha t}D_{k\beta} \Bigr)\cdot \biggl(\sigma_{jj} Y_{jt} +\sum_{\ell\neq j}\sigma_{j\ell} Y_{\ell t}\biggr)\\
			&\qquad + \sum_{s <k}\Bigl( -D_{\alpha s}D_{k\beta}-D_{\alpha k}D_{s\beta} \Bigr)\cdot \biggl(\sigma_{jj} Y_{js} +\sum_{r \neq j}\sigma_{r j} Y_{rs}\biggr)\\
			& =  \sum_{s=1}^p  \Bigl( -D_{\alpha s}D_{k\beta}-D_{\alpha k}D_{s\beta} \Bigr) \Bigl(\sum_{r=1}^p \sigma_{rj} Y_{rs}\Bigr)\\
			& =  \sum_{s, r} \biggl[-\Bigl(\sigma_{jr}Y_{rs}D_{s\alpha}\Bigr)D_{\beta k}-\Bigl(\sigma_{jr}Y_{rs}D_{s\beta}\Bigr)D_{\alpha k}\biggr]\\
			& = -F_{j\alpha}D_{\beta k}-F_{j\beta}D_{\alpha k},
		\end{align*}
		where the third equality follows from the formula (II. 18) in \citet{khorunzhy1996asymptotic};
		\item[(2)]
		\begin{align*}
			\frac{\partial  F_{\alpha\beta}}{\partial Y_{jk}} &  = \frac{\partial }{\partial Y_{jk}} \sum_{s,t}\Bigl(\sigma_{\alpha s} Y_{st}D_{t\beta}\Bigr)
			= \sum_{s, t } \sigma_{\alpha s} \biggl( \frac{\partial Y_{st}}{\partial Y_{jk}}\cdot D_{t\beta} + Y_{st} \cdot \frac{\partial D_{t\beta}}{\partial Y_{jk}} \biggr) \\
			& = \sigma_{\alpha j} D_{k\beta } -\sum_{s, t} \sigma_{\alpha s}Y_{st}  \biggl(F_{jt}D_{\beta k}+F_{j\beta}D_{t k}\biggr)\\
			& = \sigma_{\alpha j} D_{k\beta} - E_{j\alpha}D_{\beta k}-F_{j\beta}F_{\alpha k};
		\end{align*}
		\item[(3)]
		\begin{align*}
			\frac{\partial E_{jj} }{\partial Y_{jk}}
			& = \frac{\partial }{\partial Y_{jk}} \sum_{r}  \bigl(\bSigma_p\bY\bD\bigr)_{jr}\bigl(\bY'\bSigma_p\bigr)_{rj}\\
			& = \sum_{r}  \frac{\partial F_{jr}}{\partial Y_{jk}}\cdot \bigl(\bY'\bSigma_p\bigr)_{rj} + \sum_{r}F_{jr}\cdot \frac{\partial \bigl(\bY'\bSigma_p\bigr)_{rj}}{\partial Y_{jk}}\\
			& = \sum_{r}\Bigl( \sigma_{jj}D_{kr} - E_{jj} D_{rk} - F_{jr}F_{jk} \Bigr) \cdot \bigl(\bY'\bSigma_p\bigr)_{rj} + \sigma_{jj}F_{jk}\\
			& = 2\sigma_{j j} F_{jk} -2 E_{jj}F_{j k},
		\end{align*}
		\begin{align*}
			\frac{\partial (E_{jj}D_{kk})}{\partial Y_{jk}} & = \frac{\partial E_{jj}}{\partial Y_{jk}} \cdot  D_{kk} + \frac{\partial D_{kk}}{\partial Y_{jk}} \cdot E_{jj} \\
			& = \Bigl(2\sigma_{j j} F_{jk} -2 E_{jj}F_{j k}\Bigr)\cdot D_{kk} - 2F_{jk}D_{kk}\cdot E_{jj}\\
			& = 2\sigma_{j j} F_{jk}D_{kk} - 4 E_{jj}F_{j k} D_{kk}.\qedhere
		\end{align*}
	\end{enumerate}
\end{proof}
\noindent Recall that $\bSigma_p^2 = (\widetilde{\sigma}_{ij})$, and 
\[
	\widetilde{\bE}:= \bSigma_p^2\bY\bD\bY'\bSigma_p^2, \qquad \widehat{\bE}=\bSigma_p\bY\bD\bY'\bSigma_p^2,\qquad\widetilde{\bF}:= \bSigma_p^2\bY\bD.
\]
\begin{lemma}\label{lem:derivative-Ftilde}
	For any $\alpha, j\in\{1,2,\ldots,p\}$ and $\beta, k\in \{1,2,\ldots,n\}$, we have
	\[
		\begin{aligned}
				\frac{\partial  \widetilde{F}_{\alpha\beta}}{\partial Y_{jk}} & = \widetilde{\sigma}_{\alpha j} D_{k\beta} - \widehat{E}_{j \alpha}D_{\beta k}-F_{j\beta}\widetilde{F}_{\alpha k};\\[0.5em]
				\frac{\partial (\widehat{E}_{jj}D_{kk})}{\partial Y_{jk}} & = \sigma_{jj}\widetilde{F}_{jk}D_{kk} + \widetilde{\sigma}_{jj}F_{jk}D_{kk} - E_{jj}\widetilde{F}_{jk}D_{kk}-3\widehat{E}_{jj}F_{jk}D_{kk}.
		\end{aligned}
	\]
\end{lemma}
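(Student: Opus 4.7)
The plan is to mimic exactly the calculation done for Lemma~\ref{lem:derivative}, with the new weight $\bSigma_p^2$ (respectively $\bSigma_p$ on one side and $\bSigma_p^2$ on the other) replacing the appearances of $\bSigma_p$. The only nontrivial input is the derivative of $D_{\alpha\beta}$ with respect to $Y_{jk}$, which has already been computed in Lemma~\ref{lem:derivative}; everything else is the product rule and bookkeeping.

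First I would compute $\partial \widetilde{F}_{\alpha\beta}/\partial Y_{jk}$. Writing $\widetilde{F}_{\alpha\beta}=\sum_{s,t}\widetilde{\sigma}_{\alpha s}Y_{st}D_{t\beta}$ and applying the product rule gives
\[
\frac{\partial \widetilde{F}_{\alpha\beta}}{\partial Y_{jk}}
=\widetilde{\sigma}_{\alpha j} D_{k\beta}
+\sum_{s,t}\widetilde{\sigma}_{\alpha s}Y_{st}\,\frac{\partial D_{t\beta}}{\partial Y_{jk}}.
\]
Substituting $\partial D_{t\beta}/\partial Y_{jk}=-F_{jt}D_{\beta k}-F_{j\beta}D_{tk}$ from Lemma~\ref{lem:derivative} and recognising the two resulting sums as matrix entries, namely $\sum_{s,t}\widetilde{\sigma}_{\alpha s}Y_{st}F_{jt}=(\bSigma_p\bY\bD\bY'\bSigma_p^2)_{j\alpha}=\widehat{E}_{j\alpha}$ and $\sum_{s,t}\widetilde{\sigma}_{\alpha s}Y_{st}D_{tk}=(\bSigma_p^2\bY\bD)_{\alpha k}=\widetilde{F}_{\alpha k}$, yields the asserted identity for $\partial\widetilde{F}_{\alpha\beta}/\partial Y_{jk}$.

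Next I would attack $\partial(\widehat{E}_{jj}D_{kk})/\partial Y_{jk}$ in two stages. Write $\widehat{E}_{jj}=\sum_{r}F_{jr}(\bY'\bSigma_p^2)_{rj}$. Since $(\bY'\bSigma_p^2)_{rj}=\sum_{s}Y_{sr}\widetilde{\sigma}_{sj}$, its derivative in $Y_{jk}$ equals $\widetilde{\sigma}_{jj}\delta_{r,k}$. Combined with the formula for $\partial F_{jr}/\partial Y_{jk}$ already provided in Lemma~\ref{lem:derivative} and the contractions $\sum_{r}D_{kr}(\bY'\bSigma_p^2)_{rj}=\widetilde{F}_{jk}$ (using the symmetry of $\bD$) and $\sum_{r}F_{jr}(\bY'\bSigma_p^2)_{rj}=\widehat{E}_{jj}$, this gives
\[
\frac{\partial \widehat{E}_{jj}}{\partial Y_{jk}}
=\sigma_{jj}\widetilde{F}_{jk}-E_{jj}\widetilde{F}_{jk}-F_{jk}\widehat{E}_{jj}+\widetilde{\sigma}_{jj}F_{jk}.
\]
Multiplying by $D_{kk}$ and adding $\widehat{E}_{jj}\cdot\partial D_{kk}/\partial Y_{jk}=-2\widehat{E}_{jj}F_{jk}D_{kk}$ (from Lemma~\ref{lem:derivative} with $\alpha=\beta=k$) produces the claimed expansion.

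I do not anticipate a genuine obstacle here: the argument is purely computational and parallels Lemma~\ref{lem:derivative}. The only place requiring care is keeping track of left/right placement of $\bSigma_p$ versus $\bSigma_p^2$ when identifying the intermediate sums with $\widehat{E}$, $\widetilde{F}$, and $E$; a small sign or placement error would scramble the $-3$ coefficient of $\widehat{E}_{jj}F_{jk}D_{kk}$ in the final identity (which combines a $-1$ from $\partial\widehat{E}_{jj}/\partial Y_{jk}$ with a $-2$ from $\partial D_{kk}/\partial Y_{jk}$).
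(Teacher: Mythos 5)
Your proof is correct and follows essentially the same route as the paper: both proofs are direct product-rule computations built on the derivatives in Lemma~\ref{lem:derivative}. The only cosmetic difference is in the second identity, where the paper expands $\widehat{E}_{jj}=\sum_r E_{jr}\sigma_{rj}$ and first derives $\partial E_{jr}/\partial Y_{jk}$ before contracting, whereas you expand $\widehat{E}_{jj}=\sum_r F_{jr}(\bY'\bSigma_p^2)_{rj}$ and differentiate directly; both bookkeepings rely on the same $\partial F_{jr}/\partial Y_{jk}$ formula and yield identical contractions.
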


\begin{proof}
 \begin{align*}
			\frac{\partial  \widetilde{F}_{\alpha\beta}}{\partial Y_{jk}} & = \frac{\partial }{\partial Y_{jk}} \sum_{s,t}\Bigl(\widetilde{\sigma}_{\alpha s} Y_{st}D_{t\beta}\Bigr)
			= \sum_{s, t } \widetilde{\sigma}_{\alpha s} \biggl( \frac{\partial Y_{st}}{\partial Y_{jk}}\cdot D_{t\beta} + Y_{st} \cdot \frac{\partial D_{t\beta}}{\partial Y_{jk}} \biggr) \\
			& = \widetilde{\sigma}_{\alpha j} D_{k\beta } -\sum_{s, t} \widetilde{\sigma}_{\alpha s}Y_{st}  \biggl(F_{jt}D_{\beta k}+F_{j\beta}D_{t k}\biggr)\\
			& = \widetilde{\sigma}_{\alpha j} D_{k\beta} - \widehat{E}_{j \alpha}D_{\beta k}-F_{j\beta}\widetilde{F}_{\alpha k};
		\end{align*}
	 \begin{align*}
			\frac{\partial E_{jr}}{\partial Y_{jk}} & = \frac{\partial }{\partial Y_{jk}}\sum_{\ell} F_{j\ell}\Bigl(\bY' \bSigma_p\Bigr)_{\ell r} = \sum_{\ell} \frac{\partial F_{j\ell}}{\partial Y_{jk}}\cdot\Bigl(\bY' \bSigma_p\Bigr)_{\ell r} + \sum_{\ell} F_{j\ell} \cdot \frac{\partial \bigl(\bY' \bSigma_p\bigr)_{\ell r}}{\partial Y_{jk}} \\
			& = \sum_{\ell} \Bigl( \sigma_{jj}D_{k\ell} -E_{jj}D_{\ell k} -F_{j\ell}F_{jk} \Bigr)\cdot\Bigl(\bY' \bSigma_p\Bigr)_{\ell r} + \sigma_{jr}F_{jk}\\
			& = \sigma_{jj}F_{rk} + \sigma_{jr}F_{jk} -E_{jj}F_{rk} - F_{jk}E_{jr},
		\end{align*}
		\begin{align*}
			\frac{\partial (\widehat{E}_{jj}D_{kk})}{\partial Y_{jk}} & = \biggl(\frac{\partial }{\partial Y_{jk}} \sum_r E_{jr}\sigma_{rj}\biggr)\cdot D_{kk} + \widehat{E}_{jj}\cdot \Bigl(-2F_{jk}D_{kk}\Bigr)\\
			& = D_{kk}\sum_r \sigma_{rj}\biggl(\sigma_{jj}F_{rk} + \sigma_{jr}F_{jk} -E_{jj}F_{rk} - F_{jk}E_{jr}\biggr)  -2\widehat{E}_{jj}F_{jk}D_{kk}\\
			& = \sigma_{jj}\widetilde{F}_{jk}D_{kk} + \widetilde{\sigma}_{jj}F_{jk}D_{kk} - E_{jj}\widetilde{F}_{jk}D_{kk}-3\widehat{E}_{jj}F_{jk}D_{kk}.\qedhere
		\end{align*}
\end{proof}

\section{Proofs in applications}

\noindent
\textbf{Proof of Theorem~\ref{thm:W_limit_dist_H1}.}

\begin{proof}
	For notational simplicity, we denote $a_p=\tr(\bSigma_p)/p$ and $b_p=\tr(\bSigma_p^2)/p$. Let 
	$
	\widetilde{\bA}_n = \frac{1}{\sqrt{npb_p}}(\bY'\bY-pa_p\bI_n)= \frac{1}{\sqrt{npb_p}}(\bX'\bSigma_p\bX-pa_p\bI_n).
	$
	By some elementary calculations, we obtain two identities: 
	\begin{gather*}
		\tr(\bS_n)  = \sqrt{\frac{pb_p}{n}}\tr(\widetilde{\bA}_n)+pa_p,
		\qquad
		\tr(\bS_n^2)  =  \frac{pb_p}{n}\tr(\widetilde{\bA}_n^2) + \frac{2pa_p}{n}\sqrt{\frac{pb_p}{n}}\tr(\widetilde{\bA}_n) + \frac{(pa_p)^2}{n}.
	\end{gather*}
	Then $W$ can be written as 
	\[
	W = \frac{b_p}{n}\tr(\widetilde{\bA}_n^2) - \frac{2}{p}\sqrt{\frac{pb_p}{n}}\tr(\widetilde{\bA}_n) -\frac{b_p}{n^2} \bigl[\tr(\widetilde{\bA}_n)\bigr]^2 +\frac{p}{n}-2a_p+1.
	\]
	\citet{li2016testing} derived the limiting joint distribution of $\bigl(\tr(\widetilde{\bA}_n^2)/n, \tr(\widetilde{\bA}_n)/n\bigr)$ (see their Lemma 3.1) as follows:
	\begin{equation}\label{eq:joint_dist_tr_S}
		n\Biggl(\begin{matrix}
			\frac{1}{n}\tr(\widetilde{\bA}_n^2) - 1 - \frac{1}{n}\bigl( \frac{\omega}{\theta}(\nu_4-3)+1 \bigr)\\[0.5em]
			\frac{1}{n}\tr(\widetilde{\bA}_n)
		\end{matrix}\Biggr)	
		\convd \calN \Biggl(\Biggl(\begin{matrix}
			0\\[0.5em]0
		\end{matrix}\Biggr), \Biggl(\begin{matrix}
			4 & 0\\[0.5em]
			0 & \frac{\omega}{\theta}(\nu_4-3)+2
		\end{matrix}\Biggr)\Biggr).
	\end{equation}
	Define the function 
	\[
	g(x,y)  = b_px - \frac{2n}{p}\sqrt{\frac{pb_p}{n}}y-b_py^2+\frac{p}{n}-2a_p+1,
	\]
	then $W = g\bigl(\tr(\widetilde{\bA}_n^2)/n, \tr(\widetilde{\bA}_n)/n \bigr)$, we have
	\begin{align*}
		& \frac{\partial g}{\partial x} \Bigl( 1 + \frac{1}{n}\Bigl( \frac{\omega}{\theta}(\nu_4-3)+1 \Bigr), 0 \Bigr) = b_p,\\
		& \frac{\partial g}{\partial y} \Bigl( 1 + \frac{1}{n}\Bigl( \frac{\omega}{\theta}(\nu_4-3)+1 \Bigr), 0 \Bigr) =-\frac{2n}{p}\sqrt{\frac{pb_p}{n}},\\
		& g\Bigl( 1 + \frac{1}{n}\Bigl( \frac{\omega}{\theta}(\nu_4-3)+1 \Bigr), 0 \Bigr) = b_p+\frac{b_p}{n}\Bigl( \frac{\omega}{\theta}(\nu_4-3)+1 \Bigr) +\frac{p}{n} - 2a_p+1.
	\end{align*}
	By \eqref{eq:joint_dist_tr_S}, we have
	\[
	n\biggl(W-g\Bigl( 1 + \frac{1}{T}\Bigl( \frac{\omega}{\theta}(\nu_4-3)+1 \Bigr), 0 \Bigr)\biggr) \convd\calN(0,\lim A),	
	\]
	where 
	\[
	A = \Biggl(
	\begin{matrix}
		\frac{\partial g}{\partial x} ( 1 + \frac{1}{n}( \frac{\omega}{\theta}(\nu_4-3)+1 ), 0 )\\
		\frac{\partial g}{\partial x} ( 1 + \frac{1}{n}( \frac{\omega}{\theta}(\nu_4-3)+1 ), 0 )
	\end{matrix}
	\Biggr)' 	
	\Biggl(\begin{matrix}
		4 & 0\\
		0 & \frac{\omega}{\theta}(\nu_4-3)+2
	\end{matrix}\Biggr)
	\Biggl(
	\begin{matrix}
		\frac{\partial g}{\partial x} ( 1 + \frac{1}{n}( \frac{\omega}{\theta}(\nu_4-3)+1 ), 0 )\\
		\frac{\partial g}{\partial x} ( 1 + \frac{1}{n}( \frac{\omega}{\theta}(\nu_4-3)+1 ), 0 )
	\end{matrix}
	\Biggr) \to 4\theta^2.
	\]
	Thus,
	\[
	n\Bigl(W-b_p-\frac{b_p}{n}\Bigl( \frac{\omega}{\theta}(\nu_4-3)+1 \Bigr) + 2a_p-1-\frac{p}{n} \Bigr) \convd \calN(0, 4\theta^2),
	\]
	that is,
	\[
	nW - p - \theta\Bigl( \frac{\omega}{\theta}(\nu_4-3) +1 \Bigr) +n(2\gamma-1-\theta) \convd \calN(0,4\theta^2).\qedhere
	\]
\end{proof}

\noindent \textbf{Proof of Proposition~\ref{prop:power_theo_W}.}

\begin{proof}
	For the test based on statistic $W$, by Theorem \ref{thm:W_limit_dist_H0} and \ref{thm:W_limit_dist_H1}, we have
	\begin{align*}
		\beta(H_1) & = \Prob\biggl( \frac{1}{2}\Bigl(nW-p-(\nu_4-2)\Bigr) \geqslant z_{\alpha} \;\Big|\; H_1 \biggr) \\ 
		& = \Prob\biggl( nW-p - \theta\Bigl( \frac{\omega}{\theta}(\nu_4-3) +1 \Bigr) +n(2\gamma-1-\theta) \\
		&\qquad\qquad   \geqslant 2z_{\alpha} - \theta\Bigl( \frac{\omega}{\theta}(\nu_4-3) +1 \Bigr) +n(2\gamma-1-\theta) + (\nu_4-2) \;\Big|\; H_1 \biggr)\\
		& = 1-\Phi\biggl( \frac{1}{2\theta} \Bigl\{ 2z_{\alpha} -  \omega(\nu_4-3) -\theta +n(2\gamma-1-\theta) + (\nu_4-2) \Bigr\} \biggr),
	\end{align*}
	since $2\gamma-1\leqslant \gamma^2\leqslant \theta$, Proposition \ref{prop:power_theo_W} follows.
\end{proof}

\bibliography{reference}

\end{document}